\theoremstyle{plain}
\newtheorem{theorem}{Theorem}
\newtheorem{lemma}[theorem]{Lemma}
\newtheorem{corollary}[theorem]{Corollary}
\newtheorem{prop}[theorem]{Proposition}
\theoremstyle{definition}
\newtheorem{example}{Example}
\newtheorem{assumption}{Assumption}
\newtheorem*{assumptionext}{Assumption A3 Extended}
\theoremstyle{remark}
\newcommand{\R}{\mathbb{R}}
\newcommand{\PP}{\mathbb{P}}
\newcommand{\E}{\mathbb{E}}
\newcommand{\iid}{\stackrel{\mathrm{i.i.d.}}{\sim}}
\newcommand{\norm}[1]{\left\lVert #1 \right\rVert}
\newcommand{\ind}{\mathbbm{1}}
\newcommand{\pr}{\mathbb{P}}
\newcommand{\Cov}{\mathrm{Cov}}
\newcommand{\Corr}{\mathrm{Corr}}
\newcommand{\Var}{\mathrm{Var}}
\newcommand{\tr}{\mathrm{tr}}
\newcommand{\diag}{\textrm{diag}}
\newcommand{\argmin}{\textrm{argmin}}
\newcommand{\arginf}{\textrm{arginf}}
\newcommand{\op}{\mathrm{op}}
\newcommand{\F}{\mathrm{F}}
\newcommand{\GEE}{\mathrm{GEE}}
\newcommand{\ML}{\mathrm{ML}}
\newcommand{\SL}{\mathrm{SL}}
\newcommand{\mbb}{\boldsymbol}
\newcommand{\vertiii}[1]{{\left\vert\kern-0.25ex\left\vert\kern-0.25ex\left\vert #1 
    \right\vert\kern-0.25ex\right\vert\kern-0.25ex\right\vert}}
\newcommand\independent{\protect\mathpalette{\protect\independenT}{\perp}}
    \def\independenT#1#2{\mathrel{\rlap{$#1#2$}\mkern2mu{#1#2}}}
\newcommand\given{\,|\,}
\newcommand{\RN}[1]{\textup{\uppercase\expandafter{\romannumeral#1}}}
\newcommand{\mylabel}[2]{#2\def\@currentlabel{#2}\label{#1}}
\title{Sandwich Boosting for Accurate Estimation in Partially Linear Models for Grouped Data}
\date{\today}
\author[1]{Elliot H.\ Young}
\author[1]{Rajen D.\ Shah}
\affil{Statistical Laboratory, University of Cambridge, UK}
\begin{document}
\maketitle
\begin{cbunit}
\begin{abstract}
	We study partially linear models in settings where observations are arranged in independent groups but may exhibit within-group dependence. Existing approaches estimate linear model parameters through weighted least squares, with optimal weights (given by the inverse covariance of the response, conditional on the covariates) typically estimated by maximising a (restricted) likelihood from random effects modelling or by using generalised estimating equations. We introduce a new `sandwich loss' whose population minimiser coincides with the weights  of these approaches when the parametric forms for the conditional covariance are well-specified, but can yield arbitrarily large improvements in linear parameter estimation accuracy when they are not. Under relatively mild conditions, our estimated coefficients are asymptotically Gaussian and enjoy minimal variance among estimators with weights restricted to a given class of functions, when user-chosen regression methods are used to estimate nuisance functions. We further expand the class of functional forms for the weights that may be fitted beyond parametric models by leveraging the flexibility of modern machine learning methods within a new gradient boosting scheme for minimising the sandwich loss. We demonstrate the effectiveness of both the sandwich loss and what we call  `sandwich boosting' in a variety of settings with simulated and real-world data.
\end{abstract}

\section{Introduction}\label{sec:intro}
Grouped data are commonplace in many scientific, econometric and sociological disciplines. Prime examples include: repeated measures data (e.g.\ multiple readings of patient data), longitudinal data (e.g.\ where
weekly sales are recorded across multiple stores), and hierarchical data (e.g.\ educational datasets clustered by school and potentially further sub-clustered by classroom).

To fix ideas, consider a regression setting where we have available grouped data $(Y_i,D_i,X_i)\in\R^{n_i}\times\R^{n_i}\times\R^{n_i\times d}$ for $i=1,\ldots,I$, with $Y_i$ a vector of responses and predictors $(D_i, X_i)$ separated out into a covariate $D_i$, whose contribution to the response we are particularly interested in, and remaining covariates $X_i$; for instance $D_i$ may be a treatment whose effect we wish to assess after controlling for additional covariates $X_i$. In total therefore we have $N:=\sum_{i=1}^I n_i$ observations, though typically not all independent.
A simple but popular approach to modelling this data in practice is via a linear model of the form
\begin{equation}\label{eq:linear-model}
    Y_i = \beta D_i + X_i\gamma + \varepsilon_i.
\end{equation}
Here $\varepsilon_i\in\R^{n_i}$ is a vector of errors such that $\E\left[\varepsilon_i \given D_i,X_i\right]=0$ and $(\beta,\gamma)\in\R \times \R^d$ are regression coefficients to be estimated, with $\beta$ our primary  target of  inference.

A challenge in such settings is properly accounting for potential correlations between components of $\varepsilon_i$ in order to obtain accurate estimates of the parameters. This may be achieved through a weighted least squares regression yielding estimates
\begin{equation}\label{eq:betahatsimple}
     \begin{pmatrix}\hat{\beta}\\\hat{\gamma}\end{pmatrix} :=
     M^{-1} \begin{pmatrix}\sum_iD_i^{\top}\hat{W}_iY_i\\\sum_iX_i^{\top}\hat{W}_iY_i\end{pmatrix}, \qquad \text{where} \qquad M:=      \begin{pmatrix}\sum_iD_i^{\top}\hat{W}_iD_i & \sum_iD_i^{\top}\hat{W}_iX_i \\ \sum_iX_i^{\top}\hat{W}_iD_i & \sum_iX_i^{\top}\hat{W}_iX_i\end{pmatrix},
 \end{equation}
in terms of weight matrices $\hat{W}_i \in \R^{n_i \times n_i}$ to be chosen.
The optimal choice $\hat{W}_i \propto \Cov[Y_i \given D_i,X_i]^{-1}$ results in semiparametric efficient estimation of $(\beta, \gamma)$. A variety of approaches have been proposed for constructing the $\hat{W}_i$. Among the most popular are multilevel models (also known as random or mixed effects models) \citep{membook, fahrmeir}, which additionally make distributional assumptions on the errors $\varepsilon_i$, typically that of Gaussianity, and implicitly specify a particular parametrisation of $\Cov[Y_i \given D_i,X_i]$ in terms of the covariates through the introduction of latent random coefficients. Parameters are typically estimated through (restricted) maximum likelihood estimation \citep{hartley, corbeil, membook}. An alternative is the marginal models framework \citep{heagerty, diggle, fahrmeir}, which directly models the conditional covariance through a parametric form often estimated via generalised estimating equations \citep{liangzeger, geehardin, ziegler}. Provided the forms of the conditional covariance are well-specified, any of these approaches will result in efficient estimates for $\beta$ and $\gamma$. 

It is however well known that all models are wrong \citep{box}, and it is of interest to understand, under misspecification, which approaches remain useful. Below we discuss the consequences of the two potential sources of misspecification, that of the conditional covariance $\Cov[Y_i \given D_i,X_i]$, and the conditional mean $\E[Y_i \given D_i,X_i]$.

\subsection{Conditional covariance misspecification} \label{sec:cond_cov}
Misspecification of the conditional covariance has been given a good deal of attention in the literature.
The generalised estimating equation approach that has come to be known as GEE1 \citep{gee1} explicitly recognises the possibility of misspecification, and instead specifies what is referred to as a
working model for the conditional covariance, with which to construct the weights.
Valid inference is guaranteed even with arbitrary (fixed) weights as the estimator \eqref{eq:betahatsimple} is unbiased and standard errors may be based on a sandwich estimate of the variance of $\hat{\beta}$ \citep{huber, gourieroux2, royall, liangzeger},
\begin{equation} \label{eq:sandwich}
    \left\{ M^{-1}
        \begin{pmatrix}
		\sum_i D_i^{\top} \hat{W}_i \hat{R}_i\hat{R}_i^{\top} \hat{W}_i D_i
        &
        \sum_i D_i^{\top} \hat{W}_i \hat{R}_i\hat{R}_i^{\top} \hat{W}_i X_i
        \\
        \sum_i X_i^{\top} \hat{W}_i \hat{R}_i\hat{R}_i^{\top} \hat{W}_i D_i
        &
        \sum_i X_i^{\top} \hat{W}_i \hat{R}_i\hat{R}_i^{\top} \hat{W}_i X_i
	  \end{pmatrix}
    M^{-1} \right\}_{1,1}
    , \, \hat{R}_i := Y_i - \hat{\beta}D_i - X_i \hat{\gamma} \in \R^{n_i}.
\end{equation}
The idea behind the working covariance model however is to approximate the ground truth sufficiently well such that the resulting $\hat{\beta}$ has reasonably low variance; various estimation methods have been proposed for this purpose \citep{prentice, crowder, lumley, geepack}.

While this intuition is basically well-founded, perhaps surprisingly, for a given model for the covariance, the success of these approaches depends crucially on the method of estimation, as we now demonstrate with two simple examples; specific details for these are given in Appendix~\ref{appex:cond-var-misspec}.

\begin{example}[Conditional correlation misspecification] \label{ex:cond-corr-misspec}
Consider a version of \eqref{eq:linear-model} with $X_i$ omitted for simplicity, $n_i \equiv n$ and each error vector  $\varepsilon_i \in\R^n$ given by the first $n$ realisations of an $\text{ARMA}(2,1)$ model. Suppose the weights are estimated in the parametric class consisting of inverses of the covariance matrices of $\text{AR}(1)$ processes, indexed by a single autoregressive parameter $\rho$; note that the scale of the weights does not affect the resulting $\hat{\beta}$ so we do not consider this as a parameter. We consider two specific settings within this general setup. Figure~\ref{fig:working-ARMA21} plots the objective functions that are minimised for two well-established methods for constructing weights based on estimates $\hat{\rho}$ of $\rho$.
	
The so-called quasi-pseudo-Gaussian maximum likelihood approach (ML) \citep{gourieroux, mccullagh, ziegler} treats the errors as if they were normally distributed with correlation matrix given by the $\text{AR}(1)$ process for a given $\rho$, and proceeds to maximise jointly over the unknown $\beta,\rho$ and variance $\sigma^2$, what would be the likelihood were this model to hold.
\begin{figure}
	\centering
	  \includegraphics[width=\textwidth]{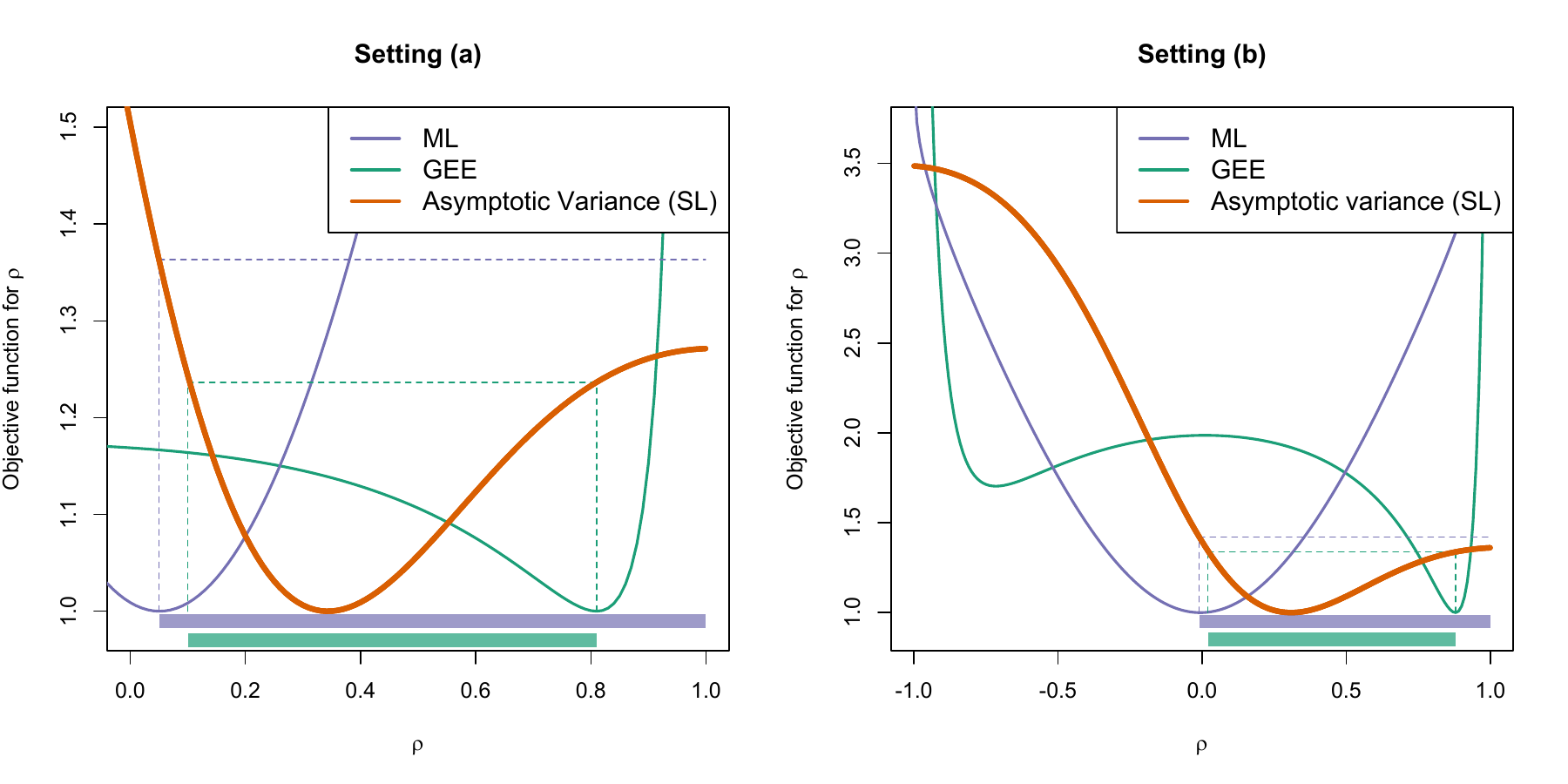}
	  \caption{The (scaled) objective functions of the parameter $\rho$ in the $\text{AR}(1)$ working model for quasi-pseudo-Gaussian maximum likelihood (ML), GEE1 (GEE) and the corresponding  asymptotic MSE relative to the minimum asymptotic MSE, in a settings with a ground truth given by an $\text{ARMA}(2,1)$ model of Example~\ref{ex:cond-corr-misspec}; coloured intervals at the bottom indicate ranges of $\rho$ where the (global) minimisers corresponding to ML and GEE would be outperformed in terms of MSE.\label{fig:working-ARMA21}}
\end{figure}

Motivated by the moment equation $\E\left[\varepsilon_{ij}\varepsilon_{ik}\right]=\sigma^2\rho^{|j-k|}$, a second approach (GEE) falling within the GEE1 framework estimates $\rho$ by the minimiser of
\begin{equation} \label{eq:GEErho}
\sum_{i=1}^I \sum_{j,k = 1}^{n} \big(\hat{\varepsilon}_{ij}\hat{\varepsilon}_{ik} - \hat{\sigma}^2\rho^{|j-k|}\big)^2, \qquad \text{with} \qquad \hat{\sigma} := \underset{\sigma > 0}{\argmin}\sum_{i=1}^I\sum_{j=1}^{n} (\hat{\varepsilon}_{ij}^2 - \sigma^2)^2 ;
\end{equation}
here the $\hat{\varepsilon}_i$ are the residuals from an initial unweighted least squares regression of $Y_i$ on $D_i$.

We also plot in orange the asymptotic variance (equivalent to the mean squared error), i.e. the population equivalent of \eqref{eq:sandwich}, of the $\beta$-estimator weighted by the inverse of an $\text{AR}(1)$ working correlation matrix for a given value of $\rho$ (the nomenclature `SL' in the legend is explained in Section~\ref{sec:contrib}). In Setting (a), we see that optimising either of these objectives can result in suboptimal weights in terms of the resulting mean squared error, and any choice of $\rho\in[0.1,0.8]$ would result in improved estimation of $\beta$. Setting (b) tells a similar story, but also illustrates issues that can arise due to local minima of the objective functions, which, in particular are typically not guaranteed to be convex. Attempting to optimise the GEE objective by initialising at $\rho=0$ results in gradient descent converging to the highly suboptimal local optimum on the left as the derivative of the objective at $\rho=0$ is slightly positive (see also Table~\ref{tab:misspec-corr} in Appendix~\ref{appex:cond-corr-misspec}, which in particular demonstrates this issue also persists for Newton optimisation schemes). The resulting asymptotic variance of this final $\hat{\beta}$ is substantially worse than even that of the unweighted choice corresponding to $\rho=0$.
\end{example}

\begin{example}[Conditional variance misspecification] \label{ex:cond-var-misspec}
    Consider an instance of model \eqref{eq:linear-model} with $n_i \equiv 1$, so the data are ungrouped, and $d=1$. Suppose the distribution of the data is such that the true conditional variance $\Var(\varepsilon_i \given D_i, X_i=x) = \Var(\varepsilon_i \given X_i=x) =: \sigma_0^2(x)$ with 
$\sigma_0(x)=2+\tanh(\lambda(x-\mu))$; we shall consider different settings for the pair of parameters $(\lambda,\mu)$. Consider using a misspecified class of functions of the form
\begin{equation} \label{eq:sigma_star}
 \sigma(x;\eta) = 1 + 2 \ind_{[\eta, \infty)}(x),
 \end{equation}
 where $\eta \in \R$, so the smooth curve of the $\tanh$ function is to be approximated by a step function.
 
 \begin{figure}
 	\centering
 	\includegraphics[width=\textwidth]{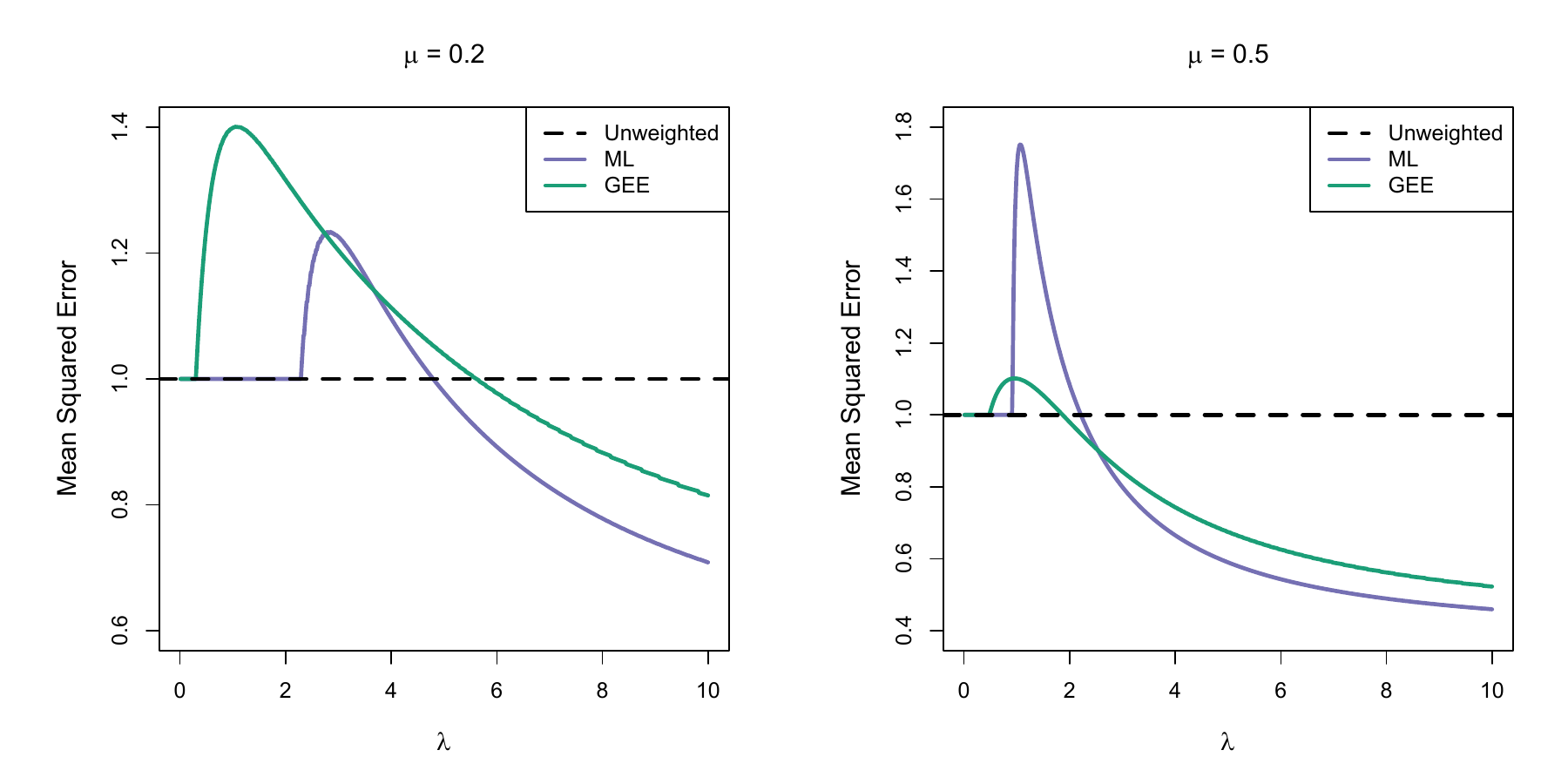}
 	\caption{Asymptotic  MSEs of \(\beta\)-estimators relative to an unweighted estimator, using weights in a misspecified model class \eqref{eq:sigma_star} estimated by each of ML and GEE in the settings of Example~\ref{ex:cond-var-misspec} parametrised by $(\lambda, \mu)$.\label{fig:MLE_vs_AV}}
 \end{figure}

Figure~\ref{fig:MLE_vs_AV} plots the relative asymptotic mean squared errors of the weighted least squares estimates of $\beta$, given by the first component of \eqref{eq:betahatsimple}, for quasi-pseudo-Gaussian maximum likelihood (ML) and GEE1-based approaches (GEE) for constructing weights $\hat{W}_i = 1 / \{\sigma(X_i;\hat{\eta})\}^2$ based on estimates $\hat{\eta}$ of $\eta$. The former involves treating the errors as if they were normally distributed with standard deviation $\sigma(x;\eta)$ for some $\eta$ while the latter estimates $\eta$ by the minimiser of the sum of the squared differences between the squared residuals from an initial least squares regression of $Y_i$ on $(D_i, X_i)$, and $(\sigma(X_i;\eta))^2$; see also \citet{robinson,carroll,tsiatis,you} for examples of estimation of the conditional variance through a similar least squares approach for improving estimation of $\beta$ in (partially) linear models.
 
We compare these strategies to a naive unweighted estimator, that is \eqref{eq:betahatsimple} with $\hat{W}_i$ constant, which makes no attempt to take advantage of the heteroscedasticity in the errors to improve estimation. Note that such weights are permitted in the model class \eqref{eq:sigma_star} used by ML and GEE in this example by taking $\eta$ large, so \eqref{eq:sigma_star} for some $\eta$ necessarily gives a better approximation to the ground truth compared to the unweighted approach.
 
A first interesting observation is the quite different behaviour of the ML and GEE approaches here, with none appearing to be uniformly preferable to the other across all parameter settings. Perhaps more surprising however is the fact that for certain values of $(\mu, \lambda)$, the performances of these more sophisticated approaches lead to an inflation of the variance over an unweighted estimator (of up to almost 80\%). This worrying behaviour can obfuscate model selection via AIC or BIC, as even at the population level they can favour models that result in poorer estimation of the parameter of interest $\beta$. 

To resolve this apparent paradox, notice that although the model classes used by ML and GEE here are richer, the optimal `projections' of the nuisance function $\sigma_0$ (and corresponding optimal weight function $W_0:=\sigma_0^{-2}$) relating to their respective losses do not necessarily coincide with the optimal projection in the sense of the mean squared error of $\hat{\beta}$; in fact in general there is no reason for them to do so, as illustrated schematically in Figure~\ref{fig:schematic} (see also Proposition~\ref{prop:modelmissspec} and Theorem~\ref{thm:divergingratio} for a formalisation of this phenomenon). We return to these issues in Section~\ref{sec:contrib}, but first turn our attention to misspecification of the conditional mean $\E[Y_i \given D_i, X_i]$.
\begin{figure}
   \centering
   	\includegraphics[width=0.6\textwidth]{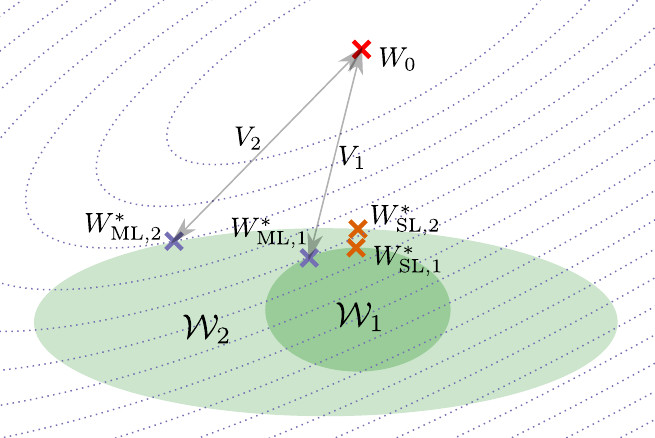}
   	\caption{Schematic of how the ML `projections' $W^*_{\text{ML},1}$ and $W^*_{\text{ML},2}$ of the optimal weight function $W_0 \propto \Cov(Y \given D, X)^{-1}$ onto the nested weight classes $\mathcal{W}_1 \subset \mathcal{W}_2$ may be such that $W^*_{\text{ML},1}$ in fact improves on $W^*_{\text{ML},2}$ when the contours of the negative likelihood loss (dotted blue ellipses) do not align with those of the asymptotic MSE; the contours of the latter are here to be thought of as circles centred at $W_0$ (not shown), so that the distances, $V_1<V_2$, are the variances of $\hat{\beta}$ constructed using weights $W^*_{\text{ML},1}$ and $W^*_{\text{ML},2}$ respectively. The projections $W^*_{\text{SL},1}$ and $W^*_{\text{SL},2}$ corresponding to the loss given by the asymptotic MSE itself (`sandwich loss') do not suffer from any such paradoxes.\label{fig:schematic}}
\end{figure}
\end{example}

\subsection{Conditional mean misspecification} \label{sec:cond_mean}
When the conditional expectation of the response given $(D_i, X_i)$ is not necessarily expected to be linear, a popular model to consider is the partially linear model
\begin{equation}\label{eq:PLR}
 	\begin{split}
 		Y_i&=\beta D_i+g_0(X_i)+\varepsilon_i,\\
 		D_i&=m_0(X_i)+\xi_i.
 	\end{split}
\end{equation}
Here $(Y_i,D_i,X_i,\varepsilon_i)$ are as in \eqref{eq:linear-model}, so in particular $\E[\varepsilon_i \given D_i, X_i] = 0$; $g_0$ and $m_0$ are potentially nonlinear row-wise functions, that is e.g.\ $g_0:\R^d \to \R$, and writing $X_{ij}$ for the $j$th row of matrix $X_i$, with a slight abuse of notation $g_0:\R^{n_i\times d}\to\R^{n_i}$ is then defined via $(g_0(X_i) )_j:=g_0(X_{ij})$;
and error $\xi_i\in\R^{n_i}$ in the $D_i$ on $X_i$ regression satisfies $\E\left[\xi_i\given X_i\right]=0$. Note that the model entails the conditional mean independence assumptions that 
\begin{align*}
    \E(Y_{ij} \given D_i, X_i) = \E(Y_{ij} \given D_{ij}, X_{ij}) \qquad \text{and} \qquad \E(D_{ij} \given X_i) = \E(D_{ij} \given X_{ij}).
\end{align*}
Nevertheless, the model is flexible enough to well-approximate a wide variety of data generating processes, yet still permits easy interpretation of the contribution of $D_i$ to the response. 
The second equation serves to model confounding due to $X_i$. In the ungrouped setting, i.e.\ where $n_i \equiv 1$, estimating $m_0$ in addition to $g_0$ forms a key part of the double / debiased machine learning (DML) framework \citep{chern} for inference about $\beta$, which in recent years has emerged as the dominant approach for estimation in partially linear models. The popularity of this paradigm is due to the fact that it accommodates the use of arbitrary machine learning methods for estimating the nuisance functions $m_0$ and $g_0$, and requires only a relatively slow rate of $1/N$ for the product of the corresponding mean squared errors in order to yield estimates of $\beta$ that converge at the parametric $1/\sqrt{N}$ rate.

\citet{emmenegger} recently extended this approach to the grouped data setting with $n_i > 1$, assuming a parametric form of the covariance $\Cov(\varepsilon_i \given D_i, X_i)$ governed by a random effects model. To estimate $\beta$, they considered regressing each of $Y_i$ and $D_i$ on $X_i$ using some independent auxiliary data and, with the resulting estimated regression functions, formed corresponding residuals $\hat{R}^Y_i$ and $\hat{R}^D_i$ to give
\begin{equation} \label{eq:DML}
	\hat{\beta} = \left(\sum_i\hat{R}_i^{D^{\top}}\hat{W}_i\hat{R}_i^D\right)^{-1}\left(\sum_i\hat{R}_i^{D^{\top}}\hat{W}_i\hat{R}_i^Y\right).
\end{equation}
Here the weight matrices $\hat{W}_i$ are formed as the inverse conditional covariances estimated using (restricted) maximum likelihood. In practice, sample-splitting and cross-fitting are used in place of auxiliary data, permitting semiparametric efficient estimates, provided the model is well-specified.

However, the flexibility in modelling the conditional mean afforded by DML comes with implications for potential misspecification of the conditional covariance. One key requirement of the approach above is that, in addition to assuming a parametric form for the conditional covariance, it should also not depend on $D_i$, i.e.\ we must have $\Cov(\varepsilon_i \given D_i, X_i) =  \Cov(\varepsilon_i \given X_i)$. This restriction is in fact a fundamental limitation of approaches based on DML. It comes as a consequence of requiring Neyman orthogonality, a certain first-order insensitivity to plugging in potentially biased machine learning estimators. Writing $l_0(X_i) := \E(Y_i \given X_i)$ and $\hat{l}$ for the corresponding regression estimate, in our case, this entails
\begin{equation} \label{eq:neyman}
	\sqrt{I}\, \E\left[\left(\hat{R}^Y_i-\beta\hat{R}^D_i\right)^{\top}\hat{W}_i\xi_i\right]= \sqrt{I} \, \E\left[\left(l_0(X_i)-\hat{l}(X_i)\right)^{\top} \E\left[\hat{W}_i\xi_i \,\big|\, X_i\right]\right]
\end{equation}
being approximately zero,
which may not hold unless $\hat{W}_i$ is a function of $X_i$ alone.
Thus misspecification of the conditional covariance, and the worrying consequences this may bring, deserve even greater attention when modelling the conditional mean in a flexible fashion through a partially linear model.

\subsection{An overview of our contributions}\label{sec:contrib}
To address the difficulties resulting from (inevitable) misspecification of the conditional covariance, we introduce a new approach for determining weight matrices in weighted least squares estimators \eqref{eq:betahatsimple} or the DML estimator in a partially linear model setting \eqref{eq:DML}. Our proposal is to  minimise a sandwich estimate of the variance of $\hat{\beta}$ (i.e.\ \eqref{eq:sandwich} or the equivalent for the DML estimate) over a given parametrisation of weight matrices, thereby directly targeting the primary objective of interest: the estimation performance of $\hat{\beta}$. We thus treat this sandwich estimate of the variance as a loss function---the `sandwich loss' (SL)---by which we determine the weights.

The asymptotic variance plotted in Figure~\ref{fig:working-ARMA21} is precisely the population version of this sandwich loss, and minimising this will, by the very definition of the loss, deliver an estimator of $\beta$ of minimal asymptotic variance among those considered. Returning to Example~\ref{ex:cond-var-misspec}, Table~\ref{tab:MLE_vs_AV} demonstrates that although ML and GEE are to be preferred in terms of estimating the true variance function $\sigma_0^2$, they are worse when estimating $\beta$ compared to our choice of weights tailored specifically for this purpose.
\begin{table}
	\begin{center}
		\begin{tabular}{ccc}
			\toprule
			Objective function & Asymptotic MSE of $\hat{\beta}$ & Asymptotic integrated MSE of $\hat{\sigma}$\\
			\midrule
			GEE &   4.47   &    {\bf 0.63}   \\
			ML &   6.82   &    0.87   \\
			Sandwich loss &   {\bf 4.10}   &    1.10 \\
			\bottomrule
		\end{tabular}                                                                      
		\caption{Quality of the estimated weights from GEE, ML and minimising the sandwich loss in Example~\ref{ex:cond-var-misspec} in terms of the derived $\hat{\beta}$ and estimate $\hat{\sigma}$ of $\sigma_0$ given by the square root of the inverse of the weights.}\label{tab:MLE_vs_AV}
	\end{center}
\end{table}
While in these simple examples, the performances of the different methods are noticeably, but not radically different, we show later in Theroem~\ref{thm:divergingratio} that there exist data generating distributions for which for a large class of misspecified working covariance models, the ratio of the variance for the $\beta$-estimators between either GEE or ML and our optimal weighting scheme, can be arbitrarily large.
 
One key message of our paper therefore is that particularly when there is a high risk of misspecification of the conditional covariance, the sandwich loss may be preferable over existing criteria when inference on the (partially) linear model parameter $\beta$ is of primary interest. In fact, even in the case that the conditional covariance is well-specified, there is a danger that ML or GEE approaches could converge to weights that are only locally optimal for their respective losses. Since it is only the global optima of  these losses that correspond to weights with favourable properties for estimation of $\beta$, there is no guarantee that the weights obtained are even locally optimal in terms of the resulting asymptotic variances. In this sense GEE and ML approaches may be more vulnerable to the consequences of local optima in their objective functions, compared to the sandwich loss; see Section~\ref{sec:sim-complexity}.
 
A second main aim of our work is to introduce a new modelling strategy for working conditional covariances that can harness the power and flexibility of machine learning methods, similarly to how DML uses machine learning to accurately estimate nuisance functions. A challenge however is that standard regression methods cannot be directly deployed to construct weight matrices. In order to make use of these, we first decompose the inverse of the weights into a working conditional variance of each entry of $\varepsilon_i$, and a working correlation that we model parametrically. We introduce a new gradient boosting approach for estimating these two components through minimising our sandwich loss; this takes as input a user-chosen regression method that is used within the boosting procedure to estimate the conditional variances. We demonstrate the favourable performance of our resulting `sandwich boosting' method in a variety of numerical experiments.

In Section~\ref{sec:sandwich_loss} we introduce the sandwich loss and compare its population version to ML and GEE-based equivalents. In Section~\ref{sec:sample} we verify that despite the unusual form of the sandwich loss, under relatively mild conditions, we can expect a minimiser of the sample version to converge to its population counterpart (Theorem~\ref{thm:parametric-consistency}). 
We introduce our general cross-fitted weighted estimation approach in Section~\ref{sec:cross} before describing our proposed sandwich boosting scheme in Section~\ref{sec:boosting}. Section~\ref{sec:theory} presents theory showing that our resulting estimator for the partially linear model coefficient $\beta$ is asymptotically Gaussian under relatively mild conditions on the predictive ability of nuisance function estimators, permitting the construction of honest confidence intervals for $\beta$. In contrast to existing results in this context, our theory permits the group sizes $n_i$ to grow with the number of groups $I$, and importantly accommodates misspecification of the conditional covariance. We present the results of a variety of numerical experiments on simulated and real-world data in Section~\ref{sec:numerical_results} that further explore the themes hinted at in Examples~\ref{ex:cond-corr-misspec} and~\ref{ex:cond-var-misspec}, and demonstrate the effectiveness of our sandwich boosting approach. We conclude with a discussion in Section~\ref{sec:discussion} outlining avenues for further work, including a sketch of an extension to estimating a coefficient function in a version of \eqref{eq:PLR} with $\beta$ replaced by a function $\beta(X_i)$ that is a linear combination of known basis functions, using a generalisation of the sandwich loss. The supplementary material contains the proofs of all results presented in the main text, additional theoretical results, further details on the examples and numerical experiments, and a detailed computational analysis of the sandwich boosting methodology. Sandwich boosting is implemented in the R package \texttt{sandwich.boost}\footnote{\url{https://github.com/elliot-young/sandwich.boost/}}. 
Below we briefly review some related work not necessarily covered elsewhere in the introduction, and collect together some notation used throughout the paper.

\subsection{Other related literature} \label{sec:related}
As indicated  in the previous sections, our work connects to a vast literature on mixed effects models, also known as multilevel models or hierarchical models, and generalised estimating equations. Some recent developments in this area have looked at such models in high-dimensional contexts. In particular \citet{li2022inference} consider using a particular proxy conditional covariance parametrised by a single parameter for computational simplicity. \citet{li2018doubly} considers a flexible conditional covariance specification through selecting from high-dimensional random effects via regularising terms in the Cholesky decomposition of the covariance matrix of the random effects.

Most closely related to our setup here however is the work of \citet{emmenegger} who consider partially linear mixed effect models \citep{zeger1994semiparametric} in the double machine learning (DML) framework, popularised by \citet{chern}; see also \citet{kennedy2022semiparametric} for a recent review of this broad topic. Earlier work considered specific nonparametric estimators for $g_0$ in the (grouped) partially linear model framework, for example \citet{huang2007efficient} use regression splines to estimate $g_0$ and a GEE approach for estimating weights.
Within the DML area, work related to the setting of the (ungrouped) partially linear model includes \citet{vansteelandt2022assumption} who propose new targets of inference and DML estimation strategies in potentially misspecified generalised partial linear models, and \citet{emmenegger2021regularizing} who look at estimation in partially linear models with unobserved confounding in an instrumental variables setting using a DML approach and additional regularisation to reduce variance.

Boosting \citep{schapire1, freund1996experiments}, on which our sandwich boosting proposal is built, has received a lot of interest in recent years due to its success on modern datasets of interests. 
A long line of work (see for example \citet{breiman2, mason, statisticalboosting}, and \citet{buhlmann1} for a review) in machine learning has resulted in the functional gradient descent perspective of boosting, which we make use of in developing our sandwich boosting proposal.

In general terms, our use of the sandwich loss involves selecting among estimators (in our case determined by weight functions) based on estimates of their quality (in our case, their MSEs). In this sense it is related to a number of statistical approaches, including, for example cross-validation. Of particular note is the recent work of \citet{park2021more} who look at average treatment effect estimation in multilevel studies and pick from among a family of estimators based on augmented inverse propensity weighting \citep{RRZ94, RR95}, one minimising an estimate of their variance.

\subsection{Notation}\label{sec:notation}
We denote by $\mathbb{S}^m$ the set of symmetric $m \times m$ matrices, and for $M \in \mathbb{S}^m$, we write \(\Lambda_{\max}(M)\) and \(\Lambda_{\min}(M)\) for its maximum and minimum eigenvalues respectively. We write $\mathbb{S}^m_{++} \subset \mathbb{S}^m$ for the set of positive definite matrices. Let \(\Phi\) denote the cumulative distribution function of a standard Gaussian distribution. We will also use the shorthand \([I]:=\left\{1,\ldots,I\right\}\). For the uniform convergence results we will present, it will be helpful to write, for a law $P$ governing the distribution of a random vector \(U\in\R^d\), $\E_P U$ for its expectation and $\pr_P(U \in B)=: \E \ind_B (U)$ for any measurable $B \subseteq \R^d$. Further, given a sequence of families of probability distributions \(\left(\mathcal{P}_I\right)_{I\in\mathbb{N}}\), and for a sequence of families of real-valued random variables \(\left(A_{P,I}\right)_{P\in\mathcal{P}_I,I\in\mathbb{N}}\) (which we note are permitted to depend on $P \in \mathcal{P}_I$), we write \(A_{P,I}=o_{\mathcal{P}}(1)\) if \(\lim_{I\to\infty}\sup_{P\in\mathcal{P}_I}\PP_{P}\left(\left|A_{P,I}\right|>\epsilon\right)=0\) for all \(\epsilon>0\), $A_{P,I}=o_{\mathcal{P}}(g(I))$ for a given function $g:(0, \infty) \to (0, \infty)$ if $g(I)^{-1}A_{P,I}=o_{\mathcal{P}}(1)$, and \(A_{P,I}=O_{\mathcal{P}}(1)\) if for any \(\epsilon>0\) there exist \(M_{\epsilon}, I_{\epsilon}>0\) such that \(\sup_{I\geq I_{\epsilon}}\sup_{P\in\mathcal{P}_I}\PP_{P}\left(\left|A_{P,I}\right|>M_{\epsilon}\right)<\epsilon\).

\section{The sandwich loss}\label{sec:sandwich_loss}
In this section, we first outline a general weighted least squares framework for estimating $\beta$, within which we formally introduce the notion of the sandwich loss. In Section~\ref{sec:pop} we study properties of the sandwich loss at the population level, compared to the ML and GEE-based approaches described in Section~\ref{sec:cond_cov}. In Section~\ref{sec:sample} we then study the behaviour of the sample version of the sandwich loss and show that under mild conditions, a minimiser will converge in probability to the population minimiser. For the remainder of this paper, we will work in the setting of the partially linear model \eqref{eq:PLR}, which includes as a special case, the linear model \eqref{eq:linear-model}.
\subsection{Weighted estimation} \label{sec:weighted} 

Here we outline a general strategy for weighted estimation of $\beta$, with which we will introduce our proposed sandwich loss. For simplicity, we describe our approach in terms of an auxiliary dataset, independent of our main data. In practice however, we use sample splitting and cross-fitting to construct our estimator, as described in Section~\ref{sec:cross}. As in the approach of \citet{emmenegger}, we begin by regressing each of $Y$ and $D$ onto $X$ using our main data to give estimates $\hat{l}$ and $\hat{m}$ of the row-wise conditional expectations $l_0(X_i) = \E(Y_i \given X_i)$ and $m_0(X_i) = \E(D_i \given X_i)$. With these we form
respective vectors of residuals $\tilde{R}^Y$ and $\tilde{R}^D$ using which we find an initial estimate $\tilde{\beta}$ of $\beta$ using \eqref{eq:DML} with the weights $\hat{W}_i$ set to identity matrices:
\[
\tilde{\beta} = \left(\sum_{i}\tilde{R}_i^{D^{\top}}\tilde{R}_i^D\right)^{-1}\left(\sum_{i}\tilde{R}_i^{D^{\top}}\tilde{R}_i^Y\right).
\]
We then form `estimates' of the errors $\xi_i$ and $\varepsilon_i$ given by $\tilde{\xi}_i:=\tilde{R}_i^D$ and $\tilde{\varepsilon}_i:=\tilde{R}^Y_i - \tilde{\beta}\tilde{R}^D_i$ to obtain a sandwich estimate of ($N$ times) the variance of a $\beta$-estimator utilising given weight matrices $W(X_i)$:
 \begin{equation}\label{eq:sandwichloss}
 \hat{L}_{\SL}(W):=\left(\frac{1}{N} \sum_{i}\tilde{\xi}_i^{\top}W(X_i)\tilde{\xi}_i\right)^{-2}\left(\frac{1}{N} \sum_{i}\left(\tilde{\xi}_i^{\top}W(X_i)\tilde{\varepsilon}_i\right)^2\right).
 \end{equation}
In the above, $W$ is a function that given $X \in \R^{n \times d}$ for any $n  \in \mathbb{N}$, outputs a matrix $W(X) \in \R^{n \times n}$. The sandwich loss $\hat{L}_{\SL}$ views the variance estimate as a function of $W$, and specifically as a measure of the quality of the weights $W$, somewhat analogously to how the likelihood views the density of the data as a function of parameters to be determined.
Given a class $\mathcal{W}$ of functions $W$, we then propose to find an (approximate) minimiser $\hat{W}$ of $\hat{L}_{\SL}$ over $\mathcal{W}$ (see Section~\ref{sec:boosting} for our sandwich boosting approach for carrying this out).
While the sandwich loss is thus a rather trivial (re-)definition, as we try to demonstrate in the present work, this shift in perspective from variance estimate to a loss function to be minimised, can lead to non-trivial improvements in terms of estimating $\beta$. Note that the sandwich loss is unaffected by any positive scaling of $W$: this is to be expected since the resulting $\hat{\beta}$ is equally invariant.

On the auxiliary data, we then set $\hat{W}_i := \hat{W}(X_i)$, and form a weighted $\beta$-estimate of the form \eqref{eq:DML}.
Our $\hat{W}$ should then deliver a low variance estimate of $\beta$ since this is precisely the way in which it was constructed. Indeed, we show in Section~\ref{sec:theory} that the variance estimate \eqref{eq:sandwichloss} consistently estimates the true variance (times $N$).

Note that although we have introduced the sandwich loss via a specific construction of the error estimates $\tilde{\varepsilon}_i$ and $\tilde{\xi}_i$, it is applicable more broadly to other such estimates. For example, in the simpler linear model setting, applying the Woodbury matrix identity shows that the sandwich variance estimate \eqref{eq:sandwich} also takes the form given in \eqref{eq:sandwichloss}, for certain error estimates formed through weighted least squares regressions.
 
In the following section, we compare the sandwich loss to the ML and GEE-based approaches outlined in Section~\ref{sec:cond_cov} theoretically by studying population versions of these.
 
\subsection{Population level analysis} \label{sec:pop}
The examples in Section~\ref{sec:cond_cov} hint at potential issues with the ML and GEE-based approaches, which unlike the sandwich loss, are not explicitly geared towards minimising the variance of the resulting $\hat{\beta}$: whilst when the conditional covariance is well-specified, their goals coincide with those of the sandwich loss, in the case of misspecification, this need not be the case. Recall that since the weight matrices are restricted to be functions of $X_i$ alone due to the requirement of Neyman orthogonality, it is plausible that some form of misspecification is unavoidable.

To study the properties of the approaches under potential misspecification, we work for simplicity in a setting where we observe i.i.d.\ instances of the partially linear model~\eqref{eq:PLR} with fixed finite group size $n_i\equiv n$, and consider population versions of the respective losses:
\begin{align*}
    L_{\ML}(W) &:= \E\left[-\log\det W(X) + \varepsilon^{\top}W(X)\varepsilon\right], \\
    L_{\GEE}(W) &:= \E\left[\big\|W(X)^{-1}-\varepsilon\varepsilon^{\top}\big\|^2\right], \\
    L_{\SL}(W) &:= \Big(\E\left[\xi^{\top}W(X)\xi\right]\Big)^{-2}\left(\E\left[\left(\xi^{\top} W(X)\varepsilon\right)^2\right]\right).
\end{align*}
Here $\varepsilon, \xi$ and $X$ are to be understood as generic versions of their counterparts with subscript $i$ satisfying $\Lambda_{\min}\E(\varepsilon\varepsilon^{\top}\given D,X) > \epsilon$ and $\Lambda_{\min}\E(\xi\xi^{\top}\given X) > \epsilon$ almost surely for some $\epsilon>0$, and $\E[\{\Lambda_{\max}\E(\xi\xi^{\top}\given X)\}^2] < \infty$. Note that the GEE loss $L_{\GEE}$ is defined here with respect to an arbitrary matrix norm $\| \cdot\|$ derived from an inner product, such as the Frobenius norm.

In practice we would minimise empirical versions of these loss functions over a restricted class $\mathcal{W}$ of weight functions $W$. The following family of such classes will facilitate our theoretical comparison of the loss functions. Let $q:\R^{n \times d} \to \R^{n \times d}$ be a measurable function representing some potential `coarsening' of its input data $x \in \R^{n \times d}$. For example, $q$ may be based on a partition of $\R^{n \times d}$ into disjoint regions, with $x \mapsto q(x)$ reducing the initial data $x \in \R^{n \times d}$ to a representative of the particular region into which $x$ falls. Given some such $q$, we consider the class of all weight functions given by a measurable function $\tilde{W}: \R^{n \times d} \to \mathbb{S}^n$ composed with $q$ such that the final outputs are invertible almost surely and obey mild integrability conditions:
\begin{equation} \label{eq:W_def}
\mathcal{W} := \Big\{ \tilde{W} \circ q \text{ for } \tilde{W} :\R^{n \times d} \to \mathbb{S}^n  : \E [\{\Lambda_{\max}(\tilde{W} \circ q(X))\}^2],\,\, \E [\{\Lambda_{\min}(\tilde{W} \circ q(X))\}^{-2} ] < \infty   \Big\}.
\end{equation}
The above setup includes as a special case, the setting where $\mathcal{W}$ includes all appropriately integrable weight matrices $W:\R^{n \times d} \to \mathbb{S}^{n}; x \mapsto W(x)$ so in particular $\mathcal{W}$ would then contain $x \mapsto \E(\varepsilon \varepsilon^{\top} \given X = x)^{-1}$. On the other hand, when the image of $q$ is finite, the resulting $\mathcal{W}$ is essentially a parametric class.

Let us write $V_{\text{ML/GEE}} := L_{\SL}(\E (\varepsilon \varepsilon^{\top} \given q(X))^{-1})$ and $V_{\SL}$ for the infimum of  $L_{\SL}$ over $\mathcal{W}$. Proposition~\ref{prop:modelmissspec} below shows that $V_{\text{ML/GEE}}$ is the asymptotic variance of both the ML and GEE losses, and gives a condition under which this coincides with $V_{\SL}$.

\begin{prop}\label{prop:modelmissspec}
\begin{enumerate}[label=(\alph*)]
\item $L_{\ML}$ and $L_{\GEE}$ are both minimised over $\mathcal{W}$ \eqref{eq:W_def} by $\E (\varepsilon \varepsilon^{\top} \given q(X))^{-1}$. 
\item The asymptotic variance $V_{\SL}$ of the sandwich loss satisfies $V_{\SL}\leq V_{\mathrm{ML/GEE}}$, with equality if and only if for some constant $c > 0$,
\begin{equation}\label{eq:GEE=MLE=AV}
    \E\left[\tr\big[\Cov(\varepsilon\given q(X))^{-1}\Cov(\varepsilon\given D,X)\big]\,\xi\xi^{\top} \,\Big|\, q(X)\right] = c\, \E\big[\xi\xi^{\top}\,\big|\, q(X)\big].
\end{equation}
\end{enumerate}
\end{prop}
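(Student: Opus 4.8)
The plan is to treat the two parts separately, reducing everything to pointwise statements by conditioning on $q(X)$.

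For part (a), since any $W = \tilde W\circ q\in\mathcal{W}$ is constant given $q(X)$, the tower property lets me minimise the conditional objectives pointwise. For $L_{\ML}$, conditioning gives $\E[-\log\det W(X) + \varepsilon^\top W(X)\varepsilon\mid q(X)] = -\log\det W + \tr(W\,\E[\varepsilon\varepsilon^\top\mid q(X)])$ with $W=\tilde W(q(X))$ fixed; the map $W\mapsto -\log\det W + \tr(W\Sigma)$ on positive-definite matrices is strictly convex with unique stationary point $\Sigma^{-1}$, so the minimiser is $\E[\varepsilon\varepsilon^\top\mid q(X)]^{-1}$. For $L_{\GEE}$, conditioning turns the problem into the $L^2$ (given $q(X)$) projection of $\varepsilon\varepsilon^\top$ onto the symmetric matrices that are constant given $q(X)$, whose solution is the conditional mean $\E[\varepsilon\varepsilon^\top\mid q(X)]$; inverting gives the same minimiser. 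In both cases I would verify the integrability constraints in \eqref{eq:W_def}, using the assumed eigenvalue bounds on the conditional covariances, so that this minimiser indeed lies in $\mathcal{W}$.

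For part (b), the inequality $V_{\SL}\le V_{\mathrm{ML/GEE}}$ is immediate because $W^*:=\E(\varepsilon\varepsilon^\top\mid q(X))^{-1}\in\mathcal{W}$ by part (a), so the infimum of $L_{\SL}$ over $\mathcal{W}$ is at most its value at this point. Equality holds precisely when $W^*$ is a minimiser of $L_{\SL}$, so the task is to characterise the minimisers. I would first simplify the loss by conditioning: using $\E[\varepsilon\mid D,X]=0$ the numerator becomes $\E[\xi^\top W\Cov(\varepsilon\mid D,X)W\xi]$ and the denominator $(\E[\xi^\top W\xi])^2$, so $L_{\SL}$ is a scale-invariant ratio of a quadratic form in $W$ over a squared linear form. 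Viewing $Z=W\xi$ as an instrument, this is an optimal-instrument (Rayleigh-quotient) problem in the inner product $\langle Z_1,Z_2\rangle := \E[Z_1^\top\Cov(\varepsilon\mid D,X)Z_2]$, and I would read off both the value $V_{\SL}$ and the minimiser from the Riesz representer of the functional $Z\mapsto\E[Z^\top\xi]$, Cauchy--Schwarz giving equality exactly when $W\xi$ is proportional to that representer. Concretely I would compute the first variation of $L_{\SL}$ along $W_t=W+tH$ with $H=\tilde H\circ q$ symmetric, set it to zero, and use the tower property together with the arbitrariness of $\tilde H(q(X))$ to reduce stationarity to a pointwise proportionality (a.s.\ in $q(X)$) between a conditional expectation of the form $\E[\,\cdots\,\xi\xi^\top\mid q(X)]$ and $\E[\xi\xi^\top\mid q(X)]$. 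Specialising this condition at $W=W^*$ should then yield \eqref{eq:GEE=MLE=AV}, with the constant $c$ emerging from the normalisation.

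The main obstacle is twofold. First, because $L_{\SL}$ is a ratio rather than a convex function of $W$, I must argue that a stationary point is in fact a global minimiser; the cleanest route is the Cauchy--Schwarz/representer argument above, which recasts stationarity as the equality case of an inequality and is therefore automatically sufficient, but it requires care in setting up the correct Hilbert space and in using the integrability conditions of \eqref{eq:W_def} to ensure $Z\mapsto\E[Z^\top\xi]$ is bounded so that the representer exists. Second, the algebraic reduction of the vanishing first variation to the clean proportionality form of \eqref{eq:GEE=MLE=AV} is delicate: one must carefully track how $\Cov(\varepsilon\mid D,X)$, $\Cov(\varepsilon\mid q(X))^{-1}$ and $\xi\xi^\top$ combine inside the conditional expectation and confirm that the resulting symmetric-matrix identity is equivalent to proportionality with $\E[\xi\xi^\top\mid q(X)]$; I expect this matrix bookkeeping, rather than the conditioning or convexity steps, to be where the real work lies.
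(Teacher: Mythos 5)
Your proposal is correct and follows essentially the same route as the paper: part (a) via conditioning on $q(X)$ together with convexity of $W\mapsto-\log\det W+\tr(W\Sigma)$ and the $L^2$-projection characterisation of the conditional mean, and part (b) via writing $L_{\SL}$ as a quadratic-over-squared-linear ratio in $W$ and applying Cauchy--Schwarz, whose equality case characterises the minimisers and, evaluated at $\E(\varepsilon\varepsilon^{\top}\given q(X))^{-1}$, yields \eqref{eq:GEE=MLE=AV}. The one point to be careful about is that the Cauchy--Schwarz/representer step must be carried out over the constrained class of instruments $Z=W(q(X))\xi$ --- equivalently, in the space of weight functions of $q(X)$, as the paper does via $\mathrm{vec}(W)$ and the conditional second-moment matrix $\E[\mathrm{vec}(\xi\varepsilon^{\top})\mathrm{vec}(\varepsilon\xi^{\top})^{\top}\given q(X)]$ --- and not over all square-integrable instruments, for which the unconstrained representer $\Cov(\varepsilon\given D,X)^{-1}\xi$ is generally unattainable within $\mathcal{W}$ and would give the wrong equality condition; your restriction of the first-variation directions to $H=\tilde H\circ q$ handles this correctly.
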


Note the condition~\eqref{eq:GEE=MLE=AV} holds in the instance that $\Cov(\varepsilon \given D, X) = \Cov(\varepsilon \given q(X))$; but given variable $D$ is considered here to be important enough for its associated parameter to be the target of our inference, it is not inconceivable that the errors depend on them after conditioning on $q(X)$. Moreover, we can expect that $\Cov(\varepsilon \given X) \neq \Cov(\varepsilon \given q(X))$ unless $\mathcal{W}$ is a sufficiently rich class of functions; in the former case 
\eqref{eq:GEE=MLE=AV} may fail to hold even in the favourable case where $\Cov(\varepsilon \given D, X) = \Cov(\varepsilon \given X)$.
In settings where \eqref{eq:GEE=MLE=AV} fails, it is possible for the ratio $V_{\mathrm{ML/GEE}} / V_{\SL}$ to be arbitrarily large, as Theorem~\ref{thm:divergingratio} below shows.

\begin{theorem}\label{thm:divergingratio}
Suppose $q(X)$ is not almost surely constant. Then for all \(M\geq1\), and for all pairs \(\Sigma,\Omega:\R^{n \times d} \to \mathbb{S}^n_{++}\) of positive definite matrices that are functions of $q(X)$, there exists a law on \((\varepsilon,\xi)\) satisfying the conditions of the model~\eqref{eq:PLR} with \(\E\left[\varepsilon\varepsilon^{\top}\given q(X)\right]=\Sigma(q(X))\), \(\E\left[\xi\xi^{\top}\given q(X)\right]=\Omega(q(X))\) and
\begin{equation*}
    \frac{V_{\mathrm{ML/GEE}}}{V_{\SL}}\geq M.
\end{equation*}
\end{theorem}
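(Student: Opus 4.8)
The plan is to exploit the one degree of freedom left untouched by the prescribed conditional second moments: the dependence of $\Cov(\varepsilon \given D,X)$ on $D$ (equivalently on $\xi$), which is precisely what controls whether the equality condition~\eqref{eq:GEE=MLE=AV} of Proposition~\ref{prop:modelmissspec} holds. Throughout I condition on $Q := q(X)$ and take the law of $(\varepsilon,\xi)$ to depend on $X$ only through $Q$, so that conditioning on $(D,X)$ collapses to conditioning on $(\xi,Q)$. Setting $g_0 = m_0 = 0$ (and $\beta$ arbitrary) makes $D = \xi$, so the model conditions reduce to $\E[\varepsilon \given \xi,Q] = 0$ together with the required second moments. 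Since $q(X)$ is not almost surely constant, there are two values $q_1 \neq q_2$, each of positive probability, to play off against one another.

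Concretely, I would take $\xi = \Omega(Q)^{1/2}\eta$ with $\eta \given Q$ spherically symmetric and standardised ($\E[\eta\eta^\top \given Q] = I$), and $\varepsilon = \sqrt{\alpha}\,\Sigma(Q)^{1/2} Z$ with $Z \sim N(0,I)$ drawn independently of $\eta$ given $Q$, where $\alpha = \alpha(\|\eta\|,Q) \geq \delta > 0$ is a nonnegative factor with $\E[\alpha \given Q] = 1$. This yields $\E[\varepsilon \given \xi,Q] = 0$, $\E[\varepsilon\varepsilon^\top \given Q] = \Sigma(Q)$, $\E[\xi\xi^\top \given Q] = \Omega(Q)$ and $\Cov(\varepsilon \given \xi,Q) = \alpha\,\Sigma(Q)$, so all model and integrability conditions hold (the floor $\alpha \geq \delta$ secures $\Lambda_{\min}\E(\varepsilon\varepsilon^\top \given D,X) > \epsilon$). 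The virtue of the spherical choice is that $\E[\alpha\,\xi\xi^\top \given Q] = \kappa(Q)\,\Omega(Q)$ for the scalar boost factor $\kappa(Q) := \tfrac1n \E[\alpha\|\eta\|^2 \given Q]$, so the entire comparison collapses to a scalar problem indexed by $\kappa$.

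Writing $\phi(Q) := \tr(\Sigma(Q)^{-1}\Omega(Q)) > 0$, a direct computation gives $V_{\mathrm{ML/GEE}} = L_{\SL}(\Sigma^{-1}) = \E[\kappa\phi]/(\E\phi)^2$, since $\E[(\xi^\top\Sigma^{-1}\varepsilon)^2] = \E[\alpha\,\xi^\top\Sigma^{-1}\xi] = \E[\tr(\Sigma^{-1}\kappa\Omega)]$. To upper bound $V_{\SL}$ I need only evaluate $L_{\SL}$ at the admissible weight $W(Q) = \kappa(Q)^{-1}\Sigma(Q)^{-1} \in \mathcal{W}$: then $\tr(W\Omega) = \kappa^{-1}\phi$ and $\tr(W\Sigma W\Omega) = \kappa^{-2}\phi$, so $V_{\SL} \leq L_{\SL}(W) = 1/\E[\phi/\kappa]$. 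Combining,
\[
\frac{V_{\mathrm{ML/GEE}}}{V_{\SL}} \;\geq\; \frac{\E[\kappa\phi]\,\E[\phi/\kappa]}{(\E\phi)^2},
\]
which by Cauchy--Schwarz is always at least $1$, recovering Proposition~\ref{prop:modelmissspec}.

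Finally I would drive the right-hand side above $M$. Taking $\kappa \equiv 1$ (i.e.\ $\alpha \equiv 1$) on $\{Q = q_2\}$ and $\kappa = \kappa_1$ large on $\{Q = q_1\}$, and writing $p_j := \pr(Q = q_j)$, $\phi_j := \phi(q_j)$, the bound becomes $(p_1\kappa_1\phi_1 + p_2\phi_2)(p_1\phi_1/\kappa_1 + p_2\phi_2)/(\E\phi)^2$, which grows like $p_1 p_2\,\phi_1\phi_2\,\kappa_1/(\E\phi)^2 \to \infty$ as $\kappa_1 \to \infty$; choosing $\kappa_1$ finite but large enough delivers a value $\geq M$. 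The one genuinely substantive step is the realisability claim: that a nonnegative $\alpha$ with $\E[\alpha \given Q = q_1] = 1$, $\alpha \geq \delta$, and $\kappa_1 = \tfrac1n\E[\alpha\|\eta\|^2 \given Q = q_1]$ arbitrarily large exists. This is elementary (for instance $\alpha = \delta + (1-\delta)\,\ind(\|\eta\| > R)/\pr(\|\eta\| > R)$ gives $\kappa_1 \geq (1-\delta)R^2/n \to \infty$), and the care lies in checking that such a construction respects every eigenvalue and moment condition in the definitions of the model and of $\mathcal{W}$.
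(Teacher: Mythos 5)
Your construction is correct and, at the strategic level, mirrors the paper's: both proofs introduce a latent scale factor that makes $\Cov(\varepsilon\given D,X)$ depend on $D$ while preserving the prescribed conditional second moments given $q(X)$, reduce the variance ratio to a scalar expression of the form $\E[a\,\phi]\,\E[\phi/a]/(\E\phi)^2$ with $\phi=\tr(\Sigma^{-1}\Omega)$, and blow it up by making $a$ extreme on one cell of a two-cell partition of the range of $q(X)$. The tactical differences are real, though. The paper multiplies \emph{both} $\varepsilon$ and $\xi$ by a common $\sqrt{1+\zeta}$ with $\zeta=Q^{-1}B-1$, $B\sim\mathrm{Ber}(Q)$ independent of everything else given $q(X)$, and then computes $V_{\SL}$ \emph{exactly} via the Cauchy--Schwarz-optimal weight identified in the proof of Proposition~\ref{prop:modelmissspec}; you instead perturb only the conditional variance of $\varepsilon$ through a radial function $\alpha(\|\eta\|,Q)$ of a spherically symmetric $\xi$, and merely \emph{upper bound} $V_{\SL}$ by evaluating $L_{\SL}$ at the feasible weight $\kappa^{-1}\Sigma^{-1}$. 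Your route is somewhat simpler (no need for the vectorised optimal-weight formula) and in one respect cleaner: the floor $\alpha\geq\delta$ keeps $\Lambda_{\min}\Cov(\varepsilon\given D,X)$ bounded away from zero, whereas the paper's multiplier vanishes on the event $B=0$, so its construction degenerates there. Two small repairs are needed. First, non-constancy of $q(X)$ does not give you two \emph{atoms} $q_1\neq q_2$ of positive probability; as in the paper you should take a measurable $\Gamma$ with $\pr(q(X)\in\Gamma)\in(0,1)$ and let $\kappa$ be constant on $\{q(X)\in\Gamma\}$ and on its complement, which changes nothing in the final estimate. Second, you should note explicitly that $\kappa$, being bounded between $\delta$-dependent constants and $\kappa_1$, keeps $\kappa^{-1}\Sigma^{-1}$ inside the integrability constraints defining $\mathcal{W}$ whenever $\Sigma^{-1}$ itself satisfies them; this is immediate but is the hypothesis under which $V_{\SL}\leq L_{\SL}(\kappa^{-1}\Sigma^{-1})$ is legitimate.
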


While the discrepancy between $V_{\mathrm{ML/GEE}}$ and $V_{\SL}$ is not always expected to be very large, it is nevertheless potentially a cause for concern that this can happen even when $\Sigma$ and $\Omega$ are identity matrices, for example.

\subsection{Sample level considerations} \label{sec:sample}
The previous section illustrated some of the advantages of the sandwich loss at the population level. The sandwich loss $\hat{L}_{\SL}$ \eqref{eq:sandwichloss} we propose however is unusual in the sense that it is not composed of a sum of independent terms typical of the objective functions of M-estimators. A further complication is that $\hat{L}_{\SL}$ involves estimates of the errors $\xi_i$ and $\varepsilon_i$, rather than these errors themselves. The classical theory of M-estimation \citep[Chap.~5]{vandervaart} is therefore not immediately applicable here, and it is not clear whether the useful advantages of the population level sandwich loss $L_{\SL}$ transfer over to its empirical counterpart. The result below however shows that under relatively mild conditions, minimisation of $\hat{L}_{\SL}$ over a parametric class $\mathcal{W}$ \eqref{eq:W_para} yields convergence in probability to the minimiser of $L_{\SL}$. Note that since $\hat{L}_{\SL}$ is invariant to any positive rescaling of its argument, we are permitted to fix the scale of any weight function in $\mathcal{W}$; we do this by asking for the maximum eigenvalue of any output weight matrix to be 1. We continue to work in the setup of the previous section, where our data consist of $I$ i.i.d.\ groups of finite size $n$ following the partially linear model \eqref{eq:PLR}. 

\begin{theorem}\label{thm:parametric-consistency}
Let $\Psi \subset \R^d$ be some compact set and
    \begin{equation} \label{eq:W_para}
    \mathcal{W} := \Big\{ W(\psi) : \R^{n \times d} \to \R^{n \times n}  \text{ with } \sup_{x \in \R^{n \times d}}\Lambda_{\max}(W(\psi)(x)) = 1 \text{ for }\psi \in\Psi \Big\}.
    \end{equation}
    Suppose $\psi^* \in \Psi$ is such that for all $\epsilon>0$, $\inf_{\psi \in \Psi :\norm{\psi-\psi^*}\geq\epsilon}L_{\SL}(\psi) > L_{\SL}(\psi^*)$.
    Assume the regularity conditions set out in Appendix~\ref{appsec:regularitycond}, and additionally that for estimates $(\tilde{\xi}_i,\tilde{\varepsilon}_i)$ of the error terms $(\xi_i,\varepsilon_i)$ either
    \begin{enumerate}[label=(\alph*)]
        \item\label{ass:lp4}   $\E\left[\frac{1}{N}\sum_{i=1}^I\|\tilde{\xi}_i-\xi_i\|_2^4\right]\vee\E\left[\frac{1}{N}\sum_{i=1}^I\|\tilde{\varepsilon}_i-\varepsilon_i\|_2^4\right]=o(1)$, or
        \item\label{ass:lp2} $\E\left[\frac{1}{N}\sum_{i=1}^I\|\tilde{\xi}_i-\xi_i\|_2^2\right]\vee\E\left[\frac{1}{N}\sum_{i=1}^I\|\tilde{\varepsilon}_i-\varepsilon_i\|_2^2\right]=o(N^{-\frac{1}{2}})$. 
    \end{enumerate}
        Then any sequence of approximate minimisers $\hat{\psi}_I$ with $\hat{L}_{\SL}(\hat{\psi}_I)\leq\hat{L}_{\SL}(\psi^*)+o_P(1)$ satisfies
        \[
        \hat{\psi}_I = \psi^* +  o_P(1).
        \]
\end{theorem}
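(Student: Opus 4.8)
The plan is to follow the classical route to consistency of M-estimators: establish uniform convergence of the objective to its population counterpart over the compact set $\Psi$, and then combine this with the well-separatedness of $\psi^*$. The two non-standard features---that $\hat{L}_{\SL}$ is a ratio with a quartic numerator rather than a sum of i.i.d.\ terms, and that it is built from the error estimates $(\tilde{\xi}_i,\tilde{\varepsilon}_i)$ rather than the true errors---would be handled by first reducing to an oracle objective using the true errors, and then treating the ratio through its numerator and denominator separately. Concretely, I would write $\hat{L}_{\SL}(\psi) = \hat{B}_I(\psi)/\hat{A}_I(\psi)^2$ with $\hat{A}_I(\psi) := N^{-1}\sum_i \tilde{\xi}_i^{\top} W(\psi)(X_i)\tilde{\xi}_i$ and $\hat{B}_I(\psi) := N^{-1}\sum_i (\tilde{\xi}_i^{\top} W(\psi)(X_i)\tilde{\varepsilon}_i)^2$, introduce the oracle versions $\check{A}_I,\check{B}_I$ obtained by replacing $(\tilde{\xi}_i,\tilde{\varepsilon}_i)$ with $(\xi_i,\varepsilon_i)$, and the population quantities $a(\psi):=\E[\xi^{\top} W(\psi)(X)\xi]$, $b(\psi):=\E[(\xi^{\top} W(\psi)(X)\varepsilon)^2]$, so that $L_{\SL}(\psi)=b(\psi)/a(\psi)^2$.

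\textbf{Step 1 (removing the error estimates).} I would show $\sup_{\psi\in\Psi}|\hat{A}_I(\psi)-\check{A}_I(\psi)| = o_P(1)$ and $\sup_{\psi\in\Psi}|\hat{B}_I(\psi)-\check{B}_I(\psi)| = o_P(1)$. The normalisation $\sup_x \Lambda_{\max}(W(\psi)(x))=1$ makes all operator-norm bounds uniform in $\psi$, so after expanding $\tilde{\xi}^{\top} W\tilde{\xi}-\xi^{\top} W\xi$ and $(\tilde{\xi}^{\top} W\tilde{\varepsilon})^2-(\xi^{\top} W\varepsilon)^2$ into products of a difference factor ($\tilde{\xi}-\xi$ or $\tilde{\varepsilon}-\varepsilon$) with bounded factors, the task reduces to controlling averages of such products by Cauchy--Schwarz/H\"older. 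For $\hat{A}_I$ both \ref{ass:lp4} and \ref{ass:lp2} imply $N^{-1}\sum_i\|\tilde{\xi}_i-\xi_i\|_2^2 = o_P(1)$ (via Cauchy--Schwarz and Markov), and one pairing against $N^{-1}\sum_i\|\tilde{\xi}_i\|_2^2 = O_P(1)$ suffices.

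\textbf{Step 2 (uniform LLN and ratio).} Next I would apply a uniform law of large numbers to obtain $\sup_\psi|\check{A}_I(\psi)-a(\psi)|=o_P(1)$ and $\sup_\psi|\check{B}_I(\psi)-b(\psi)|=o_P(1)$: each of $\psi\mapsto\xi^{\top} W(\psi)(X)\xi$ and $\psi\mapsto(\xi^{\top} W(\psi)(X)\varepsilon)^2$ is continuous on the compact set $\Psi$ by the appendix regularity conditions and admits an integrable envelope independent of $\psi$ (using $\Lambda_{\max}(W(\psi))\le 1$ and the moment conditions on $\xi,\varepsilon$), so the standard continuity-plus-envelope uniform LLN applies. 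Since $\inf_\psi a(\psi)>0$ on compact $\Psi$ by continuity and positivity (using $\Lambda_{\min}\E(\xi\xi^{\top}\given X)>\epsilon$), continuity of $(A,B)\mapsto B/A^2$ away from $A=0$ then upgrades Steps~1--2 to $\sup_\psi|\hat{L}_{\SL}(\psi)-L_{\SL}(\psi)|=o_P(1)$. \textbf{Step 3 (argmin conclusion).} Finally, for an approximate minimiser $\hat{\psi}_I$ the sandwiching $0\le L_{\SL}(\hat{\psi}_I)-L_{\SL}(\psi^*)\le 2\sup_\psi|\hat{L}_{\SL}-L_{\SL}|+o_P(1)=o_P(1)$ gives $L_{\SL}(\hat{\psi}_I)\to L_{\SL}(\psi^*)$ in probability, and well-separatedness of $\psi^*$ forces $\hat{\psi}_I=\psi^*+o_P(1)$, exactly as in \citep[Thm~5.7]{vandervaart}.

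The main obstacle is Step~1 for the denominator $\hat{B}_I$: because $(\tilde{\xi}^{\top} W\tilde{\varepsilon})^2$ is a degree-four form, its difference from the oracle version yields products of four factors, and controlling their average uniformly in $\psi$ is precisely where the two alternative hypotheses come in. Under \ref{ass:lp4} one closes the bound by a four-fold H\"older inequality that spends the $L^4$ error rate on the difference factor while absorbing the remaining factors into $O_P(1)$ fourth-moment averages of the estimated and true errors; under \ref{ass:lp2} one instead exploits the faster rate, pairing $(N^{-1}\sum_i\|\tilde{\xi}_i-\xi_i\|_2^2)^{1/2}=o_P(N^{-1/4})$ against a higher-moment average of the bounded factors that is $O_P(1)$. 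Checking that the relevant higher moments of $\xi,\varepsilon$ are finite---so these bounded-factor averages really are $O_P(1)$---is the one place where the appendix regularity conditions are essential, and keeping every bound uniform in $\psi$ rather than pointwise is the bookkeeping that makes this ratio-valued, non-i.i.d.-sum objective behave like a classical M-estimator.
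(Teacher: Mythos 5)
Your proposal is correct and follows essentially the same route as the paper's proof: reduce to uniform convergence of $\hat{L}_{\SL}$ to $L_{\SL}$ over $\Psi$ and invoke Theorem~5.7 of van der Vaart, handling the numerator and denominator separately via a uniform LLN (with the $L^4$ moment envelopes) for the oracle parts and multi-fold Cauchy--Schwarz/H\"older bounds for the error-estimate perturbations under either (a) or (b), before lower-bounding the denominator to control the ratio. The only difference is organisational (you split estimated$\to$oracle$\to$population into two explicit steps where the paper folds both into one perturbation decomposition), and your reference to $\hat{B}_I$ as ``the denominator'' in the final paragraph is a harmless slip.
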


\section{Methodology}\label{sec:method}

In this section we present our sandwich boosting weighted regression procedure. We first describe
the basic outline of our approach for generic weight matrices employing cross-fitting in Section~\ref{sec:cross} and then in Section~\ref{sec:boosting} present our boosting strategy to approximately optimise the sandwich loss within a flexible class of working covariance models. 

\subsection{Cross-fitting}\label{sec:cross}
In Section~\ref{sec:weighted}, we outlined a simplified approach for estimating $\beta$ in the partially linear model \eqref{eq:PLR} involving first obtaining `estimates' $\tilde{\xi}_i$ and $\tilde{\varepsilon}_i$ of the errors $\xi_i$ and $\varepsilon_i$ with which to determine a weight function $\hat{W}$ through minimising the sandwich loss \eqref{eq:sandwichloss}. The second stage involved forming on an independent dataset, an estimate $\hat{\beta}$ through \eqref{eq:DML}, so in particular, conditioning on the initial dataset, $\hat{W}$ would be fixed. In practice, only a single dataset would be available, and we construct the two independent datasets through sample splitting, employing a $K$-fold cross-fitting scheme to recover the loss in efficiency in using only part of the data to construct the final estimator. 

Cross-fitting is a popular approach in semiparametric problems for ensuring the independence of nuisance parameter estimates from the data on which the final estimate of the target parameter is formed. This independence means that certain empirical process terms can be controlled straightforwardly even when arbitrary nuisance parameter estimators are used. \citet{chern} and \citet{emmenegger} use such cross-fitting in the regular and grouped partially linear models respectively, where the nuisance function estimates in question are $\hat{l}$ and $\hat{m}$. In our case, cross-fitting additionally serves to guarantee independence of the weight function estimates.

Algorithm~\ref{alg:main} details our method, with observation groups indexed by $\mathcal{I}_k^c$ playing the roles of the initial datasets for $k=1,\ldots,K$, and those indexed by $\mathcal{I}_k$ involved in the construction of the final estimator $\hat{\beta}$. Note that rather than forming separate estimates of $\beta$ corresponding to each $\mathcal{I}_k$, we instead form sets of residuals $\hat{R}^D_i$ and $\hat{R}^Y_i$, finally constructing $\hat{\beta}$ using these via \eqref{eq:DML}, an approach known as DML2 \citep{chern}. A typical choice of $K$ would be $5$ or $10$ and averaging over estimators $\hat{\beta}$ constructed using different random partitions $\mathcal{I}_1,\ldots,\mathcal{I}_K$ can improve the performance in finite samples; see Section~\ref{sec:cross-fitting_K} in the supplementary material for further discussion.
As well as obtaining an estimate $\hat{\beta}$, we also calculate a sandwich estimate $\hat{V}$ of the variance, with which an approximate $(1-\alpha)$-level confidence interval $\hat{C}(\alpha)$ may be constructed.

\begin{algorithm}[ht]
 \KwIn{Data $(Y_i,D_i,X_i)_{i\in[I]}$; number of folds for cross-fitting $K$; choices of regression methods for estimating $(l_0,m_0)$;
 method for finding weight estimates;
 significance level $\alpha$.}
 Partition $\left[I\right]$ randomly into $K$ disjoint sets $\mathcal{I}_1,\ldots,\mathcal{I}_K$ approximately equal in size.\\
 \For{$k\in[K]$}{
    Using data indexed by $\mathcal{I}_k^c$, fit estimates $\hat{l}^{(k)}$ and $\hat{m}^{(k)}$ of the functions $l_0$ and $m_0$ by regressing $Y$ and $D$ against $X$ respectively. \\
   \For{$i\in\mathcal{I}_k^c$}{
        Calculate residuals $\tilde{R}_i^Y := Y_i-\hat{l}^{(k)}(X_i)$ and $\tilde{R}_i^D := D_i-\hat{m}^{(k)}(X_i)$. \\
        Calculate $\tilde{\beta}_k := \left(\sum_{i\in\mathcal{I}_k^c}\tilde{R}_i^{D^{\top}}\tilde{R}_i^D\right)^{-1}\left(\sum_{i\in\mathcal{I}_k^c}\tilde{R}_i^{D^{\top}}\tilde{R}_i^Y\right)$.\\
        Calculate $\tilde{\xi}_i := \tilde{R}_i^D$ and $\tilde{\varepsilon}_i:=\tilde{R}_i^Y-\tilde{\beta}_k\tilde{R}_i^D$. \\
   }
    Find an (approximate) minimiser $\hat{W}^{(k)}$ of
    \vspace{-4mm}
    \begin{equation*}
    	\hat{L}_{\SL}^{(k)}(W):=\bigg(\sum_{i\in\mathcal{I}_k^c}\tilde{\xi}_i^{\top}W(X_i)\tilde{\xi}_i\bigg)^{-2}\bigg(\sum_{i\in\mathcal{I}_k^c}\left(\tilde{\xi}_i^{\top}W(X_i)\tilde{\varepsilon}_i\right)^2\bigg),
     \vspace{-4mm}
    \end{equation*}
    over some class of functions $\mathcal{W}$ corresponding to a working covariance structure (e.g.\ using sandwich boosting, see Section~\ref{sec:boosting}).\\
    \For{$i\in\mathcal{I}_k$}{
            Set $\hat{R}_i^Y:=Y_i-\hat{l}^{(k)}(X_i)$ and $\hat{R}_i^D:=D_i-\hat{m}^{(k)}(X_i)$ and $\hat{W}_i := \hat{W}^{(k)}(X_i)$. \\
    		Define $\hat{\xi}_i:=\hat{R}_i^D \text{ and } \hat{\varepsilon}_i:=\hat{R}_i^Y-\tilde{\beta}_k\hat{R}_i^D$.
    }
    }
    Calculate $\hat{\beta} :=\Big(\sum_{k=1}^K\sum_{i\in\mathcal{I}_k}\hat{R}_i^{D^{\top}}\hat{W}_i\hat{R}_i^D\Big)^{-1}\Big(\sum_{k=1}^K\sum_{i\in\mathcal{I}_k}\hat{R}_i^{D^{\top}}\hat{W}_i\hat{R}_i^Y\Big)$. \\
    Calculate $\hat{V}:=N\Big(\sum_{k=1}^K\sum_{i\in\mathcal{I}_k}\hat{\xi}_i^{\top}\hat{W}_i\hat{\xi}_i\Big)^{-2}\Big(\sum_{k=1}^K\sum_{i\in\mathcal{I}_k}\big(\hat{\xi}_i^{\top}\hat{W}_i\hat{\varepsilon}_i\big)^2\Big).$ \\
    Calculate $\hat{C}(\alpha):=\Big[\hat{\beta}-N^{-\frac{1}{2}}\hat{V}^{\frac{1}{2}}\Phi^{-1}\left(1-\frac{\alpha}{2}\right),\hat{\beta}+N^{-\frac{1}{2}}\hat{V}^{\frac{1}{2}}\Phi^{-1}\left(1-\frac{\alpha}{2}\right)\Big].$ \\
    \KwOut{Estimator $\hat{\beta}$ for $\beta$, an estimator of its asymptotic variance $\hat{V}$ and $1-\alpha$ level confidence interval $\hat{C}(\alpha)$ for $\beta$.}
 \caption{Construction of $\beta$-estimator}
 \label{alg:main}
\end{algorithm}

\subsection{Sandwich boosting}\label{sec:boosting}
Algorithm~\ref{alg:main} introduced a generic approach for incorporating weight functions learnt from the data into an estimator for $\beta$ via approximately minimising the sandwich loss over some class of functions $\mathcal{W}$. We now introduce an approach for performing this approximate minimisation over a class $\mathcal{W}$ defined implicitly  through a user-chosen regression method.

To introduce our approach, it is helpful to consider a class of proxy conditional covariances parametrised as
\begin{equation}\label{eq:sigma-rho-workingcov}
    D_\sigma(\cdot)\, C_\theta(\cdot)\, D_\sigma(\cdot),
\end{equation}
where, given an input $X_i\in\R^{n_i\times d}$, the functions $D_\sigma$ and $C_\theta$ output
\begin{align*}
	D_\sigma(X_i) &:=\diag(\sigma(X_{i1}),\ldots,\sigma(X_{in_i})) \\
	C_\theta(X_i)&:=\left(\ind_{\{j=k\}}+ \ind_{\{j \neq k\}}\rho_\theta(X_{ij},X_{ik})\right)_{(j,k)\in[n_i]^2}.
\end{align*}
Here, $\sigma : \R^d \to (0,\infty)$ and $\rho_\theta : \R^{d} \times \R^d \to [0,1]$ for $\theta \in \Theta$ (where $\Theta$ is some closed convex set) are proxy conditional standard deviation and correlation functions that are to be modelled nonparametricaly and parametrically respectively. Note that the working covariances \eqref{eq:sigma-rho-workingcov}
have the property that the $jk$th entry depends only on $X_{ij}$ and $X_{ik}$; this need not be the case for $\{\Cov( Y_i \given X_i)\}_{jk}$ for example, but is nevertheless a reasonable simplification.

Redefining $s \equiv 1/\sigma$, the corresponding weight class consists of functions of the form
\begin{equation}\label{eq:workingcovariance}
    W(s,\theta)(\cdot) = D_s (\cdot) \, C^{-1}_{\theta}(\cdot)\, D_s(\cdot),
\end{equation}
understanding, $C^{-1}_\theta(X_i) := \{C_\theta(X_i)\}^{-1}$ up to an arbitrary positive scale factor. As an example, an equicorrelated working correlation may be parametrised as
\[
C^{-1}_\theta(X_i) = \left(\ind_{\{j=k\}} - \frac{\theta}{1 + \theta n_i}\right)_{(j,k)\in[n_i]^2}
\]
for $\theta \in [0,\infty)$,
with corresponding correlation $\rho_{\theta}$ given by $\theta / (1+\theta)$.
We also consider a version for nested group structures permitting two constant correlations and an autoregressive form suitable for longitudinal data where $\{C_\theta(X_i)\}_{jk} = \theta^{|j-k|}$; see Appendix~\ref{appsec:workingcorrelations}.
Such inverse working correlations are among the classes of weights considered in the GEE1 framework \citep{liangzeger, zegerliang, ziegler}. A key difference here however is the greater flexibility afforded by learning the working inverse standard deviation $s$ function through a boosting scheme, as we now explain. We also outline in Section~\ref{sec:init} how our boosting scheme may be initialised at estimated weight functions derived using existing ML or GEE-based methods, for example, thereby increasing the flexibility of the functional forms considered.

Boosting has emerged as one of the most successful learning methods, with the XGBoost implementation \citep{xgboost} in particular dominating machine learning competitions such as those hosted on Kaggle \citep{bojer2021kaggle}. Since its introduction in the work of \citet{schapire1}, it has been generalised and reinterpreted as a form of functional gradient descent of an objective function $\hat{L}$ based on the data \citep{statisticalboosting, mason, buhlmann2, buhlmann1}. For an objective function $f \mapsto \hat{L}(f) \in \R$ applied to function $f:\R^d \to \R$, an individual boosting iteration involves perturbing the $f$-function by a step in the `direction' of the $f$-score
\begin{equation}\label{eq:gateauxderiv}
    U^{(f)}(f)\left(x\right) := \partial_f\hat{L}(f)\left(x\right) := \frac{\partial}{\partial\alpha}\hat{L}\left(f+\alpha\delta_x\right)\Big|_{\alpha=0},
\end{equation}
where $\delta_x:\R^d \to \R$ is the indicator function at $x\in\R^d$. Procedurally, the $f$-score $U^{(f)}(f)$ is evaluated at the data points and the regressed onto the data using a user-chosen `base learner'.

Typically $\hat{L}$ takes the form of an empirical risk, so $\hat{L}(f) = \sum_i \ell(y_i, f(x_i))$ for some loss  function $\ell$ and predictor--response pairs $(x_i, y_i)$. The corresponding $f$-score evaluated at the data point $x_i$ then takes the simple form $\frac{\partial}{\partial \alpha} \ell(y_i, \alpha) |_{\alpha=f(x_i)}$, a function of the $i$th observation alone. This allows for $f$-score calculation in linear time, as well as the possibility of parallelising computations for large data sets as exploited by XGBoost. In our case however where we wish to minimise the sandwich loss $(s, \theta) \mapsto \hat{L}_{\SL}(W(s, \theta))$ (which recall is defined in terms of estimates $\tilde{\varepsilon}_i$ and $\tilde{\xi}_i$ of the errors) over weight functions parametrised by $(s, \theta)$ \eqref{eq:workingcovariance}, we obtain as the $s$-score
\begin{equation} \label{eq:s_score}
U_{\text{SL}}^{(s)}(s,\theta)(X_{ij})= -2 b^{-3} \left\{ \left(\tilde{\xi}_i^{\top}A_{ij}\tilde{\xi}_i\right) \sum_{i'} c_{i'}^2 - b c_i \left(\tilde{\xi}_i^{\top}A_{ij}\tilde{\varepsilon}_i\right)\right\},  
\end{equation}
where $W_i(s, \theta) := W(s, \theta)(X_i)$ and
\begin{gather*}
A_{ij}:=\left(\left\{C_\theta^{-1}(X_i)\right\}_{kj}s(X_{ik})\ind_{\{l=j\}} + \left\{C_\theta^{-1}(X_i)\right\}_{jl}s(X_{il})\ind_{\{k=j\}}\right)_{k,l\in[n_i]^2},
\\
b:= \sum_{i=1}^I\tilde{\xi}_{i}^{\top}W_{i}(s,\theta)\tilde{\xi}_{i},
\qquad
c_i := \tilde{\xi}_{i}^{\top}W_{i}(s,\theta)\tilde{\varepsilon}_{i}.
\end{gather*}
Thus the $s$-score at $X_{ij}$ is a function of all the data points. Nevertheless, it may be computed for all $X_{ij}$ at a cost of $O\big(\sum_i n_i^3\big)$.
However, as we show in Appendix~\ref{appsec:workingcorrelations},  for the equicorrelated, nested and autoregressive working correlation structures, this cost is reduced to $O(N)$ and may be parallelised similarly to the standard setting of minimising an empirical risk. The critical factors in allowing this are: (a) that computing the matrix inverse $C_\theta^{-1}(X_i)$ present in $W_i(s,\theta)$, which for an arbitrary correlation $\rho_\theta$ may take $O(n_i^3)$ time, has a simple closed  form; and (b) computation of the terms $\tilde{\xi}_i^{\top}A_{ij}\tilde{\xi}_i$ and $\tilde{\xi}_i^{\top}A_{ij}\tilde{\varepsilon}_i$ involving the sparse matrix $A_{ij}$ can be arranged to be $O(1)$ by precomputing other terms appropriately.

Along with updating $s$ by regressing the $s$-score above onto the $X_{ij}$ and taking a step in the direction of the negative of this fitted regression function, we may also perform a regular projected gradient descent update for $\theta$ using the $\theta$-score vector
\begin{equation} \label{eq:theta_score}
U_{\text{SL}}^{(\theta)}(s,\theta)=-2b^{-3} \left\{ \left(\sum_i c_i^2\right)\left(\sum_i \tilde{\xi}_i^{\top}\partial_\theta W_i(s,\theta) \tilde{\xi}_i\right) - b \sum_{i} c_i \left( \tilde{\xi}_i^{\top}\partial_\theta W_i(s,\theta) \tilde{\varepsilon}_i \right)\right\},
\end{equation}
which may be computed at no greater cost than the $s$-score above.

With these scores, our sandwich boosting algorithm is summarised in Algorithm~\ref{alg:boosting}; note $\pi_\Theta$ denotes projection onto the set $\Theta$. Recall that in our cross-fitting scheme (Algorithm~\ref{alg:main}), we envisage applying boosting to approximately minimise a version of the sandwich loss corresponding to subsets of the observation indices. As is standard in boosting, the algorithm requires a choice of initialisers (in our case $\hat{s}_1$ and $\hat{\theta}_1$) and a base learner. In all of our numerical experiments, we take $\hat{s}_1 \equiv 1$, $\hat{\theta}_1=0$ and use additive penalised cubic regression splines implemented in the R package \texttt{mgcv} \citep{gam-code}. We select the number of boosting iterations $m_{\mathrm{stop}}$ by cross-validation, as recommended by \citep{buhlmann1}, though using our sandwich loss as the evaluation criterion. Note that the algorithm is stated for fixed step sizes $\lambda^{(\theta)}$ and $\lambda^{(s)}$ for simplicity; in Appendix~\ref{appsec:numerical_results}, we describe the specific choices and variable step size schemes used in our numerical results.

\subsubsection{Initialising from other weighting schemes} \label{sec:init}
The classes of weight functions that may be fitting using our computationally efficient boosting schemes with equicorrelated, autoregressive or nested correlations can be rather rich when used in conjunction with a flexible base learner. However, these classes would not encompass all those available using classical mixed  effects modelling, for example. In order to further broaden the classes of weight functions that may be considered, one can start with an initial estimated weight or conditional covariance function $\hat{\Sigma}_{\text{init}}$ estimated through GEE or ML-based approaches, and fit a weight function of the form 
\[
\{\hat{\Sigma}_{\text{init}}(\cdot)\}^{-1/2} \, W( \cdot) \, \{\hat{\Sigma}_{\text{init}}(\cdot)\}^{-1/2},
\]
where $W(\cdot)$ is of the form given by \eqref{eq:workingcovariance}, using sandwich boosting. The boosting algorithm then serves to push the initial $\hat{\Sigma}_{\text{init}}$ in a better direction for the purposes of estimating $\beta$. This may be carried out easily by running Algorithm~\ref{alg:boosting} on transformed error estimates $\tilde{\varepsilon}_i \mapsto \{\hat{\Sigma}_{\text{init}}(X_i)\}^{-1/2}\tilde{\varepsilon}_i $ and similarly for $\tilde{\xi}_i$. In fact, one can use multiple initialisers in this way, and pick among the best sandwich-boosted versions via cross-validation with the sandwich loss as the quality criterion.

\begin{algorithm}[ht]
 \KwIn{Index set $\mathcal{I}$ for boosting training set $(X_i, \tilde{\varepsilon}_i, \tilde{\xi}_i)_{i \in \mathcal{I}}$; boosting initialisers $\hat{s}_1 : \R^d \to (0, \infty)$ and $\hat{\theta}_1 \in\R$; $\theta$-descent and $s$-descent step sizes $\lambda^{(\theta)}, \lambda^{(s)}>0$; total number of boosting iterations $m_{\mathrm{stop}}$; base learner for regressing scores onto the data.}
 
 \For{$m = 1,\ldots,m_{\mathrm{stop}}$}{
   
    Calculate the $N$ scores $U^{(s)}_{\SL}(\hat{s}_m,\hat{\theta}_m)(X_{ij})$~\eqref{eq:s_score} and $\theta$-score $U^{(\theta)}_{\SL}(\hat{s}_m,\hat{\theta}_m)$~\eqref{eq:theta_score} using Algorithm~\ref{alg:scores-arbitrary}.
    
  Regress $\left(U^{(s)}_{\SL}(\hat{s}_m,\hat{\theta}_m)(X_{ij})\right)_{i \in \mathcal{I}, j \in [n_i]}$ onto $(X_{ij})_{i \in \mathcal{I}, j \in [n_i]}$ using the base learner to obtain the fitted regression function $x \mapsto \hat{u}_m(x)$.

  $\hat{s}_{m+1} := \hat{s}_m-\lambda^{(s)}\hat{u}_m$.

  $\hat{\theta}_{m+1} := \pi_{\Theta}(\hat{\theta}_m - \lambda^{(\theta)}U_{\SL}^{(\theta)}(\hat{s}_m,\hat{\theta}_m))$.
 }
 
 \KwOut{Weight function $\hat{W} := W(\hat{s}_{m_{\mathrm{stop}}+1},\hat{\theta}_{m_\mathrm{stop}+1})$.}
 \caption{Sandwich boosting algorithm}
 \label{alg:boosting}
\end{algorithm}

\section{Theory}\label{sec:theory}

In this section we present results on asymptotic normality of the $\beta$-estimator of Algorithm~\ref{alg:main}, and coverage guarantees for the confidence interval construction therein. Recalling the setup of Section~\ref{sec:boosting}, we consider the case where the estimated weight functions are such that $\{\hat{W}^{(k)}(\cdot)\}^{-1}$ takes the form \eqref{eq:sigma-rho-workingcov}; these may be obtained using our sandwich boosting approach (Algorithm~\ref{alg:boosting}), but this is not required for our theoretical results. Let us define $\hat{\sigma}^{(k)}$ and $\hat{\theta}^{(k)}$ by
\[
 D_{\hat{\sigma}^{(k)}}(\cdot) \,C_{\hat{\theta}^{(k)}}(\cdot) \, D_{\hat{\sigma}^{(k)}}(\cdot) := \{\hat{W}^{(k)}(\cdot)\}^{-1},
\]
and write $\hat{\rho}^{(k)} := \rho_{\hat{\theta}^{(k)}}$.

For simplicity of the exposition, similarly to Sections~\ref{sec:pop}, here we consider the case where our data are i.i.d.\ copies of the group of $n$ observations $(Y,D,X) \in \R^n \times \R^n \times \R^{n\times d}$ following the partially linear model \eqref{eq:PLR}. Recall that $X$ is a matrix whose rows, denoted $X_j \in \R^d$, are not necessarily independent or identically distributed.
We also relax the i.i.d.\ assumption at the group level to permit non-identically distributed groups of unequal size in Appendix~\ref{appsec:apptheory}.

Our results here are based in part on \citet{emmenegger}, but build on them in two key ways. Firstly, we permit the conditional covariance $\Cov(Y \given D, X)$ to be misspecified, i.e., for the (likely) possibility that the probability limit of the $\hat{W}^{(k)}(X)$ is not some multiple of $\Cov(Y \given D, X)^{-1}$. Secondly, we consider asymptotic regimes that allow the group size $n$ to diverge with the total number of observations $N=nI$ at  rates we will formalise later. Throughout, we assume that the number of folds $K$ used in cross-fitting is finite.

We state our results as uniform convergence results over the sequence of classes of distributions $(\mathcal{P}_I)_{I \in \mathbb{N}}$ such that for all $I$ sufficiently large and for all $P \in \mathcal{P}_I$, the following hold. Note that in the below, $\delta$, $\mu_{\Sigma}$, $\mu_{\Omega}$, $\gamma$, $\alpha$ and $\kappa$ are to be thought of as constants, not depending on $P$. The values of these are not relevant in the case where the group size $n$ is finite, but play a role in the rate of growth permitted when it is diverging. Moreover $a \lesssim b$ denotes $a \leq c b$ for constant $c>0$ not depending on $P$. We have however suppressed  the dependence on $P$ in $l_0$, $\sigma^*$ etc. 

\begin{assumption}[Moment assumptions]\label{ass:mathcal(P)}
\hfill
\begin{enumerate}
    \item[(\mylabel{A1.1}{A1.1})] There exists $\delta > 0$ such that $\E_P\big[\norm{\varepsilon}_2^{4+\delta}\big]^{\frac{1}{4+\delta}}\lesssim\sqrt{n}$ and $\E_P\big[\norm{\xi}_2^{4+\delta}\big]^{\frac{1}{4+\delta}} \lesssim \sqrt{n}$.
    \item[(\mylabel{A1.2}{A1.2})] The covariance matrices \(\Sigma(D,X) := \E_P\left[\varepsilon\varepsilon^{\top}\,|\,D,X\right]\) and \(\Omega(X) := \E_P\left[\xi\xi^{\top}\,|\,X\right]\) satisfy \(\Lambda_{\min}(\Sigma)\gtrsim 1\) and \(\Lambda_{\min}(\Omega)\gtrsim 1\) almost surely. Further, \(\Lambda_{\max}(\Sigma)\lesssim n^{\mu_{\Sigma}}\) and \(\Lambda_{\max}(\Omega)\lesssim n^{\mu_{\Omega}}\) almost surely for some \(\mu_{\Sigma},\mu_{\Omega}\in[0,1]\).
\end{enumerate}
\end{assumption}
Lower values of $\mu_\Sigma$ and $\mu_\Omega$ will permit faster rates of divergence of $n$ (see Appendix~\ref{appsec:apptheory}).
Note that when $\Sigma$ is close to the equal correlation working covariance of Section~\ref{sec:boosting}, we can expect $\mu_{\Sigma}=1$. For our simplified result in Corollary~\ref{cor:main} we set $\mu_\Omega=1$. In the below, whenever we condition on estimated regression functions, this is to be understood as conditioning on the data used to train these.

\begin{assumption}[Accuracy of regression function estimators]\label{ass:nuisance_rates}
Define the maximum within group estimation errors of regression functions $l_0$ and $m_0$:
\begin{equation*}
    \mathcal{R}_l := \max_{j\in[n]}\E_P\left[\left(\hat{l}^{(1)}(X_{j})-l_0(X_{j})\right)^2\,\bigg|\,\hat{l}^{(1)}\right],
    \quad
    \mathcal{R}_m := \max_{j\in[n]}\E_P\left[\left(\hat{m}^{(1)}(X_{j})-m_0(X_{j})\right)^2\,\bigg|\,\hat{m}^{(1)}\right].
\end{equation*}
Then the errors of these nuisance function estimators satisfy:
\begin{enumerate}
    \item[(\mylabel{A2.1}{A2.1})] $\mathcal{R}_m\left(\mathcal{R}_l \vee \mathcal{R}_m\right) = o_{\mathcal{P}}(N^{-1})$,
    \item[(\mylabel{A2.2}{A2.2})] $\mathcal{R}_m \vee  \mathcal{R}_l= o_{\mathcal{P}}(1)$,
    \item[(\mylabel{A2.3}{A2.3})] 
    $\underset{j\in[n]}{\max}\,\E_P\left[\big|\hat{l}^{(1)}(X_{j})-l_0(X_{j})\big|^{4+\delta}\,\Big|\,\hat{l}^{(1)}\right] = O_{\mathcal{P}}(1)$ and
    \\
    $\underset{j\in[n]}{\max}\,\E_P\left[\big|\hat{m}^{(1)}(X_{j})-m_0(X_{j})\big|^{4+\delta}\,\Big|\,\hat{m}^{(1)}\right] = O_{\mathcal{P}}(1)$.
\end{enumerate}
\end{assumption}
The assumptions on the regression function estimates are relatively weak and identical to those in \citet{emmenegger}, with what is typically the strongest requirement (\ref{A2.1}) permitting nonparametric rates of $o_{\mathcal{P}}(N^{-1/2})$ for each of $\mathcal{R}_m$ and $\mathcal{R}_l$. Faster rates than this however weaken conditions on how $n$ may diverge; see Corollary~\ref{cor:main} below.

\begin{assumption}[Stability of weight function estimates] \label{ass:weight}
Suppose there also exists deterministic functions $\sigma^*:\R^d\to\R$ and $\rho^*:\R^d\times\R^d\to (0,\infty)$ whose estimators satisfy:
\begin{enumerate}
    \item[(\mylabel{A3.1}{A3.1})] $\mathcal{R}_{\sigma} := \underset{j\in[n]}{\max}\,\E_P\left[\left(c^*\hat{\sigma}^{(1)}(X_{j})-\sigma^*(X_{j})\right)^2\,\Big|\,\hat{\sigma}^{(1)}\right] = o_{\mathcal{P}}(1)$ \\ where $c^* := \underset{c>0}{\arginf}\,\underset{j\in[n]}{\max}\,\E_P\left[\left(c\hat{\sigma}^{(1)}(X_{j})-\sigma^*(X_{j})\right)^2\right]$,
    \item[(\mylabel{A3.2}{A3.2})] $\mathcal{R}_{\rho} := \underset{j\neq j'}{\max}\,\E_P\left[\left(\hat{\rho}^{(1)}(X_{j},X_{j'})-\rho^*(X_{j},X_{j'})\right)^2\,\Big|\,\hat{\rho}^{(1)}\right] = o_{\mathcal{P}}(1)$.
    \end{enumerate}
Further suppose the associated weights $W^{\ast}:=W(\sigma^*,\rho^*)(X)$ satisfy
\begin{enumerate}
    \item[(\mylabel{A3.3}{A3.3})] $\Lambda_{\max}(W^*)\lesssim 1$ and $\Lambda_{\min}(W^*)\gtrsim n^{-\gamma}$ for some $\gamma\in[0,\mu_{\Sigma}]$ almost surely,
     \item[(\mylabel{A3.4}{A3.4})] $\Lambda_{\min}\Big(W^{*\frac{1}{2}}\Sigma W^{*\frac{1}{2}}\Big)\gtrsim n^{-\kappa}$ and $\Lambda_{\max}\Big(W^{*\frac{1}{2}}\Sigma W^{*\frac{1}{2}}\Big)\lesssim n^{\alpha}$ for some $\kappa\in[0,\gamma]$ and $\alpha\in[0,\mu_{\Sigma}]$ almost surely.
\end{enumerate}
\end{assumption}
Assumptions \ref{A3.1} and \ref{A3.2} require a probabilistic limit for our estimates of the weight function, but this need not correspond closely to the inverse of $\Sigma$. The eigenvalue assumption \ref{A3.4} however does loosely quantify the discrepancy between these, and $\kappa$ and $\alpha$ impact the permitted divergence rate of $n$. The reason for introducing the $c^*$ is that the estimated weights need not be on the same scale as $\Sigma^{-1}$ (recall that the sandwich loss is invariant to positive scaling of its argument).

\begin{theorem}\label{thm:main}
Consider Algorithm~\ref{alg:main}.
Let the sequence of distribution families $(\mathcal{P}_I)_{I \in \mathbb{N}}$ for $(Y, D, X)$ be such that for all $I$ sufficiently large, and for all $P \in \mathcal{P}_I$, Assumptions~\ref{ass:mathcal(P)}, \ref{ass:nuisance_rates} and \ref{ass:weight} are satisfied. Further, suppose that the group size $n$ is either finite, or diverges at a rate satisfying Assumption~\ref{ass:n_max} in Appendix~\ref{appsec:apptheory}. Then defining
\[
V:=\left(\E_{P}\left[\xi^{\top}W^*\xi\right]\right)^{-2}\left(\E_{P}\left[\left(\varepsilon^{\top}W^*\xi\right)^2\right]\right),
\]
we have that $\hat{\beta}$ is uniformly asymptotically Gaussian
    \[
\lim_{I\to\infty}\sup_{P\in\mathcal{P}_I}\sup_{t\in\R}\left|\PP_{P}\left(\sqrt{N/V}\big(\hat{\beta}-\beta\big)\leq t\right)-\Phi\left(t\right)\right|=0,
    \]
    and moreover the above holds with $V$ replaced by $\hat{V}$.
\end{theorem}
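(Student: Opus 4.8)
I will reduce $\hat\beta-\beta$ to a single normalised sum of independent, mean-zero, within-group quadratic forms, apply a triangular-array central limit theorem to it, and show every remaining contribution is asymptotically negligible uniformly over $P\in\mathcal P_I$. Writing the cross-fitted estimator as $\hat\beta-\beta = N_I/D_I$ with $D_I := \sum_{k}\sum_{i\in\mathcal I_k}\hat R_i^{D\top}\hat W_i\hat R_i^D$ and $N_I := \sum_k\sum_{i\in\mathcal I_k}\hat R_i^{D\top}\hat W_i(\hat R_i^Y-\beta\hat R_i^D)$, I would first substitute the partially linear model \eqref{eq:PLR}. Using $l_0=\beta m_0+g_0$ gives $\hat R_i^D = \xi_i-\Delta_m^{(k)}(X_i)$ and $\hat R_i^Y-\beta\hat R_i^D = \varepsilon_i-\Delta_l^{(k)}(X_i)+\beta\Delta_m^{(k)}(X_i)$, where $\Delta_l^{(k)}:=\hat l^{(k)}-l_0$ and $\Delta_m^{(k)}:=\hat m^{(k)}-m_0$. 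Expanding $N_I$ then produces the \emph{oracle} term $\sum_{k,i}\xi_i^\top W^*\varepsilon_i$, plus (i) terms linear in a nuisance error and in a noise vector, (ii) terms bilinear in the nuisance errors, and (iii) terms measuring the discrepancy between $\hat W_i$ and $W^*$; the same expansion applies to $D_I$.

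Second, I would dispatch the non-oracle terms. By the cross-fitting construction, on fold $k$ the estimators $\hat W^{(k)},\hat l^{(k)},\hat m^{(k)}$ are independent of $\{(\varepsilon_i,\xi_i)\}_{i\in\mathcal I_k}$; combined with $\E_P[\varepsilon\mid D,X]=0$ and $\E_P[\xi\mid X]=0$ this makes the type-(i) terms conditionally mean-zero (this is precisely the Neyman-orthogonality property \eqref{eq:neyman}), so a conditional-variance bound using the eigenvalue controls \ref{A1.2}, \ref{A3.3}, \ref{A3.4} together with the consistency rates \ref{A2.2}, \ref{A3.1}, \ref{A3.2} shows they are $o_{\mathcal{P}}$ of the $\sqrt N$ fluctuation scale, requiring only rates $o_{\mathcal{P}}(1)$ rather than a product rate. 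The type-(ii) terms are handled by Cauchy--Schwarz and the product-rate assumption \ref{A2.1}, $\mathcal R_m(\mathcal R_l\vee\mathcal R_m)=o_{\mathcal{P}}(N^{-1})$, which is calibrated so that these contribute $o_{\mathcal{P}}(N^{-1/2})$ after dividing by $D_I$. For the type-(iii) terms I would use that $W(s,\theta)=D_s C_\theta^{-1}D_s$ \eqref{eq:workingcovariance} is Lipschitz in $(s,\theta)$ on the region where the eigenvalue bounds keep $C_\theta^{-1}$ well-conditioned, converting the $L^2$-convergence of \ref{A3.1}--\ref{A3.2} (after absorbing the scale factor $c^*$, permissible since the loss and $\hat\beta$ are scale-invariant) into negligibility of the associated quadratic-form differences via the moment bounds \ref{A1.1}. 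A weak law of large numbers finally shows $D_I$ concentrates at its suitably normalised mean $\E_P[\xi^\top W^*\xi]$.

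Third---and this is the main obstacle---I would establish a \emph{uniform} central limit theorem for the normalised oracle sum $\big(\sum_i\Var_P(\xi_i^\top W^*\varepsilon_i)\big)^{-1/2}\sum_i\xi_i^\top W^*\varepsilon_i$. Since the summands are independent and mean-zero but their law varies with $I$ and $P$ and their dimension $n$ may diverge, a fixed-$n$ CLT is unavailable; instead I would verify a Lyapunov condition, bounding $\sum_i\E_P|\xi_i^\top W^*\varepsilon_i|^{2+\delta'}$ against $\big(\sum_i\Var_P(\xi_i^\top W^*\varepsilon_i)\big)^{(2+\delta')/2}$ for some $\delta'>0$ permitted by \ref{A1.1}. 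The numerator is controlled by Cauchy--Schwarz inside the quadratic form and the $(4+\delta)$-moment bounds $\E_P\|\xi\|_2^{4+\delta},\E_P\|\varepsilon\|_2^{4+\delta}\lesssim n^{(4+\delta)/2}$ of \ref{A1.1} together with $\Lambda_{\max}(W^*)\lesssim 1$; the denominator is bounded below using \ref{A3.4} and \ref{A1.2}. Tracking the powers of $n$ shows the Lyapunov ratio vanishes exactly when $n$ grows no faster than the rate formalised in Assumption~\ref{ass:n_max}, uniformly in $P$ since every constant is uniform. A uniform Slutsky argument (through the $o_{\mathcal{P}}/O_{\mathcal{P}}$ calculus) then combines this with the concentration of $D_I$ to give asymptotic normality of $\sqrt{N/V}(\hat\beta-\beta)$, after checking that the limiting variance matches the stated $V$ under the given normalisation.

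Finally, for the variance-estimator claim I would show $\hat V$ converges to the same limit, reusing the tools above: the pilot $\tilde\beta_k$ is consistent, the plug-in residuals $\hat\xi_i,\hat\varepsilon_i$ differ from $\xi_i,\varepsilon_i$ by nuisance and pilot errors that vanish, and the numerator and denominator of $\hat V$ concentrate by the same weak law, so $\hat V=V(1+o_{\mathcal{P}}(1))$; a further uniform Slutsky step yields the statement with $V$ replaced by $\hat V$. Uniform convergence of the distribution functions to $\Phi$ then follows from the uniform CLT and Pólya's theorem, $\Phi$ being continuous. The step I expect to be most delicate throughout is the bookkeeping of powers of $n$ in the Lyapunov bound and in the type-(i)/(iii) remainder bounds, as it is this that pins down the admissible growth of the group size and hence the precise form of Assumption~\ref{ass:n_max}.
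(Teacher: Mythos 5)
Your proposal is correct and follows essentially the same route as the paper: the same decomposition of the numerator into an oracle sum $\sum_i\xi_i^{\top}W_i^*\varepsilon_i$ plus conditionally mean-zero linear terms (killed by cross-fitting/Neyman orthogonality and conditional variance bounds), bilinear nuisance terms (killed by Cauchy--Schwarz and the product rate \ref{A2.1}), and weight-discrepancy terms (killed by an operator-norm bound on $\hat{W}_i-W_i^*$ in terms of $\mathcal{R}_{\sigma},\mathcal{R}_{\rho}$); the same Lyapunov-type uniform triangular-array CLT with the power-of-$n$ bookkeeping that produces Assumption~\ref{ass:n_max}; the same weak law for the denominator and uniform Slutsky combination; and the same concentration argument for $\hat{V}$. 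The only cosmetic differences are that the paper builds uniformity directly into its CLT lemma rather than invoking P\'olya's theorem, and splits some moment bounds into the cases $\delta<4$ and $\delta\geq 4$ via the von Bahr--Esseen inequality.
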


The result shows in particular that $\hat{C}(\alpha)$ constructed in Algorithm~\ref{alg:main} is an asymptotic $(1-\alpha)$-level honest confidence interval, under the given assumptions. Corollary~\ref{cor:main} below specialises a version of Theorem~\ref{thm:main} for diverging group sizes for two cases of interest where relatively simple forms of (conservative) rate requirements on $n$ are available.

\begin{corollary} \label{cor:main}
    Adopt the setup and notation of Theorem~\ref{thm:main} but suppose $\delta \geq 4$ in \ref{A1.1} and \ref{A2.3}, and additionally that $\mathcal{R}_{\rho} = O_{\mathcal{P}}(N^{-1})$. Suppose the estimated weight functions $\hat{W}^{(k)}$ are constructed to fall within classes $\mathcal{W}$ (see Section~\ref{sec:boosting}) corresponding to one of the following two settings: 
    \begin{enumerate}[label=(\roman*)]
        \item  Equicorrelated working correlation, but where the true conditional correlation $\Corr(Y \given X, D)$ may be arbitrary;
        \item Autoregressive $\text{AR}(1)$ working correlation and when $\mu_\Sigma=0$ (see \ref{A1.2}).
    \end{enumerate}
    Then the conclusions of Theorem~\ref{thm:main} holds for diverging group sizes at the following rates:
    \begin{align*}
        \text{\bf Equicorrelated (i): }& \qquad n = o\left(N^{\frac{1}{3+2\kappa}} \wedge \left(N\,  \mathcal{R}_m \big(\mathcal{R}_l\vee\mathcal{R}_m\big) \right)^{-\frac{1}{\kappa}} \wedge \mathcal{R}_{\sigma}^{-\frac{1}{2(\kappa+2)}} \right),
        \\
        \text{\bf Autoregressive (ii): }& \qquad n = o\Big(N^{\frac{1}{3}} \wedge \mathcal{R}_l^{-1} \wedge \mathcal{R}_{\sigma}^{-1} \Big).
    \end{align*}
\end{corollary}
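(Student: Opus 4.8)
The plan is to treat Corollary~\ref{cor:main} as a specialisation of Theorem~\ref{thm:main}: the stated hypotheses already supply Assumptions~\ref{ass:mathcal(P)} and~\ref{ass:nuisance_rates}, together with the stability conditions \ref{A3.1}--\ref{A3.2} of Assumption~\ref{ass:weight}. Hence the only genuine work is (i) to pin down the eigenvalue exponents $\gamma$, $\kappa$, $\alpha$ entering the spectral conditions \ref{A3.3}--\ref{A3.4} for each of the two working-correlation families, and (ii) to substitute these exponents, along with the strengthened moment hypothesis $\delta \geq 4$, the fast correlation rate $\mathcal{R}_\rho = O_{\mathcal{P}}(N^{-1})$, and the prescribed value of $\mu_\Sigma$, into the general divergence requirement of Assumption~\ref{ass:n_max} and simplify. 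Since $W^* = D_{s^*} C^{-1} D_{s^*}$ with $s^* = 1/\sigma^*$ bounded above and below (as $\sigma^*$ is) and $C$ the inverse working correlation matrix at the limiting parameter, the spectrum of $W^*$ differs from that of $C^{-1}$ only by bounded factors, so everything reduces to understanding the eigenvalues of the inverse working correlation.

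For the equicorrelated family I would use the closed form $C_\theta = (1-\rho)I + \rho\,\mathbf{1}\mathbf{1}^\top$, whose eigenvalues are $1+(n-1)\rho$ (once) and $1-\rho$ (multiplicity $n-1$); hence $C_\theta^{-1}$ has eigenvalues $1/(1-\rho)$, bounded above and below by constants, and $1/(1+(n-1)\rho)$, which is of order $n^{-1}$. This gives $\Lambda_{\max}(W^*) \lesssim 1$ and $\Lambda_{\min}(W^*) \gtrsim n^{-1}$, i.e.\ $\gamma = 1$ in \ref{A3.3}; combining with $\Lambda_{\min}(\Sigma) \gtrsim 1$ and $\Lambda_{\max}(\Sigma) \lesssim n^{\mu_\Sigma}$ from \ref{A1.2}, the submultiplicative bounds $\Lambda_{\min}\big(W^{*\frac{1}{2}}\Sigma W^{*\frac{1}{2}}\big) \geq \Lambda_{\min}(W^*)\Lambda_{\min}(\Sigma)$ and $\Lambda_{\max}\big(W^{*\frac{1}{2}}\Sigma W^{*\frac{1}{2}}\big) \leq \Lambda_{\max}(W^*)\Lambda_{\max}(\Sigma)$ show \ref{A3.4} holds for some $\kappa \in [0,1]$ and $\alpha \in [0,\mu_\Sigma]$, with this $\kappa$ being exactly the exponent that is carried into the stated rate. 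For the autoregressive family the crucial fact is that $C_\theta$ with $\{C_\theta\}_{jk} = \theta^{|j-k|}$ is a Kac--Murdock--Szeg\H{o} matrix whose eigenvalues lie in the $n$-independent interval $[(1-|\theta|)/(1+|\theta|),\,(1+|\theta|)/(1-|\theta|)]$ whenever $\theta$ is bounded away from $\pm 1$; hence $\Lambda_{\max}(C_\theta^{-1})$ and $\Lambda_{\min}(C_\theta^{-1})$ are both bounded above and below by positive constants independent of $n$, so $\gamma = 0$, and together with the hypothesis $\mu_\Sigma = 0$ this forces $\kappa = \alpha = 0$.

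With the exponents in hand, the remaining step is bookkeeping: substitute $(\gamma,\kappa,\alpha,\mu_\Sigma) = (1,\kappa,\alpha,\mu_\Sigma)$ (respectively $(0,0,0,0)$) into each constituent constraint of Assumption~\ref{ass:n_max}, observe that $\delta \geq 4$ relaxes the moment-driven constraints enough that they are dominated by the remaining terms, and that $\mathcal{R}_\rho = O_{\mathcal{P}}(N^{-1})$ renders the correlation-estimation term negligible so that it does not enter the final rate. Collecting the surviving binding constraints then yields $n = o\big(N^{1/(3+2\kappa)} \wedge (N\mathcal{R}_m(\mathcal{R}_l \vee \mathcal{R}_m))^{-1/\kappa} \wedge \mathcal{R}_\sigma^{-1/(2(\kappa+2))}\big)$ in the equicorrelated case and the cleaner $n = o\big(N^{1/3} \wedge \mathcal{R}_l^{-1} \wedge \mathcal{R}_\sigma^{-1}\big)$ in the autoregressive case.

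The main obstacle I anticipate is twofold. First, establishing the uniform-in-$n$ eigenvalue bounds for the autoregressive correlation matrix and, more delicately, ensuring that the limiting correlation parameter stays bounded away from its degenerate boundary (for the autoregressive model $|\theta| \to 1$, and analogously $\rho \to 1$ for the equicorrelated model), since in both cases $\Lambda_{\max}(W^*)$ blows up there and \ref{A3.3} would fail; some such restriction must be read into the setup or imposed. Second, the careful accounting in the substitution step: Assumption~\ref{ass:n_max} bundles several competing rate terms, and the nontrivial part is verifying \emph{which} of them are genuinely binding once $\gamma$, $\kappa$, $\alpha$, $\mu_\Sigma$ and the strengthened moment and correlation-rate hypotheses are imposed, so as to confirm that the stated three-way minima are precisely the active constraints and that no additional term (in particular an $\alpha$-dependent one) survives.
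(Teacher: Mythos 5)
Your proposal matches the paper's own (largely implicit) treatment: the spectral exponents you derive---$\gamma=1$ for the equicorrelated class via the rank-one-perturbation spectrum of $C_\theta$, and $\gamma=\kappa=\alpha=0$ for the $\text{AR}(1)$ class via uniform-in-$n$ eigenvalue bounds---are exactly those established in the proof of Corollary~\ref{thm:(s,theta)-all} (the paper uses Weyl's inequality and the Gershgorin circle theorem where you invoke the Kac--Murdock--Szeg\H{o} structure), and the paper then simply substitutes $(\mu_\Sigma,\mu_\Omega,\gamma,\tau)=(1,1,1,1)$, respectively $(\mu_\Sigma,\mu_\Omega,\alpha,\kappa,\gamma,\tau)=(0,1,0,0,0,0)$, into Assumption~\ref{ass:n_max} as you propose, using $\delta\geq 4$ and $\mathcal{R}_\rho=O_{\mathcal{P}}(N^{-1})$ together with the bounds $\mathcal{R}_m=o_{\mathcal{P}}(N^{-1/2})$ implied by \ref{A2.1} to discard the non-binding terms. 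The boundary-degeneracy issue you flag ($\rho\to 1$ or $|\theta|\to 1$) is indeed handled in the paper by the uniform boundedness conditions in \ref{A3.3'}, so your approach is essentially the same as the paper's.
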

We discuss each of the cases (i) and (ii) in turn. Case (i) places few restrictions on the true conditional covariance and so results in a more stringent requirement on the growth rate of $n$. Recall that the middle term $N\,  \mathcal{R}_m \big(\mathcal{R}_l\vee\mathcal{R}_m)$ is required to be $o_{\mathcal{P}}(1)$ by \ref{A2.1}
 and small values of the parameter $\kappa \in [0,1]$ indicate better approximation of $\Sigma$ in \ref{A3.4}. 
 Case (ii) enforces that $\mu_\Sigma=0$: this occurs for instance when the true correlation function is upper bounded by an exponentially decaying function with separation (satisfied e.g.\ for ARMA processes). As such, this condition may be appropriate for longitudinal data. The rate requirement on $n$ is relatively weak: both  $\mathcal{R}_l$ and $\mathcal{R}_{\sigma}$ need only satisfy a rate requirement weaker than that on $\mathcal{R}_m$ entailed by \ref{A2.1} in order for $n$ to be permitted to grow at any rate $o(N^{1/3})$.

\section{Numerical experiments}\label{sec:numerical_results}

In this section we explore the empirical properties of the sandwich boosting estimator on a number of simulated and real-world datasets. In all cases where covariates $X_i$ are available in addition to our covariate of interest $D_i$, we fit partially linear models using the approach of Algorithm~\ref{alg:main}, estimating nuisance regression functions  $m_0$ and $l_0$ using cubic regression splines implemented in the \texttt{mgcv} package \citep{gam-code}; however in addition to using weight functions $\hat{W}^{(k)}$ selected using sandwich boosting, we also compare to versions with these selected using quasi-pseudo-Gaussian maximum likelihood (ML) and GEE1 (GEE) based methods; see Section~\ref{sec:loss_details} of the Appendix for further details. Note that the use of Algorithm~\ref{alg:main} with ML is essentially the approach of \citet{emmenegger}, as in the \texttt{dmlalg} package~\citep{dmlalg}, but using a robust sandwich estimate of  the variance.

Section~\ref{sec:simulations} explores four simulated settings with varying degrees of misspecification of the conditional covariance. Section~\ref{sec:real-data} looks at two datasets: the first, on orange juice sales grouped by store, highlights the benefits of flexible variance modelling via the nonparametric $s\equiv 1/\sigma$ component in our sandwich boosting scheme (Section~\ref{sec:boosting}); and the second, a longitudinal study on women's wages, provides a real-life example of the phenomena seen in Examples~\ref{ex:cond-corr-misspec} and \ref{ex:cond-var-misspec} where more complex conditional covariance models can lead to poorer $\beta$-estimation when weights are selected using ML or GEE-based approaches, and where the minimiser of the sandwich loss is rather different to those corresponding to the ML and GEE objectives.

\subsection{Simulated data}\label{sec:simulations}

We look at four simulated scenarios, and in each case consider different weight classes $\mathcal{W}$: we describe these classes below in terms of their implied working covariances i.e.\ in terms of the inverses of the weight matrices. For all the approaches we use the cross-fitting scheme (Algorithm~\ref{alg:main}) with $K=2$ folds.
\begin{itemize}
    \item \textbf{Homoscedastic: } Depending on the setting, this consists of either equicorrelated or autoregressive $\text{AR}(1)$ working correlations scaled by a constant variance, with the single parameter estimated either by maximising a Gaussian likelihood (ML), an approach of the form given in \eqref{eq:GEErho} (GEE) or minimising the sandwich loss through projected gradient descent (only used in the setting of Section~\ref{sec:sim-corr}).
    \item \textbf{Heteroscedastic: } This uses the same working correlations as in the homoscedastic case, but allows for more flexibility in the working conditional variance function with specifics depending on the estimation method used.
    \begin{itemize}
        \item \textbf{ML: } We model the logarithm of the conditional variance function with a polynomial basis in the covariate $X_i$ with the number of basis functions (restricted to at most $4$ to avoid numerical instabilities) determined by cross-validation using Gaussian log-likelihood loss. This is carried out using the \texttt{nlme} package \citep{nlme-code}.
        \item \textbf{GEE: } We perform a cubic penalised spline regression of the squared residuals $\tilde{\varepsilon}_{ij}^2$ onto $X_i$ using the \texttt{mgcv} package \citep{gam-code}) to obtain an estimate of the working conditional variance function with the parameter of the correlation subsequently estimated using the \texttt{geeglm} package \citep{geepack}.
        \item \textbf{Sandwich loss: }We use sandwich boosting as described in Section~\ref{sec:boosting}.
    \end{itemize}
\end{itemize}
Section~\ref{sec:sim-complexity} considers a well-specified setting where the optimal true conditional covariance weights can in principle be replicated by the heteroscedastic weight estimators; Section~\ref{sec:sim-misspec} looks at a misspecified setting where the optimal weights $W_0$ depend on $D$ and varies the degree of misspecification; and Sections~\ref{sec:sim-corr} and \ref{sec:sim-var} explore the effect of varying group sizes in a settings with mildly misspecified conditional correlation and variance respectively. In all cases, we simulate data with equal independent and identically distributed groups of equal size $n$, which we vary across the settings.
The mean squared errors of $\beta$-estimators corresponding to each method and setting pair, averaged over 500 repetitions, are shown in Figure~\ref{fig:simdata_all}. Coverage of associated confidence intervals is reported in Section~\ref{sec:add_results} of the Appendix; in particular the coverage probabilities of nominal 95\% confidence intervals based on our sandwich boosting approach varies from 0.938--0.984.

\subsubsection{Increasing model complexity}\label{sec:sim-complexity}

Consider $I=2000$ i.i.d.\ instances of $(Y,D,X)\in\R^{10}\times\R^{10}\times\R^{10}$ following the partially linear model ~\eqref{eq:PLR} with the following properties: the group size $n=10$ where $\beta=1$ is the target parameter of interest, $X$ is componentwise i.i.d.\ uniform $U[-5,5]$, $m_0(x)=\cos x$, $g_0(x)=\tanh x$ (with the $\cos$ and $\tanh$ functions applied componentwise), $\xi\given X\sim N_{10}({\bf 0},\Omega(X))$ and $\varepsilon\given (D,X)\sim N_{10}({\bf 0},\Sigma(X))$ with covariance matrices given by
\begin{align*}
    \Sigma(X) &=
        \Big(\big(\ind_{\{j=k\}} + 0.2\cdot\ind_{\{j\neq k\}}\big)\sigma_0(X_j)\sigma_0(X_k)\Big)_{(j,k)\in[10]^2},
    \quad
    \sigma_0(x) = 2+\cos(\lambda x),
    \\
    \Omega(X) &=
        \big(\ind_{\{j=k\}} + 0.1\cdot\ind_{\{j\neq k\}}\big)_{(j,k)\in[10]^2}.
\end{align*}
Here $\lambda\geq0$ is the Lipschitz constant (complexity parameter) of the conditional variance function $\sigma_0$, that we will vary. We use homoscedastic and heteroscedastic working covariance classes with equicorrelated working correlation (noting that the true correlation is also constant here). As $\varepsilon\given (D,X)\overset{d}{=}\varepsilon\given X$, by Proposition~\ref{prop:modelmissspec} the population minimiser of the sandwich loss and those corresponding to the ML and GEE approaches should all coincide, and so from this perspective, the former should have no clear advantage in terms of the performance of the resulting $\beta$-estimator. One might therefore expect all heteroscedasticity-accommodating methods to perform similarly here since they need only model the conditional variance $\sigma_0$ sufficiently well to yield the semiparametrically optimal MSE (referred to as the oracle MSE).

The top left panel of Figure~\ref{fig:simdata_all} shows that this appears to be the case when $\lambda \leq 0.75$, but for larger values of the complexity parameter, both the ML and GEE approaches appear to struggle. The latter displays a somewhat erratic trajectory, peaking at an MSE 9 times that of the oracle, and even greatly exceeding that of its homoscedastic counterpart (note that the curves for the homoscedastic GEE and ML estimators almost coincide). This behaviour seems likely to be due to the GEE approach finding local optima, which recall, need not be local optima for the asymptotic MSE objective, i.e., the sandwich loss; see also Example~\ref{ex:cond-corr-misspec} Setting (b) and Appendix~\ref{appex:cond-corr-misspec} for a similar phenomenon.
The sandwich boosted estimator in comparison remains relatively robust to this increase in model complexity, maintaining performance comparable to the oracle estimator.

\subsubsection{Increasing covariance misspecification}\label{sec:sim-misspec}

Here we consider $I=10^4$ instances of the partially linear model~\eqref{eq:PLR} with $n = 4$, $\beta=1$, $g_0(x)=\tanh x$ and $m_0(x)=\cos x$. Errors $(\varepsilon,\xi)$ are generated by introducing an unobserved confounder $\zeta$ between $\varepsilon$ and $\xi$ inspired by the proof of Theorem~\ref{thm:divergingratio} as follows:
\begin{gather*}
    X\sim N_4({\bf 0},0.9{\bf 1}{\bf 1}^{\top}+0.1I_4),
    \quad
    B\given X
     \sim\text{Ber}(p(X)),
    \quad
    p(x) := \ind_{[0,\infty)}(\bar{x}) + \eta^{-1}\ind_{(-\infty,0)}(\bar{x}),
    \\
    \zeta := p(X)^{-1}B,
    \quad
    \xi\given(X,\zeta) := N_4({\bf 0}, \zeta I_4),
    \quad
    \varepsilon\given(D,X,\zeta) := N_4({\bf 0},\zeta\Sigma),
    \quad
    \Sigma = \big(0.2^{|j-k|}\big)_{(j,k)\in[4]^2}.
\end{gather*}
Here $\eta\geq 1$ acts as `misspecification parameter', with larger values indicating greater confounding and violation of the condition \eqref{eq:GEE=MLE=AV} for equivalence of the ML, GEE and sandwich losses. Note that $\bar{x}$ denotes the mean of the entries $x\in\R^n$.

As we see in the top right panel of Figure~\ref{fig:simdata_all}, the performances of the heteroscedastic ML and GEE estimators deteriorate with increasing  $\eta$ and yield worse MSEs than even an unweighted least squares estimator as the extent of covariance misspecification increases. In contrast, despite being equally restricted to use a misspecified class of weights that are a function of $X$ alone, the sandwich boosted $\hat{\beta}$ has a substantially smaller MSE compared  to the approaches  considered, with its advantage increasing with  increasing $\eta$.

\begin{figure}[ht]
    \centering
    \includegraphics[width=0.8\textwidth]{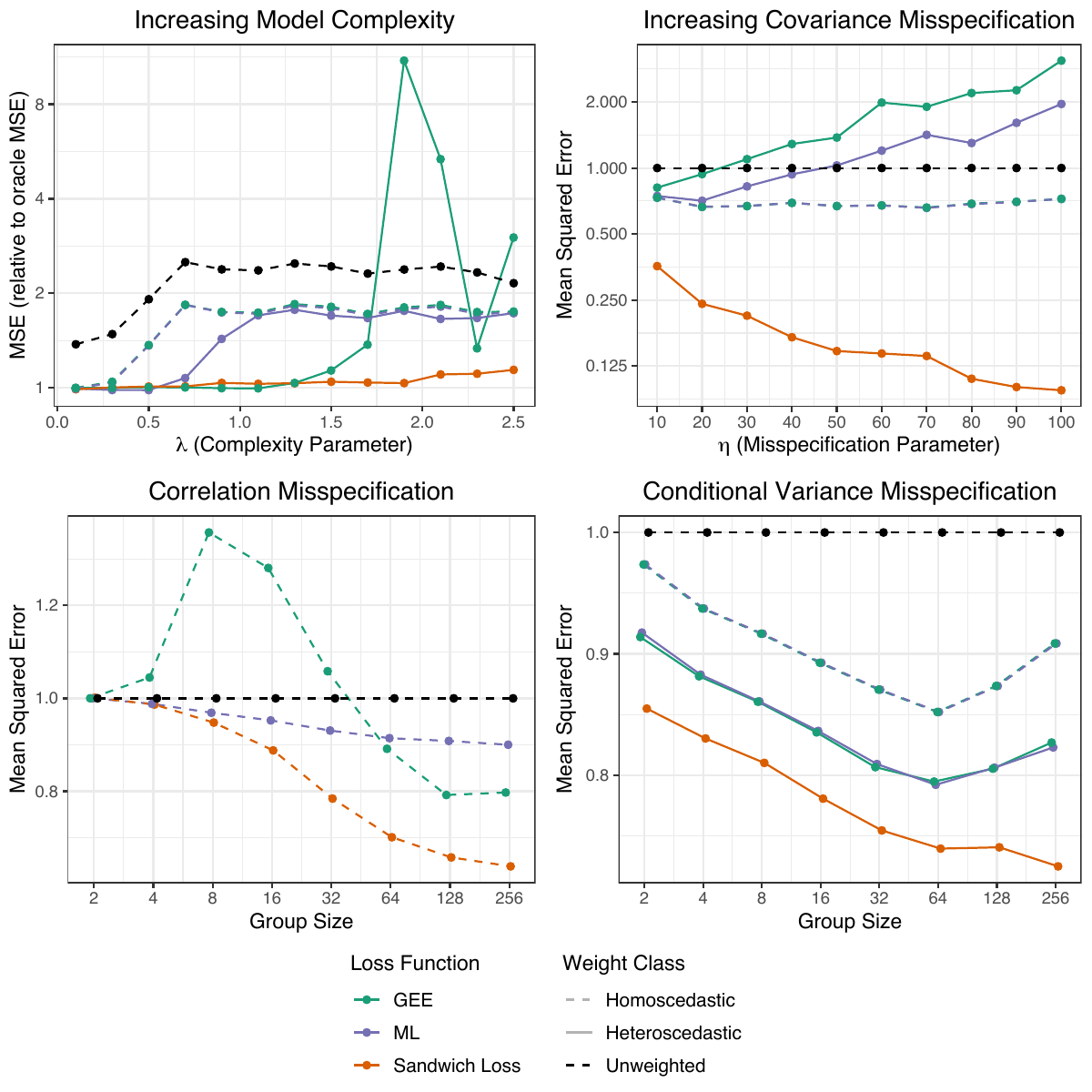}
    \caption{Mean squared error of $\beta$-estimators for the four simulated experiments in Section~\ref{sec:simulations} over 500 independent repeats. The top left plot gives the MSE relative to the oracle, while the others are relative to the unweighted  estimator.}
    \label{fig:simdata_all}
\end{figure}

\subsubsection{Mild conditional correlation misspecification}\label{sec:sim-corr}

We now consider the simple setting of (grouped) linear regression:
\begin{gather*}
    Y = \beta D + \varepsilon,
    \qquad
    D \sim N_n\left({\bf 0}, \frac{1}{8}{\bf 1}{\bf 1}^{\top} + \frac{7}{8}I_{n}\right),
    \\
    \varepsilon \sim \text{ARMA}(2,1),
    \quad
    \phi=(0.3,0.6),
    \quad
    \vartheta=-0.5.
\end{gather*}
Here $\phi$ and $\vartheta$ are the autoregressive and moving average parameters respectively.
We take $\beta=1$ and i.i.d.\ Gaussian innovations for the ARMA process. We consider settings with $I=2^{15}/n$ and group sizes $n\in\left\{2^r:r=1,2,\ldots,8\right\}$. To demonstrate the effect of correlation misspecification, we use a constant working variance and fit all models with an $\text{AR}(1)$ working correlation; this yields a misspecified correlation for group sizes $n\geq 3$. Example~\ref{ex:cond-corr-misspec}, Setting (a) corresponds to this setting with $n=100$.
The bottom left panel of Figure~\ref{fig:simdata_all} shows that the sandwich loss here outperforms the competing approaches, with the GEE approach leading to an inflation in MSE over an unweighted approach for moderate group sizes.

\subsubsection{Mild conditional variance misspecification}\label{sec:sim-var}
We now consider $I=2^{15}/n$ i.i.d.\ instances of the partially linear model~\eqref{eq:PLR} via
\begin{gather*}
    X_j\iid U[-2,2],
    \qquad
    m_0(x) = -6e^{-x},
    \qquad
    g_0(x) = \tanh(x),
    \qquad
    \xi\given X \sim N_n\left({\bf 0}, 9I_{n}\right)
    \\
    \varepsilon\given (D,X) \sim N_n\left({\bf 0},\Sigma(D,X)\right),
    \qquad
    \Sigma_{jk}(D,X) =
        \begin{cases}
        \sigma_0^2(D_j,X_j) & \text{if $j=k$}\\
        0.2\sigma_0(D_j,X_j)\sigma_0(D_k,X_k) & \text{if $j\neq k$}\\
        \end{cases},
    \\
    \sigma_0(d,x) = 2+\tanh\left(d-3x\right),
\end{gather*}
for $n\in\left\{2^r: r=1,2,\ldots,8\right\}$, taking $\beta=1$. Note therefore that as $\Cov(\varepsilon \given D, X) \neq \Cov(\varepsilon \given X)$, the optimal weights that depend also on $D$ are not in the weight classes considered.
All methods are fitted using an equicorrelated working correlation. We again see in the bottom right Figure~\ref{fig:simdata_all} superior performance of sandwich boosting over the competitors here.

\subsection{Real-world data analyses}\label{sec:real-data}
Here we present analyses of two datasets. We fit GEE and (heteroscedastic) sandwich boosting approaches as in the previous section, but use mixed effects models (MEM) to give a family of (working) conditional covariance functions in the ML framework by taking certain covariates as random effects. We continue to use these within Algorithm~\ref{alg:main}, using the \texttt{lme4} package \citep{lmer} to obtain the weights in the MEM case as in \citet{emmenegger}, but reporting robust sandwich estimates of the variance of the $\hat{\beta}$ constructed.
We use $K=5$ folds for cross-fitting. To mitigate the randomness of the resulting estimators on the sample splits themselves we aggregate the $\beta$ and variance estimators obtained over 50 random independent sample splits using the approach of \citet{chern, emmenegger}; see Appendix~\ref{appsec:numerical_results} for details.

\subsubsection{Orange juice price elasticity}

We analyse historical data on orange juice sales, available from the James M. Kilts Center, University of Chicago Booth School of Business \citep{ojdataset}. The dataset is composed of grouped store-level scanner price and sales data over a 121 week period from 83 Dominick’s Finer Foods stores and consists of $N=9649$ observations.

Our goal is to estimate
the price elasticity of a brand of orange juice (Tropicana) during this time period. We do this via a partially linear model~\eqref{eq:PLR}
of the logarithm of the quantity of sales ($Y=\log(\text{Sales})$)
on the logarithm of the price ($D=\log(\text{Price})$) accounting for confounding by events in time ($X=\text{week number}$),  the coefficient $\beta$ of $D$ giving the price elasticity.

\begin{table}[ht]
	\begin{center}
		\begin{tabular}{c*{3}{S[table-format=3.2]}}
    \toprule
	       Method & $\hat{\beta}$ & $\hat{V}$ & {\begin{tabular}{@{}c@{}c@{}}Reduction in $\hat{V}$ relative to\\  homoscedastic GEE estimator (\%)\end{tabular}}
        \\
        \midrule
        \bf{ Sandwich Boosting} & -2.97 & \bf{30.9} & \bf{26.0}
        \\
        Homoscedastic GEE & -3.18 & 41.7 & 0
        \\
        Heteroscedastic GEE & -3.19 & 44.6 & -7.0
        \\
        Intercept only MEM & -3.17 & 41.8 & -0.17
        \\
        Intercept + Time MEM & -3.18 & 42.5 & -2.0
        \\
        \bottomrule
        \end{tabular}
	\caption{Estimates for the price elasticity of orange juice and corresponding variance estimates.}\label{tab:oj}
	\end{center}
\end{table}

Table~\ref{tab:oj} shows the price elasticity estimators and associated sandwich variances estimates $\hat{V}$; note GEE and sandwich boosting approaches here use an equicorrelated working correlation. We see that our sandwich boosting estimator has a 26.0\% reduction in variance compared to the second best homoscedastic GEE estimator. Note that in the cases of both the mixed effects models (MEM) and GEE estimators, as a broader weight class is used, we observe poorer performance in estimating the price elasticity, illustrating the phenomenon described in Figure~\ref{fig:schematic}. Also note that the sandwich boosted and heteroscedastic GEE estimators model weights in the same class. However, whilst the heteroscedastic GEE estimator does not usefully model any heteroscedasticity in the data (and in fact exhibits the worst performance out of all the estimators considered), sandwich boosting successfully estimates a helpful weighting scheme. Figure~\ref{fig:oj-s-function} demonstrates an $s$-function output from sandwich boosting, effectively capturing the general and seasonal trends in volatility that are not learnt by the other estimators.

\subsubsection{National longitudinal survey of young working women}
Here we consider a dataset from the National Longitudinal Survey of Young Women \citep{workingwomen}  containing the wages of 4711 young working women, each measured at approximately 6 time points per woman, totalling $N=28\,534$ observations. We measure the effect of work experience in their current related sector ($D = \text{work experience}$) on the logarithm of wages ($Y = \log(\text{wage})$), controlling for age and tenure ($X = (\text{age}, \text{tenure})$), with weights a function of $X$, using an equicorrelated working correlation for sandwich boosting and GEE approaches. 
\begin{table}[ht]
	\begin{center}
		\begin{tabular}{cccS}
    \toprule
	       Method & $\hat{\beta}$ & $\hat{V}$ & {\begin{tabular}{@{}c@{}c@{}}Reduction in $\hat{V}$ relative to\\homoscedastic GEE estimator (\%)\end{tabular}}
        \\
        \midrule
        \bf{Sandwich Boosting} & 0.040 & \bf{0.083} & \bf{7.4}
        \\
        Homoscedastic GEE & 0.040 & 0.089 & 0
        \\
        Heteroscedastic GEE & 0.039 & 0.091 & -1.6
        \\
        Intercept only MEM & 0.040 & 0.092 & -3.5
        \\
        Intercept + Age + Tenure MEM & 0.042 & 0.120 & -34.9
        \\
        \bottomrule
        \end{tabular}
	\caption{Estimates of $\beta$ and associated variance estimates $\hat{V}$ relating to the effect of work experience on wages from the National Longitudinal Survey of Young Women.}\label{tab:working_women}
	\end{center}
\end{table}

Table~\ref{tab:working_women} gives the $\beta$-estimators and associated variances $\hat{V}$ for the approaches considered. Similarly to the orange juice price elasticity analysis, we see that the broader model classes of the mixed effects model and heteroscedastic GEE estimators yield larger variances than their respective homoscedastic counterparts.
The sandwich boosting estimator gives a modest 7.4\% reduction in variance over the second smallest homoscedastic GEE variance. Interestingly however, this improvement is entirely due  to the sandwich loss rather than the potentially richer model class used by sandwich boosting: the $s$ function output by sandwich boosting is almost constant, and so it effectively uses a constant working variance and estimates a single working correlation parameter. The sandwich loss objective for this correlation (parametrised in terms of $\theta$; see Section~\ref{sec:boosting}) is plotted in Figure~\ref{fig:working-women} alongside  the objectives corresponding to the homoscedastic GEE and intercept only MEM approaches. We see that the respective $\theta$-minimisers differ substantially, with the asymptotic variance (the sandwich loss) evaluated at the minimisers of the GEE and MEM approaches being larger than the asymptotic variance evaluated at any $\theta\in[0.14,1.09]$, corresponding to any working correlation in the range~$\rho\in[0.12,0.51]$.

\begin{figure}[ht]
    \centering
    \begin{subfigure}[b]{0.48\textwidth}
         \centering
        \includegraphics[width=\textwidth]{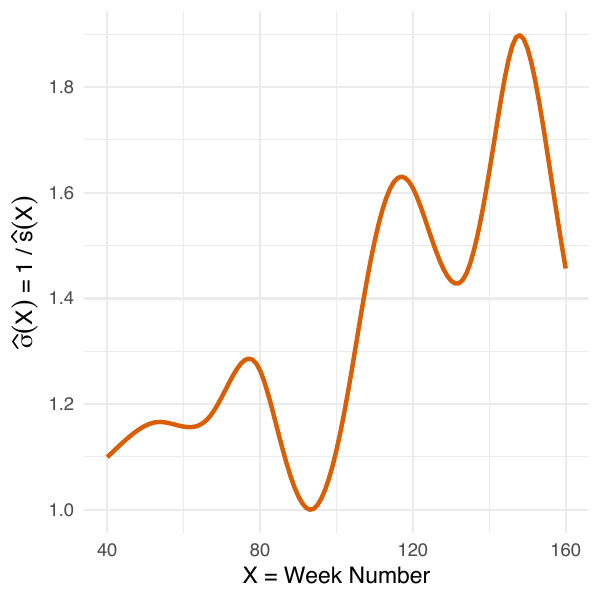}
        \caption{Sandwich boosted $\hat{s}$-function output for the orange juice price elasticity dataset.}
    \label{fig:oj-s-function}
    \end{subfigure}
    \hfill
    \begin{subfigure}[b]{0.48\textwidth}
        \centering
        \includegraphics[width=\textwidth]{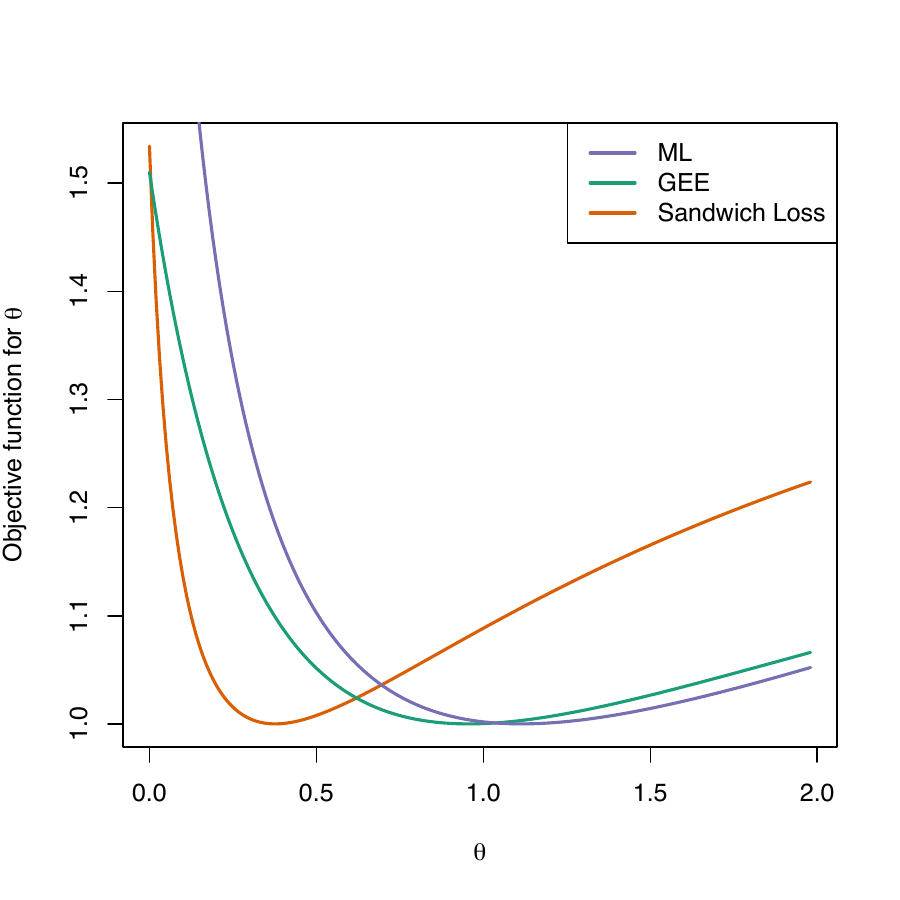}
    \caption{The (scaled) objective functions used to determine $\theta$ for each of the intercept only mixed effects model (ML), homoscedastic GEE estimator (GEE) and sandwich loss with equicorrelated working correlation.}
    \label{fig:working-women}
    \end{subfigure}
    \caption{Outputs relating to data analyses in Section~\ref{sec:real-data}.}
\end{figure}

\section{Discussion}\label{sec:discussion}
In this work we have highlighted and clarified the shortcomings of some popular classical methods in the estimation of weights for weighted least squares-type estimators in partially linear models when the conditional covariance is misspecified. We instead advocate for choosing weights to minimise a sandwich estimate of the variance, what we call the sandwich loss in this context.  A main contribution of ours, in the spirit of the trend towards using machine learning methods for the purposes of statistical inference, is  a practical gradient boosting scheme for approximately minimising this loss over a potentially flexible family of functions defined implicitly through a user-chosen base-learner. Despite the unusual form of our loss that does not decompose as a sum over data points as with the standard case of the empirical risk, we show that for certain versions of our algorithm, the boosting updates can be performed in linear time.

Our work offers a number of directions for future research. On the computational side, it would be useful to investigate broader classes of working correlations that could be accommodated within sandwich boosting to yield linear time updates.
It could also be fruitful to consider the use of the sandwich loss in other classes of models, for example it would be of interest to develop these ideas in the context of generalised (partially) linear marginal models, and beyond.

Thus far we have only considered estimators of a single scalar quantity. In other situations, one may be interested in estimating several parameters simultaneously, and in such cases there are several modifications of the basic sandwich loss that may be helpful to explore. For example, consider the following generalisation of the partially linear model \eqref{eq:PLR}:
 \begin{equation}\label{eq:hetPLR}
 	\begin{split}
 		Y &=\beta(X)\circ D+g_0(X)+\varepsilon,\\
 		D&=m_0(X)+\xi.
 	\end{split}
 \end{equation}
Here $\circ$ denotes the Hadamard product, $\beta(X)$ is a row-wise function of $X$, and all other terms are as before; thus the response of the $j$th observation within the $i$th group satisfies (with a slight abuse of notation) the mean function relationship $\E[Y_{j} \given D,X] = \beta(X_{j})D_{j} + g_0(X_{j})$. We suppose $\beta(X)$ admits the basis expansion
\begin{equation} \label{eq:beta_basis}
    \beta(X) = \sum_{l=1}^L\phi_l\varphi_l(X),
\end{equation}
for some $L$ known basis functions $\left(\varphi_l\right)_{l\in[L]}$ (also row-wise functions of $X$), and unknown vector of parameters $\mbb{\phi}:=(\phi_1,\ldots,\phi_L)\in\R^L$ to be estimated. For example, the model where $\beta(X)=\phi_1+\phi_2 X$ corresponds in the classical linear model setting to fitting an `interaction term' between $D$ and $X$. Given a consistent estimator $\mbb{\hat{\phi}}$ of $\mbb{\phi}$, the mean squared error of the resulting plug-in $\beta$ function estimator $\hat{\beta}(X)=\sum_{l=1}^L\hat{\phi}_l\varphi_l(X)$ satisfies
\begin{gather}\label{eq:beta_function_MSE}
    \E\left[\left(\hat{\beta}(X)-\beta(X)\right)^2\right] \approx 
    \tr\left(\Phi V_{\mbb{\phi}}\right), \;
    \Phi := \big(\E\left[\varphi_l(X)\varphi_{l'}(X)\right]\big)_{l \in [L], l' \in [L]}
    \quad \text{and} \quad
    V_{\mbb{\phi}} := \Var\,\mbb{\hat{\phi}}.
\end{gather}
This suggests the following approach. Consider a class of weighted $\mbb\phi$-estimators $\hat{\mbb\phi} = \hat{\mbb\phi}(W)$ where $W$ is a weight function among a class of $\mathcal{W}$, similarly to our previous framework in Section~\ref{sec:weighted} of weighted $\beta$-estimators. Then given estimates $\hat{\Phi}$ of $\Phi$ and $\hat{V}_{\mbb{\phi}}(W)$ of $\Var \,\hat{\mbb \phi}(W)$, we can consider a generalised sandwich loss of the form
\begin{equation} \label{eq:gen_sand}
\hat{L}_{\mathrm{MSE}(\hat{\beta})}(W) := \tr\big(\hat{\Phi}\hat{V}_{\mbb{\phi}}(W)\big),
\end{equation}
which we may attempt to minimise using a sandwich boosting approach; further details are given in Appendix~\ref{appsec:het_treatment_effect}.

Our sandwich loss $\hat{L}_{\SL}$ is defined with respect to estimated errors $\tilde{\xi}_i$ and $\tilde{\varepsilon}_i$ derived from initial regressions, which in particular, take no advantage of the dependence structure in the data, unlike the final estimate $\hat{\beta}$. Clearly initial weighted  regressions could deliver improved estimates of the errors, in turn giving an improved estimate of $\hat{\beta}$. This suggests a scheme with weights and residuals being updated iteratively, analogous to iterative generalised least squares \citep{goldstein1986multilevel,goldstein1989restricted}. In the simple linear model setting, where the residuals are derived from linear regressions, a generalised sandwich loss of the form \eqref{eq:gen_sand} may be appropriate for delivering accurate estimates of the errors. How to do this for a general regression is less clear but would certainly be worthy of further investigation.

\paragraph*{Funding.} EHY was supported by the EPSRC Doctoral Training Partnership and RDS was supported by an EPSRC Programme Grant EP/N031938/1.

\paragraph*{Data availability.} The data supporting the findings of this study are available at \url{https://www.chicagobooth.edu/research/kilts/research-data/dominicks} and \url{https://www.stata-press.com/data/r10/nlswork.dta}, and all code used to produce the numerical results in this article is available at \url{https://github.com/elliot-young/sandwichboost_simulations}.

\end{cbunit}

\newpage
\appendix

\begin{cbunit}
\begin{center}
	\Large \bf Supplementary material relating to `Sandwich Boosting for Accurate Estimation in Partially Linear Models for Grouped Data', by Elliot H.\ Young and Rajen D.\ Shah
\end{center}

\section{Supplementary material relating to Section~\ref{sec:intro}}

\subsection{Further details for Example~\ref{ex:cond-corr-misspec}}\label{appex:cond-corr-misspec}
Recall that in Example~\ref{ex:cond-corr-misspec} we consider the grouped linear model
\begin{equation*}
    Y_i = \beta D_i + \varepsilon_i,
\end{equation*}
where $(Y_i,D_i)\in\R^{n}\times\R^{n}$ are iid; note that since we plot the population level objective functions of the ML and GEE approaches, along with the asymptotic variance of the resulting $\hat{\beta}$ for each given value of $\rho$, there is no need to specify the total number of groups $I$. In Setting (a), we take $n=100$ and $\varepsilon_i$ as the first $n$ realisations of an $\text{ARMA}(2,1)$ model with unit variance and autoregressive parameters $\phi=(0.3,0.6)$ and moving average parameter $\vartheta=-0.5$. Further, we take $D_i\sim N_{n}({\bf 0}, \frac{1}{8}{\bf 1}{\bf 1}^{\top}+\frac{7}{8}I_{n})$.

In Setting (b), we take $n=30$ and the $\text{ARMA}(2,1)$ model parameters $\phi=(0.1, 0.85)$, $\vartheta=-0.4$. Compared to the Setting (a), the $\text{AR}(1)$ working correlation model is less severely misspecified. In this setting, the $\hat{\rho}$ obtained by minimising the ML objective results in a $\hat{\beta}$ that has the same asymptotic mean squared error (MSE) as the unweighted estimator (obtained by setting $\rho=0$ in this working correlation model, i.e.\ not attempting to model the correlation at all). The GEE objective is also non-convex in $\rho$, and in fact a gradient descent procedure minimising the GEE objective initialised at $\rho=0$ (or a Newton optimisation approach as used in the \texttt{geepack} package \citep{geepack}) will incorrectly estimate $\hat{\rho}$ at the local minimum at $\hat{\rho}=-0.71$. Using this estimate for $\rho$ results in a $\hat{\beta}$ with MSE $2.4$ times that of the unweighted estimator (and $3.4$ times that of the $\hat{\beta}$ resulting from minimising the sandwich loss); see Table~\ref{tab:misspec-corr}.

\begin{table}[ht]
	\begin{center}
		\begin{tabular}{cSc}
  \toprule
            Objective function & $\hat{\rho}$ & {\begin{tabular}{@{}c@{}c@{}}Asymptotic MSE of $\hat{\beta}$\\  relative to sandwich loss\end{tabular}}
              \\
            \midrule
            Unweighted & 0.00 & 1.4
            \\ 
			GEE (\texttt{geepack}) &   -0.71   &    3.4   \\
			ML (\texttt{gls}) &   0.00   &    1.4   \\
			Sandwich loss (\texttt{sandwich.boosting}) &   0.30   &    1.0  \\
   \bottomrule
		\end{tabular}
		\caption{Mean squared errors of the weighted $\beta$-estimators weighted using ML (implemented in \texttt{gls}), GEE1 (\texttt{geepack}) and gradient descent of the sandwich loss (\texttt{sandwich.boosting}) for the simulation example in Section~\ref{appex:cond-corr-misspec}.}\label{tab:misspec-corr}
	\end{center}
\end{table}

\subsection{Further details on Example~\ref{ex:cond-var-misspec}} \label{appex:cond-var-misspec}
We give further details on misspecified conditional variance motivating example in Section~\ref{sec:cond_cov}. Note specifically this example compares the three loss functions considered at the population level (and not the sample level) therefore exploring asymptotic properties of each loss function. We consider the ungrouped ($n_i\equiv 1$) setting where the random variables $(Y,D,X)\in\R\times\R\times\R$ satisfy
\begin{equation*}
    X\sim U[0,1], \hspace{1cm} D=\xi, \hspace{1cm} Y=D\beta+\epsilon,
\end{equation*}
for some (arbitrary) $\beta\in\R$ and where $\E\left[\varepsilon\given D,X\right]=\E\left[\xi\given X\right]=0$, $\E\left[\xi^2\given X\right]=1$ and $\sigma_0^2(X):=\Var\left(\varepsilon\given D,X\right)=2+\tanh(\lambda(X-\mu))$ for constants $\lambda\geq0$ and $\mu\in[0,1]$.

We consider using a misspecified class of functions of the form
\begin{equation*}
     \sigma(x;\eta) = 1 + 2 \ind_{[\eta, \infty)}(x),
\end{equation*}
where $\eta\in\R$, with estimates $\hat{\eta}$ of an optimal $\eta$ minimising one of the ML, GEE or SL losses (each method discussed in Section~\ref{sec:intro}). At the population level for this example these losses can be expressed as
\begin{equation*}
\begin{gathered}
    L_{\text{ML}}(\eta) = \E\left[\log\left(\sigma^{2}(X;\eta)\right)+\frac{\sigma_0^2(X)}{\sigma^{2}(X;\eta)}\right],
    \quad
    L_{\GEE}(\eta) = \E\left[\left(\sigma_0^2(X)-\sigma^{2}(X;\eta)\right)^2\right],
    \\
    \text{and }\hspace{3mm}L_{\text{SL}}(\eta) = \left(\E\left[\frac{1}{\sigma^{2}(X;\eta)}\right]\right)^{-2}\left(\E\left[\frac{\sigma_0^2(X)}{\sigma^{4}(X;\eta)}\right]\right).
\end{gathered}
\end{equation*}
The model class considered is therefore parametrised by a single parameter $\eta$, with each loss function minimised either analytically or via a line search. 

Given a minimising $\hat{\eta}$ of one the three losses, the (scaled) mean squared errors of the $\sigma_0$-estimator and $\beta$-estimator (weighted using the $\sigma_0$-estimator) can be calculated analytically as
\begin{equation*}
    \E\left[\left(\sigma(X;\hat{\eta})-\sigma_0(X)\right)^2\right],
    \text{ and }
    \left(\E\left[\frac{1}{\sigma^{2}(X;\hat{\eta})}\right]\right)^{-2}\left(\E\left[\frac{\sigma_0^2(X)}{\sigma^{4}(X;\hat{\eta})}\right]\right),
\end{equation*}
respectively, with results as given in Table~\ref{tab:MLE_vs_AV}

\section{Proof of results in Section~\ref{sec:method}}\label{appsec:sec2results}
\subsection{Proof of Proposition~\ref{prop:modelmissspec}}\label{appsec:modelmissspec-proof}

We first show that $L_{\GEE}$ is minimised over $\mathcal{W}$ by $\E\left[\varepsilon\varepsilon^{\top}\given q(X)\right]^{-1}$; this follows immediately from the decomposition of $L_{\GEE}(W)$ as
\begin{align*}
    \E\left[\big\|W(X)^{-1}-\varepsilon\varepsilon^{\top}\big\|^2\right] = & \E\left[\big\|W(X)^{-1}-\E\big[\varepsilon\varepsilon^{\top}\given q(X)\big]\big\|^2\right] + \E\left[\big\|\varepsilon\varepsilon^{\top}\big\|^2-\big\|\E\big[\varepsilon\varepsilon^{\top}\given q(X)\big]\big\|^2\right] \\
    & \qquad -2\E\left[\E\left[\big\langle\varepsilon\varepsilon^{\top}-\E\big[\varepsilon\varepsilon^{\top}\given q(X)\big],W(X)^{-1}\big\rangle \,\Big|\, q(X)\right]\right], \\
    = & \E\left[\big\|W(X)^{-1}-\E\big[\varepsilon\varepsilon^{\top}\given q(X)\big]\big\|^2\right] + \E\left[\big\|\varepsilon\varepsilon^{\top}\big\|^2-\big\|\E\big[\varepsilon\varepsilon^{\top}\given q(X)\big]\big\|^2\right].
\end{align*}
Note this result holds for any matrix norm derived from an inner product.

We now show that $L_{\ML}$ is minimised over $\mathcal{W}$ by $\E\left[\varepsilon\varepsilon^{\top}\given q(X)\right]^{-1}$. Noting that $L_{\ML}(W) = \E\left[\ell_{\ML}(W(X))\right]$ where
\begin{equation*}
    \ell_{\ML}(W(X)):=-\log\det W(X) + \tr\left(W(X)\E\big[\varepsilon\varepsilon^{\top}\given q(X)\big]\right),
\end{equation*}
we aim to minimise the function \(x\mapsto\ell_{\ML}(W(x))\) for arbitrary $x\in\R^{n\times d}$. We define $\bar{W}\in\mathcal{W}$ by $\bar{W}(x):=\E\left[\varepsilon\varepsilon^{\top}\given q(X)=q(x)\right]^{-1}$. Then given some fixed $x\in\R^{n\times d}$ and $W\in\mathcal{W}$ we define the function \(g(t)=\ell_{\ML}((1-t)\bar{W}(x)+tW(x))\) for \(t\in[0,1]\). The second order Taylor expansion of $g$ then gives
\begin{equation*}
    \ell_{\ML}(W(x)) = \ell_{\ML}(\bar{W}(x)) + g'(0) + \frac{g''(\tilde{t})}{2},
\end{equation*}
for some \(\tilde{t}\in[0,1]\). Now, by Jacobi's formula,
\begin{equation*}
    g'(t) = -\tr\left(\left[\left\{(1-t)\bar{W}(x)+tW(x)\right\}^{-1}-\E\big[\varepsilon\varepsilon^{\top}\given q(X)=q(x)\big]\right]\left\{W(x)-\bar{W}(x)\right\}\right),
\end{equation*}
and so
\begin{equation*}
    g'(0) = -\tr\left(\left\{\tilde{W}(x)^{-1}-\E\big[\varepsilon\varepsilon^{\top}\given q(X)=q(x)\big]\right\}\left\{W(x)-\bar{W}(x)\right\}\right) = 0.
\end{equation*}
Further, by direct calculation
\begin{multline*}
    g''(\tilde{t}) = \tr\left(\left(\left\{(1-\tilde{t})\bar{W}(x)+\tilde{t}W(x)\right\}\left\{W(x)-\bar{W}(x)\right\}\right)^2\right) \\
    = \norm{\left\{(1-\tilde{t})\bar{W}(x)+\tilde{t}W(x)\right\}^{\frac{1}{2}}\left\{W(x)-\bar{W}(x)\right\}\left\{(1-\tilde{t})\bar{W}(x)+\tilde{t}W(x)\right\}^{\frac{1}{2}}}_{\F}^2\geq0,
\end{multline*}
where $\norm{\cdot}_{\F}$ denotes the Frobenius norm, and with equality holding if and only if \(W(x)=\bar{W}(x)\). Therefore whenever $W(x) \neq \bar{W}(x)$,
\begin{equation*}
    \ell_{\ML}(W(x))>\ell_{\ML}(\tilde{W}(x)),
\end{equation*}
and so $L_{\ML}$ is minimised over $\mathcal{W}$ by \(\E\left[\varepsilon\varepsilon^{\top}\given q(X)\right]^{-1}\).

Finally, we will show that
\begin{align*}
    V_{\SL} :=& \inf_{W\in\mathcal{W}} L_{\SL}(W) 
    \\
    =& \left(\E\left[\E\big[\text{vec}\big(\xi\xi^{\top}\big)\given q(X)\big]\E\Big[\text{vec}\big(\xi\varepsilon^{\top}\big)\text{vec}\big(\varepsilon\xi^{\top}\big)^{\top} \,\big|\, q(X)\Big]^{-1}\E\big[\text{vec}\big(\xi\xi^{\top}\big)\given q(X)\big]\right]\right)^{-1}.
\end{align*}
This can be shown by noting that for $W\in\mathcal{W}$
\begin{align*}
    L_{\SL}(W) &= \frac{\E\left[\xi^{\top}W\varepsilon\varepsilon^{\top}W\xi\right]}{\left(\E\left[\xi^{\top}W\xi\right]\right)^2}
    = \frac{\E\left[\text{vec}(W)^{\top}\E\left[\text{vec}\big(\xi\varepsilon^{\top}\big)\text{vec}\big(\varepsilon\xi^{\top}\big)^{\top} \,\big|\, q(X)\right]\text{vec}(W)\right]}{\left(\E\left[\E\big[\text{vec}\big(\xi\xi^{\top}\big)\given q(X)\big]^{\top}\text{vec}(W)\right]\right)^2} 
    \\
    &\geq \left(\E\left[\E\big[\text{vec}\big(\xi\xi^{\top}\big)\given q(X)\big]\E\left[\text{vec}(\xi\varepsilon^{\top})\text{vec}\big(\varepsilon\xi^{\top}\big)^{\top} \,\big|\, q(X)\right]^{-1}\E\big[\text{vec}\big(\xi\xi^{\top}\big)\given q(X)\big]\right]\right)^{-1}
    \\
    &= \left(\E\left[\E\big[\text{vec}\big(\xi\xi^{\top}\big)\given q(X)\big]\E\left[\big(\varepsilon\xi^{\top}\big)\otimes\big(\xi\varepsilon^{\top}\big) \,\big|\, q(X)\right]^{-1}\E\big[\text{vec}\big(\xi\xi^{\top}\big)\given q(X)\big]\right]\right)^{-1}
    ,
\end{align*}
by the Cauchy--Schwarz inequality, and by using the identities $a^{\top}Ab=\text{vec}(ab^{\top})^{\top}\text{vec}(A)$ and $\text{vec}(ab^{\top})\text{vec}(cd^{\top})^{\top}=(bd^{\top})\otimes(ac^{\top})$ (for vectors \(a,b,c,d\in\R^n\) and matrix \(A\in\R^{n\times n}\)), where $\otimes$ denotes the Kronecker product. Equality holds in the above if and only if
\begin{equation*}
    W\in\left\{W\in\mathcal{W}: \text{vec}(W(X))=c\,\E\left[\big(\varepsilon\xi^{\top}\big)\otimes\big(\xi\varepsilon^{\top}\big) \,\big|\, q(X)\right]^{-1}\E\big[\text{vec}\big(\xi\xi^{\top}\big) \given q(X)\big] \text{ for some } c>0\right\}.
\end{equation*}
Therefore $V_{\SL}\leq V_{\text{ML/GEE}}$ with equality if and only if the condition
\begin{equation*}
    \E\left[\big(\varepsilon\xi^{\top}\big)\otimes\big(\xi\varepsilon^{\top}\big) \,\big|\, q(X)\right]\text{vec}\left(\E\big[\varepsilon\varepsilon^{\top}\given q(X)\big]^{-1}\right) = c\,\text{vec}\left(\E\big[\xi\xi^{\top}\given q(X)\big]\right),
\end{equation*}
holds for some $c>0$. Using the identity $\text{vec}(ABC)=(C^{\top}\otimes A)\text{vec}(B)$ (for matrices $A,B,C\in\R^{n\times n}$) this condition is equivalent to
\begin{equation*}
    \E\left[\big(\xi\varepsilon^{\top}\big)\,\E\big[\varepsilon\varepsilon^{\top}\given q(X)\big]^{-1}\big(\varepsilon\xi^{\top}\big)\,\big|\, q(X)\right] = c\, \E\big[\xi\xi^{\top}\given q(X)\big].
\end{equation*}
Further,
\begin{align*}
    \E\left[\big(\xi\varepsilon^{\top}\big)\,\E\big[\varepsilon\varepsilon^{\top}\given q(X)\big]^{-1}\big(\varepsilon\xi^{\top}\big)\,\big|\, q(X)\right] &= 
    \E\left[\xi\,\tr\big[\Cov(\varepsilon\given q(X))^{-1}\Cov(\varepsilon\given D,X)\big]\xi^{\top} \,\big|\, q(X)\right]
    \\
    & =\E\left[\tr\big[\Cov(\varepsilon\given q(X))^{-1}\Cov(\varepsilon\given D,X)\big]\,\xi\xi^{\top} \,\big|\, q(X)\right],
\end{align*}
and so this condition is equivalent to
\begin{equation}\label{eq:appendix-GEE=MLE=SL-condition}
    \E\left[\tr\big[\Cov(\varepsilon\given q(X))^{-1}\Cov(\varepsilon\given D,X)\big]\,\xi\xi^{\top} \,\big|\, q(X)\right] = c\, \E\big[\xi\xi^{\top}\given q(X)\big].
\end{equation}
In  particular when $q(X)=X$ and $\Cov(\varepsilon\given D,X)=\Cov(\varepsilon\given X)$ the condition~\eqref{eq:appendix-GEE=MLE=SL-condition} holds.

\subsection{Proof of Theorem~\ref{thm:divergingratio}}\label{appsec:divergingratio-proof}
    We will adopt the shorthand $\Sigma_q = \Sigma(q(X))$ and $\Omega_q = \Omega(q(X))$. As $q(X)$ is almost surely non-constant there exists measurable $\Gamma \subset \R^{n \times d}$ such that \(\PP(q(X)\in\Gamma)\in(0,1)\). With  this, we define the random variables \(Q:=\eta^{-1} \ind_{\{q(X)\in\Gamma\}} + \ind_{\{q(X)\in\Gamma^c\}}\) for some \(\eta>1\) to be determined later, and \(\zeta:=Q^{-1}B-1\) where \(B\sim\text{Ber}(Q)\) with \(B\independent X \given Q\). Further, define \(\xi:=\sqrt{1+\zeta}R_{\xi}\) and \(\varepsilon:=\sqrt{1+\zeta}R_{\varepsilon}\) where \(R_{\xi}\given (X,B)\sim N(0,\Omega_q)\) and \(R_{\varepsilon}\given (X,B,R_{\xi})\sim N(0,\Sigma_q)\).
    
    By construction we then have \(\E[\zeta\given X]=0\) and hence $\E\left[\xi\given X\right]=\E\left[\varepsilon\given D,X\right]=0, \E\left[\xi\xi^{\top}\given q(X)\right]=\Omega_q$ and $\E\left[\varepsilon\varepsilon^{\top}\given q(X)\right]=\Sigma_q$. 
    Further, we can show that
    \begin{gather*}
        \E\left[\zeta^2\given q(X)\right]=\frac{1-Q}{Q},
    \end{gather*}
    and thus
    \begin{align*}
        \E\left[\big(\varepsilon\xi^{\top}\big)\otimes\big(\xi\varepsilon^{\top}\big) \,\big|\, q(X)\right]
        &= \E\left[(1+\zeta)^2\big(R_{\varepsilon}R_{\xi}^{\top}\big)\otimes\big(R_{\xi}R_{\varepsilon}^{\top}\big) \,\big|\,  q(X)\right] \\
        &= \left(1+\E\left[\zeta^2\given q(X)\right]\right)\E\left[\big(R_{\varepsilon}R_{\xi}^{\top}\big)\otimes\big(R_{\xi}R_{\varepsilon}^{\top}\big) \,\big|\, q(X)\right]\\
        &=Q^{-1}\E\left[\big(R_{\varepsilon}R_{\xi}^{\top}\big)\otimes\big(R_{\xi}R_{\varepsilon}^{\top}\big) \,\Big|\,  q(X)\right].
    \end{align*}
    Therefore, using $V_\ML, V_\GEE$ and $V_\SL$ as in the proof of Proposition~\ref{prop:modelmissspec} (also noting that \(V_\ML=V_\GEE\)),
\begin{align*}
         \frac{V_{\text{ML/GEE}}}{V_{\SL}}
          = & \left(\E\left[\text{vec}\left(\E\big[\xi\xi^{\top}\given q(X)\big]\right)^{\top}\text{vec}\left(\E\big[\varepsilon\varepsilon^{\top}\given q(X)\big]^{-1}\right)\right]\right)^{-2}
          \\ & \cdot
          \E\left[\text{vec}\left(\E\big[\varepsilon\varepsilon^{\top}\given q(X)\big]^{-1}\right)^{\top}\E\left[\big(\varepsilon\xi^{\top}\big)\otimes\big(\xi\varepsilon^{\top}\big) \,\big|\, q(X)\right]\text{vec}\left(\E\big[\varepsilon\varepsilon^{\top}\given q(X)\big]^{-1}\right)\right]
          \\ & \cdot
          \E\left[\text{vec}\left(\E\big[\xi\xi^{\top}\given q(X)\big]\right)^{\top}\E\left[\big(\varepsilon\xi^{\top}\big)\otimes\big(\xi\varepsilon^{\top}\big) \,\big|\, q(X)\right]^{-1}\text{vec}\left(\E\big[\xi\xi^{\top}\given q(X)\big]\right)\right] \\
          = & \left(\E\left[\text{vec}\left(\Omega_q\right)^{\top}\text{vec}\left(\Sigma^{-1}_q\right)\right]\right)^{-2}
          \\ & \cdot
          \underbrace{\E\left[Q^{-1}\text{vec}\left(\Sigma^{-1}_q\right)^{\top}\E\left[\big(R_\varepsilon R_\xi^{\top}\big)\otimes\big(R_\xi R_\varepsilon^{\top}\big) \,\big|\, q(X)\right]\text{vec}\left(\Sigma^{-1}_q\right)\right]}_{=:A_1}
          \\ & \cdot
          \underbrace{\E\left[Q\,\text{vec}\big(\Omega_q\big)^{\top}\E\left[\big(R_\varepsilon R_\xi^{\top}\big)\otimes\big(R_\xi R_\varepsilon^{\top}\big) \,\big|\, q(X)\right]^{-1}\text{vec}\big(\Omega_q\big)\right]}_{=:A_2}.
\end{align*}
Then, 
\begin{align*}
    A_1 &= \E\left[Q^{-1}\text{vec}\left(\Sigma^{-1}_q\right)^{\top}\E\left[\big(R_\varepsilon R_\xi^{\top}\big)\otimes\big(R_\xi R_\varepsilon^{\top}\big) \,\big|\, q(X)\right]\text{vec}\left(\Sigma^{-1}_q\right)\right]
    \\
    &= \E\left[Q^{-1}\E\left[\text{vec}\left(\Sigma^{-1}_q\right)^{\top} \big\{\big(R_\varepsilon R_\xi^{\top}\big)\otimes\big(R_\xi R_\varepsilon^{\top}\big)\big\} \text{vec}\left(\Sigma^{-1}_q\right) \,\big|\, q(X)\right]\right]
    \\
    &= \E\left[Q^{-1}\E\left[\text{vec}\left(\Sigma^{-1}_q\right)^{\top} \text{vec}\left(\E\left[R_\xi R_\varepsilon^{\top}\Sigma^{-1}_qR_\xi R_\varepsilon^{\top} \,\big|\, q(X)\right]\right) \,\big|\, q(X)\right]\right]
    \\
    &= \E\left[Q^{-1}\E\left[\tr\Big(\Sigma^{-1}_qR_\xi R_\xi\Sigma^{-1}_qR_\varepsilon R_\varepsilon^{\top}\Big) \,\big|\, q(X)\right]\right]
    \\
    &= \E\left[Q^{-1} \tr\big(\Sigma^{-1}_q\Omega_q\big) \right],
\end{align*}
and
\begin{align*}
    A_2 &= \E\left[Q\,\text{vec}\big(\Omega_q\big)^{\top}\E\left[\big(R_\varepsilon R_\xi^{\top}\big)\otimes\big(R_\xi R_\varepsilon^{\top}\big) \,\big|\, q(X) \right]^{-1}\text{vec}\big(\Omega_q\big)\right]
    \\
    &= \E\left[Q\,\text{vec}\big(\Omega_q\big)^{\top}\text{vec}\big(\Sigma^{-1}_q\big)\right]
    \\
    &= \E\left[Q \,\tr\big(\Sigma^{-1}_q\Omega_q\big) \right],
\end{align*}
where we use
\begin{equation*}
    \E\left[\big(R_\varepsilon R_\xi^{\top}\big)\otimes\big(R_\xi R_\varepsilon^{\top}\big) \,\big|\, q(X) \right]^{-1} \, \text{vec}(\Omega_q) = \text{vec}\big(\Sigma_q^{-1}\big),
\end{equation*}
which follows as
\begin{align*}
    \Omega_q &= \Omega_q \Sigma_q^{-1} \Sigma_q
    \\
    &= \E\left[R_\xi R_\xi^{\top}\Sigma^{-1}_qR_\varepsilon R_\varepsilon^{\top} \,\big|\, q(X)\right]
    \\
    &= \E\left[R_\xi R_\varepsilon^{\top}\Sigma^{-1}_qR_\xi R_\varepsilon^{\top} \,\big|\, q(X)\right]
    \\
    &= \E\left[\big(R_\varepsilon R_\xi^{\top}\big)\otimes\big(R_\xi R_\varepsilon^{\top}\big) \,\big|\, q(X) \right]
    \\
    \Rightarrow
    \text{vec}(\Omega_q) &= \E\left[\big(R_\varepsilon R_\xi^{\top}\big)\otimes\big(R_\xi R_\varepsilon^{\top}\big) \,\big|\, q(X) \right] \, \text{vec}\big(\Sigma_q^{-1}\big).
\end{align*}
Define
\begin{equation*}
    t(X) := \tr\big(\Sigma^{-1}_q\Omega_q\big).
\end{equation*}
Then
\begin{align*}
    \frac{V_{\text{ML/GEE}}}{V_{\SL}} &= \frac{\E\left[Qt(X)\right]\E\left[Q^{-1}t(X)\right]}{\left(\E\left[t(X)\right]\right)^2}
    \\
    &= 
    {\left(\E\left[t(X)\right]\right)}^{-2}
    \\
    &\qquad\cdot
    \big\{\eta^{-1}\E\left[t(X)\given q(X)\in\Gamma\right]\PP(q(X)\in\Gamma) + \E\left[t(X)\given q(X)\in\Gamma^c\right]\PP(q(X)\in\Gamma^c)\big\}
    \\
    &\qquad\cdot
    \big\{\eta\,\E\left[t(X)\given q(X)\in\Gamma\right]\PP(q(X)\in\Gamma) + \E\left[t(X)\given q(X)\in\Gamma^c\right]\PP(q(X)\in\Gamma^c)\big\}
    \\
    &\geq \eta\,\frac{\E\left[t(X)\given q(X)\in\Gamma\right]\E\left[t(X)\given q(X)\in\Gamma^c\right]}{\left(\E\left[t(X)\right]\right)^2}
    \cdot
    \PP(q(X)\in\Gamma)\PP(q(X)\in\Gamma^c) \geq M,
\end{align*}
taking
\begin{equation*}
    \eta = \left( M^{-1}\frac{\E\left[t(X)\given q(X)\in\Gamma\right]\E\left[t(X)\given q(X)\in\Gamma^c\right]}{\left(\E\left[t(X)\right]\right)^2}
    \cdot
    \PP(q(X)\in\Gamma)\PP(q(X)\in\Gamma^c) \right)^{-1} \vee 2.
\end{equation*}

\subsection{Regularity conditions for Theorem~\ref{thm:parametric-consistency}}\label{appsec:regularitycond}

For the statement of Theorem~\ref{thm:parametric-consistency} we also impose the following regularity conditions:
\begin{enumerate}[label=(\roman*)]
    \item\label{ass:cts_map} The map $\psi\mapsto W(\psi)(x)$ is continuous for all $x\in\R^{n\times d}$.
    \item\label{ass:inf_min_eval} $\inf\limits_{\psi\in\Psi}\inf\limits_{x\in\R^{n\times d}}\Lambda_{\min}(W(\psi)(x)) > 0$.
    \item\label{ass:l4_moments} The errors satisfy the moment bounds $\E\left[\|\varepsilon\|_2^4\right]\leq C_\varepsilon$ and $\E\left[\|\xi\|_2^4\right]\leq C_\xi$ for finite constants $C_\varepsilon,C_\xi>0$.
\end{enumerate}

\subsection{Proof of Theorem~\ref{thm:parametric-consistency}}

Using Theorem~5.7 of \citet{vandervaart} it suffices to show
\begin{equation}\label{eq:uniflln}
    \sup_{\psi\in\Psi}\left| \hat{L}_{\SL,I}(\psi) - L_{\SL}(\psi) \right|=o_{P}(1).
\end{equation}
Define the error terms
\begin{equation*}
        \mathcal{E}_i^{(\xi)} := \tilde{\xi}_i-\xi_i, \qquad \mathcal{E}_i^{(\varepsilon)} := \tilde{\varepsilon}_i-\varepsilon_i,
\end{equation*}
and subsequently the terms
\begin{align*}
    a_i^{(1)}(\psi) &:= \mathcal{E}_i^{(\xi)T}W_i(\psi)\mathcal{E}_i^{(\varepsilon)}, \\
    a_i^{(2)}(\psi) &:=  \mathcal{E}_i^{(\xi)T}W_i(\psi)\varepsilon_i + \mathcal{E}_i^{(\varepsilon)T}W_i(\psi)\xi_i \\
    \mathcal{A}_I(\psi) &:= \frac{1}{N}\sum_{i=1}^I\left(\xi_i^{\top}W(\psi)(X_i)\varepsilon_i\right)^2 - \frac{1}{n}\E\left[\left(\xi^{\top}W(\psi)(X)\varepsilon\right)^2\right],\\
    A_I(\psi) &:= \mathcal{A}_I(\psi) + \underbrace{\frac{2}{N}\sum_{i=1}^I\left(\xi_i^{\top}W_i(\psi)\varepsilon_i\right)\left(a_i^{(1)}(\psi) + a_i^{(2)}(\psi)\right) + \frac{1}{N}\sum_{i=1}^I\left(a_i^{(1)}(\psi)+a_i^{(2)}(\psi)\right)^2}_{=:G_I(\psi)},\\
    b_i(\psi) &:= 2\mathcal{E}_i^{(\xi)T}W_i(\psi)\xi_i + \mathcal{E}_i^{(\xi)T}W_i(\psi)\mathcal{E}_i^{(\xi)}, \\
    \mathcal{B}_I(\psi) &:= \frac{1}{N}\sum_{i=1}^I\xi_i^{\top}W(\psi)(X_i)\xi_i - \frac{1}{n}\E\left[\xi^{\top}W(\psi)(X)\xi\right] ,\\
    B_I(\psi) &:= \mathcal{B}_I(\psi) + \frac{1}{N}\sum_{i=1}^Ib_i(\psi) .
\end{align*}
Then we can decompose
\begin{align*}
    \sup_{\psi\in\Psi}\left| \hat{L}_{\SL,I}(\psi) - L_{\SL}(\psi) \right| 
    &=
    \sup_{\psi\in\Psi}\left|\frac{N^{-1}\sum_{i=1}^I\left(\hat{\xi}_i^{\top}W(\psi)(X_i)\hat{\varepsilon}_i\right)^2}{\left(N^{-1}\sum_{i=1}^I\hat{\xi}_i^{\top}W(\psi)(X_i)\hat{\xi}_i\right)^2}-\frac{n^{-1}\E\left[\left(\xi^{\top}W(\psi)(X)\varepsilon\right)^2\right]}{\left(n^{-1}\E\left[\xi^{\top}W(\psi)(X)\xi\right]\right)^2}\right| \\
    &= \sup_{\psi\in\Psi}\left|\frac{n^{-1}\E\left[\left(\xi^{\top}W(\psi)(X)\varepsilon\right)^2\right] + A_I(\psi)}{\left(n^{-1}\E\left[\xi^{\top}W(\psi)(X)\xi\right]\right)^2 + B_I(\psi)} - \frac{n^{-1}\E\left[\left(\xi^{\top}W(\psi)(X)\varepsilon\right)^2\right]}{\left(n^{-1}\E\left[\xi^{\top}W(\psi)(X)\xi\right]\right)^2}\right|.
\end{align*}
First we claim that $\sup_{\psi\in\Psi}\left|\mathcal{A}_I(\psi)\right|=o_P(1)$. This follows by the uniform law of large numbers \citep{jennrich}, which may be applied as $\E\left[\sup_{\psi\in\Psi}\left(\xi^{\top}W(\psi)(X)\varepsilon\right)^2\right] \leq C_\varepsilon^{\frac{1}{2}}C_\xi^{\frac{1}{2}}<\infty$ due to Assumption~\ref{ass:l4_moments}. It can similarly be shown that $\sup_{\psi\in\Psi}\left|\mathcal{B}_I(\psi)\right|=o_P(1)$.
Now,
\begin{align*}
    &\E\left[\sup_{\psi\in\Psi}\bigg|\frac{1}{N}\sum_{i=1}^Ib_i(\psi)\bigg|\right]
    \\
    \leq& \frac{1}{N}\sum_{i=1}^I\left(2\E\left[\sup_{\psi\in\Psi}\left|\mathcal{E}_i^{(\xi)T}W_i(\psi)\xi_i\right|\right] + \E\left[\sup_{\psi\in\Psi}\mathcal{E}_i^{(\xi)T}W_i(\psi)\mathcal{E}_i^{(\xi)}\right]\right) \\
    \leq& 2\E\left[\frac{1}{N}\sum_{i=1}^I\|\mathcal{E}_i^{(\xi)}\|_2\|\xi_i\|_2\right] + \E\left[\frac{1}{N}\sum_{i=1}^I\|\mathcal{E}_i^{(\xi)}\|_2^2\right]
    \\
    \leq& 2\E\left[\frac{1}{N}\sum_{i=1}^I\|\mathcal{E}_i^{(\xi)}\|_2^2\right]^{\frac{1}{2}}\E\left[\frac{1}{N}\sum_{i=1}^I\|\xi_i\|_2^2\right]^{\frac{1}{2}} + \E\left[\frac{1}{N}\sum_{i=1}^I\|\mathcal{E}_i^{(\xi)}\|_2^2\right]
    \\
    =& o(1),
\end{align*}
and so by Markov's inequality, $\sup_{\psi\in\Psi}\left|N^{-1}\sum_{i=1}^Ib_i(\psi)\right|=o_P(1)$.
Further,
\begin{align*}
    &\sup_{\psi\in\Psi}\left|G_I(\psi)\right|
    \\
    \leq&
    \sup_{\psi\in\Psi}\frac{2}{N}\sum_{i=1}^I\left|\xi_i^{\top}W_i(\psi)\varepsilon_i\right|\cdot\left|a_i^{(1)}(\psi)\right| + \sup_{\psi\in\Psi}\frac{2}{N}\sum_{i=1}^I\left|\xi_i^{\top}W_i(\psi)\varepsilon_i\right|\cdot\left|a_i^{(2)}(\psi)\right|
    \\
    & \quad +
    \sup_{\psi\in\Psi}\frac{2}{N}\sum_{i=1}^I\left(a_i^{(1)}(\psi)\right)^2 + \sup_{\psi\in\Psi}\frac{2}{N}\sum_{i=1}^I\left(a_i^{(2)}(\psi)\right)^2
    \\
    \leq& 2\cdot\underbrace{\frac{1}{N}\sum_{i=1}^I\norm{\xi_i}\norm{\varepsilon_i}\norm{\mathcal{E}_i^{(\xi)}}\norm{\mathcal{E}_i^{(\varepsilon)}}}_{=: G_I^{(\text{\RN{1}})}} + 2\cdot\underbrace{\frac{1}{N}\sum_{i=1}^I\left(\norm{\xi_i}\norm{\varepsilon_i}\left\{\norm{\mathcal{E}_i^{(\xi)}}\norm{\varepsilon_i} + \norm{\mathcal{E}_i^{(\varepsilon)}}\norm{\xi_i}\right\}\right)}_{=: G_I^{(\text{\RN{2}})}} 
    \\
    &\quad +
    2\cdot\underbrace{\frac{1}{N}\sum_{i=1}^I\norm{\mathcal{E}_i^{(\xi)}}^2\norm{\mathcal{E}_i^{(\varepsilon)}}^2}_{=: G_I^{(\text{\RN{3}})}} + 4\cdot\underbrace{\frac{1}{N}\sum_{i=1}^I\left(\norm{\mathcal{E}_i^{(\xi)}}^2\norm{\varepsilon_i}^2 + \norm{\mathcal{E}_i^{(\varepsilon)}}^2\norm{\xi_i}^2\right)}_{=: G_I^{(\text{\RN{4}})}}
\end{align*}
by the triangle equality. We proceed by showing that each term on the right hand side is $o_P(1)$. In the case of Assumption~\ref{ass:lp4} we have
\begin{align*}
    \E\left[G_I^{(\text{\RN{1}})}\right]
        &\leq
        \E\left[\frac{1}{N}\sum_{i=1}^I\|\xi_i\|_2^2\|\varepsilon_i\|_2^2\right]^{\frac{1}{2}}\E\left[\frac{1}{N}\sum_{i=1}^I\|\mathcal{E}_i^{(\xi)}\|_2^2\|\mathcal{E}_i^{(\varepsilon)}\|_2^2\right]^{\frac{1}{2}} \\
        &\leq
        C_{\varepsilon}^{\frac{1}{4}}C_{\xi}^{\frac{1}{4}}\E\left[\frac{1}{N}\sum_{i=1}^I\|\mathcal{E}_i^{(\xi)}\|_2^4\right]^{\frac{1}{4}}\E\left[\frac{1}{N}\sum_{i=1}^I\|\mathcal{E}_i^{(\varepsilon)}\|_2^4\right]^{\frac{1}{4}}
        = o(1),
\end{align*}
by the Cauchy--Schwarz inequality (twice). We can similarly show (again by multiple uses of the Cauchy--Schwarz inequality) that $\E[G_I^{(\text{\RN{2}})}], \E[G_I^{(\text{\RN{3}})}], \E[G_I^{(\text{\RN{4}})}] = o(1)$. It then follows by Markov's inequality (alongside Lemma~\ref{lem:unif_add}) that $G_I^{(\text{\RN{1}})},G_I^{(\text{\RN{2}})},G_I^{(\text{\RN{3}})},G_I^{(\text{\RN{4}})}=o_P(1)$, and so $\sup_{\psi\in\Psi}\left|G_I(\psi)\right|=o_P(1)$. If alternatively Assumption~\ref{ass:lp2} holds then we consider each term separately:

    \paragraph{Term $G_I^{(\text{\RN{1}})}$: } 
    \begin{align*}
        \E\left[\big(G_I^{(\text{\RN{1}})}\big)^{\frac{2}{3}}\right]
        &\leq
        2^{\frac{2}{3}}N^{\frac{1}{3}}\E\left[\frac{1}{N}\sum_{i=1}^I\|\xi_i\|_2^{\frac{2}{3}}\|\varepsilon_i\|_2^{\frac{2}{3}}\|\mathcal{E}_i^{(\xi)}\|_2^{\frac{2}{3}}\|\mathcal{E}_i^{(\varepsilon)}\|_2^{\frac{2}{3}}\right]
        \\
        &\leq  N^{\frac{1}{3}}\E\left[\frac{1}{N}\sum_{i=1}^I\|\varepsilon_i\|_2^2\|\xi_i\|_2^2\right]^{\frac{1}{3}}\E\left[\frac{1}{N}\sum_{i=1}^I\|\mathcal{E}_i^{(\varepsilon)}\|_2\|\mathcal{E}_i^{(\xi)}\|_2\right]^{\frac{2}{3}}
        \\
        &\leq N^{\frac{1}{3}}\E\left[\frac{1}{N}\sum_{i=1}^I\|\mathcal{E}_i^{(\varepsilon)}\|_2^2\right]^{\frac{1}{3}}\E\left[\frac{1}{N}\sum_{i=1}^I\|\mathcal{E}_i^{(\xi)}\|_2^2\right]^{\frac{1}{3}}\E\left[\frac{1}{N}\sum_{i=1}^I\|\varepsilon_i\|_2^4\right]^{\frac{1}{6}}\E\left[\frac{1}{N}\sum_{i=1}^I\|\xi_i\|_2^4\right]^{\frac{1}{6}}
        \\
        & = o(1),
    \end{align*}
    the first inequality follows as $(\sum_ix_i)^\eta\leq\sum_ix_i^\eta$ for any $\eta\in(0,1], x_i\geq0$, the second follows by H\"older's inequality, and the third by the Cauchy--Schwarz inequality.
    
    \paragraph{Term $G_I^{(\text{\RN{2}})}$: }
    \begin{align*}
        \E\left[\big(G_I^{(\text{\RN{2}})}\big)^{\frac{4}{5}}\right]
        &\leq 
        N^{\frac{1}{5}}\E\left[\frac{1}{N}\sum_{i=1}^I\|\xi_i\|_2^{\frac{4}{5}}\|\varepsilon\|_2^{\frac{8}{5}}\|\mathcal{E}_i^{(\xi)}\|_2^{\frac{4}{5}}\right]
        +
        N^{\frac{1}{5}}\E\left[\frac{1}{N}\sum_{i=1}^I\|\xi_i\|_2^{\frac{8}{5}}\|\varepsilon\|_2^{\frac{4}{5}}\|\mathcal{E}_i^{(\varepsilon)}\|_2^{\frac{4}{5}}\right]
        \\
        &\leq
        N^{\frac{1}{5}}\E\left[\frac{1}{N}\sum_{i=1}^I\|\xi_i\|_2^{\frac{4}{3}}\|\varepsilon_i\|_2^{\frac{8}{3}}\right]^{\frac{3}{5}}\E\left[\frac{1}{N}\sum_{i=1}^I\|\mathcal{E}_i^{(\xi)}\|_2^2\right]^{\frac{2}{5}}
        \\
        & +
        N^{\frac{1}{5}}\E\left[\frac{1}{N}\sum_{i=1}^I\|\xi_i\|_2^{\frac{8}{3}}\|\varepsilon_i\|_2^{\frac{4}{3}}\right]^{\frac{3}{5}}\E\left[\frac{1}{N}\sum_{i=1}^I\|\mathcal{E}_i^{(\varepsilon)}\|_2^2\right]^{\frac{2}{5}}
        \\
        &\leq
        N^{\frac{1}{5}}\E\left[\frac{1}{N}\sum_{i=1}^I\|\mathcal{E}_i^{(\xi)}\|_2^2\right]^{\frac{2}{5}}\E\left[\frac{1}{N}\sum_{i=1}^I\|\mathcal{E}_i^{(\xi)}\|_2^2\right]^{\frac{1}{3}}\E\left[\frac{1}{N}\sum_{i=1}^I\|\xi_i\|_2^4\right]^{\frac{1}{5}}\E\left[\frac{1}{N}\sum_{i=1}^I\|\varepsilon_i\|_2^4\right]^{\frac{2}{5}}
        \\
        & + N^{\frac{1}{5}}\E\left[\frac{1}{N}\sum_{i=1}^I\|\mathcal{E}_i^{(\varepsilon)}\|_2^2\right]^{\frac{2}{5}}\E\left[\frac{1}{N}\sum_{i=1}^I\|\mathcal{E}_i^{(\xi)}\|_2^2\right]^{\frac{1}{3}}\E\left[\frac{1}{N}\sum_{i=1}^I\|\xi_i\|_2^4\right]^{\frac{2}{5}}\E\left[\frac{1}{N}\sum_{i=1}^I\|\varepsilon_i\|_2^4\right]^{\frac{1}{5}}
        \\
        &=o(1),
    \end{align*}
    where the first inequality follows as $(\sum_ix_i)^\eta\leq\sum_ix_i^\eta$ for any $\eta\in(0,1], x_i\geq0$, the second follows by the Cauchy--Schwarz inequality, and the third by H\"older's inequality.
    
    \paragraph{Term $G_I^{(\text{\RN{3}})}$: }
    \begin{equation*}
        \E\left[\big(G_I^{(\text{\RN{3}})}\big)^{\frac{1}{2}}\right]
        \leq
        N^{\frac{1}{2}}
        \E\left[\frac{1}{N}\sum_{i=1}^I\|\mathcal{E}_i^{(\xi)}\|_2^2\right]^{\frac{1}{2}}\E\left[\frac{1}{N}\sum_{i=1}^I\|\mathcal{E}_i^{(\varepsilon)}\|_2^2\right]^{\frac{1}{2}}
        =o(1),
    \end{equation*}
    by $(\sum_ix_i)^\eta\leq\sum_ix_i^\eta$ for any $\eta\in(0,1], x_i\geq0$ and using the Cauchy--Schwarz inequality.
    
    \paragraph{Term $G_I^{(\text{\RN{4}})}$: }
    \begin{align*}
        \E\left[\big(G_I^{(\text{\RN{4}})}\big)^{\frac{2}{3}}\right]
        &\leq
        N^{\frac{1}{3}}\E\left[\frac{1}{N}\sum_{i=1}^I\|\mathcal{E}_i^{(\xi)}\|_2^{\frac{4}{3}}\|\varepsilon_i\|_2^{\frac{4}{3}}\right]
        +
        N^{\frac{1}{3}}\E\left[\frac{1}{N}\sum_{i=1}^I\|\mathcal{E}_i^{(\varepsilon)}\|_2^{\frac{4}{3}}\|\xi_i\|_2^{\frac{4}{3}}\right]
        \\
        &\leq
        N^{\frac{1}{3}}\E\left[\frac{1}{N}\sum_{i=1}^I\|\mathcal{E}_i^{(\xi)}\|_2^2\right]^{\frac{2}{3}}\E\left[\frac{1}{N}\sum_{i=1}^I\|\varepsilon_i\|_2^4\right]^{\frac{1}{3}}
        \\
        &\quad +
        N^{\frac{1}{3}}\E\left[\frac{1}{N}\sum_{i=1}^I\|\mathcal{E}_i^{(\varepsilon)}\|_2^2\right]^{\frac{2}{3}}\E\left[\frac{1}{N}\sum_{i=1}^I\|\xi_i\|_2^4\right]^{\frac{1}{3}}
    \end{align*}
By applying Markov's inequality to each of the above terms we have $G_I^{\text{\RN{1}}},G_I^{\text{\RN{2}}},G_I^{\text{\RN{3}}},G_I^{\text{\RN{4}}}=o_P(1)$, and therefore with Lemma~\ref{lem:unif_add} we have $\sup_{\psi\in\Psi}\left|G_I(\psi)\right|=o_P(1)$.

Therefore, combining the results above with the triangle inequality
\begin{equation}\label{eq:ABoP(1)}
    \sup_{\psi\in\Psi}\left|A_I(\psi)\right| = o_P(1) \quad\text{and}\quad \sup_{\psi\in\Psi}\left|B_I(\psi)\right| = o_P(1).
\end{equation}
Now, by the triangle inequality
\begin{align*}
    \sup_{\psi\in\Psi}\left| \hat{L}_{\SL,I}(\psi) - L_{\SL}(\psi) \right| 
    &\leq
    \left(\sup_{\psi\in\Psi}\left|\frac{1}{\left(\E\left[\xi^{\top}W(\psi)(X)\xi\right] + B_I(\psi)\right)^2}-\frac{1}{\left(\E\left[\xi^{\top}W(\psi)(X)\xi\right]\right)^2}\right|\right)\cdot \\
    &\hspace{-2em}\cdot\left(\sup_{\psi\in\Psi}\E\left[\left(\xi^{\top}W(\psi)(X)\varepsilon\right)^2\right] + \sup_{\psi\in\Psi}\left|A_I(\psi)\right|\right) + \frac{\sup_{\psi\in\Psi}\left|A_I(\psi)\right|}{\inf_{\psi\in\Psi}\left(\E\left[\xi^{\top}W(\psi)(X)\xi\right]\right)^2}.
\end{align*}
We have already shown that  $\E\left[\sup_{\psi\in\Psi}\left(\xi^{\top}W(\psi)(X)\varepsilon\right)^2\right]<\infty$ and it is similarly straightforward to show that $\E\left[\inf_{\psi\in\Psi}\xi^{\top}W(\psi)(X)\xi\right]>0$. Therefore, it remains to show that
\begin{equation}\label{eq:denomso(1)}
    \sup_{\psi\in\Psi}\left|\frac{1}{\left(\E\left[\xi^{\top}W(\psi)(X)\xi\right] + B_I(\psi)\right)^2}-\frac{1}{\left(\E\left[\xi^{\top}W(\psi)(X)\xi\right]\right)^2}\right| = o_P(1).
\end{equation}
To show this, it will be helpful to define
\begin{align*}
    c &:= \E\bigg[\inf_{\psi\in\Psi}\xi^{\top}W(\psi)(X)\xi\bigg]>0,
    \\
    C &:= \E\bigg[\sup_{\psi\in\Psi}\xi^{\top}W(\psi)(X)\xi\bigg]<\infty
\end{align*}
Then, further defining $B_I^{\sup} := \sup_{\psi\in\Psi}\left|B_I(\psi)\right|$ it follows that
\begin{align*}
    \PP_P\Bigg(\sup_{\psi\in\Psi}\Bigg|&\frac{1}{\left(\E\left[\xi^{\top}W(\psi)(X)\xi\right] + B_I(\psi)\right)^2} -\frac{1}{\left(\E\left[\xi^{\top}W(\psi)(X)\xi\right]\right)^2}\Bigg| > \epsilon \Bigg) \\
    &\leq
    \PP_P\left(\frac{2cB_I^{\sup}-\left(B_I^{\sup}\right)^2}{C^4\left(1+c^{-1}B_I^{\sup}\right)^2} > \epsilon\right) \\
    &\leq
    \PP_P\left(\frac{2cB_I^{\sup}-\left(B_I^{\sup}\right)^2}{C^4\left(1+c^{-1}B_I^{\sup}\right)^2} > \tilde{\epsilon}\right)
    \tag*{\text{where $\tilde{\epsilon}:=\min\left(\epsilon,\frac{1}{6}C^{-4}c^2\right)$}}
    \\
    &\leq
    \PP_P\left(B_I^{\sup}>\lambda\right)
    \tag*{\text{where $\lambda := \left(1+c^{-2}C^4\tilde{\epsilon}\right)^{-1}\left( c-c^{-1}C^4\tilde{\epsilon}-\sqrt{c^2-3C^4\tilde{\epsilon}} \right) > 0$}}
    \\
        &=\PP_P\left(\sup_{\psi\in\Psi}\left|B_I(\psi)\right|>\lambda\right) = o(1) \tag*{\text{by \eqref{eq:ABoP(1)}.}}
\end{align*}
The penultimate step follows by solving the quadratic inequality in $B_I^{\sup}$. This proves the claim~\eqref{eq:denomso(1)} and hence~\eqref{eq:uniflln}, completing the proof.

\section{Further details on sandwich boosting}\label{appsec:workingcorrelations}

\subsection{Working correlation structures that exhibit computational benefits}

We outline three common working correlation structures that allow for computationally fast sandwich boosting, at computational order $O(N)$ as opposed to $O\big(\sum_i n_i^3\big)$. 

\paragraph{Equicorrelated:}
The equicorrelated working correlation structure (and its scaled inverse) is given by
\begin{align*}
    C_{\theta}(X_i) &= \left(\ind_{\{j=k\}} + \ind_{\{j\neq k\}} \rho_\theta\right)_{(j,k)\in[n_i]^2},
    \\
    C^{-1}_\theta(X_i) &= \left(\ind_{\{j=k\}} - \frac{\theta}{1 + \theta n_i}\right)_{(j,k)\in[n_i]^2},
\end{align*}
where the constant correlation $\rho_\theta = \theta/(1+\theta)$ is given in terms of the parameter $\theta\in[0,\infty)$. This working correlation is particularly popular for repeated measures data, with this class of weights in the homoscedastic case (setting $\sigma^*\equiv s^*\equiv 1$) the same class as used when fitting an intercept only mixed effects model.

\paragraph{Autoregressive, $\text{AR(1)}$:}
The autoregressive structure is given by the working correlation matrix (and its scaled inverse) of an $\text{AR}(1)$ process
\begin{align*}
    C_{\theta}(X_i) &= \left(\theta^{|j-k|}\right)_{(j,k)\in[n_i]^2},
    \\
    C^{-1}_\theta(X_i) &=
    \begin{cases}
    \left( \ind_{\{j=k\}} + \theta^2\ind_{\{1<j=k<n_i\}} - \theta \ind_{\{|j-k|=1\}} \right)_{(j,k)\in[n_i]^2} & \text{if } n_i > 1
    \\
    (1-\theta^2) &\text{if } n_i = 1
    \end{cases}.
\end{align*}
This working correlation is therefore of particular relevance for longitudinal datasets where the correlation between observations is bounded by a function that decays exponentially with separation.

\paragraph{Hierarchical / Nested:}
The nested working correlation models hierarchical groupings as follows. Consider the $i$th group (of size $n_i$) to be further partitioned into $M$ sub-groups of sizes $(n_{i1},\ldots,n_{iM})$ (such that $\sum_{m=1}^Mn_{im}=n_i$). The nested group structure is then given by the block matrix
\begin{equation*}
    C_\theta(X_i) =         \begin{pmatrix}
        E_{i,11} & \cdots & E_{i,1M} \\
        \vdots & \ddots & \vdots \\
        E_{i,M1} & \cdots & E_{i,MM} \\
        \end{pmatrix},
        \quad
        E_{i,mm'}=\begin{cases}
        \rho_1 J_{n_{im}\times n_{im}} + (1-\rho_1)I_{n_{im}} &\text{if }m=m' \vspace{1mm}\\
        \rho_2 J_{n_{im}\times n_{im'}} &\text{if }m\neq m',
        \end{cases}
\end{equation*}
for the vector $\rho=(\rho_1,\rho_2)\in[0,1)\times[0,1)$ satisfying $\rho_2\leq\rho_1$ (suppressing the dependence on $\theta$ in $\rho_\theta$), where $J_{a\times b}$ denotes an $a\times b$ dimensional matrix of ones and $I_a$ denotes the $a\times a$ dimensional identity matrix. This nested correlation can be reparametrised in terms of $\theta=(\theta_1,\theta_2):=\left(\frac{\rho_1-\rho_2}{1-\rho_1},\frac{\rho_2}{1-\rho_1}\right)\in[0,\infty)\times[0,\infty)$ as
\begin{equation*}
\begin{gathered}
    C_\theta^{-1}(X_i) = (1+\theta_1+\theta_2)I_{n_i} - \mathcal{B}_i(\theta), \\
    \mathcal{B}_i(\theta) =
        \begin{pmatrix}
        \mathcal{A}_{i,11}(\theta)J_{n_{i1}\times n_{i1}} & \mathcal{A}_{i,12}(\theta)J_{n_{i1}\times n_{i2}} & \cdots & \mathcal{A}_{i,1M}(\theta)J_{n_{i1}\times n_{iM}}\\
        \mathcal{A}_{i,21}(\theta)J_{n_{i2}\times n_{i1}} & \mathcal{A}_{i,22}(\theta)J_{n_{i2}\times n_{i2}} & \cdots & \mathcal{A}_{i,2M}(\theta)J_{n_{i2}\times n_{iM}}\\
        \vdots & \vdots & \ddots & \vdots \\
        \mathcal{A}_{i,M1}(\theta)J_{n_{iM}\times n_{i1}} & \mathcal{A}_{i,M2}(\theta)J_{n_{iM}\times n_{i2}} & \cdots & \mathcal{A}_{i,MM}(\theta)J_{n_{iM}\times n_{iM}}\\
        \end{pmatrix},\\
        \mathcal{A}_{i,mm'}(\theta) = \frac{\theta_1}{1+\theta_1n_{im}}\ind_{\{m=m'\}}+\frac{\theta_2}{1+\theta_2\sum_{l=1}^{M}\frac{n_{il}}{1+\theta_1n_{il}}}\cdot\frac{1}{\left(1+\theta_1n_{im}\right)\left(1+\theta_1n_{im'}\right)} \quad(m,m'\in[M]), \\
\end{gathered}
\end{equation*}
calculated by the Woodbury matrix identity.

\subsection{Calculation of s-scores for computationally efficient sandwich boosting}\label{appsec:boosting-algorithms}

Note that, as discussed in Section~\ref{sec:sandwich_loss}, the scores for sandwich boosting can be calculated in $O\big(\sum_i n_i^3\big)$ computations for a generic (inverse) working correlation $C_\theta^{-1}(\cdot)$, but the three working correlations introduced are special cases that can be calculated in $O(N)$ operations. We first introduce a general algorithm to calculate scores for a generic working correlation parametrised by $\theta$, and then present this theory applied to the special cases of the working correlations introduced (where we have a reduction in computational order).

\begin{algorithm}[ht]
 \KwIn{Index set $\mathcal{I}$ for boosting training; Working correlation structure $\mathcal{C}_{i}$~\eqref{eq:C-cal}; Estimates of (grouped) errors $(\hat{\xi}_i,\hat{\varepsilon}_i,X_i)_{i\in\mathcal{I}}$; Estimates of $(s,\theta)$ at which scores to be calculated.}

Calculate $A^{(1)}$, $A^{(2)}$: 
\\
\For{$i\in\mathcal{I}$} {
    $A^{(1)}_i 
    := \hat{\xi}_i^{\top}W_i\hat{\xi}_i = \sum_{k=1}^{n_i}\sum_{k'=1}^{n_i}\mathcal{C}_{ikk'}(\theta)s(X_{ik})s(X_{ik'})\hat{\xi}_{ik}\hat{\xi}_{ik'},$ \\
    $A^{(2)}_i := \hat{\xi}_i^{\top}W_i\hat{\varepsilon}_i = \sum_{k=1}^{n_i}\sum_{k'=1}^{n_i}\mathcal{C}_{ikk'}(\theta)s(X_{ik})s(X_{ik'})\mathcal{C}_{ikk'}(\theta)\hat{\xi}_{ik}\hat{\varepsilon}_{ik'}.$
        }

Calculate $A^{(3)}, A^{(4)}$:
\\
\For{$i\in\mathcal{I}$ and $j\in[n_i]$} {
    $A^{(3)}_{ij} := 2\sum_{k=1}^{n_i}\mathcal{C}_{ijk}(\theta)s(X_{ik})\hat{\xi}_{ij}\hat{\xi}_{ik},$ \\
    $A^{(4)}_{ij} := \sum_{k=1}^{n_i}\mathcal{C}_{ijk}(\theta)s(X_{ik})\left(\hat{\xi}_{ij}\hat{\varepsilon}_{ik} + \hat{\xi}_{ik}\hat{\varepsilon}_{ij}\right).$
        }

Calculate $A^{(1)}_{\text{sum}} := \sum_{i\in\mathcal{I}}A^{(1)}_i$ and $A^{(2)}_{\text{sum.sq}} := \sum_{i\in\mathcal{I}}A^{(2)^2}_i$.
\\
Calculate $s$-scores:
\\
\For{$i\in\mathcal{I}$ and $j\in[n_i]$} {
$
    U_{\text{SL}}^{(s)}(s,\theta)(X_{ij}) = -2\left(A^{(1)}_{\text{sum}}\right)^{-3}\left[A^{(2)}_{\text{sum.sq}}A^{(3)}_{ij} - A^{(1)}_{\text{sum}}A^{(2)}_iA^{(4)}_{ij}\right].
$
        }

Calculate the $\theta$-score:
\\
$    U_{\text{SL}}^{(\theta)}(s,\theta) =-2\left(A^{(1)}_{\text{sum}}\right)^{-3}\Bigg[A^{(2)}_{\text{sum.sq}}\left(\sum_{i\in\mathcal{I}}\sum_{k=1}^{n_i}\sum_{k'=1}^{n_i}\frac{\partial\mathcal{C}_{ikk'}(\theta)}{\partial\theta}s(X_{ik})s(X_{ik'})\hat{\xi}_{ik}\hat{\xi}_{ik'}\right)$ \\ \qquad $ - A^{(1)}_{\text{sum}}\left(\sum_{i\in\mathcal{I}}A_i^{(2)}\sum_{k=1}^{n_i}\sum_{k'=1}^{n_i}\left(\frac{\partial\mathcal{C}_{ikk'}(\theta)}{\partial\theta}s(X_{ik})s(X_{ik'})\hat{\xi}_{ik}\hat{\varepsilon}_{ik'}\right)\right)\Bigg].
$

\KwOut{The set of $N$ $s$-scores $\left(X_{ij},\, U_{\SL}^{(s)}(s,\theta)(X_{ij})\right)_{i\in\mathcal{I},j\in[n_i]}$ and $\theta$-score $U_{\SL}^{(\theta)}(s,\theta)$.}
 \caption{Computationally lean sandwich boosting score calculation for an arbitrary working correlation}
 \label{alg:scores-arbitrary}
\end{algorithm}

Therefore, the order of computation to calculate the score functions for sandwich boosting is $O\big(\sum_i n_i^3\big)$ for a general correlation matrix structure whose inverse $C_\theta^{-1}(\cdot)$ is not known (and so needs to be computed analytically). In the special case of Toeplitz correlation structures, or where the inverse correlation structure $C_\theta^{-1}(\cdot)$ is known, this order of computation can be reduced to $O\big(\sum_i n_i^2\big)$. Note that this computational order can be further reduced to $O(N)$ in the three special cases of correlation structures posed in this paper. We give an example of these, implementing Algorithm~\ref{alg:scores-arbitrary} to the equicorrelation structure, with scores calculated in $O(N)$ operations as in Algorithm~\ref{alg:scores-equicorr}. For notational simplicity for these algorithms we introduce the notation
\begin{equation}\label{eq:C-cal}
    \mathcal{C}_{ijk}(\theta) := \left\{C_\theta^{-1}(X_i)\right\}_{jk},
\end{equation}
for the inverse working correlation structure.

\begin{algorithm}[ht]
 \KwIn{Index set $\mathcal{I}$ for boosting training; Estimates of (grouped) errors $(\hat{\xi}_i,\hat{\varepsilon}_i,X_i)_{i\in\mathcal{I}}$; Estimates of $(s,\theta)$ at which scores to be calculated.}

Calculate $A^{(1)},A^{(2)},B^{(1)},B^{(2)}$:
\\
\For{$i\in\mathcal{I}$} {
$    B^{(1)}_i := \sum_{k=1}^{n_i}s(X_{ik})\hat{\xi}_{ik},$
\\
$    B^{(2)}_i := \sum_{k=1}^{n_i}s(X_{ik})\hat{\varepsilon}_{ik},$
\\
$    A^{(1)}_i 
    := \hat{\xi}_i^{\top}W_i\hat{\xi}_i = \sum_{k=1}^{n_i}s^2(X_{ik})\hat{\xi}_{ik}^2-\frac{\theta}{1+\theta n_i}\left(\sum_{k=1}^{n_i}s(X_{ik})\hat{\xi}_{ik}\right)^2,$
\\
$A^{(2)}_i := \hat{\xi}_i^{\top}W_i\hat{\varepsilon}_i = \sum_{k=1}^{n_i}s^2(X_{ik})\hat{\xi}_{ik}\hat{\varepsilon}_{ik}-\frac{\theta}{1+\theta n_i}\left(\sum_{k=1}^{n_i}s(X_{ik})\hat{\xi}_{ik}\right)\Big(\sum_{k=1}^{n_i}s(X_{ik})\hat{\varepsilon}_{ik}\Big).$
        }

Calculate $A^{(3)}$ and $A^{(4)}$:
\\
\For{$i\in\mathcal{I}$ and $j\in[n_i]$} {
$    A^{(3)}_{ij} := 2s(X_{ij})\hat{\xi}_{ij}^2-\frac{2\theta}{1+\theta n_i}B^{(1)}_i\hat{\xi}_{ij},$
\\
$    A^{(4)}_{ij} := 2s(X_{ij})\hat{\xi}_{ij}\hat{\varepsilon}_{ij} - \frac{\theta}{1+\theta n_i}\left(B^{(2)}_i\hat{\xi}_{ij}+B^{(1)}_i\hat{\varepsilon}_{ij}\right).
$
        }

Calculate $A^{(1)}_{\text{sum}} := \sum_{i=1}^IA^{(1)}_i$ and $A^{(2)}_{\text{sum.sq}} := \sum_{i=1}^IA^{(2)^2}_i$.

Calculate $s$-scores:
\\
\For{$i\in\mathcal{I}$ and $j\in[n_i]$} {
$
    U_{\text{SL}}^{(s)}(s,\theta)(X_{ij}) = -2\left(A^{(1)}_{\text{sum}}\right)^{-3}\left[A^{(2)}_{\text{sum.sq}}A^{(3)}_{ij} - A^{(1)}_{\text{sum}}A^{(2)}_iA^{(4)}_{ij}\right].
$
        }

Calculate the $\theta$-score:
\\
$
    U_{\text{SL}}^{(\theta)}(s,\theta) = -2\left(A^{(1)}_{\text{sum}}\right)^{-3}\left[A^{(2)}_{\text{sum.sq}}\left(\sum_{i=1}^I\frac{\left(B^{(1)}_i\right)^2}{\left(1+\theta n_i\right)^2}\right) - A^{(1)}_{\text{sum}}\left(\sum_{i=1}^I\frac{A^{(2)}_iB^{(1)}_iB^{(2)}_i}{\left(1+\theta n_i\right)^2}\right)\right].
$

\KwOut{The set of $N$ $s$-scores $\left(X_{ij},\, U_{\SL}^{(s)}(s,\theta)(X_{ij})\right)_{i\in\mathcal{I},j\in[n_i]}$ and $\theta$-score $U_{\SL}^{(\theta)}(s,\theta)$.}
 \caption{Computationally lean sandwich boosting score calculation for the equicorrelated working correlation}
 \label{alg:scores-equicorr}
\end{algorithm}

Similar algorithms exist for the autoregressive $\text{AR}(1)$ and nested groups working correlations (and are also working correlation cases that achieve the faster $O(N)$ computational rates when calculating scores  for performing sandwich boosting).

\subsection{Theoretical results under working correlation structures}

The uniform asymptotic convergence results of Section~\ref{sec:theory} works on the general working covariance model parametrised by $(\sigma,\rho)$ functions and / or parameters. For computational speed and stability for sandwich boosting we consider the weight class reparametrised in terms of $(s,\theta)$. For completeness we reformulate the assumptions on the deterministic limits $(\sigma^*,\rho^*)$ of Assumption~\ref{ass:nuisance_rates} in terms of the deterministic limits $(s^*,\theta^*)$, with a particular focus on the three prototypical working correlation parametrisations considered in Section~\ref{appsec:workingcorrelations}.

\begin{corollary}\label{thm:(s,theta)-all}
    Consider the setting of Theorem~\ref{thm:main} where the weight class $\mathcal{W}$ takes one of the three forms outlined in Section~\ref{appsec:workingcorrelations} in terms of $(s,\theta)$. Further, suppose there exists the deterministic function $s^*:\R^d\to\R$ and vector $\theta^*$ whose estimators in place of Assumptions \ref{A3.1}-\ref{A3.3} in Assumption~\ref{ass:weight} satisfy
    \begin{enumerate}
        \item[(\mylabel{A3.1'}{A3.1'})] $\mathcal{R}_{s^*} := \underset{j\in[n]}{\max}\,\E_P\left[\left(c^*\hat{s}^{(1)}(X_{j})-s^*(X_{j})\right)^2\,\Big|\,\hat{s}^{(1)}\right] = o_{\mathcal{P}}(1)$ \\ where $c^* := \underset{c>0}{\normalfont\arginf}\,\underset{j\in[n]}{\max}\,\E_P\left[\left(c\hat{s}^{(1)}(X_{j})-s^*(X_{j})\right)^2\right]$,
        \item[(\mylabel{A3.2'}{A3.2'})] $\mathcal{R}_{\theta^*} 
 := \underset{k\in[K]}{\max}\big\|\hat{\theta}^{(k)}-\theta^*\big\|_2 = o_{\mathcal{P}}(1)$,
        \item[(\mylabel{A3.3'}{A3.3'})] For all $k\in[K]$, $\hat{s}^{(k)}$ are bounded above and below uniformly by some positive constants and, for the relevant working correlation:
            \begin{itemize}
                \item {\bf Equicorrelation:} \(\hat{\theta}^{(k)}\) is bounded above and below uniformly by constants in \([0,\infty)\),
                \item {\bf Longitudinal:} \(\hat{\theta}^{(k)}\) is bounded above and below uniformly by constants in \([0,1)\),
                \item {\bf Hierarchical / Nested:} \(\hat{\theta}^{(k)}\) is component-wise bounded above and below uniformly by constants in \([0,\infty)\).
            \end{itemize}
    \end{enumerate}
    Then the results of Theorem~\ref{thm:main} hold in this alternative setup.
\end{corollary}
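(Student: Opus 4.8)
The plan is to reduce everything to Theorem~\ref{thm:main} by showing that the reparametrised stability conditions \ref{A3.1'}--\ref{A3.3'} (stated in terms of $(s,\theta)$) imply the original weight-function conditions \ref{A3.1}--\ref{A3.4} of Assumption~\ref{ass:weight} (stated in terms of $(\sigma^*,\rho^*)$ and $W^*$). The key observation making this a translation exercise is that the two parametrisations describe the same limiting object: setting $\sigma^*:=1/s^*$ and $\rho^*:=\rho_{\theta^*}$, the limiting weight matrix $W^*=W(\sigma^*,\rho^*)(X)=D_{s^*}(X)\,C_{\theta^*}^{-1}(X)\,D_{s^*}(X)$ is unchanged, so the target variance $V$ in Theorem~\ref{thm:main} is also unchanged. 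Note first that $s^*$ is bounded above and below by positive constants, being the scaled limit of the uniformly bounded $c^*\hat s^{(1)}$ under \ref{A3.3'}.

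I would first verify \ref{A3.1}. Since $t\mapsto 1/t$ is Lipschitz on the interval bounded away from $0$ on which $\hat s^{(1)}$ and $s^*$ live, and since $\hat\sigma^{(1)}=1/\hat s^{(1)}$, the choice of scaling $1/c^*$ gives $\lvert (1/c^*)\hat\sigma^{(1)}(X_j)-\sigma^*(X_j)\rvert = \lvert s^*(X_j)-c^*\hat s^{(1)}(X_j)\rvert / \big(c^*\hat s^{(1)}(X_j)\,s^*(X_j)\big)\lesssim \lvert c^*\hat s^{(1)}(X_j)-s^*(X_j)\rvert$. Taking conditional expectations, maximising over $j$, and noting that the optimal scaling appearing in \ref{A3.1} can only reduce the value, yields $\mathcal{R}_\sigma \lesssim \mathcal{R}_{s^*}=o_{\mathcal{P}}(1)$. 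For \ref{A3.2}, I would use that $\theta\mapsto\rho_\theta$ is Lipschitz on the domain forced by \ref{A3.3'} in each of the three cases: equicorrelated ($\rho_\theta=\theta/(1+\theta)$ has derivative $(1+\theta)^{-2}\le 1$); longitudinal, where the correlation across positions $j,j'$ is $\theta^{\lvert j-j'\rvert}$ and $\lvert a^m-b^m\rvert\le m\max(a,b)^{m-1}\lvert a-b\rvert$ is bounded uniformly in $m$ whenever $a,b$ are bounded away from $1$; and nested, where $(\rho_1,\rho_2)$ is a smooth function of $\theta$ on a compact set. This gives $\mathcal{R}_\rho \lesssim \mathcal{R}_{\theta^*}^2=o_{\mathcal{P}}(1)$.

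Next I would establish the eigenvalue condition \ref{A3.3}, and then obtain \ref{A3.4} for free. As $s^*$ is bounded above and below, $D_{s^*}(X)$ has eigenvalues bounded away from $0$ and $\infty$, so the eigenvalues of $W^*=D_{s^*}C_{\theta^*}^{-1}D_{s^*}$ are controlled up to constants by those of $C_{\theta^*}^{-1}$. I would compute these per structure: equicorrelated gives $\Lambda_{\max}(C_{\theta^*}^{-1})=(1-\rho^*)^{-1}\lesssim 1$ and $\Lambda_{\min}(C_{\theta^*}^{-1})=(1+(n-1)\rho^*)^{-1}\gtrsim n^{-1}$, so $\gamma=1$; the AR(1) inverse is tridiagonal and Gershgorin's theorem bounds both its smallest and largest eigenvalues by constants, so $\gamma=0$; and the nested inverse, available in closed form through the Woodbury identity, can be bounded analogously. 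With \ref{A3.3} in hand, \ref{A3.4} follows from $\Lambda_{\min}(W^{*1/2}\Sigma W^{*1/2})\ge\Lambda_{\min}(W^*)\Lambda_{\min}(\Sigma)$ and $\Lambda_{\max}(W^{*1/2}\Sigma W^{*1/2})\le\Lambda_{\max}(W^*)\Lambda_{\max}(\Sigma)$ together with \ref{A1.2}, giving $\kappa=\gamma$ and $\alpha=\mu_\Sigma$ (using that $\mu_\Sigma=1$ is always an admissible upper bound, since any correlation matrix has $\Lambda_{\max}\le n$). Having verified all of Assumption~\ref{ass:weight}, the conclusion is exactly Theorem~\ref{thm:main}.

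The routine but most delicate part will be the eigenvalue analysis of $C_{\theta^*}^{-1}$ in the nested case, together with checking that the resulting $\gamma$ respects the constraint $\gamma\le\mu_\Sigma$ of \ref{A3.3}. The other point requiring care is the scaling bookkeeping around $c^*$ in the reciprocal transformation, where one must confirm that $s^*$ is genuinely bounded away from $0$ so that $t\mapsto 1/t$ remains Lipschitz; everything else is a direct Lipschitz transfer of the $o_{\mathcal{P}}(1)$ rates.
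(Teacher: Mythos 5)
Your proposal is correct and follows essentially the same route as the paper: translate \ref{A3.1'} to \ref{A3.1} via the reciprocal identity $\hat\sigma-\sigma^*=(s^*-\hat s)/(\hat s\,s^*)$ with $s^*$ bounded away from zero, translate \ref{A3.2'} to \ref{A3.2} by a case-by-case Lipschitz bound on $\theta\mapsto\rho_\theta$ (including the $m c^{m-1}$ bound in the AR(1) case), and verify the eigenvalue bounds of \ref{A3.3} separately for each working correlation, with $\gamma=1$ for the equicorrelated and nested structures and $\gamma=0$ for AR(1). The one point to note is that your derivation of \ref{A3.4} ``for free'' is unnecessary and slightly misleading: the corollary replaces only \ref{A3.1}--\ref{A3.3} and retains \ref{A3.4} as a standing assumption precisely so that sharper values of $\kappa$ and $\alpha$ than the worst-case $\kappa=\gamma$, $\alpha=\mu_\Sigma$ can be used in the group-size rate conditions of Assumption~\ref{ass:n_max} (e.g.\ $\kappa=0$ in case (ii) of Corollary~\ref{cor:main}), so one should not substitute your derived values for the assumed ones.
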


A proof of this corollary is given in Section~\ref{appsec:(s,theta)-proof}.

\section{Proof of results in Section~\ref{sec:theory}}\label{appsec:apptheory}
In this section we prove a stronger version of Theorem~\ref{thm:main}, extending it to non-identically grouped data of varying group size, and allowing group sizes to diverge at sufficiently slow rates with $N$. To allow for non-identical groups we work on a sequence of families of probability distributions $(\mathcal{P}_I)_{I\in\mathbb{N}}$ that the grouped observations follow (satisfying Assumption~\ref{ass:mathcal(P)} with us suppressing dependence on $i$), and re-define the nuisance function error rates given in Assumptions~\ref{ass:nuisance_rates} and~\ref{ass:weight} with
\begin{equation*}
    \mathcal{R}_l := \max_{k\in[K]}\max_{i\in\mathcal{I}_k}\max_{j\in[n_i]}\,\E_P\left[\left(\hat{l}^{(k)}(X_{ij})-l_0(X_{ij})\right)^2\,\bigg|\,\hat{l}^{(k)}\right],
\end{equation*}
and similarly for $\mathcal{R}_m, \mathcal{R}_{\sigma}$ and $\mathcal{R}_{\rho}$. This notation then recovers that in Assumption~\ref{ass:nuisance_rates} when the groups are identically distributed.
We also introduce an additional parameter $\tau$ of the working correlation structure and re-define
\[
 V:=\left(\frac{1}{N}\sum_{i=1}^I\E_{P}\left[\xi_i^{\top}W_i^*\xi_i\right]\right)^{-2}\left(\frac{1}{N}\sum_{i=1}^I\E_{P}\left[\left(\varepsilon_i^{\top}W_i^*\xi_i\right)^2\right]\right),
\]
with $W_i^* := W(\sigma^*, \rho^*)(X_i)$.

\begin{assumptionext}[Additional assumption on weight function]
Additionally to the Assumptions (\ref{A3.1})-(\ref{A3.4}) of Assumption~\ref{ass:weight} suppose we also make the assumption
    \begin{itemize}
        \item[(\mylabel{A3.5}{A3.5})] The working correlation function $\rho^*$ satisfies  
        $\underset{k\in[K]}{\max}\underset{i\in\mathcal{I}_k}{\max}\underset{j\in[n_i]}{\max}\sum_{j'\neq j}^{n_i}\rho^*(X_{ij},X_{ij'})\lesssim n_i^\tau $
        for some $\tau\in[0,1]$ almost surely.
    \end{itemize}
\end{assumptionext}

We can now specify the asymptotic regimes on the maximum group size $n_{\max} := \max_{i\in[I]}n_i$ for which Theorem~\ref{thm:main} holds, in terms of the nuisance error rates $(\mathcal{R}_m,\mathcal{R}_l,\mathcal{R}_{\sigma},\mathcal{R}_{\rho})$, the working correlation metadata $(\alpha,\kappa,\gamma,\tau)$, and the true data generation mechanism metadata $(\delta,\mu_\Sigma,\mu_\Omega)$.

\begin{assumption}[Maximum Group Size]\label{ass:n_max}
The maximum group size \(n_{\max} := \max_{i\in[I]}n_i\) of the grouped observations \(\left\{S_i=(Y_i,D_i,X_i)\right\}_{i\in[I]}\) satisfies
\begin{multline*}\label{eq:n_max}
n_{\max}=o\Bigg(N^{\frac{\delta}{4(1+\kappa)+(2+\kappa)\delta}\wedge\frac{1}{3+2\kappa}}
\wedge \mathcal{R}_m^{-\left(\frac{1}{\kappa+\mu_{\Omega}}\wedge\frac{1}{2(\kappa+\alpha)}\wedge\frac{1}{\gamma}\right)}\wedge\left(N\mathcal{R}_m^{-1}\right)^{\frac{1}{2\gamma+\mu_{\Omega}}}\wedge\mathcal{R}_l^{-\frac{1}{\kappa+\mu_{\Omega}}} \\
\wedge \mathcal{R}_{\sigma}^{-\left(\frac{1}{2\kappa+2\mu_{\Sigma}+\tau+1}\wedge\frac{1}{2\gamma+\tau+1}\right)} \wedge \mathcal{R}_{\rho}^{-\left(\frac{1}{2(\kappa+\mu_{\Sigma}+1)}\wedge\frac{1}{2(\gamma+1)}\right)} \wedge \left(N^{-1}\mathcal{R}_m^{-1}\left(\mathcal{R}_l\vee\mathcal{R}_m\right)^{-1}\right)^{\frac{1}{\kappa}}
\Bigg).
\end{multline*}
\end{assumption}

This assumption reduces to the rates given in Section~\ref{sec:theory} by setting $(\mu_\Sigma,\mu_\Omega,\gamma,\tau)=(1,1,1,1)$ in the equicorrelated working correlation setting, and $(\mu_\Sigma,\mu_\Omega,\alpha,\kappa,\gamma,\tau)=(0,1,0,0,0,0)$ in the autoregressive $\text{AR}(1)$ working correlation setting.

We introduce the following additional notation relevant for the proofs in this section. For a distribution $P$ governing the distribution of a random vector \(A\in\R^d\), we write $\norm{A}_{P,p}:= (\E_P \|A\|_2^p)^{1/p}$, the subscript $P$ in the expectation indicating the dependence on $P$; we will similarly write $\pr_P$. We extend this definition to random matrices $A$ where then $\norm{A}_{P,p}:= (\E_P \|A\|_{\op}^p)^{1/p}$,  where \(\norm{\cdot}_{\op}\) is the operator norm.

\subsection{Proof of Theorem~\ref{thm:main}}

The proof of Theorem~\ref{thm:main} amounts to proving the two claims:
\begin{enumerate}[label=(\roman*)]
    \item $\lim\limits_{I\to\infty}\sup\limits_{P\in\mathcal{P}_I}\sup\limits_{t\in\R}\left|\PP_{P}\left(\sqrt{N/V}\big(\hat{\beta}-\beta\big)\leq t\right)-\Phi\left(t\right)\right|=0;$
    \item $\lim\limits_{I\to\infty}\sup\limits_{P\in\mathcal{P}_I}\sup\limits_{t\in\R}\left|\PP_{P}\left(\sqrt{N/\hat{V}}\big(\hat{\beta}-\beta\big)\leq t\right)-\Phi\left(t\right)\right|=0.$
\end{enumerate}
We will prove each in turn.

\begin{proof}[Proof of Theorem~\ref{thm:main} (i)]
Consider separating the nuisance functions into `fixed effects' functions $\eta:=(l,m)$ and weight functions $\nu:=(\sigma,\rho)$ (analogous to the working covariance in the GEE setup or random effects in the ML setup). Define the terms
\begin{equation*}
\begin{split}
    \varphi\big(S_i;\tilde{\beta},\eta,\nu\big) &:= \big(D_i-m(X_i)\big)^{\top}W_i(\nu)\big(Y_i-l(X_i)-\left(D_i-m(X_i)\right)\tilde{\beta}\big), \\
    \phi\big(S_i;\eta,\nu\big) &:= \big(D_i-m(X_i)\big)^{\top}W_i(\nu)\big(D_i-m(X_i)\big),
\end{split}
\end{equation*}
where $W_i(\nu) := W(\nu)(X_i)$.
Further, recall the notation that we denote the $k^{\text{th}}$ fold estimators for the `fixed effects' $\eta_0:=(l_0,m_0)$ as $\hat{\eta}^{(k)}:=(\hat{l}^{(k)},\hat{m}^{(k)})$, and for the weights the estimators $\hat{\nu}^{(k)}:=(\hat{\sigma}^{(k)},\hat{\rho}^{(k)})$ that converge to a deterministic limit $\nu^*:=(\sigma^*,\rho^*)$, each with corresponding weights $\hat{W}_i=W(\hat{\nu}^{(k)})(X_i)$ (with $k$-dependence implicit through $i$) and $W_i^*=W(\nu^*)(X_i)$. By scaling we may assume without loss of generality that in Assumption~\ref{A3.1} we have $c^*=1$.

Then $\hat{\beta}$ in Algorithm~\ref{alg:main} can be expanded as
\begin{align}\label{eq:betahat_decomp}
    N^{\frac{1}{2}}\big(\hat{\beta}-\beta\big) =& \left(N^{-1}\sum_{k=1}^K\sum_{i\in\mathcal{I}_k}\phi\big(S_i;\hat{\eta}^{(k)},\hat{\nu}^{(k)}\big)\right)^{-1} \left(N^{-\frac{1}{2}}\sum_{k=1}^K\sum_{i\in\mathcal{I}_k}\varphi\big(S_i;\beta,\hat{\eta}^{(k)},\hat{\nu}^{(k)}\big)\right)\nonumber \\
    =& \left(\frac{N^{-1}n_{\max}^{\kappa}\sum_{i=1}^I\E_P\left[\varphi^2(S_i;\beta,\eta_0,\nu^*)\right]}{\left(N^{-1}\sum_{i=1}^I\E_P\left[\phi(S_i;\eta_0,\nu^*)\right]\right)^2}\right)^{\frac{1}{2}}
    \left(\frac{N^{-1}n_{\max}^{\gamma}\sum_{i=1}^I\phi\left(S_i;\eta_0,\nu^*\right) + M_2}{N^{-1}n_{\max}^{\gamma}\sum_{i=1}^I\E_P\left[\phi(S_i;\eta_0,\nu^*)\right]}\right)^{-1} \nonumber\\
        &\qquad\cdot\left(\frac{N^{-\frac{1}{2}}n_{\max}^{\frac{\kappa}{2}}\sum_{i=1}^I\varphi(S_i;\beta,\eta_0,\nu^*)+M_1}{\left(N^{-1}n_{\max}^{\kappa}\sum_{i=1}^I\E_P\left[\varphi^2(S_i;\beta,\eta_0,\nu^*)\right]\right)^{\frac{1}{2}}}\right),
\end{align}
where
\begin{align*}
	M_1 &:= N^{-\frac{1}{2}}n_{\max}^{\frac{\kappa}{2}}\sum_{k=1}^K\sum_{i\in\mathcal{I}_k}\left[\varphi\big(S_i;\beta,\hat{\eta}^{(k)},\hat{\nu}^{(k)}\big)-\varphi\left(S_i;\beta,\eta_0,\nu^*\right)\right], \\
	M_2 &:= N^{-1}n_{\max}^{\gamma}\sum_{k=1}^K\sum_{i\in\mathcal{I}_k}\left[\phi\big(S_i;\hat{\eta}^{(k)},\hat{\nu}^{(k)}\big) - \phi\left(S_i;\eta_0,\nu^*\right) \right] .
\end{align*}

We claim that by Lemma~\ref{lem:unif_clt} we have
{\small
\begin{equation}\label{eq:clt}
    \lim_{I\to\infty}\sup_{P\in\mathcal{P}_I}\sup_{t\in\R}\left|\PP_{P}\left(\left(\sum_{i=1}^I\E_{P}\left[\varphi^2(S_i;\beta,\eta_0,\nu^*)\right]\right)^{-\frac{1}{2}}\left(\sum_{i=1}^I\varphi(S_i;\beta,\eta_0,\nu^*)\right)\leq t\right)-\Phi(t)\right| = 0.
\end{equation}
}%
We must verify the condition for Lemma~\ref{lem:unif_clt} to hold; we will show that
\begin{equation}\label{eq:lyapunov}
    \sup_{P\in\mathcal{P}_I}\frac{\sum_{i=1}^I\E_{P}\left[\left|\varphi(S_i;\beta,\eta_0,\nu^*)\right|^{2+2\Delta}\right]}{\left(\sum_{i=1}^I\E_{P}\left[\varphi^2(S_i;\beta,\eta_0,\nu^*)\right]\right)^{1+\Delta}} = o(1),
\end{equation}
holds for \(\Delta=\frac{\delta}{4}\). First, note that for an arbitrary \(P\in\mathcal{P}_I\),
\begin{align*}
    \sum_{i=1}^I\E_{P}\left[\left|\varphi\left(S_i;\beta,\eta_0,\nu^*\right)\right|^{2+2\Delta}\right]
    =& \sum_{i=1}^I\E_{P}\left[\left|\varepsilon_i^{\top}W_i^*\xi_i\right|^{2+2\Delta}\right]
    \\
    \leq& \sum_{i=1}^I\E_{P}\left[\norm{W_i^*}^{2+2\Delta}_{\op}\norm{\xi_i}_2^{2+2\Delta}\norm{\varepsilon_i}_2^{2+2\Delta}\right] \\
    \lesssim& \sum_{i=1}^I\norm{\xi_i}_{P,4+4\Delta}^{2+2\Delta}\norm{\varepsilon_i}_{P,4+4\Delta}^{2+2\Delta}
    \lesssim \sum_{i=1}^In_i^{2+2\Delta}
    \leq Nn_{\max}^{1+2\Delta} , 
    \\
    \sum_{i=1}^I\E_{P}\left[\varphi^2(S_i;\beta,\eta_0,\nu^*)\right]
    =& \sum_{i=1}^I\E_{P}\left[\tr\left(\Sigma_iW_i^*\xi_i\xi_i^{\top}W_i^*\right)\right]
    \\
    \geq&
    \sum_{i=1}^I\E_{P}\left[\Lambda_{\min}\left(W_i^{*\frac{1}{2}}\Sigma_iW_i^{*\frac{1}{2}}\right)\norm{W_i^*}_{\op}\tr\left(\xi_i\xi_i^{\top}\right)\right] 
    \\
    \gtrsim &
    \sum_{i=1}^In_i^{-\kappa}\E_{P}\left[\tr\left(\xi_i\xi_i^{\top}\right)\right] 
    = \sum_{i=1}^In_i^{-\kappa}\E_{P}\left[\tr\left(\Omega_i\right)\right]
    \\
    \geq &
    \sum_{i=1}^In_i^{1-\kappa}\E_{P}\left[\Lambda_{\min}\left(\Omega_i\right)\right]
    \gtrsim \sum_{i=1}^In_i^{1-\kappa} \geq Nn_{\max}^{-\kappa},
\end{align*}
where the bounding of the numerator follows by the Cauchy--Schwarz inequality and Assumptions~\ref{A1.1}~and~\ref{A3.3}, and the bounding of the denominator follows by Assumptions~\ref{A1.2}~and~\ref{A3.3}. Therefore,
\begin{gather*}
    \sup_{P\in\mathcal{P}_I}\frac{\sum_{i=1}^I\E_{P}\left[\left|\varphi(S_i;\beta,\eta_0,\nu^*)\right|^{2+2\Delta}\right]}{\left(\sum_{i=1}^I\E_{P}\left[\varphi^2(S_i;\beta,\eta_0,\nu^*)\right]\right)^{1+\Delta}}
    \leq N^{-\Delta}n_{\max}^{(1+\kappa)+(2+\kappa)\Delta} = o(1)
\end{gather*}
by Assumption~\ref{ass:n_max}, with $\Delta=\frac{\delta}{4}$. Therefore, the condition~\eqref{eq:lyapunov} holds, and so the claim~\eqref{eq:clt} is proved.

Further, \(M_1=o_{\mathcal{P}}(1)\) by Lemma~\ref{lem:M1} and (as shown already) for arbitrary $P\in\mathcal{P}_I$
\begin{gather*}
    N^{-1}n_{\max}^{\kappa}\sum_{i=1}^I\E_P\left[\varphi^2(S_i;\beta,\eta_0,\nu^*)\right]\gtrsim 1
\end{gather*}
therefore, by Lemma~\ref{lem:unif_slutsky} in conjunction with \eqref{eq:clt} we have that 
\begin{equation}\label{eq:beta_decomp_numerator}
    \lim_{I\to\infty}\sup_{P\in\mathcal{P}_I}\sup_{t\in\R}\left|\PP_P\left(\frac{N^{-1}n_{\max}^{\frac{\kappa}{2}}\sum_{i=1}^I\varphi(S_i;\beta,\eta_0,\nu^*)+M_1}{\left(N^{-1}n_{\max}^{\kappa}\sum_{i=1}^I\E_P\left[\varphi^2(S_i;\beta,\eta_0,\nu^*)\right]\right)^{\frac{1}{2}}}\leq t\right)-\Phi\left(t\right)\right|=0.
\end{equation}

Next, we claim that
\begin{equation}\label{eq:beta_decomp_denominator}
    \frac{N^{-1}n_{\max}^{\gamma}\sum_{i=1}^I\phi\left(S_i;\eta_0,\nu^*\right) + M_2}{N^{-1}n_{\max}^{\gamma}\sum_{i=1}^I\E_P\left[\phi(S_i;\eta_0,\nu^*)\right]} = 1 + o_{\mathcal{P}}(1),
\end{equation}
which we proceed to show. We claim that
\begin{equation*}
  N^{-1}n_{\max}^{\gamma}\sum_{i=1}^I\phi\left(S_i;\eta_0,\nu^*\right) = N^{-1}n_{\max}^{\gamma}\sum_{i=1}^I\E_{P}\left[\phi\left(S_i;\eta_0,\nu^*\right)\right] + o_{\mathcal{P}}(1).
\end{equation*}
This follows by Lemma~\ref{lem:unif_ciiu} alongside
\begin{equation*}
\begin{split}
    &\E_P\left[\bigg(N^{-1}n_{\max}^{\gamma}\sum_{i=1}^I\left(\phi\left(S_i;\eta_0,\nu^*\right) - \E_P\left[\phi\left(S_i;\eta_0,\nu^*\right)\right]\right)\bigg)^2 \right] \\
    \leq& N^{-2}n_{\max}^{2\gamma}\sum_{i=1}^I\E_P\left[\left(\phi\left(S_i;\eta_0,\nu^*\right) - \E_{P}\left[\phi\left(S_i;\eta_0,\nu^*\right)\right]\right)^2\right] \\
    \leq& N^{-2}n_{\max}^{2\gamma}\sum_{i=1}^I\E_P\left[\left(\phi\left(S_i;\eta_0,\nu^*\right)\right)^2\right]
    \leq N^{-2}n_{\max}^{2\gamma}\sum_{i=1}^I\E_{P}\left[\norm{W_i^*}_{\op}^2\norm{\xi_i}_2^4\right]
    \\
    \lesssim& N^{-2}n_{\max}^{2\gamma}\sum_{i=1}^In_i^2
    \leq N^{-1}n_{\max}^{1+2\gamma} = o(1),
\end{split}
\end{equation*}
by the Cauchy--Schwarz inequality and Assumptions~\ref{A1.1},~\ref{A3.3}~and~\ref{ass:n_max}. Because also \(M_2=o_{\mathcal{P}}(1)\) by Lemma~\ref{lem:M2}, as well as that
\begin{equation*}
\begin{split}
    N^{-1}n_{\max}^{\gamma}\sum_{i=1}^I\E_{P}\left[\phi(S_i;\eta_0,\nu^*)\right] 
    &= N^{-1}n_{\max}^{\gamma}\sum_{i=1}^I\E_{P}\left[\tr\left(W_i^*\xi_i\xi_i^{\top}\right)\right]
    \\
    &\geq N^{-1}n_{\max}^{\gamma}\sum_{i=1}^I\E_{P}\left[\Lambda_{\min}\left(W_i^*\right)\tr\left(\xi_i\xi_i^{\top}\right)\right]
    \\
    &\gtrsim N^{-1}n_{\max}^{\gamma}\sum_{i=1}^In_i^{-\gamma}\E_{P}\left[\tr\left(\Omega_i\right)\right]
    \\
    &\gtrsim N^{-1}n_{\max}^{\gamma}\sum_{i=1}^In_i^{1-\gamma}\geq 1 ,
\end{split}
\end{equation*}
we have, by Lemma~\ref{lem:unif_add}, proved the claim~\eqref{eq:beta_decomp_denominator}.

Finally, combining \eqref{eq:beta_decomp_numerator}~with~\eqref{eq:beta_decomp_denominator} with Lemma~\ref{lem:unif_slutsky} completes the proof;
\begin{equation*}
    \lim_{I\to\infty}\sup_{P\in\mathcal{P}_I}\sup_{t\in\R}\left|\PP_P\left(N^{\frac{1}{2}}{V}^{-\frac{1}{2}}\big(\hat{\beta}-\beta\big)\leq t\right)-\Phi\left(t\right)\right|=0.
\end{equation*}
\end{proof}

\begin{proof}[Proof of Theorem~\ref{thm:main} (ii)]
Note that
\begin{equation*}
    \hat{V} = \bigg(N^{-1}\sum_{k=1}^K\sum_{i\in\mathcal{I}_k}\phi\big(S_i;\hat{\eta}^{(k)},\hat{\nu}^{(k)}\big)\bigg)^{-2}\bigg(N^{-1}\sum_{k=1}^K\sum_{i\in\mathcal{I}_k}\varphi^2\big(S_i;\hat{\beta},\hat{\eta}^{(k)},\hat{\nu}^{(k)}\big)\bigg),
\end{equation*}
and so we can decompose
\begin{align}\label{eq:CI_beta_decomp}
    N^{\frac{1}{2}}\hat{V}^{-\frac{1}{2}}\big(\hat{\beta}-\beta\big) =& \left(\frac{N^{-1}\sum_{k=1}^K\sum_{i\in\mathcal{I}_k}\varphi^2\big(S_i;\hat{\beta},\hat{\eta}^{(k)},\hat{\nu}^{(k)}\big)}{N^{-1}\sum_{i=1}^I\E_{P}\left[\varphi^2(S_i;\beta,\eta_0,\nu^*)\right]}\right)^{-\frac{1}{2}}\cdot  \nonumber
    \\
    &\qquad\cdot\left(\sum_{i=1}^I\E_{P}\left[\varphi^2(S_i;\beta,\eta_0,\nu^*)\right]\right)^{-\frac{1}{2}}\left(\sum_{k=1}^K\sum_{i\in\mathcal{I}_k}\varphi\big(S_i;\beta,\hat{\eta}^{(k)},\hat{\nu}^{(k)}\big)\right)
    \\
    =& \left(\frac{N^{-1}n_{\max}^{\kappa}\sum_{i=1}^I\E_P\left[\varphi^2(S_i;\beta,\eta_0,\nu^*)\right] + M_3 + M_4}{N^{-1}n_{\max}^{\kappa}\sum_{i=1}^I\E_{P}\left[\varphi^2(S_i;\beta,\eta_0,\nu^*)\right]}\right)^{-\frac{1}{2}} \cdot
    \\
    &\qquad\cdot
    \left(\frac{N^{-\frac{1}{2}}n_{\max}^{\frac{\kappa}{2}}\sum_{i=1}^I\varphi(S_i;\beta,\eta_0,\nu^*)+M_1}{\left(N^{-1}n_{\max}^{\kappa}\sum_{i=1}^I\E_P\left[\varphi^2(S_i;\beta,\eta_0,\nu^*)\right]\right)^{\frac{1}{2}}}\right),
\end{align}
where
\begin{equation*}
    \begin{split}
    M_3 & := N^{-1}n_{\max}^{\kappa}\sum_{k=1}^K\sum_{i\in\mathcal{I}_k}\left[\varphi^2\big(S_i;\hat{\beta},\hat{\eta}^{(k)},\hat{\nu}^{(k)}\big)-\varphi^2\left(S_i;\beta,\eta_0,\nu^*\right)\right], \\
    M_4 & := N^{-1}n_{\max}^{\kappa}\sum_{i=1}^I\left[\varphi^2(S_i;\beta,\eta_0,\nu^*)-\E_{P}\left[\varphi^2(S_i;\beta,\eta_0,\nu^*)\right]\right].
    \end{split}
\end{equation*}

Note that the latter term in this decomposition is considered in the proof of Theorem~\ref{thm:main}; specifically the result~\eqref{eq:beta_decomp_numerator}. Therefore it remains for us to show that
\begin{equation}\label{eq:beta_CI_decomp_denominator}
    \frac{N^{-1}n_{\max}^{\kappa}\sum_{i=1}^I\E_P\left[\varphi^2(S_i;\beta,\eta_0,\nu^*)\right] + M_3 + M_4}{N^{-1}n_{\max}^{\kappa}\sum_{i=1}^I\E_{P}\left[\varphi^2(S_i;\beta,\eta_0,\nu^*)\right]} = 1 + o_{\mathcal{P}}(1).
\end{equation}
From Lemmas~\ref{lem:M3}~and~\ref{lem:M4} we have that \(M_3\) and \(M_4\) are both \(o_{\mathcal{P}}(1)\). Further (as shown in the proof of Theorem~\ref{thm:main}),
\begin{gather*}
    N^{-1}n_{\max}^{\kappa}\sum_{i=1}^I\E_P\left[\varphi^2(S_i;\beta,\eta_0,\nu^*)\right]\gtrsim 1,
\end{gather*}
and so combining these results with Lemmas~\ref{lem:unif_add}~and~\ref{lem:unif_times} yields the result~\eqref{eq:beta_CI_decomp_denominator}. Finally, the results~\eqref{eq:beta_decomp_numerator}~and~\eqref{eq:beta_CI_decomp_denominator} in conjunction with Lemma~\ref{lem:unif_slutsky} completes the proof.

\end{proof}

\subsection{Auxiliary results for the proof of Theorem~\ref{thm:main}: bounding terms}

\begin{lemma}\label{lem:hatW-W}
Under the setup and notation of Theorem~\ref{thm:main}, we have that
\begin{equation*}
    \big\|\hat{W}_i-W_i^*\big\|_{P|S_{\mathcal{I}_k^c},2} \lesssim n_i^{\frac{1+\tau}{2}}\mathcal{R}_{\sigma}^{\frac{1}{2}}+n_i\mathcal{R}_{\rho}^{\frac{1}{2}}.
\end{equation*}
\end{lemma}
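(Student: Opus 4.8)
The plan is to bound the operator-norm error by separating the contribution of the standard-deviation estimate from that of the correlation estimate via a telescoping product decomposition, then taking conditional $L^2$ norms coordinatewise. Throughout I condition on $S_{\mathcal{I}_k^c}$, under which $\hat{\sigma}^{(k)}$ and $\hat{\rho}^{(k)}$ are deterministic functions, so the only randomness is in $X_i$ (with $i\in\mathcal{I}_k$, independent of the training data), and the rates $\mathcal{R}_\sigma$, $\mathcal{R}_\rho$ of Assumptions A3.1--A3.2 directly control the per-coordinate errors. Writing $\hat{s}:=1/\hat{\sigma}^{(k)}$, $s^*:=1/\sigma^*$ and taking $c^*=1$ without loss of generality, boundedness of $\sigma$ away from $0$ and $\infty$ gives $\E_P[(\hat{s}(X_{ij})-s^*(X_{ij}))^2\mid S_{\mathcal{I}_k^c}]\lesssim\mathcal{R}_\sigma$ for each $j$, and likewise $\E_P[(\hat\rho^{(k)}(X_{ij},X_{ij'})-\rho^*(X_{ij},X_{ij'}))^2\mid S_{\mathcal{I}_k^c}]\lesssim\mathcal{R}_\rho$ for $j\neq j'$.

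First I would write, suppressing the argument $X_i$ and with $D_{\hat s}:=\diag(\hat s(X_{i1}),\ldots,\hat s(X_{in_i}))$ and so on,
\[
\hat{W}_i - W_i^* = (D_{\hat s}-D_{s^*})\,C_{\hat\rho}^{-1}\,D_{\hat s} \;+\; D_{s^*}\,(C_{\hat\rho}^{-1}-C_{\rho^*}^{-1})\,D_{\hat s} \;+\; D_{s^*}\,C_{\rho^*}^{-1}\,(D_{\hat s}-D_{s^*}),
\]
and bound each summand in operator norm by submultiplicativity. The diagonal factors satisfy $\|D_{\hat s}\|_{\op},\|D_{s^*}\|_{\op}\lesssim 1$, while $\|D_{\hat s}-D_{s^*}\|_{\op}=\max_j|\hat s(X_{ij})-s^*(X_{ij})|$. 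The inverse-correlation factors are controlled using Assumptions A3.3--A3.4 together with the row-sum bound A3.5: from $W^*=D_{s^*}C_{\rho^*}^{-1}D_{s^*}$ and $\Lambda_{\max}(W^*)\lesssim1$ one obtains $\|C_{\rho^*}^{-1}\|_{\op}\lesssim\Lambda_{\min}(D_{s^*})^{-2}\Lambda_{\max}(W^*)\lesssim1$, while $\|C_{\hat\rho}^{-1}\|_{\op}$ is controlled uniformly through the conditioning of the parametric working-correlation family (A3.3' of Corollary~\ref{thm:(s,theta)-all} for the three structures, or a Weyl perturbation argument from $C_{\rho^*}$ in general); it is here that the exponent $\tau$ governing the off-diagonal row sums enters the standard-deviation term.

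For the two outer terms I would then use $\E_P[\max_j(\hat s(X_{ij})-s^*(X_{ij}))^2\mid S_{\mathcal{I}_k^c}]\le\sum_{j=1}^{n_i}\E_P[(\hat s(X_{ij})-s^*(X_{ij}))^2\mid S_{\mathcal{I}_k^c}]\lesssim n_i\mathcal{R}_\sigma$, which after multiplication by the $\|C^{-1}\|_{\op}$ factor yields the $n_i^{(1+\tau)/2}\mathcal{R}_\sigma^{1/2}$ contribution. For the middle term I would apply the resolvent identity $C_{\hat\rho}^{-1}-C_{\rho^*}^{-1}=-C_{\hat\rho}^{-1}(C_{\hat\rho}-C_{\rho^*})C_{\rho^*}^{-1}$ and dominate the operator norm of the hollow matrix $C_{\hat\rho}-C_{\rho^*}$ by its Frobenius norm, giving $\E_P[\|C_{\hat\rho}-C_{\rho^*}\|_{\F}^2\mid S_{\mathcal{I}_k^c}]=\sum_{j\neq j'}\E_P[(\hat\rho^{(k)}-\rho^*)^2\mid S_{\mathcal{I}_k^c}]\lesssim n_i^2\mathcal{R}_\rho$, i.e.\ the $n_i\mathcal{R}_\rho^{1/2}$ contribution. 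Combining the three bounds by the triangle inequality gives the claim.

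I expect the main obstacle to be the uniform control of $\|C_{\hat\rho}^{-1}\|_{\op}$: since $\hat\rho^{(k)}$ is data-dependent, $C_{\hat\rho}^{-1}$ could in principle be badly conditioned, and one cannot simply transfer the bound on $\|C_{\rho^*}^{-1}\|_{\op}$ because $\|C_{\hat\rho}-C_{\rho^*}\|_{\op}$ need not be $o(1)$ deterministically. Resolving this requires either the structural boundedness of the parametric correlation parameter (ensuring $C_{\hat\theta}$ stays well-conditioned for the equicorrelated, autoregressive and nested families) or a careful perturbation argument, and it is precisely this step that ties the standard-deviation rate to the row-sum exponent $\tau$.
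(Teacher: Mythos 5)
Your argument is correct and reaches the stated bound, but it proceeds by a genuinely different decomposition from the paper's. The paper works at the level of the working \emph{covariance} $\hat{\Sigma}_i = D_{\hat{\sigma}}C_{\hat{\rho}}D_{\hat{\sigma}}$: it applies a single resolvent identity $\hat{\Sigma}_i^{-1}-\Sigma_i^{*-1} = \hat\Sigma_i^{-1}(\Sigma_i^*-\hat\Sigma_i)\Sigma_i^{*-1}$, absorbs $\|\hat\Sigma_i^{-1}\|_{\op}\|\Sigma_i^{*-1}\|_{\op}\lesssim 1$, and then bounds $\|\hat\Sigma_i-\Sigma_i^*\|_{\op}$ by its Frobenius norm computed entrywise, which couples the $\sigma$- and $\rho$-errors through the products $\rho(X_{ij},X_{ij'})\sigma(X_{ij})\sigma(X_{ij'})$; the exponent $\tau$ arises there from the off-diagonal row sums $\sum_{j'\neq j}\hat\rho^{(k)}(X_{ij},X_{ij'})^2\lesssim n_i^{\tau}$ multiplying the squared $\sigma$-errors. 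You instead telescope the product $W=D_sC^{-1}D_s$ directly into three operator-norm-bounded pieces, handle the diagonal factors via a union/max bound $\E[\max_j(\hat s-s^*)^2]\le n_i\mathcal{R}_\sigma$, and reserve the resolvent identity for the correlation factor alone, dominating $\|C_{\hat\rho}-C_{\rho^*}\|_{\op}$ by its Frobenius norm. Both routes deliver the claimed rate; yours decouples the two error sources more cleanly, at the price of needing $\|C_{\hat\rho}^{-1}\|_{\op}$ and $\|C_{\rho^*}^{-1}\|_{\op}$ bounded, whereas the paper only needs the (equivalent) boundedness of $\Lambda_{\max}(\hat W_i)$ and $\Lambda_{\max}(W_i^*)$ --- an assumption both proofs ultimately draw from the structural conditions on the parametric correlation families (Assumption A3.3 and its analogue A3.3' in Corollary~\ref{thm:(s,theta)-all}), so the obstacle you flag is real but is resolved identically in the paper. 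One small inaccuracy: you attribute the factor $n_i^{\tau/2}$ in the $\sigma$-term to $\|C_{\hat\rho}^{-1}\|_{\op}$, but under the boundedness you invoke that norm is $O(1)$ and your argument actually yields the sharper $n_i^{1/2}\mathcal{R}_\sigma^{1/2}\le n_i^{(1+\tau)/2}\mathcal{R}_\sigma^{1/2}$; the $\tau$ genuinely enters only in the paper's Frobenius-norm accounting, so your bound is consistent with (indeed no weaker than) the lemma as stated.
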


\begin{proof}
We define
\begin{equation*}
    \Sigma_i(\sigma,\rho) := \Big(\big(\ind_{\{j=k\}} + \ind_{\{j\neq k\}}\rho(X_{ij},X_{ik})\big)\sigma(X_{ij})\sigma(X_{ik})\Big)_{(j,k)\in[n_i]^2},
\end{equation*}
and further $\Sigma_i^* := \Sigma_i(\sigma^*,\rho^*)$ and $\hat{\Sigma}_i := \Sigma_i(\hat{\sigma}^{(k)},\hat{\rho}^{(k)})$ (suppressing dependence on $k\in[K]$).

Note that, as \(\hat{W}_i-W_i^*\asymp \hat{\Sigma}^{-1}_i-\Sigma^{*^{-1}}_i\),
\begin{equation*}
    \big\|\hat{W}_i-W_i^*\big\|_{\op} \lesssim \big\|\hat{\Sigma}^{-1}_i\big(\Sigma^*_i-\hat{\Sigma}_i\big)\Sigma^{*^{-1}}_i\big\|_{\op} \leq \big\|\Sigma^{*^{-1}}_i\big\|_{\op}\big\|\hat{\Sigma}^{-1}_i\big\|_{\op}\big\|\hat{\Sigma}_i-\Sigma^*_i\big\|_{\op} \lesssim \big\|\hat{\Sigma}_i-\Sigma^*_i\big\|_{\op},
\end{equation*}
and further,
\begin{align*}
    \big\|\hat{\Sigma}_i-\Sigma^*_i\big\|_{\op}^2 &\leq \big\|\hat{\Sigma}_i-\Sigma^*_i\big\|_{F}^2 \\
        &= \sum_{j=1}^{n_i}\Big(\hat{\sigma}^{(k)}(X_{ij})^2-\sigma^*(X_{ij})^2\Big)^2 \\
        &\quad + \sum_{j=1}^{n_i}\sum_{j'=1,j'\neq j}^{n_i}\Big(\hat{\rho}^{(k)}(X_{ij},X_{ij'})\hat{\sigma}^{(k)}(X_{ij})\hat{\sigma}^{(k)}(X_{ij'})-\rho^*(X_{ij},X_{ij'})\sigma^*(X_{ij})\sigma^*(X_{ij'})\Big)^2 \\
        &\lesssim \sum_{j=1}^{n_i}\Big(\hat{\sigma}^{(k)}(X_{ij})-\sigma^*(X_{ij})\Big)^2\bigg(1+\sum_{j'=1, j'\neq j}^{n_i}\hat{\rho}^{(k)}(X_{ij},X_{ij'})^2\bigg)
        \\
        &\quad + \sum_{j=1}^{n_i}\sum_{j'=1, j'\neq j}^{n_i}\big(\hat{\rho}^{(k)}(X_{ij},X_{ij'})-\rho^*(X_{ij},X_{ij'})\big)^2,
\end{align*}
where $\norm{\cdot}_F$ denotes the Frobenius norm, and so
\begin{align*}
    \E_{P|S_{\mathcal{I}_k^c}}\left[\big\|\hat{\Sigma}_i-\Sigma^*_i\big\|_{\op}^2\right] &\lesssim
    n_i^{1+\tau}\mathcal{R}_{\sigma} + n_i^2\mathcal{R}_{\rho},
\end{align*}
and hence
\begin{equation*}
    \big\|\hat{W}_i-W_i^*\big\|_{P|S_{\mathcal{I}_k^c},2} = 
    \left(\E_{P|S_{\mathcal{I}_k^c}}\left[\big\|\hat{W}_i-W_i^*\big\|_{\op}^2\right]\right)^{\frac{1}{2}} \lesssim n_i^{\frac{1+\tau}{2}}\mathcal{R}_{\sigma}^{\frac{1}{2}}+n_i\mathcal{R}_{\rho}^{\frac{1}{2}}.
\end{equation*}
completing the proof.
\end{proof}

\begin{lemma}\label{lem:M1}
Under the setup of Theorem~\ref{thm:main},
\begin{equation*}
    M_1 := N^{-\frac{1}{2}}n_{\max}^{\frac{\kappa}{2}}\sum_{k=1}^K\sum_{i\in\mathcal{I}_k}\left[\varphi\big(S_i;\beta,\hat{\eta}^{(k)},\hat{\nu}^{(k)}\big)-\varphi\left(S_i;\beta,\eta_0,\nu^*\right)\right] = o_{\mathcal{P}}(1).
\end{equation*}
\end{lemma}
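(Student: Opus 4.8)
The plan is to exploit cross-fitting together with the (conditional) orthogonality structure of $\varphi$. Since $\hat\eta^{(k)}=(\hat l^{(k)},\hat m^{(k)})$ and $\hat\nu^{(k)}=(\hat\sigma^{(k)},\hat\rho^{(k)})$ are computed from $S_{\mathcal{I}_k^c}$, they become deterministic functions once we condition on $S_{\mathcal{I}_k^c}$, and are independent of $\{(\xi_i,\varepsilon_i,X_i)\}_{i\in\mathcal{I}_k}$. Writing $\Delta_i^m:=m_0(X_i)-\hat m^{(k)}(X_i)$, $\Delta_i^l:=l_0(X_i)-\hat l^{(k)}(X_i)$ and using $l_0=\beta m_0+g_0$, a direct calculation gives $\varphi(S_i;\beta,\eta_0,\nu^*)=\xi_i^\top W_i^*\varepsilon_i$ and
\[
\varphi\big(S_i;\beta,\hat\eta^{(k)},\hat\nu^{(k)}\big)=(\xi_i+\Delta_i^m)^\top\hat W_i(\varepsilon_i+\Delta_i^l-\beta\Delta_i^m).
\]
Setting $\tilde W_i:=\hat W_i-W_i^*$ and expanding, I would split the summand $\varphi(S_i;\beta,\hat\eta^{(k)},\hat\nu^{(k)})-\varphi(S_i;\beta,\eta_0,\nu^*)$ into three \emph{linear} terms $\xi_i^\top\tilde W_i\varepsilon_i$, $\Delta_i^{m\top}W_i^*\varepsilon_i$ and $\xi_i^\top W_i^*(\Delta_i^l-\beta\Delta_i^m)$, together with \emph{quadratic remainder} terms $\xi_i^\top\tilde W_i(\Delta_i^l-\beta\Delta_i^m)$, $\Delta_i^{m\top}\tilde W_i\varepsilon_i$ and $\Delta_i^{m\top}\hat W_i(\Delta_i^l-\beta\Delta_i^m)$.

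First I would treat the three linear terms. Because $\xi_i$ and $\tilde W_i$ are $(D_i,X_i)$-measurable while $\E_P[\varepsilon_i\given D_i,X_i]=0$, and because $W_i^*,\Delta_i^l,\Delta_i^m$ are $X_i$-measurable while $\E_P[\xi_i\given X_i]=0$, each of these terms has conditional mean zero given $S_{\mathcal{I}_k^c}$. Hence, conditionally on $S_{\mathcal{I}_k^c}$, the associated contribution $N^{-1/2}n_{\max}^{\kappa/2}\sum_{i\in\mathcal{I}_k}(\cdot)$ is a sum of independent mean-zero variables, so by Chebyshev it suffices to bound $N^{-1}n_{\max}^{\kappa}\sum_{i\in\mathcal{I}_k}\E_P[(\cdot)^2\given S_{\mathcal{I}_k^c}]$. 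For $\xi_i^\top\tilde W_i\varepsilon_i$ the conditional second moment is at most $n_i^{\mu_\Sigma}\E_P[\|\tilde W_i\|_{\op}^2\|\xi_i\|_2^2\given S_{\mathcal{I}_k^c}]$ by \ref{A1.2}, which Lemma~\ref{lem:hatW-W} together with the Cauchy--Schwarz inequality and the moment bound \ref{A1.1} controls through $\mathcal{R}_\sigma$ and $\mathcal{R}_\rho$; the terms $\Delta_i^{m\top}W_i^*\varepsilon_i$ and $\xi_i^\top W_i^*(\Delta_i^l-\beta\Delta_i^m)$ are bounded similarly using \ref{A1.2}, \ref{A3.3} and the definitions of $\mathcal{R}_m,\mathcal{R}_l$. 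In each case the surviving powers of $n_{\max}$ paired with $\mathcal{R}_\sigma$, $\mathcal{R}_\rho$, $\mathcal{R}_m$ and $\mathcal{R}_l$ are precisely the factors appearing in Assumption~\ref{ass:n_max}, so each contribution is $o_{\mathcal{P}}(1)$.

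The quadratic terms I would bound in absolute value. The dominant one is the product-of-nuisances term: using $\|W_i^*\|_{\op}\lesssim 1$ from \ref{A3.3} and the definitions of $\mathcal{R}_m,\mathcal{R}_l$, Cauchy--Schwarz gives
\[
\E_P\big[\big|\Delta_i^{m\top}W_i^*(\Delta_i^l-\beta\Delta_i^m)\big|\,\big|\,S_{\mathcal{I}_k^c}\big]\lesssim n_i\,\mathcal{R}_m^{1/2}\big(\mathcal{R}_l\vee\mathcal{R}_m\big)^{1/2},
\]
so $N^{-1/2}n_{\max}^{\kappa/2}\sum_{i\in\mathcal{I}_k}(\cdot)\lesssim N^{1/2}n_{\max}^{\kappa/2}\big(\mathcal{R}_m(\mathcal{R}_l\vee\mathcal{R}_m)\big)^{1/2}$, which is $o_{\mathcal{P}}(1)$ by the product-rate assumption \ref{A2.1} combined with the $\big(N^{-1}\mathcal{R}_m^{-1}(\mathcal{R}_l\vee\mathcal{R}_m)^{-1}\big)^{1/\kappa}$ factor in Assumption~\ref{ass:n_max}. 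The mixed terms $\xi_i^\top\tilde W_i(\Delta_i^l-\beta\Delta_i^m)$ and $\Delta_i^{m\top}\tilde W_i\varepsilon_i$ I would handle again by Cauchy--Schwarz, combining Lemma~\ref{lem:hatW-W} with \ref{A1.1} and the nuisance rates. Finally, since the conditional statements hold uniformly over $P\in\mathcal{P}_I$ and $K$ is finite, a Markov argument (together with the device, used elsewhere in this section, that converts conditional-on-$S_{\mathcal{I}_k^c}$ bounds into unconditional $o_{\mathcal{P}}(1)$ statements) yields $M_1=o_{\mathcal{P}}(1)$. The main obstacle is the bookkeeping: correctly isolating the conditionally mean-zero linear terms, so that only their variances rather than their first moments need be controlled, and matching each of the six terms' powers of $n_{\max}$ against the precise factor of Assumption~\ref{ass:n_max} designed to absorb it.
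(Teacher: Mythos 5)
Your expansion of $\varphi(S_i;\beta,\hat{\eta}^{(k)},\hat{\nu}^{(k)})-\varphi(S_i;\beta,\eta_0,\nu^*)$ is correct and, up to a finer splitting of $\hat{W}_i$ into $W_i^*+\tilde{W}_i$, coincides with the paper's four-term decomposition; the treatment of the three terms you call linear and of the pure nuisance product $\Delta_i^{m\top}\hat{W}_i(\Delta_i^l-\beta\Delta_i^m)$ matches the paper's argument. However, there is a genuine gap in your handling of the two mixed remainder terms $\xi_i^{\top}\tilde{W}_i(\Delta_i^l-\beta\Delta_i^m)$ and $\Delta_i^{m\top}\tilde{W}_i\varepsilon_i$. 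You classify these as ``quadratic'' and propose to bound them in first absolute moment via Cauchy--Schwarz. A termwise first-moment bound forfeits the $\sqrt{I}$ cancellation and, after multiplying by $N^{-1/2}n_{\max}^{\kappa/2}\,|\mathcal{I}_k|$, leaves quantities of the order $N^{1/2}$ times products such as $\mathcal{R}_m^{1/2}\mathcal{R}_{\sigma}^{1/2}$ or $(\mathcal{R}_l\vee\mathcal{R}_m)^{1/2}\mathcal{R}_{\sigma}^{1/2}$ (up to powers of $n_{\max}$). The assumptions do not make these $o_{\mathcal{P}}(1)$: \ref{A2.1} controls only $N\mathcal{R}_m(\mathcal{R}_l\vee\mathcal{R}_m)$, and $\mathcal{R}_{\sigma}$ is only required to be $o_{\mathcal{P}}(1)$ with no coupling to $N$. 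For instance, with $n$ fixed, $\mathcal{R}_{\sigma}\asymp 1/\log N$, $\mathcal{R}_l\asymp N^{-1/2}$ and $\mathcal{R}_m\asymp N^{-3/4}$, all hypotheses hold yet $N^{1/2}\mathcal{R}_{\sigma}^{1/2}\mathcal{R}_l^{1/2}\to\infty$, so the proposed bound cannot close.

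The fix is that both of these terms are in fact conditionally mean zero given $S_{\mathcal{I}_k^c}$ (the first because $\E_P[\xi_i\given X_i]=0$ while $\tilde{W}_i,\Delta_i^l,\Delta_i^m$ are $X_i$-measurable, the second because $\E_P[\varepsilon_i\given D_i,X_i]=0$), so they belong with your ``linear'' terms and must be controlled through the variance of the sum by Chebyshev; this is exactly how the paper avoids the issue, by keeping $\hat{W}_i$ intact in its term $\RN{2}$ (rather than splitting off $\tilde{W}_i$) and by grouping $\xi_i^{\top}\tilde{W}_i\varepsilon_i$ with $\mathcal{E}_m^{(k)}(X_i)^{\top}\tilde{W}_i\varepsilon_i$ in its term $\RN{4}$. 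Your stated aim of ``isolating the conditionally mean-zero linear terms'' therefore misclassifies two of the six summands. A further minor elision: in the variance bounds involving $\tilde{W}_i$ you need the almost sure bound $\|\tilde{W}_i\|_{\op}\lesssim 1$ to replace $\|\tilde{W}_i\|_{\op}^2$ by $\|\tilde{W}_i\|_{\op}$ before invoking Lemma~\ref{lem:hatW-W}, since that lemma controls only the second moment of $\|\tilde{W}_i\|_{\op}$, not the fourth.
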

\begin{proof}
For \(k\in[K]\) and \(i\in\mathcal{I}_k\), define the \((\hat{l}^{(k)},\hat{m}^{(k)})\) errors
    \begin{equation*}
    \begin{gathered}
        \mathcal{E}_m^{(k)}(X_i) := \hat{m}^{(k)}\left(X_i\right)-m\left(X_i\right), \qquad \mathcal{E}_l^{(k)}(X_i) := \hat{l}^{(k)}\left(X_i\right)-l\left(X_i\right),
        \\
        \mathcal{E}_{l-\beta m}^{(k)}(X_i) := \mathcal{E}_l^{(k)}(X_i) - \beta\mathcal{E}_m^{(k)}(X_i).
    \end{gathered}
    \end{equation*}
The term \(M_1\) can then be decomposed as \(M_1=\sum_{k=1}^KM_{1,k}\), where
\begin{multline*}
    M_{1,k} := \underbrace{N^{-\frac{1}{2}}n_{\max}^{\frac{\kappa}{2}}\sum_{i\in\mathcal{I}_k}\mathcal{E}_{l-\beta m}^{(k)}(X_i)^{\top}\hat{W}_i\mathcal{E}_m^{(k)}(X_i)}_{\RN{1}}
    + \underbrace{N^{-\frac{1}{2}}n_{\max}^{\frac{\kappa}{2}}\sum_{i\in\mathcal{I}_k}\mathcal{E}_{l-\beta m}^{(k)}(X_i)^{\top}\hat{W}_i\xi_i}_{\RN{2}} \\
    + \underbrace{N^{-\frac{1}{2}}n_{\max}^{\frac{\kappa}{2}}\sum_{i\in\mathcal{I}_k}\mathcal{E}_m^{(k)}(X_i)^{\top}W_i^*\varepsilon_i}_{\RN{3}}
    + \underbrace{N^{-\frac{1}{2}}n_{\max}^{\frac{\kappa}{2}}\sum_{i\in\mathcal{I}_k}\left(\xi_i+\mathcal{E}_m^{(k)}(X_i)\right)^{\top}\left(\hat{W}_i-W_i^*\right)\varepsilon_i}_{\RN{4}}.
\end{multline*}
We show that, for an arbitrary \(k\in[K]\), each of these four terms are \(o_{\mathcal{P}}(1)\) in turn:

    \paragraph{Term \RN{1}:} Note that for an arbitrary \(P\in\mathcal{P}_I\),
        \begin{align*}
            &\E_{P|S_{\mathcal{I}_k^c}}\left[\bigg|N^{-\frac{1}{2}}n_{\max}^{\frac{\kappa}{2}}\sum_{i\in\mathcal{I}_k}\mathcal{E}_{l-\beta m}^{(k)}(X_i)^{\top}\hat{W}_i\mathcal{E}_m^{(k)}(X_i)\bigg|\right]
            \\
            \leq& N^{-\frac{1}{2}}n_{\max}^{\frac{\kappa}{2}}\E_{P|S_{\mathcal{I}_k^c}}\left[\sum_{i\in\mathcal{I}_k}\left|\mathcal{E}_{l-\beta m}^{(k)}(X_i)^{\top}\hat{W}_i\mathcal{E}_m^{(k)}(X_i)\right|\right]
            \\
            \leq& N^{-\frac{1}{2}}n_{\max}^{\frac{\kappa}{2}}\sum_{i\in\mathcal{I}_k}\E_{P|S_{\mathcal{I}_k^c}}\left[\mathcal{E}_{l-\beta m}^{(k)}(X_i)^{\top}\hat{W}_i\mathcal{E}_{l-\beta m}^{(k)}(X_i)\right]^{\frac{1}{2}}\E_{P|S_{\mathcal{I}_k^c}}\left[\mathcal{E}_m^{(k)}(X_i)^{\top}\hat{W}_i\mathcal{E}_m^{(k)}(X_i)\right]^{\frac{1}{2}}
            \\
            \lesssim& N^{-\frac{1}{2}}n_{\max}^{\frac{\kappa}{2}}\sum_{i\in\mathcal{I}_k}\big\|\mathcal{E}_{l-\beta m}^{(k)}(X_i)\big\|_{P|S_{\mathcal{I}_k^c},2}\big\|\mathcal{E}_m^{(k)}(X_i)\big\|_{P|S_{\mathcal{I}_k^c},2}
            \\
            \leq& N^{-\frac{1}{2}}n_{\max}^{\frac{\kappa}{2}}\sum_{i\in\mathcal{I}_k}n_i\mathcal{R}_{m}^{\frac{1}{2}}\left(\mathcal{R}_{l}+\mathcal{R}_{m}\right)^{\frac{1}{2}} = N^{\frac{1}{2}}n_{\max}^{\frac{\kappa}{2}}\mathcal{R}_{m}^{\frac{1}{2}}\left(\mathcal{R}_{l}+\mathcal{R}_m\right)^{\frac{1}{2}} = o_{\mathcal{P}}(1),
        \end{align*}
    by Assumption~\ref{ass:n_max}. The result that this term \(\RN{1}\) is \(o_{\mathcal{P}}(1)\) follows directly by Lemma~\ref{lem:unif_ciiu}.
    
    \paragraph{Term \RN{2}:} First note that \(\E_{P|S_{\mathcal{I}_k^c}}\left[\mathcal{E}_{l-\beta m}^{(k)}(X_i)^{\top}\hat{W}_i\xi_i\right]=0\) for any \(P\in\mathcal{P}_I\) and \(i\in\mathcal{I}_k\). Then we have
    \begin{align*}
        &\E_{P|S_{\mathcal{I}_k^c}}\left[\bigg(N^{-\frac{1}{2}}n_{\max}^{\frac{\kappa}{2}}\sum_{i\in\mathcal{I}_k}\mathcal{E}_{l-\beta m}^{(k)}(X_i)^{\top}\hat{W}_i\xi_i\bigg)^2\right]\\
        =& N^{-1}n_{\max}^{\kappa}\sum_{i\in\mathcal{I}_k}\E_{P|S_{\mathcal{I}_k^c}}\left[\left(\mathcal{E}_{l-\beta m}^{(k)}(X_i)^{\top}\hat{W}_i\xi_i\right)^2\right]\\
        =& N^{-1}n_{\max}^{\kappa}\sum_{i\in\mathcal{I}_k}\E_{P|S_{\mathcal{I}_k^c}}\left[\mathcal{E}_{l-\beta m}^{(k)}(X_i)^{\top}\hat{W}_i\Omega_i\hat{W}_i\mathcal{E}_{l-\beta m}^{(k)}(X_i)\right] \\
        \leq& N^{-1}n_{\max}^{\kappa}\sum_{i\in\mathcal{I}_k}\E_{P|S_{\mathcal{I}_k^c}}\left[\Lambda_{\max}(\Omega_i)\norm{\hat{W}_i}_{\op}^2\big\|\mathcal{E}_{l-\beta m}^{(k)}(X_i)\big\|_{P|S_{\mathcal{I}_k^c},2}^2\right] \\
        \lesssim& N^{-1}n_{\max}^{\kappa+\mu_{\Omega}}\left(\mathcal{R}_{l}+\mathcal{R}_{m}\right)
        = o_{\mathcal{P}}(1),
    \end{align*}
    by Assumption~\ref{ass:n_max}. The result that this term \(\RN{2}\) is \(o_{\mathcal{P}}(1)\) follows directly by Lemma~\ref{lem:unif_ciiu}.
    
    \paragraph{Term \RN{3}:} First note that \(\E_{P|S_{\mathcal{I}_k^c}}\left[\mathcal{E}^{m}_k(X_i)^{\top}\hat{W}_i\varepsilon_i\right]=0\) for any \(P\in\mathcal{P}_I\) and \(i\in\mathcal{I}_k\). Then
    \begin{align*}
        &\E_{P|S_{\mathcal{I}_k^c}}\left[\bigg(N^{-\frac{1}{2}}n_{\max}^{\frac{\kappa}{2}}\sum_{i\in\mathcal{I}_k}\mathcal{E}^{m}_k(X_i)^{\top}W_i^*\varepsilon_i\bigg)^2\right]
        \\
        =& N^{-1}n_{\max}^{\kappa}\sum_{i\in\mathcal{I}_k}\E_{P|S_{\mathcal{I}_k^c}}\left[\left(\mathcal{E}^{m}_k(X_i)^{\top}W_i^*\varepsilon_i\right)^2\right]
        \\
        =& N^{-1}n_{\max}^{\kappa}\sum_{i\in\mathcal{I}_k}\E_{P|S_{\mathcal{I}_k^c}}\left[\mathcal{E}_m^{(k)}(X_i)^{\top}W_i^*\Sigma_iW_i^*\mathcal{E}_m^{(k)}(X_i)\right] 
        \\
        \leq& N^{-1}n_{\max}^{\kappa}\sum_{i\in\mathcal{I}_k}\E_{P|S_{\mathcal{I}_k^c}}\left[\Lambda_{\max}(W_i^{*\frac{1}{2}}\Sigma_iW_i^{*\frac{1}{2}})\norm{W_i^*}_{\op}\big\|\mathcal{E}_m^{(k)}(X_i)\big\|_2^2\right] 
        \\
        \lesssim& N^{-1}n_{\max}^{\kappa}\sum_{i\in\mathcal{I}_k}n_i^{1+\alpha}\mathcal{R}_{m}
        \leq n_{\max}^{\kappa+\alpha}\mathcal{R}_{m}
        = o_{\mathcal{P}}(1),
    \end{align*}
    by Assumption~\ref{ass:n_max}. The result that this term \(\RN{2}\) is \(o_{\mathcal{P}}(1)\) follows directly by Lemma~\ref{lem:unif_ciiu}.
    
    \paragraph{Term \RN{4}:} Similarly, note that \(\E_{P|S_{\mathcal{I}_k^c}}\Big[\Big(\xi_i+\mathcal{E}_m^{(k)}(X_i)\Big)^{\top}\big(\hat{W}_i-W_i^*\big)\varepsilon_i\Big]=0\) for any $P\in\mathcal{P}_I$ and $i\in\mathcal{I}_k$. Then
    \begin{align*}
        &\E_{P|S_{\mathcal{I}_k^c}}\bigg[\Big(\Big(\xi_i+\mathcal{E}_m^{(k)}(X_i)\Big)^{\top}\big(\hat{W}_i-W_i^*\big)\varepsilon_i\Big)^2\bigg]
        \\
        =& \E_{P|S_{\mathcal{I}_k^c}}\left[\left(\xi_i+\mathcal{E}_m^{(k)}(X_i)\right)^{\top}\big(\hat{W}_i-W_i^*\big)\Sigma_i\big(\hat{W}_i-W_i^*\big)\left(\xi_i+\mathcal{E}_m^{(k)}(X_i)\right)\right]
        \\
        \leq& \E_{P|S_{\mathcal{I}_k^c}}\left[\Lambda_{\max}(\Sigma_i)\big\|\hat{W}_i-W_i^*\big\|_{\op}^2\left(\norm{\xi_i}_2^2+\big\|\mathcal{E}_m^{(k)}(X_i)\big\|_2^2\right)\right]
        \\
        \leq& \E_{P|S_{\mathcal{I}_k^c}}\left[\Lambda_{\max}(\Sigma_i)\big\|\hat{W}_i-W_i^*\big\|_{\op}\left(\norm{\xi_i}_2^2+\big\|\mathcal{E}_m^{(k)}(X_i)\big\|_2^2\right)\right]
        \\
        \lesssim& n_i^{\mu_{\Sigma}}\big\|\hat{W}_i-W_i^*\big\|_{P|S_{\mathcal{I}_k^c},2}\Big(\norm{\xi_i}_{P|S_{\mathcal{I}_k^c},4}^2+\big\|\mathcal{E}_m^{(k)}(X_i)\big\|_{P|S_{\mathcal{I}_k^c},4}^2\Big)
        \\
        \lesssim& n_i^{1+\mu_{\Sigma}}\Big(n_i^{\frac{1+\tau}{2}}\mathcal{R}_{\sigma}^{\frac{1}{2}}+n_i\mathcal{R}_{\rho}^{\frac{1}{2}}\Big)
    \end{align*}
    with the second inequality holding because \(\big\|\hat{W}_i-W_i^*\big\|_{\op}\leq\big\|\hat{W}_i\big\|_{\op}+\big\|W_i^*\big\|_{\op}\lesssim 1\) almost surely, and the fourth inequality obtained by invoking Lemma~\ref{lem:hatW-W}. Therefore
    \begin{align*}
        &\E_{P|S_{\mathcal{I}_k^c}}\bigg[\Big(N^{-\frac{1}{2}}n_{\max}^{\frac{\kappa}{2}}\sum_{i\in\mathcal{I}_k}\left(\xi_i+\mathcal{E}_m^{(k)}(X_i)\right)^{\top}\big(\hat{W}_i-W_i^*\big)\varepsilon_i\Big)^2\bigg]
        \\
        \leq& N^{-1}n_{\max}^{\kappa}\sum_{i\in\mathcal{I}_k}\E_{P|S_{\mathcal{I}_k^c}}\bigg[\Big(\left(\xi_i+\mathcal{E}_m^{(k)}(X_i)\right)^{\top}\big(\hat{W}_i-W_i^*\big)\varepsilon_i\Big)^2\bigg]
        \\
        \lesssim& n_{\max}^{\kappa+\mu_{\Sigma}}\Big(n_{\max}^{\frac{1+\tau}{2}}\mathcal{R}_{\sigma}^{\frac{1}{2}}+n_i\mathcal{R}_{\rho}^{\frac{1}{2}}\Big)
        = o_{\mathcal{P}}(1),
    \end{align*}
    by Assumption~\ref{ass:n_max}. The result that this term \(\RN{4}\) is \(o_{\mathcal{P}}(1)\) follows directly by Lemma~\ref{lem:unif_ciiu}.

\medskip\noindent
All four terms are therefore \(o_{\mathcal{P}}(1)\) and so applying Lemma~\ref{lem:unif_add} completes the proof.
\end{proof}

\begin{lemma}\label{lem:M2}
Under the setup of Theorem~\ref{thm:main},
\begin{equation*}
    M_2 := N^{-1}n_{\max}^{\gamma}\sum_{k=1}^K\sum_{i\in\mathcal{I}_k}\left[\phi\big(S_i;\hat{\eta}^{(k)},\hat{\nu}^{(k)}\big) - \phi\left(S_i;\eta_0,\nu^*\right) \right] = o_{\mathcal{P}}(1).
\end{equation*}
\end{lemma}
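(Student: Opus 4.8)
The plan is to mirror the proof of Lemma~\ref{lem:M1}, exploiting the cross-fitting so that, conditionally on the out-of-fold data $S_{\mathcal{I}_k^c}$, both $\hat{W}_i$ and $\mathcal{E}_m^{(k)}(X_i) := \hat{m}^{(k)}(X_i) - m_0(X_i)$ act as fixed functions of $X_i$ while the groups $\{S_i\}_{i\in\mathcal{I}_k}$ remain independent with $\E_P[\xi_i\given X_i]=0$. Since $\phi$ depends on the nuisances only through $m$ and the weight matrix, writing $D_i - \hat{m}^{(k)}(X_i) = \xi_i - \mathcal{E}_m^{(k)}(X_i)$ and expanding the quadratic form gives
\begin{equation*}
\phi\big(S_i;\hat{\eta}^{(k)},\hat{\nu}^{(k)}\big) - \phi\left(S_i;\eta_0,\nu^*\right) = \xi_i^\top(\hat{W}_i - W_i^*)\xi_i - 2\,\xi_i^\top \hat{W}_i \mathcal{E}_m^{(k)}(X_i) + \mathcal{E}_m^{(k)}(X_i)^\top \hat{W}_i \mathcal{E}_m^{(k)}(X_i),
\end{equation*}
whose three summands I label \RN{1}, \RN{2} and \RN{3}. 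It then suffices to show that the corresponding three pieces of $N^{-1}n_{\max}^{\gamma}\sum_{k}\sum_{i\in\mathcal{I}_k}(\cdot)$ are each $o_{\mathcal{P}}(1)$, after which Lemma~\ref{lem:unif_add} and finiteness of $K$ close the argument. For every piece I would bound a conditional ($\E_{P|S_{\mathcal{I}_k^c}}$) moment and then invoke Lemma~\ref{lem:unif_ciiu} to pass to the unconditional statement.

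Term \RN{3} carries no mean-zero structure, so I would bound its conditional first absolute moment directly: using $\norm{\hat{W}_i}_{\op}\lesssim 1$ and $\E_{P|S_{\mathcal{I}_k^c}}\norm{\mathcal{E}_m^{(k)}(X_i)}_2^2 \lesssim n_i \mathcal{R}_m$ yields a bound of order $n_{\max}^{\gamma}\mathcal{R}_m$, which is $o(1)$ because Assumption~\ref{ass:n_max} forces $n_{\max}=o(\mathcal{R}_m^{-1/\gamma})$. Term \RN{2}, by contrast, \emph{does} have conditional mean zero, since $\E_P[\xi_i\given X_i]=0$ gives $\E_{P|S_{\mathcal{I}_k^c}}[\xi_i^\top \hat{W}_i \mathcal{E}_m^{(k)}(X_i)]=0$; here I would bound the conditional second moment, using independence across $i$ to discard cross terms and then $\E_{P|S_{\mathcal{I}_k^c}}[(\mathcal{E}_m^{(k)}(X_i)^\top \hat{W}_i \xi_i)^2]=\E_{P|S_{\mathcal{I}_k^c}}[\mathcal{E}_m^{(k)}(X_i)^\top \hat{W}_i \Omega_i \hat{W}_i \mathcal{E}_m^{(k)}(X_i)]\lesssim n_i^{\mu_\Omega}n_i\mathcal{R}_m$ via $\Lambda_{\max}(\Omega_i)\lesssim n_i^{\mu_\Omega}$ (\ref{A1.2}). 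This gives a second-moment bound of order $N^{-1}n_{\max}^{2\gamma+\mu_\Omega}\mathcal{R}_m$, which is $o(1)$ because Assumption~\ref{ass:n_max} contains the constraint $n_{\max}=o((N\mathcal{R}_m^{-1})^{1/(2\gamma+\mu_\Omega)})$.

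The genuinely distinct term is \RN{1}, which involves the discrepancy in the weights rather than the nuisance mean error and is not conditionally mean zero (its conditional mean is $\tr((\hat{W}_i-W_i^*)\Omega_i)$). For it I would again bound the conditional first absolute moment, applying Cauchy--Schwarz as $\E_{P|S_{\mathcal{I}_k^c}}[\norm{\hat{W}_i-W_i^*}_{\op}\norm{\xi_i}_2^2]\le \norm{\hat{W}_i-W_i^*}_{P|S_{\mathcal{I}_k^c},2}\,\norm{\xi_i}_{P,4}^2$, controlling the first factor by Lemma~\ref{lem:hatW-W} ($\lesssim n_i^{(1+\tau)/2}\mathcal{R}_\sigma^{1/2}+n_i\mathcal{R}_\rho^{1/2}$) and the second by $\norm{\xi_i}_{P,4}^2\lesssim n_i$ (\ref{A1.1}). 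Summing and using $\sum_{i\in\mathcal{I}_k}n_i\le N$ produces a bound of order $n_{\max}^{\gamma+(1+\tau)/2}\mathcal{R}_\sigma^{1/2}+n_{\max}^{\gamma+1}\mathcal{R}_\rho^{1/2}$, each summand being $o(1)$ by the constraints $n_{\max}=o(\mathcal{R}_\sigma^{-1/(2\gamma+\tau+1)})$ and $n_{\max}=o(\mathcal{R}_\rho^{-1/(2(\gamma+1))})$ built into Assumption~\ref{ass:n_max}. I expect no conceptual obstacle: the argument is essentially the bookkeeping already carried out for Lemma~\ref{lem:M1}, and the only real care needed is matching each of the four rate bounds above to the correct factor inside the intricate maximum-group-size condition, while keeping straight which summands are conditionally mean zero (and hence admit the sharper variance bound) and which are not.
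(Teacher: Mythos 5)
Your proposal is correct and follows essentially the same route as the paper's proof: the identical three-term decomposition via $D_i-\hat{m}^{(k)}(X_i)=\xi_i-\mathcal{E}_m^{(k)}(X_i)$, a conditional first-moment bound for the $\mathcal{E}_m^{\top}\hat{W}\mathcal{E}_m$ and $\xi^{\top}(\hat{W}-W^*)\xi$ pieces (the latter via Lemma~\ref{lem:hatW-W}), a conditional second-moment bound exploiting $\E_P[\xi_i\given X_i]=0$ for the cross term, and Lemmas~\ref{lem:unif_ciiu} and~\ref{lem:unif_add} to conclude. If anything, your rate bookkeeping for the weight-discrepancy term (retaining the factor $n_{\max}^{\gamma}$ rather than $n_{\max}$) aligns more precisely with the exponents appearing in Assumption~\ref{ass:n_max} than the displayed bound in the paper.
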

\begin{proof}
We adopt the same notation as in Lemma~\ref{lem:M1}. Then we can decompose \(M_2\) into \(M_2=\sum_{k=1}^KM_{2,k}\), where
\begin{multline*}
    M_{2,k} := \underbrace{N^{-1}n_{\max}^{\gamma}\sum_{i\in\mathcal{I}_k}\mathcal{E}_m^{(k)}(X_i)\hat{W}_i\mathcal{E}_m^{(k)}(X_i)}_{\RN{5}} - 2\underbrace{N^{-1}n_{\max}^{\gamma}\sum_{i\in\mathcal{I}_k}\xi_i^{\top}\hat{W}_i\mathcal{E}_m^{(k)}(X_i)}_{\RN{6}} 
    \\
    + \underbrace{N^{-1}n_{\max}^{\gamma}\sum_{i\in\mathcal{I}_k}\xi_i^{\top}\left(\hat{W}_i-W_i^*\right)\xi_i}_{\RN{7}}.
\end{multline*}
For an arbitrary \(k\in[K]\) we will show that each of these three terms are each \(o_{\mathcal{P}}(1)\):

\paragraph{Term \RN{5}:} We have that
\begin{align*}
    \E_{P|S_{\mathcal{I}_k^c}}\bigg[N^{-1}n_{\max}^{\gamma}\sum_{i\in\mathcal{I}_k}\mathcal{E}_m^{(k)}(X_i)^{\top}\hat{W}_i\mathcal{E}_m^{(k)}(X_i)\bigg] 
    \lesssim& N^{-1}n_{\max}^{\gamma}\sum_{i\in\mathcal{I}_k^c}\big\|\mathcal{E}_m^{(k)}(X_i)\big\|_{P|S_{\mathcal{I}_k^c},2}^2 
    \\
    \lesssim& n_{\max}^{\gamma}\mathcal{R}_{m}=o_{\mathcal{P}}(1),
\end{align*}
and so (by Lemma~\ref{lem:unif_ciiu}) this term is \(o_{\mathcal{P}}(1)\).

\paragraph{Term \RN{6}:} 
\begin{align*}
    &\E_{P|S_{\mathcal{I}_k^c}}\bigg[\Big(N^{-1}n_{\max}^{\gamma}\sum_{i\in\mathcal{I}_k}\mathcal{E}_m^{(k)}(X_i)^{\top}\hat{W}_i\xi\Big)^2\bigg]
    \\
    =& N^{-2}n_{\max}^{2\gamma}\sum_{i\in\mathcal{I}_k}\E_{P|S_{\mathcal{I}_k^c}}\left[\mathcal{E}_m^{(k)}(X_i)^{\top}\hat{W}_i\Omega_i\hat{W}_i\mathcal{E}_m^{(k)}(X_i)^{\top}\right]
    \\
    \lesssim& N^{-2}n_{\max}^{2\gamma}\sum_{i\in\mathcal{I}_k}n_i^{1+\mu_{\Omega}}\mathcal{R}_{m} \leq N^{-1}n_{\max}^{2\gamma+\mu_{\Omega}}\mathcal{R}_{m}=o_{\mathcal{P}}(1),
\end{align*}
and so (again by Lemma~\ref{lem:unif_ciiu}) this term is \(o_{\mathcal{P}}(1)\).

\paragraph{Term \RN{7}:} 
\begin{align*}
    &\E_{P|S_{\mathcal{I}_k^c}}\bigg[\Big|N^{-1}n_{\max}^{\gamma}\sum_{i\in\mathcal{I}_k}\xi_i^{\top}(\hat{W}_i-W_i^*)\xi\Big|\bigg] 
    \\
    \leq& N^{-1}n_{\max}^{\gamma}\sum_{i\in\mathcal{I}_k}\E_{P|S_{\mathcal{I}_k^c}}\left[\big\|\hat{W}_i-W_i^*\big\|_{\op}\norm{\xi_i}_2^2\right] 
    \\
    \lesssim& n_{\max}\Big(n_{\max}^{\frac{1+\tau}{2}}\mathcal{R}_{\sigma}^{\frac{1}{2}}+n_{\max}\mathcal{R}_{\rho}^{\frac{1}{2}}\Big)=o_{\mathcal{P}}(1),
\end{align*}
and so similarly this term is \(o_{\mathcal{P}}(1)\).

\medskip\noindent
Combining these results with Lemma~\ref{lem:unif_add} proves the claim.
\end{proof}

\begin{lemma}\label{lem:M3}
Under the setup of Theorem~\ref{thm:main},
\begin{equation*}
    M_3 := N^{-1}n_{\max}^{\kappa}\sum_{k=1}^K\sum_{i\in\mathcal{I}_k}\left[\varphi^2\big(S_i;\hat{\beta},\hat{\eta}^{(k)},\hat{\nu}^{(k)}\big)-\varphi^2\left(S_i;\beta,\eta_0,\nu^*\right)\right] = o_{\mathcal{P}}(1).
\end{equation*}
\end{lemma}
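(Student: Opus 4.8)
The plan is to reduce the claim to a single sum of squared perturbations and then recycle the four-term decomposition already developed for $M_1$ in Lemma~\ref{lem:M1}. Write $\varphi_i^* := \varphi(S_i;\beta,\eta_0,\nu^*) = \xi_i^\top W_i^*\varepsilon_i$, let $\Delta_i := \varphi(S_i;\hat\beta,\hat\eta^{(k)},\hat\nu^{(k)}) - \varphi_i^*$, and let $\delta_i := \varphi(S_i;\beta,\hat\eta^{(k)},\hat\nu^{(k)}) - \varphi_i^*$ be the summand analysed in Lemma~\ref{lem:M1}. Starting from the identity $a^2-b^2 = 2b(a-b)+(a-b)^2$,
\[
M_3 = 2N^{-1}n_{\max}^{\kappa}\sum_{k=1}^K\sum_{i\in\mathcal I_k}\varphi_i^*\Delta_i + N^{-1}n_{\max}^{\kappa}\sum_{k=1}^K\sum_{i\in\mathcal I_k}\Delta_i^2 =: T_1 + T_2 .
\]
Since $\varphi(S_i;\,\cdot\,,\eta,\nu)$ is affine in its $\beta$-argument with slope $-\phi(S_i;\eta,\nu)$, I would next write $\Delta_i = \delta_i - (\hat\beta-\beta)\,\phi_i$ with $\phi_i := \phi(S_i;\hat\eta^{(k)},\hat\nu^{(k)})$, decoupling the problem into the nuisance-error perturbation $\delta_i$, the scalar $\hat\beta-\beta$, and the oracle term $\varphi_i^*$.

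By Cauchy--Schwarz across groups, $|T_1| \le 2\big(N^{-1}n_{\max}^{\kappa}\sum_i\varphi_i^{*2}\big)^{1/2}\big(N^{-1}n_{\max}^{\kappa}\sum_i\Delta_i^2\big)^{1/2}$ while $T_2$ is the second factor squared. Using $\E_P[\varphi_i^{*2}]\lesssim n_i^{1+\alpha}$ (from \ref{A3.4} and \ref{A1.1}) one gets $N^{-1}n_{\max}^{\kappa}\sum_i\varphi_i^{*2}=O_{\mathcal P}(n_{\max}^{\kappa+\alpha})$ by Markov, so the whole lemma follows once I establish the single rate $N^{-1}n_{\max}^{\kappa}\sum_k\sum_{i\in\mathcal I_k}\Delta_i^2 = o_{\mathcal P}(n_{\max}^{-(\kappa+\alpha)})$. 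Here $\Delta_i^2 \lesssim \delta_i^2 + (\hat\beta-\beta)^2\phi_i^2$. For the second piece I would invoke the rate $\hat\beta-\beta = O_{\mathcal P}(\sqrt{V/N})$ supplied by Theorem~\ref{thm:main}(i), bound $\E_P[\phi_i^2]\lesssim n_i^2$ via \ref{A1.1} and \ref{A3.3}, and close $(\hat\beta-\beta)^2N^{-1}n_{\max}^{\kappa}\sum_i\phi_i^2$ by the group-size constraints of Assumption~\ref{ass:n_max}.

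For $N^{-1}n_{\max}^{\kappa}\sum_i\delta_i^2$ I would reuse the four-term split of $\delta_i$ from Lemma~\ref{lem:M1}: the product-of-errors term $\mathcal{E}_{l-\beta m}^\top\hat W_i\mathcal{E}_m$, the two terms linear in $\xi_i$ or $\varepsilon_i$, and the weight-misspecification term $\xi_i^\top(\hat W_i-W_i^*)\varepsilon_i$. For the three pieces carrying a mean-zero factor, the per-group conditional second moments are exactly those already controlled in the proof of Lemma~\ref{lem:M1}; summed with the $N^{-1}n_{\max}^{\kappa}$ weight (and an extra $n_{\max}^{\alpha}$ for the strengthened target) they are $o(1)$ under Assumption~\ref{ass:n_max}, and Markov's inequality together with Lemmas~\ref{lem:unif_ciiu} and~\ref{lem:unif_add} upgrades this to $o_{\mathcal P}(1)$. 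Everything up to this point parallels Lemmas~\ref{lem:M1} and~\ref{lem:M2} and is routine but lengthy bookkeeping.

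The genuinely new, and hardest, step is the squared double-bias term
\[
N^{-1}n_{\max}^{\kappa}\sum_{i}\E_P\big[(\mathcal{E}_{l-\beta m}^\top\hat W_i\mathcal{E}_m)^2\big] \lesssim N^{-1}n_{\max}^{\kappa}\sum_i\E_P\big[\|\mathcal{E}_{l-\beta m}\|_2^2\,\|\mathcal{E}_m\|_2^2\big].
\]
Unlike in Lemma~\ref{lem:M1}, where this product appeared only at first order and was bounded by $\mathcal R_m^{1/2}(\mathcal R_l\vee\mathcal R_m)^{1/2}$, squaring it produces fourth-order error quantities whose expectations do \emph{not} decay when bounded crudely through the $(4+\delta)$-moment control of \ref{A2.3}. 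The resolution I anticipate is to interpolate (Lyapunov/H\"older) between the decaying $L^2$ rates $\mathcal R_l,\mathcal R_m$ and the bounded $(4+\delta)$-moments of \ref{A1.1} and \ref{A2.3}, extracting a strictly positive power of $\mathcal R_m(\mathcal R_l\vee\mathcal R_m)$ at the cost of extra polynomial factors in $n_i$; the final constraint of Assumption~\ref{ass:n_max}, namely $n_{\max}^{\kappa} = o\big(N^{-1}\mathcal R_m^{-1}(\mathcal R_l\vee\mathcal R_m)^{-1}\big)$, is precisely what absorbs these factors. This double-error control is where the proof is most delicate; the remaining terms are handled exactly as above.
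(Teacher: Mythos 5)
Your algebraic skeleton matches the paper's: the identity $a^2-b^2=2b(a-b)+(a-b)^2$ together with $\Delta_i=\delta_i-(\hat\beta-\beta)\phi_i$ produces exactly the four terms the paper labels \RN{8}--\RN{11}, and your treatment of the $(\hat\beta-\beta)$ pieces (via the normality of $N^{1/2}(\hat\beta-\beta)$ from part (i), which is proved without Lemma~\ref{lem:M3}, so there is no circularity) is essentially the paper's Terms \RN{10} and \RN{11}. However, there are two places where your route either fails or is left unproven.

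First, the cross term. You bound $T_1$ by a global Cauchy--Schwarz against $\big(N^{-1}n_{\max}^{\kappa}\sum_i\varphi_i^{*2}\big)^{1/2}=O_{\mathcal P}(n_{\max}^{(\kappa+\alpha)/2})$, which forces you to prove the \emph{strengthened} rate $N^{-1}n_{\max}^{\kappa}\sum_i\Delta_i^2=o_{\mathcal P}(n_{\max}^{-(\kappa+\alpha)})$. Assumption~\ref{ass:n_max} does not deliver this for the double-bias contribution: the best available bound on $N^{-1}n_{\max}^{\kappa}\sum_i(\mathcal E_{l-\beta m}^{(k)}(X_i)^{\top}\hat W_i\mathcal E_m^{(k)}(X_i))^2$ is of order $Nn_{\max}^{\kappa}\mathcal R_m(\mathcal R_l\vee\mathcal R_m)$, and the relevant constraint in Assumption~\ref{ass:n_max} only makes this $o(1)$, not $o(n_{\max}^{-(\kappa+\alpha)})$. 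The paper avoids this by never applying Cauchy--Schwarz across groups to the cross term: it splits Term~\RN{9} into four pieces and exploits that $\varphi(S_i;\beta,\eta_0,\nu^*)$ times the perturbation is conditionally mean-zero given $S_{\mathcal I_k^c}$ for the first two pieces (von Bahr--Esseen / second-moment arguments), and applies Cauchy--Schwarz only \emph{within} each group for the other two, which costs only $\mathcal R_m^{1/2}$ rather than $\mathcal R_m\cdot n_{\max}^{\kappa+\alpha}$.

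Second, the squared double-bias term, which you correctly identify as the crux but resolve incorrectly. No moment interpolation is needed, and the interpolation you sketch would produce a \emph{fractional} power of $\mathcal R_m(\mathcal R_l\vee\mathcal R_m)$, which the final constraint of Assumption~\ref{ass:n_max} (stated for the full first power) would not absorb. The paper's resolution is elementary: since $\big(\sum_i a_i^2\big)^{1/2}\le\sum_i|a_i|$, one bounds
\[
\E_{P|S_{\mathcal I_k^c}}\bigg[\Big(N^{-1}n_{\max}^{\kappa}\sum_{i\in\mathcal I_k}\big|\mathcal E_{l-\beta m}^{(k)}(X_i)^{\top}\hat W_i\mathcal E_m^{(k)}(X_i)\big|^2\Big)^{1/2}\bigg]
\le N^{-1/2}n_{\max}^{\kappa/2}\sum_{i\in\mathcal I_k}\E_{P|S_{\mathcal I_k^c}}\big[\big|\mathcal E_{l-\beta m}^{(k)}(X_i)^{\top}\hat W_i\mathcal E_m^{(k)}(X_i)\big|\big],
\]
which is precisely the first-order quantity already shown to be $o_{\mathcal P}(1)$ in Term~\RN{1} of Lemma~\ref{lem:M1}; hence the square root, and therefore the term itself, is $o_{\mathcal P}(1)$ with no new moment conditions. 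You should replace both your global Cauchy--Schwarz step and your interpolation plan with these two devices.
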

\begin{proof}
First, decompose the \(M_3\) term as \(M_3=\sum_{k=1}^KM_{3,k}\), where
\begin{align*}
    M_{3,k} &:= N^{-1}n_{\max}^{\kappa}\sum_{i\in\mathcal{I}_k}\left[\varphi^2(S_i;\hat{\beta},\hat{\eta}^{(k)},\hat{\nu}^{(k)})-\varphi^2(S_i;\beta,\eta_0,\nu^*)\right] \\
    &= N^{-1}n_{\max}^{\kappa}\sum_{i\in\mathcal{I}_k}\left(\varphi(S_i;\hat{\beta},\hat{\eta}^{(k)},\hat{\nu}^{(k)})-\varphi(S_i;\beta,\eta_0,\nu^*)\right)^2
    \\
    &\quad + N^{-1}n_{\max}^{\kappa}\sum_{i\in\mathcal{I}_k}\varphi(S_i;\beta,\eta_0,\nu^*)\left(\varphi(S_i;\hat{\beta},\hat{\eta}^{(k)},\hat{\nu}^{(k)})-\varphi(S_i;\beta,\eta_0,\nu^*)\right)
    \\
    &\leq 2N^{-1}n_{\max}^{\kappa}\sum_{i\in\mathcal{I}_k}\left(\varphi(S_i;\beta,\hat{\eta}^{(k)},\hat{\nu}^{(k)})-\varphi(S_i;\beta,\eta_0,\nu^*)\right)^2
    \\
    &\quad + 2N^{-1}n_{\max}^{\kappa}\sum_{i\in\mathcal{I}_k}\left(\varphi(S_i;\hat{\beta},\hat{\eta}^{(k)},\hat{\nu}^{(k)})-\varphi(S_i;\beta,\hat{\eta}^{(k)},\hat{\nu}^{(k)})\right)^2
    \\
    &\quad + N^{-1}n_{\max}^{\kappa}\sum_{i\in\mathcal{I}_k}\varphi(S_i;\beta,\eta_0,\nu^*)\left(\varphi(S_i;\beta,\hat{\eta}^{(k)},\hat{\nu}^{(k)})-\varphi(S_i;\beta,\eta_0,\nu^*)\right)
    \\
    &\quad + N^{-1}n_{\max}^{\kappa}\sum_{i\in\mathcal{I}_k}\varphi(S_i;\beta,\eta_0,\nu^*)\left(\varphi(S_i;\hat{\beta},\hat{\eta}^{(k)},\hat{\nu}^{(k)})-\varphi(S_i;\beta,\hat{\eta}^{(k)},\hat{\nu}^{(k)})\right)
    \\
    &= \underbrace{2N^{-1}n_{\max}^{\kappa}\sum_{i\in\mathcal{I}_k}\left(\varphi(S_i;\beta,\hat{\eta}^{(k)},\hat{\nu}^{(k)})-\varphi(S_i;\beta,\eta_0,\nu^*)\right)^2}_{\RN{8}} 
    \\
    &\quad + \underbrace{N^{-1}n_{\max}^{\kappa}\sum_{i\in\mathcal{I}_k}\varphi(S_i;\beta,\eta_0,\nu^*)\left(\varphi(S_i;\beta,\hat{\eta}^{(k)},\hat{\nu}^{(k)})-\varphi(S_i;\beta,\eta_0,\nu^*)\right)}_{\RN{9}} 
    \\
    &\quad + \underbrace{2N^{-1}n_{\max}^{\kappa}\sum_{i\in\mathcal{I}_k}\phi^2(S_i;\hat{\eta}^{(k)},\hat{\nu}^{(k)})\big(\hat{\beta}-\beta\big)^2}_{\RN{10}}
    \\
    &\quad +  \underbrace{N^{-1}n_{\max}^{\kappa}\sum_{i\in\mathcal{I}_k}\varphi(S_i;\beta,\eta_0,\nu^*)\phi(S_i;\hat{\eta}^{(k)},\hat{\nu}^{(k)})\big(\hat{\beta}-\beta\big)}_{\RN{11}}
\end{align*}
We claim that each term is \(o_{\mathcal{P}}(1)\), and prove this for each term separately:

    \paragraph{Term \RN{8}:} We decompose this term into the four subsequent terms
    \begin{align*}
        & 2N^{-1}n_{\max}^{\kappa}\sum_{i\in\mathcal{I}_k}\left(\varphi(S_i;\beta,\hat{\eta}^{(k)},\hat{\nu}^{(k)})-\varphi(S_i;\beta,\eta_0,\nu^*)\right)^2 
        \\
        \lesssim& \underbrace{N^{-1}n_{\max}^{\kappa}\sum_{i\in\mathcal{I}_k}\left|\mathcal{E}_{l-\beta m}^{(k)}(X_i)^{\top}\hat{W}_i\mathcal{E}_m^{(k)}(X_i)\right|^2}_{\RN{8}\text{a}}
        + \underbrace{N^{-1}n_{\max}^{\kappa}\sum_{i\in\mathcal{I}_k}\left(\mathcal{E}_{l-\beta m}^{(k)}(X_i)^{\top}\hat{W}_i\xi_i\right)^2}_{\RN{8}\text{b}} \\
        &\hspace{-0.5em} + \underbrace{N^{-1}n_{\max}^{\kappa}\sum_{i\in\mathcal{I}_k}\left(\mathcal{E}_m^{(k)}(X_i)^{\top}W_i^*\varepsilon_i\right)^2}_{\RN{8}\text{c}}
        + \underbrace{N^{-1}n_{\max}^{\kappa}\sum_{i\in\mathcal{I}_k}\left(\left(\xi_i+\mathcal{E}_m^{(k)}(X_i)\right)^{\top}\big(\hat{W}_i-W_i^*\big)\varepsilon_i\right)^2}_{\RN{8}\text{d}}.
    \end{align*}
    Note that
    \begin{align*}
        &\E_{P|S_{\mathcal{I}_k^c}}\left[\bigg(N^{-1}n_{\max}^{\kappa}\sum_{i\in\mathcal{I}_k}\left|\mathcal{E}_{l-\beta m}^{(k)}(X_i)^{\top}\hat{W}_i\mathcal{E}_m^{(k)}(X_i)\right|^2\bigg)^{\frac{1}{2}}\right] 
        \\
        \leq& N^{-\frac{1}{2}}n_{\max}^{\frac{\kappa}{2}}\sum_{i\in\mathcal{I}_k}\E_{P|S_{\mathcal{I}_k^c}}\left[\left|\mathcal{E}_{l-\beta m}^{(k)}(X_i)^{\top}\hat{W}_i\mathcal{E}_m^{(k)}(X_i)\right|\right] = o_{\mathcal{P}}(1)
    \end{align*}
    as shown when dealing with Term \(\RN{1}\) in the decomposition for Lemma~\ref{lem:M1}, and so the square root of \(\RN{8}\text{a}\) and hence \(\RN{8}\text{a}\) are both \(o_{\mathcal{P}}(1)\).
    Further, the expectation under the law of \(P|S_{\mathcal{I}_k^c}\) of the terms \(\RN{8}\text{b}\), \(\RN{8}\text{c}\) and \(\RN{8}\text{d}\) are each shown in Lemma~\ref{lem:M1} to be \(o_{\mathcal{P}}(1)\), and so the terms \(\RN{8}\text{b}\), \(\RN{8}\text{c}\) and \(\RN{8}\text{d}\) are all \(o_{\mathcal{P}}(1)\), thus (again by Lemma~\ref{lem:unif_ciiu}) the term~\(\RN{5}\) is \(o_{\mathcal{P}}(1)\).
    
    \paragraph{Term \RN{9}:} Note that this term can be further decomposed into the four terms
    \begin{align*}
    &N^{-1}n_{\max}^{\kappa}\sum_{i\in\mathcal{I}_k}\varphi(S_i;\beta,\eta_0,\nu^*)\left(\varphi(S_i;\beta,\hat{\eta}^{(k)},\hat{\nu}^{(k)})-\varphi(S_i;\beta,\eta_0,\nu^*)\right)
    \\
    =& \underbrace{N^{-1}n_{\max}^{\kappa}\sum_{i\in\mathcal{I}_k}\big(\xi_i^{\top}W_i^*\varepsilon_i\big)\left(\mathcal{E}_{l-\beta m}^{(k)}(X_i)^{\top}\hat{W}_i\mathcal{E}_m^{(k)}(X_i)\right)}_{\RN{9}\text{a}}
    \\
    &\quad + \underbrace{N^{-1}n_{\max}^{\kappa}\sum_{i\in\mathcal{I}_k}\big(\xi_i^{\top}W_i^*\varepsilon_i\big)\left(\mathcal{E}_{l-\beta m}^{(k)}(X_i)^{\top}\hat{W}_i^*\xi_i\right)}_{\RN{9}\text{b}} 
    + \underbrace{N^{-1}n_{\max}^{\kappa}\sum_{i\in\mathcal{I}_k}\big(\xi_i^{\top}W_i^*\varepsilon_i\big)\left(\mathcal{E}_m^{(k)}(X_i)^{\top}W_i^*\varepsilon_i\right)}_{\RN{9}\text{c}}
    \\
    &\quad + \underbrace{N^{-1}n_{\max}^{\kappa}\sum_{i\in\mathcal{I}_k}\Big(\big(\xi_i+\mathcal{E}_m^{(k)}(X_i)\big)^{\top}W_i^*\varepsilon_i\Big)\left(\xi_i^{\top}\big(\hat{W}_i-W_i^*\big)\varepsilon_i\right)}_{\RN{9}\text{d}}.
    \end{align*}
    We claim that each of these four terms are \(o_{\mathcal{P}}(1)\); we will show each in turn. For the first term \(\RN{9}\text{a}\) note that \(\E_{P|S_{\mathcal{I}_k^c}}\left[\left(\xi_i^{\top}W_i^*\varepsilon_i\right)\left(\mathcal{E}_{l-\beta m}^{(k)}(X_i)^{\top}\hat{W}_i\mathcal{E}_m^{(k)}(X_i)\right)\right]=0\) for all $P\in\mathcal{P}_I$ and $i\in\mathcal{I}_k$. We consider the cases of \(\delta\in[0,4)\) and \(\delta\geq 4\) separately. If \(\delta\in[0,4)\) then by the von Bahr--Esseen inequality
    \begin{align*}
        &\E_P\bigg[\Big|N^{-1}n_{\max}^{\kappa}\sum_{i\in\mathcal{I}_k}\big(\xi_i^{\top}W_i^*\varepsilon_i\big)\left(\mathcal{E}_{l-\beta m}^{(k)}(X_i)^{\top}\hat{W}_i\mathcal{E}_m^{(k)}(X_i)\right)\Big|^{1+\frac{\delta}{4}}\bigg] \\
        \leq&\left(2-I^{-1}\right)N^{-1-\frac{\delta}{4}}n_{\max}^{\kappa\left(1+\frac{\delta}{4}\right)}\sum_{i\in\mathcal{I}_k}\E_P\left[\left|\big(\xi_i^{\top}W_i\varepsilon_i\big)\left(\mathcal{E}_{l-\beta m}^{(k)}(X_i)^{\top}\hat{W}_i\mathcal{E}_m^{(k)}(X_i)\right)\right|^{1+\frac{\delta}{4}}\right] \\
        \lesssim& N^{-1-\frac{\delta}{4}}n_{\max}^{\kappa\left(1+\frac{\delta}{4}\right)}\sum_{i\in\mathcal{I}_k}\E_P\left[\norm{\xi_i}_2^{1+\frac{\delta}{4}}\norm{\varepsilon_i}_2^{1+\frac{\delta}{4}}\big\|\mathcal{E}_{l-\beta m}^{(k)}(X_i)\big\|_2^{1+\frac{\delta}{4}}\big\|\mathcal{E}_m^{(k)}(X_i)\big\|_2^{1+\frac{\delta}{4}}\right] \\
        \lesssim& N^{-1-\frac{\delta}{4}}n_{\max}^{\kappa\left(1+\frac{\delta}{4}\right)}\sum_{i\in\mathcal{I}_k}\norm{\xi_i}_{P,4+\delta}^{1+\frac{\delta}{4}}\norm{\varepsilon_i}_{P,4+\delta}^{1+\frac{\delta}{4}}\big\|\mathcal{E}_{l-\beta m}^{(k)}(X_i)\big\|_{P|S_{\mathcal{I}_k^c},4+\delta}^{1+\frac{\delta}{4}}\big\|\mathcal{E}_m^{(k)}(X_i)\big\|_{P|S_{\mathcal{I}_k^c},4+\delta}^{1+\frac{\delta}{4}} \\
        \lesssim& N^{-1-\frac{\delta}{4}}n_{\max}^{\kappa\left(1+\frac{\delta}{4}\right)}\sum_{i\in\mathcal{I}_k}n_i^{2+\frac{\delta}{2}} \leq N^{-\frac{\delta}{4}}n_{\max}^{(1+\kappa)+(2+\kappa)\frac{\delta}{4}},
    \end{align*}
    for any $P\in\mathcal{P}_I$, and thus
    \begin{align*}
        \sup_{P\in\mathcal{P}_I}\E_P\bigg[\Big|N^{-1}n_{\max}^{\kappa}\sum_{i\in\mathcal{I}_k}\big(\xi_i^{\top}W_i^*\varepsilon_i\big)\left(\mathcal{E}_{l-\beta m}^{(k)}(X_i)^{\top}\hat{W}_i\mathcal{E}_m^{(k)}(X_i)\right)\Big|^{1+\frac{\delta}{4}}\bigg] &\lesssim N^{-\frac{\delta}{4}}n_{\max}^{(1+\kappa)+(2+\kappa)\frac{\delta}{4}}
        \\
        &= o(1),
    \end{align*}
    by Assumption~\ref{ass:n_max}. 
    If on the other hand \(\delta\geq 4\) then note that
    \begin{align*}
        &\E_P\bigg[\Big(N^{-1}n_{\max}^{\kappa}\sum_{i\in\mathcal{I}_k}\big(\xi_i^{\top}W_i^*\varepsilon_i\big)\left(\mathcal{E}_{l-\beta m}^{(k)}(X_i)^{\top}\hat{W}_i\mathcal{E}_m^{(k)}(X_i)\right)\Big)^2\bigg] \\
        =&N^{-2}n_{\max}^{2\kappa}\sum_{i\in\mathcal{I}_k}\E_P\left[\left|\big(\xi_i^{\top}W_i^*\varepsilon_i\big)\left(\mathcal{E}_{l-\beta m}^{(k)}(X_i)^{\top}\hat{W}_i\mathcal{E}_m^{(k)}(X_i)\right)\right|^2\right] \\
        \lesssim& N^{-2}n_{\max}^{2\kappa}\sum_{i\in\mathcal{I}_k}\E_P\left[\norm{\xi_i}_2^2\norm{\varepsilon_i}_2^2\big\|\mathcal{E}_{l-\beta m}^{(k)}(X_i)\big\|_2^2\big\|\mathcal{E}_m^{(k)}(X_i)\big\|_2^2\right] \\
        \lesssim& N^{-2}n_{\max}^{2\kappa}\sum_{i\in\mathcal{I}_k}\norm{\xi_i}_{P,8}^2\norm{\varepsilon_i}_{P,8}^2\big\|\mathcal{E}_{l-\beta m}^{(k)}(X_i)\big\|_{P|S_{\mathcal{I}_k^c},8}^2\big\|\mathcal{E}_m^{(k)}(X_i)\big\|_{P|S_{\mathcal{I}_k^c},8}^2 \\
        \lesssim& N^{-2}n_{\max}^{2\kappa}\sum_{i\in\mathcal{I}_k}n_i^4 \leq N^{-1}n_{\max}^{3+2\kappa},
    \end{align*}
    for all $P\in\mathcal{P}_I$, and thus
    \begin{equation*}
        \sup_{P\in\mathcal{P}_I}\E_P\bigg[\Big(N^{-1}n_{\max}^{\kappa}\sum_{i\in\mathcal{I}_k}\big(\xi_i^{\top}W_i^*\varepsilon_i\big)\left(\mathcal{E}_{l-\beta m}^{(k)}(X_i)^{\top}\hat{W}_i\mathcal{E}_m^{(k)}(X_i)\right)\Big)^2\bigg]
        \lesssim
        N^{-1}n_{\max}^{3+2\kappa}
        =
        o(1),
    \end{equation*}
    by Assumption~\ref{ass:n_max}. Hence, by Lemma~\ref{lem:unif_ciiu}, the term \(\RN{9}\text{a}\) is \(o_{\mathcal{P}}(1)\).
    
    For the term \(\RN{9}\text{b}\), note that as \(\E_{P|S_{\mathcal{I}_k^c}}\left[\left(\xi_i^{\top}W_i\varepsilon_i\right)\left(\mathcal{E}_{l-\beta m}^{(k)}(X_i)^{\top}\hat{W}_i\xi_i\right)\right]=0\) for any $P\in\mathcal{P}_I$ and $i\in\mathcal{I}_k$, and so the same methodology as for the term \(\RN{9}\text{a}\) can be used for \(\RN{9}\text{b}\), showing that this term is also \(o_{\mathcal{P}}(1)\).
    
    For the term \(\RN{9}\text{c}\), note that
    \begin{equation*}
    \begin{split}
        &\E_{P|S_{\mathcal{I}_k^c}}\bigg[\Big|N^{-1}n_{\max}^{\kappa}\sum_{i\in\mathcal{I}_k}\big(\xi_i^{\top}W_i^*\varepsilon_i\big)\left(\mathcal{E}_m^{(k)}(X_i)^{\top}W_i^*\varepsilon_i\right)\Big|\bigg] \\
        \leq& N^{-1}n_{\max}^{\kappa}\sum_{i\in\mathcal{I}_k}\E_{P|S_{\mathcal{I}_k^c}}\left[\left|\big(\xi_i^{\top}W_i^*\varepsilon_i\big)\left(\mathcal{E}_m^{(k)}(X_i)^{\top}W_i^*\varepsilon_i\right)\right|\right] \\
        \leq&
        N^{-1}n_{\max}^{\kappa}\sum_{i\in\mathcal{I}_k}\E_{P|S_{\mathcal{I}_k^c}}\left[\big(\xi_i^{\top}W_i^*\varepsilon_i\big)^2\right]^{\frac{1}{2}}E_{P|S_{\mathcal{I}_k^c}}\left[\left(\mathcal{E}_m^{(k)}(X_i)^{\top}W_i^*\varepsilon_i\right)^2\right]^{\frac{1}{2}} \\
        =& N^{-1}n_{\max}^{\kappa}\sum_{i\in\mathcal{I}_k}\E_{P|S_{\mathcal{I}_k^c}}\left[\xi_i^{\top}W_i^*\Sigma_iW_i^*\xi_i\right]^{\frac{1}{2}}E_{P|S_{\mathcal{I}_k^c}}\left[\mathcal{E}_m^{(k)}(X_i)^{\top}W_i^*\Sigma_iW_i^*\mathcal{E}_m^{(k)}(X_i)\right]^{\frac{1}{2}} \\
        \leq& N^{-1}n_{\max}^{\kappa}\sum_{i\in\mathcal{I}_k}n_i^{1+\alpha}\mathcal{R}_{m}^{\frac{1}{2}}
        \leq n_{\max}^{\kappa+\alpha}\mathcal{R}_{m}^{\frac{1}{2}}
        = o_{\mathcal{P}}(1),
    \end{split}
    \end{equation*}
    
    Similarly, for the term \(\RN{9}\text{d}\),
    \begin{equation*}
    \begin{split}
        &\E_{P|S_{\mathcal{I}_k^c}}\bigg[\Big|N^{-1}n_{\max}^{\kappa}\sum_{i\in\mathcal{I}_k}\big(\xi_i^{\top}W^*_i\varepsilon_i\big)\Big(\Big(\xi_i+\mathcal{E}_m^{(k)}(X_i)\Big)^{\top}\big(\hat{W}_i-W^*_i\big)\varepsilon_i\Big)\Big|\bigg] \\
        \leq& N^{-1}n_{\max}^{\kappa}\sum_{i\in\mathcal{I}_k}\E_{P|S_{\mathcal{I}_k^c}}\bigg[\Big|\big(\xi_i^{\top}W^*_i\varepsilon_i\big)\Big(\Big(\xi_i+\mathcal{E}_m^{(k)}(X_i)\Big)^{\top}\big(\hat{W}_i-W^*_i\big)\varepsilon_i\Big)\Big|\bigg] \\
        \leq&
        N^{-1}n_{\max}^{\kappa}\sum_{i\in\mathcal{I}_k}\E_{P|S_{\mathcal{I}_k^c}}\Big[\big(\xi_i^{\top}W^*_i\varepsilon_i\big)^2\Big]^{\frac{1}{2}}E_{P|S_{\mathcal{I}_k^c}}\bigg[\Big(\Big(\xi_i+\mathcal{E}_m^{(k)}(X_i)\Big)^{\top}\big(\hat{W}_i-W^*_i\big)\varepsilon_i\Big)^2\bigg]^{\frac{1}{2}} \\
        \lesssim& n_{\max}^{\kappa+\frac{\alpha}{2}+\frac{\mu_{\Sigma}}{2}}\Big(n_{\max}^{\frac{1+\tau}{2}}\mathcal{R}_{\sigma}^{\frac{1}{2}}+n_{\max}\mathcal{R}_{\rho}^{\frac{1}{2}}\Big)
        = o_{\mathcal{P}}(1),
    \end{split}
    \end{equation*}
    by Assumption~\ref{ass:n_max}. Hence, by Lemma~\ref{lem:unif_ciiu}, the term \(\RN{9}\text{d}\) is \(o_{\mathcal{P}}(1)\).
    
    \paragraph{Term \RN{10}:} Note that
    \begin{align*}
        &\E_P\bigg[\Big|2N^{-1}n_{\max}^{\kappa}\sum_{i\in\mathcal{I}_k}\phi^2(S_i;\hat{\eta}^{(k)},\hat{\nu}^{(k)})\big(\hat{\beta}-\beta\big)^2\Big|\bigg] \\
        \leq& 2N^{-1}n_{\max}^{\kappa}\sum_{i\in\mathcal{I}_k}\E_P\left[\phi^2(S_i;\hat{\eta}^{(k)},\hat{\nu}^{(k)})\big(\hat{\beta}-\beta\big)^2\right] \\
        \lesssim& N^{-1}n_{\max}^{\kappa}\sum_{i\in\mathcal{I}_k}\Big(\E_{P|S_{\mathcal{I}_k^c}}\left[\norm{\xi_i}_2^4\big(\hat{\beta}-\beta\big)^2\right] + \E_{P|S_{\mathcal{I}_k^c}}\left[\big\|\mathcal{E}_m^{(k)}(X_i)\big\|_2^4\big(\hat{\beta}-\beta\big)^2\right]\Big)  \\
        \leq& N^{-1}n_{\max}^{\kappa}\sum_{i\in\mathcal{I}_k}\Big(\norm{\xi_i}_{P,4+\delta}^4\big\|\hat{\beta}-\beta\big\|_{P,\frac{2(4+\delta)}{\delta}}^2 + \big\|\mathcal{E}_m^{(k)}(X_i)\big\|_{P,4+\delta}^4\big\|\hat{\beta}-\beta\big\|_{P,\frac{2(4+\delta)}{\delta}}^2\Big) \\
        \leq& N^{-2}n_{\max}^{\kappa}\sum_{i\in\mathcal{I}_k}n_i^2\bigg(\Big\|n_i^{-\frac{1}{2}}\xi_i\Big\|_{P,4+\delta}^4\norm{N^{\frac{1}{2}}\big(\hat{\beta}-\beta\big)}_{P,\frac{2(4+\delta)}{\delta}}^2 
        \\
        & \hspace{12em} + \Big\|n_i^{-\frac{1}{2}}\mathcal{E}_m^{(k)}(X_i)\Big\|_{P,4+\delta}^4\norm{N^{\frac{1}{2}}\big(\hat{\beta}-\beta\big)}_{P,\frac{2(4+\delta)}{\delta}}^2\bigg) \\
        \lesssim& N^{-2}n_{\max}^{\kappa}\sum_{i\in\mathcal{I}_k}n_i^2 \leq N^{-1}n_{\max}^{1+\kappa},
    \end{align*}
    by using the fact that any finite moment of a normal distribution is finite. Thus
    \begin{equation*}
        \sup_{P\in\mathcal{P}_I}\E_P\bigg[\Big|2N^{-1}n_{\max}^{\kappa}\sum_{i\in\mathcal{I}_k}\phi^2(S_i;\hat{\eta}^{(k)},\hat{\nu}^{(k)})\big(\hat{\beta}-\beta\big)^2\Big|\bigg]
        \lesssim
        N^{-1}n_{\max}^{1+\kappa}
        =
        o(1),
    \end{equation*}
    by Assumption~\ref{ass:n_max}. By Lemmas~\ref{lem:unif_add}~and~\ref{lem:unif_ciiu} we have that term \(\RN{10}\) is \(o_{\mathcal{P}}(1)\).
    
    \paragraph{Term \RN{11}:} Note that
    \begin{align*}
        &\E_P\bigg[\Big|N^{-1}n_{\max}^{\kappa}\sum_{i\in\mathcal{I}_k}\varphi(S_i;\beta,\eta_0,\nu^*)\phi(S_i;\hat{\eta}^{(k)},\hat{\nu}^{(k)})\big(\hat{\beta}-\beta\big)\Big|\bigg] \\
        \lesssim& N^{-1}n_{\max}^{\kappa}\sum_{i\in\mathcal{I}_k}\E_P\left[\norm{\xi_i}_2^2\left(\norm{\xi_i}_2+\big\|\mathcal{E}_m^{(k)}(X_i)\big\|_2\right)\norm{\varepsilon_i}_2\big|\hat{\beta}-\beta\big|\right] \\
        \leq& N^{-1}n_{\max}^{\kappa}\sum_{i\in\mathcal{I}_k}\Big(\norm{\xi_i}_{P,4+\delta}^3\norm{\varepsilon_i}_{P,4+\delta}\big\|\hat{\beta}-\beta\big\|_{P,\frac{4+\delta}{\delta}} \\ 
        &\hspace{6em} + \norm{\xi_i}_{P,4+\delta}^2\norm{\varepsilon_i}_{P,4+\delta}\big\|\mathcal{E}_m^{(k)}(X_i)\big\|_{P,4+\delta}\big\|\hat{\beta}-\beta\big\|_{P,\frac{4+\delta}{\delta}} \Big) \\
        \lesssim& N^{-\frac{3}{2}}n_{\max}^{\kappa}\sum_{i\in\mathcal{I}_k}n_i^2\big\|N^{\frac{1}{2}}\big(\hat{\beta}-\beta\big)\big\|_{P,\frac{4+\delta}{\delta}} \\
        \lesssim& N^{-\frac{3}{2}}n_{\max}^{\kappa}\sum_{i\in\mathcal{I}_k}n_i^2 \leq N^{-\frac{1}{2}}n_{\max}^{1+\kappa},
    \end{align*}
    by H\"older's inequality (twice), and thus
    \begin{equation*}
        \sup_{P\in\mathcal{P}_I}\E_P\bigg[\Big|N^{-1}n_{\max}^{\kappa}\sum_{i\in\mathcal{I}_k}\varphi(S_i;\beta,\eta_0,\nu^*)\phi(S_i;\hat{\eta}^{(k)},\hat{\nu}^{(k)})\big(\hat{\beta}-\beta\big)\Big|\bigg] 
        \lesssim
        N^{-\frac{1}{2}}n_{\max}^{1+\kappa} = o(1)
    \end{equation*}
    and Assumption~\ref{ass:n_max}. It then follows by Lemmas~\ref{lem:unif_add}~and~\ref{lem:unif_ciiu} that the term \(\RN{11}\) is \(o_{\mathcal{P}}(1)\). As all terms are \(o_{\mathcal{P}}(1)\), applying Lemma~\ref{lem:unif_add} completes the proof.

\end{proof}

\begin{lemma}\label{lem:M4}
Under the setup of Theorem~\ref{thm:main},
\begin{equation*}
    M_4 := N^{-1}n_{\max}^{\kappa}\sum_{i=1}^I\left[\varphi^2(S_i;\beta,\eta_0,\nu^*)-\E_{P}\left[\varphi^2(S_i;\beta,\eta_0,\nu^*)\right]\right] = o_{\mathcal{P}}(1).
\end{equation*}
\end{lemma}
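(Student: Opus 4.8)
The plan is to recognise $M_4$ as $N^{-1}n_{\max}^{\kappa}$ times a sum of independent, mean-zero random variables and then control it through a uniform fractional-moment bound followed by Markov's inequality. Since $\eta_0$ and $\nu^*$ are deterministic, the summands $\varphi^2(S_i;\beta,\eta_0,\nu^*) - \E_P[\varphi^2(S_i;\beta,\eta_0,\nu^*)]$ are genuinely independent across $i$ (the groups $S_i$ being independent) and centred, so no conditioning argument is needed and $M_4 = o_{\mathcal{P}}(1)$ will follow once we establish $\sup_{P\in\mathcal{P}_I}\E_P|M_4|^p = o(1)$ for a suitable exponent $p\in(1,2]$. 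Recalling from the proof of Theorem~\ref{thm:main} that $\varphi(S_i;\beta,\eta_0,\nu^*) = \varepsilon_i^{\top} W_i^*\xi_i$, the key pointwise bound will be $|\varphi(S_i;\beta,\eta_0,\nu^*)| \le \norm{W_i^*}_{\op}\norm{\varepsilon_i}_2\norm{\xi_i}_2 \lesssim \norm{\varepsilon_i}_2\norm{\xi_i}_2$, using $\Lambda_{\max}(W_i^*)\lesssim 1$ from Assumption~\ref{A3.3}. I would then split into the two regimes of $\delta$ exactly as in the treatment of term $\RN{9}\text{a}$ in Lemma~\ref{lem:M3}.

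When $\delta \ge 4$ I use the second moment. By independence and centring, $\E_P[M_4^2] = N^{-2}n_{\max}^{2\kappa}\sum_i \Var_P(\varphi^2(S_i;\beta,\eta_0,\nu^*)) \le N^{-2}n_{\max}^{2\kappa}\sum_i \E_P[\varphi^4(S_i;\beta,\eta_0,\nu^*)]$. The pointwise bound together with two applications of the Cauchy--Schwarz inequality gives $\E_P[\varphi^4] \lesssim \norm{\varepsilon_i}_{P,8}^4\norm{\xi_i}_{P,8}^4 \lesssim n_i^4$, the final step relying on Assumption~\ref{A1.1}, which supplies an eighth moment precisely because $\delta\ge 4$. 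Summing via $\sum_i n_i^4 \le n_{\max}^3 N$ then yields $\E_P[M_4^2] \lesssim N^{-1}n_{\max}^{3+2\kappa}$, which is $o(1)$ uniformly in $P$ owing to the factor $N^{1/(3+2\kappa)}$ in the maximum-group-size condition of Assumption~\ref{ass:n_max}.

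When $\delta \in [0,4)$ the second moment need not be finite, so I would instead apply the von Bahr--Esseen inequality with exponent $p = 1+\delta/4 \in [1,2)$ to the independent centred summands. This gives $\E_P|M_4|^{1+\delta/4} \lesssim N^{-(1+\delta/4)}n_{\max}^{\kappa(1+\delta/4)}\sum_i \E_P|\varphi(S_i;\beta,\eta_0,\nu^*)|^{2+\delta/2}$, and bounding $\E_P|\varphi|^{2+\delta/2} \lesssim \norm{\varepsilon_i}_{P,4+\delta}^{2+\delta/2}\norm{\xi_i}_{P,4+\delta}^{2+\delta/2} \lesssim n_i^{2+\delta/2}$ (Cauchy--Schwarz and Assumption~\ref{A1.1}) together with $\sum_i n_i^{2+\delta/2} \le n_{\max}^{1+\delta/2}N$ produces $\E_P|M_4|^{1+\delta/4} \lesssim N^{-\delta/4}n_{\max}^{(1+\kappa)+(2+\kappa)\delta/4}$, which is $o(1)$ uniformly in $P$ precisely because of the factor $N^{\delta/(4(1+\kappa)+(2+\kappa)\delta)}$ in Assumption~\ref{ass:n_max}; in fact the two regimes of $\delta$ reproduce exactly the two terms of the first minimum there. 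In either case the uniform moment bound combined with Markov's inequality, $\sup_{P\in\mathcal{P}_I}\PP_P(|M_4|>\epsilon) \le \epsilon^{-p}\sup_{P\in\mathcal{P}_I}\E_P|M_4|^p \to 0$, completes the proof. The only delicate point is the low-moment regime $\delta<4$, where the possible absence of a second moment forces the von Bahr--Esseen route and pins down the sharp exponent on $n_{\max}$; everything else is routine Cauchy--Schwarz bookkeeping against Assumption~\ref{A1.1}.
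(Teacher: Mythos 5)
Your proposal is correct and follows essentially the same route as the paper's proof: the same split into the regimes $\delta\in[0,4)$ (von Bahr--Esseen with exponent $1+\delta/4$) and $\delta\geq 4$ (a direct second-moment bound), the same Cauchy--Schwarz bounds $\E_P|\varphi|^{2+\delta/2}\lesssim n_i^{2+\delta/2}$ and $\E_P[\varphi^4]\lesssim n_i^4$ via Assumptions \ref{A1.1} and \ref{A3.3}, and the same conclusion via Markov's inequality and Assumption~\ref{ass:n_max}. Your observation that the two regimes account exactly for the two exponents in the first term of Assumption~\ref{ass:n_max} is accurate.
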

\begin{proof}
We will consider the two cases \(\delta\in(0,4)\) and \(\delta\geq 4\) separately. If \(\delta\in(0,4)\) then, by the von Bahr--Esseen inequality, for all $P\in\mathcal{P}_I$
\begin{align*}
    \E_{P}\left[\left|M_4\right|^{1+\frac{\delta}{4}}\right] \leq & N^{-\left(1+\frac{\delta}{4}\right)}n_{\max}^{\kappa\left(1+\frac{\delta}{4}\right)}\E_{P}\left[\bigg|\sum_{i=1}^I\varphi^2\left(S_i;\beta,\eta_0,\nu^*\right)-\E_{P}\left[\varphi^2\left(S_i;\beta,\eta_0,\nu^*\right)\right]\bigg|^{1+\frac{\delta}{4}}\right] \\
    \leq& 2N^{-\left(1+\frac{\delta}{4}\right)}n_{\max}^{\kappa\left(1+\frac{\delta}{4}\right)}\sum_{i=1}^I\E_{P}\left[\left|\varphi^2\left(S_i;\beta,\eta_0,\nu^*\right)-\E_{P}\left[\varphi^2\left(S_i;\beta,\eta_0,\nu^*\right)\right]\right|^{1+\frac{\delta}{4}}\right] \\
    \leq& 2N^{-\left(1+\frac{\delta}{4}\right)}n_{\max}^{\kappa\left(1+\frac{\delta}{4}\right)}\sum_{i=1}^I\E_{P}\left[\left|\varphi\left(S_i;\beta,\eta_0,\nu^*\right)\right|^{2+\frac{\delta}{2}}\right] \\
    \leq& 2N^{-\left(1+\frac{\delta}{4}\right)}n_{\max}^{\kappa\left(1+\frac{\delta}{4}\right)}\sum_{i=1}^I\E_{P}\left[\norm{W_i^*}_{\op}^{2+\frac{\delta}{2}}\norm{\xi_i}_2^{2+\frac{\delta}{2}}\norm{\varepsilon_i}_2^{2+\frac{\delta}{2}}\right] \\
    \lesssim& N^{-\left(1+\frac{\delta}{4}\right)}n_{\max}^{\kappa\left(1+\frac{\delta}{4}\right)}\sum_{i=1}^I\norm{\xi_i}_{P,4+\delta}^{2+\frac{\delta}{2}}\norm{\varepsilon_i}_{P,4+\delta}^{2+\frac{\delta}{2}} \\
    \lesssim& N^{-\left(1+\frac{\delta}{4}\right)}n_{\max}^{\kappa\left(1+\frac{\delta}{4}\right)}\sum_{i=1}^In_i^{2+\frac{\delta}{2}} \leq N^{-\frac{\delta}{4}}n_{\max}^{(1+\kappa)+(2+\kappa)\frac{\delta}{4}},
\end{align*}
and so
\begin{equation*}
    \sup_{P\in\mathcal{P}_I}\E_{P}\left[\left|M_4\right|^{1+\frac{\delta}{4}}\right]
    \lesssim
    N^{-\frac{\delta}{4}}n_{\max}^{(1+\kappa)+(2+\kappa)\frac{\delta}{4}} = o(1),
\end{equation*}
by Assumption~\ref{ass:n_max}.

If \(\delta\geq 4\) then, again for all $P\in\mathcal{P}_I$,
\begin{align*}
    \E_{P}\left[M_4^2\right] \leq& N^{-2}n_{\max}^{2\kappa}\sum_{i=1}^I\E_{P}\left[\varphi^4\left(S_i;\beta,\eta_0,\nu^*\right)\right] \\
    \leq& N^{-2}n_{\max}^{2\kappa}\sum_{i=1}^I\E_{P}\left[\left|\xi_i^{\top}W_i^*\varepsilon_i\right|^4\right] \\
    \leq& N^{-2}n_{\max}^{2\kappa}\sum_{i=1}^I\E_{P}\left[\norm{W_i^*}_{\op}^4\norm{\xi_i}_2^4\norm{\varepsilon_i}_2^4\right] \\
    \lesssim& N^{-2}n_{\max}^{2\kappa}\sum_{i=1}^I\norm{\xi_i}_{P,8}^4\norm{\varepsilon_i}_{P,8}^4 \\
    \lesssim& N^{-2}n_{\max}^{2\kappa}\sum_{i=1}^In_i^4 \leq N^{-1}n_{\max}^{3+2\kappa},
\end{align*}
and so
\begin{equation*}
    \sup_{P\in\mathcal{P}_I}\E_{P}\left[M_4^2\right]
    \lesssim N^{-1}n_{\max}^{3+2\kappa} = o(1),
\end{equation*}
by Assumptions~\ref{A1.1}~and~\ref{A3.3}.

Thus, in each case we have by applying Markov's inequality that $M_4=o_{\mathcal{P}}(1)$.
\end{proof}

\subsection{Auxiliary uniform convergence results}

For the uniform convergence results of Theorem~\ref{thm:main} we require the following lemmata. Given a sequence of families of probability distributions \(\left(\mathcal{P}_I\right)_{I\in\mathbb{N}}\) we use the shorthand \(\left(A_I\right)_{I\in\mathbb{N}}\) to denote a sequence of random variables \(\left(A_{P,I}\right)_{P\in\mathcal{P}_I,I\in\mathbb{N}}\).

\begin{lemma}\label{lem:unif_add}
Let \(\left(A_I\right)_{I\in\mathbb{N}}\) and \(\left(B_I\right)_{I\in\mathbb{N}}\) be sequences of real-valued random variables. If \(A_I=o_{\mathcal{P}}(1)\) and \(B_I=o_{\mathcal{P}}(1)\) then \(A_I+B_I=o_{\mathcal{P}}(1)\).
\end{lemma}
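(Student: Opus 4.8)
The plan is to reduce everything to the scalar triangle inequality together with a union bound, and then to exploit subadditivity of the supremum. Fix an arbitrary $\epsilon > 0$; the goal is to show $\lim_{I\to\infty}\sup_{P\in\mathcal{P}_I}\PP_P(|A_{P,I}+B_{P,I}|>\epsilon)=0$, which is exactly the definition of $A_I+B_I=o_{\mathcal{P}}(1)$ applied to this $\epsilon$.

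First I would observe that, pointwise in the sample, $|A_{P,I}+B_{P,I}|\leq|A_{P,I}|+|B_{P,I}|$, so the event $\{|A_{P,I}+B_{P,I}|>\epsilon\}$ is contained in $\{|A_{P,I}|>\epsilon/2\}\cup\{|B_{P,I}|>\epsilon/2\}$. A union bound then gives, for every $P\in\mathcal{P}_I$,
\[
\PP_P\big(|A_{P,I}+B_{P,I}|>\epsilon\big)\leq \PP_P\big(|A_{P,I}|>\tfrac{\epsilon}{2}\big)+\PP_P\big(|B_{P,I}|>\tfrac{\epsilon}{2}\big).
\]
Taking the supremum over $P\in\mathcal{P}_I$ and using that the supremum of a sum is at most the sum of the suprema yields
\[
\sup_{P\in\mathcal{P}_I}\PP_P\big(|A_{P,I}+B_{P,I}|>\epsilon\big)\leq \sup_{P\in\mathcal{P}_I}\PP_P\big(|A_{P,I}|>\tfrac{\epsilon}{2}\big)+\sup_{P\in\mathcal{P}_I}\PP_P\big(|B_{P,I}|>\tfrac{\epsilon}{2}\big).
\]

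Finally I would let $I\to\infty$. Since $A_I=o_{\mathcal{P}}(1)$ and $B_I=o_{\mathcal{P}}(1)$, each term on the right-hand side tends to $0$ (applying the definition with the value $\epsilon/2>0$), so the left-hand side tends to $0$ as well. As $\epsilon>0$ was arbitrary, this establishes $A_I+B_I=o_{\mathcal{P}}(1)$.

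There is no real obstacle here: the only point requiring a moment's care is the uniformity, i.e.\ that one must take the supremum over $\mathcal{P}_I$ \emph{after} the union bound and invoke subadditivity of $\sup$, rather than arguing for a fixed $P$; but this is immediate. The split of $\epsilon$ into $\epsilon/2+\epsilon/2$ is the standard device and could equally be replaced by any $\epsilon_1+\epsilon_2=\epsilon$ with $\epsilon_1,\epsilon_2>0$.
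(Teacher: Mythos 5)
Your proof is correct and complete: the triangle inequality, the union bound with the $\epsilon/2$ split, and subadditivity of the supremum over $\mathcal{P}_I$ are exactly the ingredients needed, and you have applied them in the right order (supremum after the union bound). The paper does not spell out a proof but defers to \citet{lundborg2022projected}; the argument there is the same standard one you give, so there is nothing to add.
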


\begin{lemma}\label{lem:unif_leq}
Let \(\left(A_I\right)_{I\in\mathbb{N}}\) and \(\left(B_I\right)_{I\in\mathbb{N}}\) be sequences of real-valued random variables. Suppose that \(A_I=o_{\mathcal{P}}(1)\) and \(B_I\leq A_I\) for all \(I\in\mathbb{N}\) uniformly over all \(P\in\mathcal{P}_I\). Then \(B_I=o_{\mathcal{P}}(1)\).
\end{lemma}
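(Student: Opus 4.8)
The plan is to unwind the definition of $o_{\mathcal{P}}(1)$ and reduce the statement to an elementary inclusion of events together with monotonicity of probability under suprema. Fix $\epsilon > 0$. The only hypothesis linking the two sequences is the $P$-uniform, almost sure bound $B_I \le A_I$, so the first step is to convert the statement about the magnitude $|B_I|$ appearing in the definition of $o_{\mathcal{P}}(1)$ into one about $|A_I|$. For the quantities to which this lemma is applied in the proof of Theorem~\ref{thm:main} (sums of squared norms and residuals, which are nonnegative), we have $|B_I| = B_I$, and the bound $B_I \le A_I$ then yields the event inclusion
\[
\{\,|B_I| > \epsilon\,\} = \{\,B_I > \epsilon\,\} \subseteq \{\,A_I > \epsilon\,\} \subseteq \{\,|A_I| > \epsilon\,\},
\]
holding simultaneously for every $P \in \mathcal{P}_I$.

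From here the argument is pure monotonicity. Applying $\PP_P$ to the inclusion gives $\PP_P(|B_I| > \epsilon) \le \PP_P(|A_I| > \epsilon)$ for each $P \in \mathcal{P}_I$; since this holds for all such $P$, taking the supremum over $\mathcal{P}_I$ on each side preserves the inequality, so that
\[
\sup_{P \in \mathcal{P}_I} \PP_P(|B_I| > \epsilon) \le \sup_{P \in \mathcal{P}_I} \PP_P(|A_I| > \epsilon).
\]
Letting $I \to \infty$ and invoking $A_I = o_{\mathcal{P}}(1)$, the right-hand side vanishes, hence so does the left. As $\epsilon > 0$ was arbitrary, this is exactly $B_I = o_{\mathcal{P}}(1)$.

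There is no substantive obstacle; the single point deserving attention is that $o_{\mathcal{P}}(1)$ is defined through the two-sided magnitude $|\cdot|$, whereas the hypothesis $B_I \le A_I$ is one-sided. The event-inclusion step above is precisely where this matters: the one-sided bound controls $|B_I|$ exactly when $B_I$ is nonnegative, which is the situation in every invocation of this lemma. (Were one to state the lemma for signed $B_I$, one would additionally require a matching lower bound $B_I \ge -A_I'$ with $A_I' = o_{\mathcal{P}}(1)$, after which the same two inclusions would cover both tails of $|B_I|$ and the supremum-monotonicity conclusion would go through verbatim.)
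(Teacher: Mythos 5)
Your argument is correct, and it is essentially the only argument available; note that the paper itself supplies no proof of Lemma~\ref{lem:unif_leq}, deferring instead to the cited reference of Lundborg et al.\ (2022), so there is no in-paper proof to compare against line by line. The chain you give --- the event inclusion $\{|B_I|>\epsilon\}\subseteq\{|A_I|>\epsilon\}$, monotonicity of $\PP_P$, the supremum over $P\in\mathcal{P}_I$, and the limit in $I$ --- is exactly what is needed, and your observation about the one-sided hypothesis is the only substantive point in the lemma: as literally stated, with only $B_I\leq A_I$ and no sign restriction, the conclusion is false (take $B_I\equiv-1$ and $A_I\equiv 0$), since $o_{\mathcal{P}}(1)$ is defined via the two-sided magnitude $|B_I|$. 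Your resolution --- reading the lemma as applying to nonnegative $B_I$, which is how it is invoked in the surrounding arguments (e.g.\ the upper bounds on sums of squares and absolute values in Lemmas~\ref{lem:M1}--\ref{lem:M4}), or alternatively strengthening the hypothesis to $|B_I|\leq A_I$ or adding a matching lower bound --- is the right one, and with that reading your proof is complete.
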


\begin{lemma}\label{lem:unif_times}
Let \(\left(A_I\right)_{I\in\mathbb{N}}\) and \(\left(B_I\right)_{I\in\mathbb{N}}\) be sequences of real-valued random variables. If \(A_I = o_\mathcal{P}(1)\) and
\(B_I = O_{\mathcal{P}}(1)\) then \(A_IB_I=o_{\mathcal{P}}(1)\).
\end{lemma}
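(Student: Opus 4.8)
The plan is to mimic the classical argument that a quantity which is $o_P(1)$ times one which is $O_P(1)$ is again $o_P(1)$, taking care throughout that all probability bounds are uniform over $P \in \mathcal{P}_I$. Recalling the definition of $o_{\mathcal{P}}(1)$ from Section~\ref{sec:notation}, it suffices to fix $\eta > 0$ and show that for every $\epsilon > 0$ there is some $I^\ast$ with $\sup_{I \geq I^\ast}\sup_{P \in \mathcal{P}_I}\PP_P(|A_I B_I| > \eta) < \epsilon$.

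First I would fix $\eta, \epsilon > 0$ and exploit the hypothesis $B_I = O_{\mathcal{P}}(1)$: applying its definition at the level $\epsilon/2$ furnishes constants $M = M_{\epsilon/2} > 0$ and $I_1 > 0$ such that $\sup_{I \geq I_1}\sup_{P \in \mathcal{P}_I}\PP_P(|B_I| > M) < \epsilon/2$. Crucially, $M$ is now a fixed constant not depending on $P$ or $I$. Next I would invoke the elementary event inclusion $\{|A_I B_I| > \eta\} \subseteq \{|A_I| > \eta/M\} \cup \{|B_I| > M\}$, which holds pointwise because on the complement of $\{|B_I| > M\}$ one has $\eta < |A_I B_I| \leq M|A_I|$. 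A union bound then gives $\PP_P(|A_I B_I| > \eta) \leq \PP_P(|A_I| > \eta/M) + \PP_P(|B_I| > M)$ for every $P \in \mathcal{P}_I$.

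Since $\eta/M$ is a fixed positive threshold, the hypothesis $A_I = o_{\mathcal{P}}(1)$ yields $\sup_{P \in \mathcal{P}_I}\PP_P(|A_I| > \eta/M) \to 0$ as $I \to \infty$, so there is $I_2$ with $\sup_{I \geq I_2}\sup_{P \in \mathcal{P}_I}\PP_P(|A_I| > \eta/M) < \epsilon/2$. Taking $I^\ast = \max(I_1, I_2)$ and combining the two suprema yields $\sup_{I \geq I^\ast}\sup_{P \in \mathcal{P}_I}\PP_P(|A_I B_I| > \eta) < \epsilon$, as required.

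The argument is entirely routine; the only point requiring care—and the sole place the uniform structure matters—is the order in which the constants are selected. The threshold $\eta/M$ fed into the $o_{\mathcal{P}}(1)$ property of $A_I$ depends on $M$, so $M$ must be chosen (from the $O_{\mathcal{P}}(1)$ property of $B_I$) before invoking the convergence of $A_I$. Because both the $O_{\mathcal{P}}$ and $o_{\mathcal{P}}$ definitions already supply bounds that are uniform over $P \in \mathcal{P}_I$, no additional effort is needed to preserve uniformity, and no genuine obstacle arises.
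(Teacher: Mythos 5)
Your proof is correct. The paper does not spell out an argument for this lemma (it defers to the cited reference \citet{lundborg2022projected}), but your proof is precisely the standard uniform version of the classical $o_P \cdot O_P$ argument — fixing $M$ from the $O_{\mathcal{P}}(1)$ bound first, then applying the $o_{\mathcal{P}}(1)$ property of $A_I$ at the threshold $\eta/M$ — and the event inclusion and union bound are handled correctly, so there is nothing to add.
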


\begin{lemma}\label{lem:unif_ciiu}
Let \(\left(A_I\right)_{I\in\mathbb{N}}\) and \(\left(B_I\right)_{I\in\mathbb{N}}\) be a sequences of real-valued random variables. If there exists some \(\alpha\geq 1\) such that \(\E_{P}(\left|A_I\right|^{\alpha}\given B_I)=o_{\mathcal{P}}(1)\), then \(A_I=o_{\mathcal{P}}(1)\).
\end{lemma}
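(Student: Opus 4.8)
The plan is to establish this via a uniform conditional Markov inequality combined with a truncation argument that decouples the ``small'' part of the conditional moment from the distribution $P$. Write $Z_{P,I} := \E_P(|A_I|^\alpha \given B_I)$ for the conditional moment appearing in the hypothesis, a nonnegative random variable that is a function of $B_I$, so that the assumption reads $Z_{P,I} = o_{\mathcal{P}}(1)$. Fix $\epsilon > 0$; the goal is to show $\lim_{I\to\infty}\sup_{P\in\mathcal{P}_I}\PP_P(|A_I|>\epsilon) = 0$, which is the definition of $A_I = o_{\mathcal{P}}(1)$.

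First I would apply the conditional Markov inequality. Since $\alpha\geq 1$ and $\{|A_I|>\epsilon\} = \{|A_I|^\alpha > \epsilon^\alpha\}$, conditioning on $B_I$ gives $\PP_P(|A_I|>\epsilon\given B_I)\leq \epsilon^{-\alpha}Z_{P,I}$ almost surely. As any conditional probability is also bounded by $1$, we in fact have $\PP_P(|A_I|>\epsilon\given B_I)\leq \min(1,\epsilon^{-\alpha}Z_{P,I})$ almost surely. Taking expectations and using the tower property then yields $\PP_P(|A_I|>\epsilon)\leq \E_P[\min(1,\epsilon^{-\alpha}Z_{P,I})]$.

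The key step is to split this expectation according to whether $Z_{P,I}$ is small. For an arbitrary threshold $\eta>0$, bounding the integrand by $\epsilon^{-\alpha}\eta$ on the event $\{Z_{P,I}\leq\eta\}$ and by $1$ on its complement gives $\PP_P(|A_I|>\epsilon)\leq \epsilon^{-\alpha}\eta + \PP_P(Z_{P,I}>\eta)$. Crucially, the first term is a deterministic constant not depending on $P$ or $I$. Taking $\sup_{P\in\mathcal{P}_I}$ and then $\limsup_{I\to\infty}$ yields $\limsup_{I\to\infty}\sup_{P\in\mathcal{P}_I}\PP_P(|A_I|>\epsilon)\leq \epsilon^{-\alpha}\eta$, because $Z_{P,I}=o_{\mathcal{P}}(1)$ forces $\sup_{P\in\mathcal{P}_I}\PP_P(Z_{P,I}>\eta)\to 0$. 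Since $\eta>0$ is arbitrary, letting $\eta\downarrow 0$ gives $\limsup_{I\to\infty}\sup_{P\in\mathcal{P}_I}\PP_P(|A_I|>\epsilon)=0$, and as $\epsilon>0$ was arbitrary this is the required conclusion.

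The main (and essentially only) obstacle is the bookkeeping of uniformity over $P\in\mathcal{P}_I$: one must arrange the threshold $\eta$ so that its contribution is identical for every $P$. This is precisely what the $\min(1,\cdot)$ truncation and the resulting deterministic bound $\epsilon^{-\alpha}\eta$ accomplish, since they isolate the ``small $Z$'' part entirely from $P$ and leave only the genuinely $P$-dependent tail $\PP_P(Z_{P,I}>\eta)$, which the hypothesis absorbs uniformly. Beyond this there are no analytic difficulties, so the proof reduces to the two displayed inequalities above together with the two limit operations.
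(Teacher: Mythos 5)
Your proof is correct. The paper does not prove this lemma itself but defers to \citet{lundborg2022projected}, and your argument---conditional Markov followed by the $\min(1,\cdot)$ truncation and the split on $\{Z_{P,I}\leq\eta\}$ so that the only $P$-dependent term is $\PP_P(Z_{P,I}>\eta)$, which the hypothesis controls uniformly---is exactly the standard route taken there; note in passing that your argument never uses $\alpha\geq 1$, so the conclusion holds for any $\alpha>0$.
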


For proofs of these four lemmas see, for example, \citet{lundborg2022projected}

\begin{lemma}\label{lem:unif_max}
    Let $(A_I^{(1)})_{I\in\mathbb{N}}, \ldots, (A_I^{(K)})_{I\in\mathbb{N}}$ be sequences of real-valued random variables satisfying $A_I^{(k)}=o_{\mathcal{P}}(1)$ for all $k\in[K]$, where $K\in\mathbb{N}$ is finite. Then $\max_{k\in[K]}A_I^{(k)} = o_{\mathcal{P}}(1)$.
\end{lemma}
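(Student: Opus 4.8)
The plan is to argue directly from the definition of $o_{\mathcal{P}}(1)$, reducing the event involving the maximum to a union of the events for the individual sequences and then applying a union bound. Fix $\epsilon>0$. The starting observation is the elementary pointwise inequality
\[
\Big|\max_{k\in[K]} A_I^{(k)}\Big| \;\leq\; \max_{k\in[K]} \big|A_I^{(k)}\big|,
\]
which holds for any finite collection of reals: indeed $\max_k A_I^{(k)} \leq \max_k |A_I^{(k)}|$ since each $A_I^{(k)} \leq |A_I^{(k)}|$, while $-\max_k A_I^{(k)} = \min_k(-A_I^{(k)}) \leq |A_I^{(k)}|$ for every $k$ and hence $-\max_k A_I^{(k)} \leq \max_k |A_I^{(k)}|$. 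Consequently the event $\{|\max_k A_I^{(k)}| > \epsilon\}$ is contained in $\{\max_k |A_I^{(k)}| > \epsilon\} = \bigcup_{k=1}^K \{|A_I^{(k)}| > \epsilon\}$.

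Next I would apply a union bound and pass the supremum over $P$ through the resulting finite sum. For each $P\in\mathcal{P}_I$,
\[
\PP_P\Big(\Big|\max_{k\in[K]} A_I^{(k)}\Big| > \epsilon\Big) \;\leq\; \sum_{k=1}^K \PP_P\big(|A_I^{(k)}| > \epsilon\big),
\]
and taking $\sup_{P\in\mathcal{P}_I}$ of both sides, using that the supremum of a finite sum is bounded by the sum of the suprema,
\[
\sup_{P\in\mathcal{P}_I}\PP_P\Big(\Big|\max_{k\in[K]} A_I^{(k)}\Big| > \epsilon\Big) \;\leq\; \sum_{k=1}^K \sup_{P\in\mathcal{P}_I}\PP_P\big(|A_I^{(k)}| > \epsilon\big).
\]

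Finally, since $A_I^{(k)} = o_{\mathcal{P}}(1)$ for each fixed $k$, every summand $\sup_{P\in\mathcal{P}_I}\PP_P(|A_I^{(k)}| > \epsilon)$ tends to $0$ as $I\to\infty$; as $K$ is finite, the finite sum of vanishing terms also tends to $0$, so $\lim_{I\to\infty}\sup_{P\in\mathcal{P}_I}\PP_P(|\max_k A_I^{(k)}| > \epsilon) = 0$. As $\epsilon>0$ was arbitrary, this is precisely the statement $\max_{k\in[K]} A_I^{(k)} = o_{\mathcal{P}}(1)$. I do not expect a genuine obstacle: the only points warranting care are the absolute-value inequality bounding $|\max_k A_I^{(k)}|$ by $\max_k |A_I^{(k)}|$, and the interchange of the supremum over $P$ with the finite sum (which only weakens the bound), both elementary. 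The essential hypothesis is the finiteness of $K$, since it is what guarantees that a sum of uniformly vanishing terms vanishes; the same union-bound argument would fail for a growing number of sequences.
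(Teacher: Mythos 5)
Your argument is correct and is essentially identical to the paper's proof: both bound $|\max_k A_I^{(k)}|$ by $\max_k|A_I^{(k)}|$, apply a union bound, push the supremum over $P$ through the finite sum, and use finiteness of $K$. Your write-up simply spells out the elementary inequality and the interchange step in a bit more detail.
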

\begin{proof}
    For any $\epsilon>0$
    \begin{equation*}
        \sup_{P\in\mathcal{P}}\PP_P\left(
        \left|\max_{k\in[K]}A_I^{(k)}\right| > \epsilon\right)
        \leq \sup_{P\in\mathcal{P}}\PP_P\left(\max_{k\in[K]}\big|A_I^{(k)}\big|>\epsilon\right)
        \leq\sum_{k=1}^K\sup_{P\in\mathcal{P}}\PP_P\left(\big|A_I^{(k)}\big|>\epsilon\right)
        = o(1),
    \end{equation*}
    completing the proof.
\end{proof}

\begin{lemma}\label{lem:unif_clt}
Let \(\left(A_{I,i}\right)_{I\in\mathbb{N},i\in[I]}\) be a triangular array of real-valued random variables satisfying:
\begin{itemize}
    \item \(A_{I,1},\ldots,A_{I,I}\) are independent;
    \item \(\E_{P}\left[A_{I,i}\right]=0\hspace{4mm} (\forall I\in\mathbb{N}, i\in[I])\);
    \item There exists some \(\Delta>0\) such that
    \[
    \lim_{I\to\infty}\sup_{P \in \mathcal{P}_I}\frac{\sum_{i=1}^I\E_{P}\left[\left|A_{I,i}\right|^{2+\Delta}\right]}{\left(\sum_{i=1}^I\E_{P}\left[A_{I,i}^2\right]\right)^{1+\frac{\Delta}{2}}} = 0.
    \]
\end{itemize}
Then \(S_I := \left(\sum_{i=1}^I\E_{P}\left[A_{I,i}^2\right]\right)^{-\frac{1}{2}}\left(\sum_{i=1}^IA_{I,i}\right)\) converges uniformly to \(N(0,1)\), i.e.
\begin{equation*}
    \lim_{I\to\infty}\sup_{P\in\mathcal{P}_I}\sup_{t\in\R}\left|\PP_{P}\left(S_I\leq t\right)-\Phi(t)\right|=0.
\end{equation*}
\end{lemma}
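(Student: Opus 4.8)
The plan is to deduce this uniform central limit theorem from a \emph{quantitative} (Berry--Esseen type) bound whose constant depends only on $\Delta$, and not on $P$; once such a bound is in hand, uniformity over $\mathcal{P}_I$ is immediate. First I would record that, for each fixed $P\in\mathcal{P}_I$, the summands $A_{I,1},\dots,A_{I,I}$ are independent and mean zero, so that
\[
\Var_P(S_I)=\Bigl(\sum_{i=1}^I\E_P[A_{I,i}^2]\Bigr)^{-1}\sum_{i=1}^I\E_P[A_{I,i}^2]=1 .
\]
Thus $S_I$ is exactly a normalised sum of independent mean-zero random variables (with the normalisation well-defined since the third hypothesis presupposes $\sum_i\E_P[A_{I,i}^2]>0$), which is precisely the setting of the Berry--Esseen inequality.

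The key step is to invoke the Esseen/Lyapunov inequality of fractional order $2+\delta$: for independent mean-zero $X_1,\dots,X_n$ with $B_n:=\sum_j\E X_j^2$ and any $0<\delta\le 1$, there is a universal constant $C(\delta)$ with
\[
\sup_{t\in\R}\Bigl|\PP\bigl(B_n^{-1/2}\textstyle\sum_j X_j\le t\bigr)-\Phi(t)\Bigr|\le C(\delta)\,\frac{\sum_{j}\E|X_j|^{2+\delta}}{B_n^{1+\delta/2}} .
\]
Applying this with $X_j=A_{I,j}$ under $P$ gives, for every $P\in\mathcal{P}_I$,
\[
\sup_{t\in\R}\bigl|\PP_P(S_I\le t)-\Phi(t)\bigr|\le C(\delta)\,\frac{\sum_{i}\E_P[|A_{I,i}|^{2+\delta}]}{\bigl(\sum_i\E_P[A_{I,i}^2]\bigr)^{1+\delta/2}},
\]
with $C(\delta)$ independent of both $P$ and the individual laws of the $A_{I,i}$. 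When $\Delta\le 1$ I would take $\delta=\Delta$ directly. When $\Delta>1$ I would first reduce to $\delta=1$: after normalising so that $\sum_i\E_P[A_{I,i}^2]=1$, log-convexity of moments gives $\E_P|A_{I,i}|^3\le(\E_P A_{I,i}^2)^{(\Delta-1)/\Delta}(\E_P|A_{I,i}|^{2+\Delta})^{1/\Delta}$, and Hölder's inequality with conjugate exponents $\Delta/(\Delta-1)$ and $\Delta$ applied to the sum yields $\sum_i\E_P|A_{I,i}|^3\le\bigl(\sum_i\E_P|A_{I,i}|^{2+\Delta}\bigr)^{1/\Delta}$, so the third-order Lyapunov ratio is controlled by a positive power of the $(2+\Delta)$-order ratio.

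To conclude I would take $\sup_{P\in\mathcal{P}_I}$ on the right-hand side and let $I\to\infty$: the bound is then $C(\delta)$ times the Lyapunov ratio in the third hypothesis (or a positive power of it when $\Delta>1$), which tends to zero by assumption, giving $\sup_{P\in\mathcal{P}_I}\sup_{t}\bigl|\PP_P(S_I\le t)-\Phi(t)\bigr|\to 0$ as required. The main obstacle is purely in securing the quantitative input: one needs a Berry--Esseen bound valid for fractional moment orders $2+\delta$ with $\delta\in(0,1]$ and, crucially, with a constant depending only on $\delta$ and not on the number or the individual distributions of the summands. This universality is exactly what converts the per-$P$ rate into a bound uniform over the family $\mathcal{P}_I$; the remaining pieces (the elementary moment-interpolation reduction for $\Delta>1$ and checking the normalisation) are routine.
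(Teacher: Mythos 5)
Your proposal is correct, but it takes a genuinely different route from the paper. The paper's proof is qualitative: for each $I$ it selects a near-worst-case law $P_I\in\mathcal{P}_I$ (within $I^{-1}$ of the supremum), forms the normalised triangular array $T_{I,i}$, verifies the Lindeberg condition from the Lyapunov-type hypothesis via H\"older's and Markov's inequalities (picking up the same factor $\epsilon^{-\Delta}$ times the Lyapunov ratio that appears in your bound), and then invokes the Lindeberg--Feller theorem along this diagonal sequence. You instead appeal to the quantitative Esseen/Katz--Petrov inequality of fractional order $2+\delta$ with a constant depending only on $\delta$, so that uniformity over $\mathcal{P}_I$ follows immediately by taking suprema of both sides; your moment-interpolation reduction for $\Delta>1$ (writing $3=\tfrac{\Delta-1}{\Delta}\cdot 2+\tfrac{1}{\Delta}(2+\Delta)$ and applying H\"older to the sum after normalising $\sum_i\E_P[A_{I,i}^2]=1$) is also correct. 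The trade-off: your argument buys an explicit convergence rate and makes the uniformity trivial, at the price of importing the non-i.i.d.\ fractional-order Berry--Esseen bound as a black box, whereas the paper needs only the standard Lindeberg--Feller theorem (plus, implicitly, P\'olya's lemma to upgrade weak convergence to uniform convergence of distribution functions) but must handle uniformity by the slightly more delicate near-maximiser extraction. Both are complete proofs of the stated lemma.
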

\begin{proof}
For each \(I\in\mathbb{N}\), let \(P\in\mathcal{P}_I\) satisfy
\begin{equation}\label{eq:clt_proof}
    \sup_{P\in\mathcal{P}_I}\sup_{t\in\R}\left|\PP_{P}\left(S_I\leq t\right)-\Phi(t)\right| \leq \sup_{t\in\R}\left|\PP_{P}\left(S_I\leq t\right)-\Phi(t)\right| + I^{-1}.
\end{equation}
Let \(T_{I,i} := \left(\sum_{i=1}^I\E_{P}\left[A_{I,i}^2\right]\right)^{-\frac{1}{2}}\left(A_{I,i}\right)\) for each \(i\in[I]\). Note that by construction \(T_{I,1},\ldots,T_{I,I}\) are independent and mean zero and \(\sum_{i=1}^I\E_{P}(T_{I,i}^2)=1\). By the Lindeberg--Feller central limit theorem for triangular arrays (Proposition 2.27, \citet{vandervaart}) we have
\begin{equation*}
    \lim_{I\to\infty}\sup_{t\in\R}\left|\PP_{P}\left(S_I\leq t\right)-\Phi(t)\right| = 0,
\end{equation*}
provided the condition for the Lindeberg--Feller theorem holds. This condition holds as, for all \(\epsilon>0\),
\begin{align*}
    \sum_{i=1}^I\E_{P}\left[\left|T_{I,i}\right|^2\ind_{\left\{\left|T_{I,i}\right|>\epsilon\right\}}\right] 
    &\leq \sum_{i=1}^I\left(\E_{P}\left[\left|T_{I,i}\right|^{2+\Delta}\right]\right)^{\frac{2}{2+\Delta}}\big(\PP_{P}\left(\left|T_{I,i}\right|>\epsilon\right)\big)^{\frac{\Delta}{2+\Delta}} \\
    &\leq \epsilon^{-\Delta}\sum_{i=1}^I\E_{P}\left[\left|T_{I,i}\right|^{2+\Delta}\right] \\
    &= \epsilon^{-\Delta}\cdot\frac{\sum_{i=1}^I\E_{P}\left[\left|A_{I,i}\right|^{2+\Delta}\right]}{\left|\sum_{i=1}^I\E_{P}[A_{I,i}^2]\right|^{1+\frac{\Delta}{2}}}
\end{align*}
by H\"older's inequality followed by Markov's inequality. Taking limits of \eqref{eq:clt_proof} therefore completes the proof.
\end{proof}

\begin{lemma}[\citet{shahpeters}, Lemma 20]\label{lem:unif_slutsky}
Let \(\left(A_I\right)_{I\in\mathbb{N}}\) and \(\left(B_I\right)_{I\in\mathbb{N}}\) be sequences of real-valued random variables. Suppose
\begin{equation*}
    \lim_{I\to\infty}\sup_{P\in\mathcal{P}_I}\sup_{t\in\R}\left|\PP_{P}\left(A_I\leq t\right)-\Phi(t)\right| = 0.
\end{equation*}
Then we have:
\begin{enumerate}
    \item[(a)] \[\text{If } B_I=o_{\mathcal{P}}(1) \text{ then } \lim_{I\to\infty}\sup_{P\in\mathcal{P}_I}\sup_{t\in\R}\left|\PP_{P}\left(A_I+B_I\leq t\right)-\Phi(t)\right| = 0;\]
    \item[(b)] \[\text{If } B_I=1+o_{\mathcal{P}}(1) \text{ then } \lim_{I\to\infty}\sup_{P\in\mathcal{P}_I}\sup_{t\in\R}\left|\PP_{P}\left(A_I/B_I\leq t\right)-\Phi(t)\right| = 0.\]
\end{enumerate}
\end{lemma}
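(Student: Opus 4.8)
The plan is to prove this as a \emph{uniform} version of the classical Slutsky theorem, exploiting the fact that the rate of convergence in the hypothesis is already uniform over $P \in \mathcal{P}_I$, so uniformity propagates essentially for free. Throughout I would write $a_I := \sup_{P \in \mathcal{P}_I}\sup_{t \in \R}|\PP_P(A_I \le t) - \Phi(t)|$, which tends to $0$ by assumption, and recall that $\Phi$ is Lipschitz with constant $(2\pi)^{-1/2}$.

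For part (a), I would fix $\delta > 0$ and use the elementary inclusions $\{A_I + B_I \le t\} \subseteq \{A_I \le t + \delta\} \cup \{|B_I| > \delta\}$ and $\{A_I \le t - \delta\} \subseteq \{A_I + B_I \le t\} \cup \{|B_I| > \delta\}$ to sandwich $\PP_P(A_I + B_I \le t)$ between $\PP_P(A_I \le t \pm \delta) \mp \PP_P(|B_I| > \delta)$. Applying the definition of $a_I$ together with the Lipschitz bound $|\Phi(t \pm \delta) - \Phi(t)| \le \delta (2\pi)^{-1/2}$ then yields, uniformly in $t$,
\[
|\PP_P(A_I + B_I \le t) - \Phi(t)| \le \frac{\delta}{\sqrt{2\pi}} + a_I + \sup_{P \in \mathcal{P}_I}\PP_P(|B_I| > \delta).
\]
Taking the supremum over $P$ and $t$, letting $I \to \infty$ (so $a_I \to 0$ and the final term vanishes since $B_I = o_{\mathcal{P}}(1)$), and finally letting $\delta \downarrow 0$ gives the claim.

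For part (b), I would reduce to part (a) via the decomposition $A_I/B_I = A_I + A_I(B_I^{-1} - 1)$, aiming to show the perturbation $A_I(B_I^{-1}-1)$ is $o_{\mathcal{P}}(1)$. Two ingredients are needed. First, $A_I = O_{\mathcal{P}}(1)$, which follows from the hypothesis by choosing $M$ with $\Phi(-M) + 1 - \Phi(M)$ small and bounding $\sup_{P \in \mathcal{P}_I}\PP_P(|A_I| > M) \le \Phi(-M) + 1 - \Phi(M) + 2a_I$. Second, $B_I^{-1} - 1 = o_{\mathcal{P}}(1)$, which follows from $B_I = 1 + o_{\mathcal{P}}(1)$ since on the high-probability event $\{|B_I - 1| \le 1/2\}$ one has $|B_I^{-1} - 1| \le 2|B_I - 1|$. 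Then $A_I(B_I^{-1}-1) = o_{\mathcal{P}}(1)$ by Lemma~\ref{lem:unif_times}, and part (a) applied with this perturbation completes the proof.

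The argument is largely routine; the only point requiring genuine care is ensuring every bound is uniform over the family $\mathcal{P}_I$. This is precisely what makes the reduction clean: because both $a_I$ and $\sup_{P}\PP_P(\cdot)$ are defined as suprema over $P$, no distribution-by-distribution argument is ever needed, and the main (mild) obstacle is simply verifying the uniform tightness $A_I = O_{\mathcal{P}}(1)$ and the uniform smallness of $B_I^{-1}-1$, both of which I would phrase directly in terms of the $o_{\mathcal{P}}$ and $O_{\mathcal{P}}$ notation of Section~\ref{sec:notation}.
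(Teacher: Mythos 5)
Your proposal is correct, and both parts are carefully and completely argued: the sandwiching inclusions in (a) combined with the Lipschitz bound on $\Phi$, and the reduction of (b) to (a) via $A_I/B_I = A_I + A_I(B_I^{-1}-1)$ together with the uniform tightness $A_I = O_{\mathcal{P}}(1)$ extracted from the uniform CDF convergence, are exactly the right ingredients. The paper does not prove this lemma itself but imports it from \citet{shahpeters} (their Lemma~20), and your argument is essentially the standard proof given there, so there is nothing to flag.
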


\subsection{Proof of Corollary~\ref{thm:(s,theta)-all}}\label{appsec:(s,theta)-proof}
\begin{proof}[Proof of Theorem~\ref{thm:(s,theta)-all}]
We aim to show that Assumption~\ref{A3.1'} implies \ref{A3.1} and \ref{A3.2'} implies \ref{A3.2}. Note that for all \(k\in[K]\), \(i\in\mathcal{I}_k\), \(j\in[n_i]\),
    \begin{equation*}
        \E_{P}\left[\left(\hat{\sigma}^{(k)}(X_{ij})-\sigma^*(X_{ij})\right)^2\right] = \E_{P}\left[\frac{\left(\hat{s}^{(k)}(X_{ij})-s^*(X_{ij})\right)^2}{\hat{s}^{(k)}(X_{ij})^2s^*(X_{ij})^2}\right] \lesssim \E_{P}\left[\left(\hat{s}^{(k)}(X_{ij})-s^*(X_{ij})\right)^2\right],
    \end{equation*}
    and so \(\mathcal{R}_{\sigma}\lesssim\mathcal{R}_{s^*}\), hence Assumption~\ref{A3.1'} implies \ref{A3.1}. 

    To show that Assumption~\ref{A3.2'} implies 
    \ref{A3.2}, we consider each case separately:
    
    \paragraph{Equicorrelation Case: }
    Note that for all \(k\in[K]\),
    \begin{equation*}
        \left|\hat{\rho}^{(k)}-\rho^*\right| = \frac{\left|\hat{\theta}^{(k)}-\theta^*\right|}{|1+\theta^*|\cdot |1+\hat{\theta}^{(k)}|} \lesssim \left|\hat{\theta}^{(k)}-\theta^*\right| ,
    \end{equation*}
    and therefore \(\mathcal{R}_{\rho} \lesssim \mathcal{R}_{\theta^*}\).
    
\paragraph{Hierarchical / Nested Case: }
Note that for all \(k\in[K]\),
\begin{align*}
    \left|\hat{\rho}^{(k)}_1-\rho^*_1\right| &= \frac{\left|\hat{\theta}^{(k)}_1+\hat{\theta}^{(k)}_2-\theta^*_1-\theta^*_2\right|}{\left|1+\hat{\theta}^{(k)}_1+\hat{\theta}^{(k)}_2\right|\cdot\left|1+\theta^*_1+\theta^*_2\right|} \lesssim \left|\hat{\theta}^{(k)}_1-\theta^*_1\right|+\left|\hat{\theta}^{(k)}_2-\theta^*_2\right| \leq \sqrt{2}\norm{\hat{\theta}^{(k)}-\theta^*}_2, \\
    \left|\hat{\rho}^{(k)}_2-\rho^*_2\right| &= \frac{\left|\left(1+\theta^*_1\right)\left(\hat{\theta}_2-\theta^*_2\right)-\theta^*_2\left(\hat{\theta}^{(k)}_1-\theta^*_1\right)\right|}{\left|1+\hat{\theta}^{(k)}_1+\hat{\theta}^{(k)}_2\right|\cdot\left|1+\theta^*_1+\theta^*_2\right|} \lesssim \sqrt{2}\norm{\hat{\theta}^{(k)}-\theta^*}_2,
\end{align*}
and therefore \(\mathcal{R}_{\rho} \lesssim \mathcal{R}_{\theta^*}\).

\paragraph{Longitudinal $\text{AR}(1)$ Case: }
Note that for all \(k\in[K]\), \(i\in\mathcal{I}_k\), \(j\in[n_i]\), \(j'\in[n_i]\backslash\left\{j\right\}\),
\begin{align*}
    &\left|\hat{\rho}^{(k)}(X_{ij},X_{ij'})-\rho^*(X_{ij},X_{ij'})\right| = \left|\hat{\theta}^{(k)^{\left|j-j'\right|}}-\theta^{*^{\left|j-j'\right|}}\right| \\ 
    \leq& \left|\hat{\theta}^{(k)}-\theta\right|\cdot\left|\sum_{l=1}^{\left|j-j'\right|-1}\hat{\theta}^{(k)^l}\theta^{*^{\left|j-j'\right|-1-l}}\right| \leq\left|\hat{\theta}^{(k)}-\theta^*\right|\cdot\left|j-j'\right|\cdot\left(\hat{\theta}^{(k)}\vee\theta^*\right)^{\left|j-j'\right|-1} \\
    \leq& \left|\hat{\theta}^{(k)}-\theta^*\right|\cdot\left|j-j'\right|\left(1-c\right)^{\left|j-j'\right|-1}
    \leq \left(\log\left(\frac{1}{1-c}\right)\right)^{-1}(1-c)^{\frac{1}{\log\left(\frac{1}{1-c}\right)}-1}\left|\hat{\theta}^{(k)}-\theta^*\right| \\
    \lesssim& \left|\hat{\theta}^{(k)}-\theta^*\right|,
\end{align*}
for some \(c>0\),
with the last inequality following by elementary calculus (optimising the function \(N\mapsto N(1-c)^{N-1}\)). Therefore \(\mathcal{R}_{\rho} \lesssim \mathcal{R}_{\theta^*}\).

Hence, for each working correlation, Assumption~\ref{A3.2'} implies \ref{A3.2}.

Finally, we must show that for each working correlation \(\Lambda_{\max}(W_i^*)\lesssim 1\), \(\Lambda_{\min}(W_i^*)\gtrsim n^{-\gamma}\) for all \(i\in[I]\). We consider each separately.

\paragraph{Equicorrelation Case:}
    Consider the decomposition of the matrix \(W_i^*\) as the sum of two matrices \(W_i=W_i^{*(A)}+W_i^{*(B)}\)
    \begin{align*}
        \big(W_i^{*(A)}\big)_{jj'} &= \delta_{jj'}s^*(X_{ij})s^*(X_{ij'}) 
        \\
        \big(W_i^{*(B)}\big)_{jj'} &= -\frac{\theta}{1+\theta n_i}s^*(X_{ij})s^*(X_{ij'}),
    \end{align*}
    for $j,j'\in[n_i]$
    Then, we have that
    \begin{equation*}
        \Lambda_{\max}(W_i^*) \leq \max_{j\in[n_i]}s^*(X_{ij})^2 + 0 \lesssim 1,
    \end{equation*}
    and further that, by Weyl's inequality,
    \begin{multline*}
        \Lambda_{\min}(W_i^*) \geq \min_{j\in[n_i]}s^*(X_{ij})^2-\frac{\theta^*}{1+\theta^* n_i}\norm{s^*(X_i)}_2^2 \\ \geq \left(1-\frac{\theta^* n_i}{1+\theta^* n_i}\right)\min_{j\in[n_i]}s^*(X_{ij})^2 \gtrsim \frac{1}{1+\theta^* n_i}\geq\frac{1}{1+\theta^*}\cdot\frac{1}{n_i} \gtrsim n_i^{-1}.
    \end{multline*}
    
\paragraph{Hierarchical / Nested Case:}
First consider the decomposition of \(W_i^*=W_i^{*(A)}+W_i^{*(B)}\) where
\begin{align*}
    \big(W_i^{*(A)}\big)_{jj'} &= (1+\theta^*_1+\theta^*_2)\delta_{jj'}s^*(X_{ij})s^*(X_{ij'}),
    \\
    \big(W_i^{*(B)}\big)_{jj'} &= -\mathcal{B}_{i,jj'}s^*(X_{ij})s^*(X_{ij'}),
\end{align*}
for $j,j'\in[n_i]$ and \(\mathcal{B}\) as defined in Section~\ref{sec:boosting}. Then
\begin{equation*}
    \Lambda_{\max}(W_i^*)\leq \max_{j\in[n_i]}s^*(X_{ij})^2 + 0 \lesssim 1
\end{equation*}
Further, by considering \(W_i^{*^{-1}}\) parametrised by \((\rho^*_1,\rho^*_2)=\left(\frac{\theta^*_1+\theta^*_2}{1+\theta^*_1+\theta^*_2},\frac{\theta^*_2}{1+\theta^*_1+\theta^*_2}\right)\), we can similarly show that \(\Lambda_{\max}(W_i^{*^{-1}})\lesssim n_i\), and therefore \(\Lambda_{\min}(W_i^*)\gtrsim n_i^{-1}\), completing the proof.

\paragraph{Longitudinal $\text{AR}(1)$ Case:}
By the Gershgorin circle theorem alongside the triangle inequality,
\begin{align*}
    \Lambda_{\max}(W_i^*) \leq & (1+2\theta^*)\max_{j\in[n_i]}s^*(X_{ij})^2 \lesssim 1,
    \\
    \Lambda_{\max}(W_i^{*^{-1}}) \leq & \left(\min_{j\in[n_i]}s^*(X_{ij})^2\right)^{-1} + 2\left(\min_{j\in[n_i]}s^*(X_{ij})^2\right)^{-1}\sum_{j=1}^{n_i-1}\theta^j \\
    =& \left(1+\frac{2\theta^*(1-\theta^{*^{n_i-1}})}{1-\theta^*}\right)\left(\min_{j\in[n_i]}s^*(X_{ij})^2\right)^{-1} 
    \\
    \leq&\left(1+\frac{2}{1-\theta^*}\right)\left(\min_{j\in[n_i]}s^*(X_{ij})^2\right)^{-1} \lesssim 1,
\end{align*}
and so $\Lambda_{\min}(W_i^*)=\big(\Lambda_{\max}(W_i^{*^{-1}})\big)^{-1}\gtrsim 1$ as required.

\end{proof}

\section{Practical details of numerical results in Section~\ref{sec:numerical_results}}\label{appsec:numerical_results}

\subsection{Details of weight estimators studied in~\ref{sec:simulations}}\label{sec:loss_details}

The `heteroscedastic ML' estimator used in Section~\ref{sec:numerical_results} follows Algorithm~\ref{alg:main} but with weights constructed as follows. Adopting the notation in Algorithm~\ref{alg:main} with weights as in~\eqref{eq:sigma-rho-workingcov}, we construct
\begin{gather*}
    (\hat{\phi}_\ML,\hat{\rho}_\ML,\hat{\beta}_\ML) = \underset{(\phi,\rho,\beta)\in\R^5\times(-1,1)\times\R}{\argmin} l_\ML(\phi,\rho,\beta),
    \\
   l_\ML(\phi,\rho,\beta) := \sum_{i\in\mathcal{I}_k^c} \Big(-\log\det W(\phi,\rho)(X_i) + \big(\tilde{R}^Y_i-\beta\tilde{R}^D_i\big)^{\top}W(\phi,\rho)(X_i)\big(\tilde{R}^Y_i-\beta\tilde{R}^D_i\big)\Big),
    \\
    \left\{W(\phi,\rho)(X_i)\right\}^{-1} = D_{\sigma_\phi}(X_i)C_{\theta_\rho}D_{\sigma_\phi}(X_i),
    \qquad
    \sigma_\phi(x) = \exp\big(\phi^{\top}(1,x,x^2,x^3,x^4)\big),
\end{gather*}
with working correlations $(C_{\theta_\rho})_{jk}=\ind_{\left\{j=k\right\}} + \rho\cdot\ind_{\left\{j\neq k\right\}}$ for examples~\ref{sec:sim-complexity}~and~\ref{sec:sim-var} (equicorrelated) and $(C_{\theta_\rho})_{jk}=\rho^{|j-k|}$ for examples~\ref{sec:sim-misspec}~and~\ref{sec:sim-corr} ($\text{AR}(1)$). These are estimated using the \texttt{nlme} package~\citep{nlme-code}. Weights are then constructed as
\begin{equation*}
     \big\{\hat{W}_\ML^{(k)}(\cdot)\big\}^{-1} = D_{\hat{\sigma}_\ML}(\cdot)C_{\theta_{\hat{\rho}_\ML}}D_{\hat{\sigma}_\ML}(\cdot),
     \qquad
     \hat{\sigma}_{\ML}(x):=\exp\big(\hat{\phi}_\ML^{\top}(1,x,x^2,x^3,x^4)\big)
\end{equation*}
and used in Algorithm~\ref{alg:main} to construct the final $\hat{\beta}$. The `homoscedastic ML' estimator is constructed as above fixing $\hat{\phi}_\ML={\bf 0}$.

The `heteroscedastic GEE' estimator also follow the setup of Algorithm~\ref{alg:main} with weights constructed as follows. First we generate a generalised additive cubic spline estimator for the variance and estimator for the correlation parameter as
\begin{align*}
    \hat{\sigma}_{\GEE} &:= \underset{\sigma\in\Pi}{\argmin}\sum_{i\in\mathcal{I}_k^c}\sum_{j=1}^{n_i}\big(\tilde{\varepsilon}_{ij}^2-\sigma^2(X_{ij})\big)^2  +  \lambda J(\sigma^2), &
    J(f)=\int\left(f''(x)\right)^2dx, &
    \\
    \hat{\rho}_{\GEE} &:= \underset{\rho\in(-1,1)}{\argmin}\sum_{i\in\mathcal{I}_k^c}\sum_{\substack{j,k=1\\j\neq k}}^{n_i} \Big(\tilde{\varepsilon}_{ij}\tilde{\varepsilon}_{ik}-\hat{\sigma}_{\GEE}(X_{ij})\hat{\sigma}_{\GEE}(X_{ik})(C_{\theta_\rho})_{jk}\Big)^2, & &
\end{align*}
where $\Pi$ is the set of $\R_{>0}$ valued functions that have absolutely continuous first derivative, $\lambda$ is a tuning parameter that minimises the generalised cross-validation (GCV) score, and $C_{\theta_\rho}$ takes the equicorrelated or $\text{AR}(1)$ structure for each example as outlined above. In all the simulations in Section~\ref{sec:simulations} we fit $\hat{\sigma}_\GEE$ using the \texttt{gam} package~\citep{gam-code}, and $\hat{\rho}_\GEE$ using the \texttt{geepack} package~\citep{geepack}. Weights are then constructed as
\begin{equation*}
    \big\{\hat{W}_\GEE^{(k)}(\cdot)\big\}^{-1} = D_{\hat{\sigma}_\GEE}(\cdot)C_{\theta_{\hat{\rho}_\GEE}}D_{\hat{\sigma}_\GEE}(\cdot),
\end{equation*}
and used in Algorithm~\ref{alg:main} to construct $\hat{\beta}$. The `homoscedastic GEE' estimator is estimated in the same way except we set $\hat{\sigma}_\GEE\equiv 1$.

\subsection{Sandwich boosting with variable step size}

In Section~\ref{sec:boosting} we outlined the sandwich boosting algorithm~\ref{alg:boosting} in terms of a fixed step size $\lambda^{(s)}$, with an optimal number of boosting iterations $m_{\text{stop}}$ selected by cross-validation. In this setup $\lambda^{(s)}$ acts as a hyperparameter; if it is set to be too large one can be susceptible to under-fitting (being unable to attain a local minimum), and if set too small it can take an unreasonably large $m_{\text{stop}}$ to achieve convergence. An adaptation of Algorithm~\ref{alg:boosting} that may be helpful in practice would be to replace the $\lambda^{(s)}$ constant step size with a variable step size
\begin{align*}
    \lambda^{(s)}_m \approx \underset{\lambda\in\Lambda}{\argmin}\,\hat{L}_{\SL}(\hat{s}_m-\lambda\hat{u}_m, \hat{\theta}_m),
\end{align*}
over some closed interval $\Lambda$ of positive step sizes, attained by minimising the second order expansion of $\lambda\mapsto\hat{L}_{\SL}(\hat{s}_m-\lambda\hat{u}_m, \hat{\theta}_m)$. The $m$th iterative update of the $s$-function is then taken as
\begin{align*}
    \hat{s}_{m+1} = (\hat{s}_m -\mu^{(s)}\lambda^{(s)}_m\hat{u}_m) \vee \epsilon,
\end{align*}
for some constant shrinkage parameter $\mu^{(s)}\in(0,1)$, whose purpose is to avoid overfitting, and small constant $\epsilon >0$ which ensures that $\hat{s}_{m+1}$ is always strictly positive everywhere; in all our experiments, we set $\epsilon = 0.1$. With this setup boosting, can then be performed until a convergence criterion is achieved. This can eliminate the need to determine an approximately optimal $(\lambda^{(s)},m_{\text{stop}})$, as well as achieve faster convergence of sandwich boosting. We implement this variable step size scheme in some of the numerical examples studied in Section~\ref{sec:numerical_results}.

\subsection{Details of numerical results in Section~\ref{sec:numerical_results}}\label{appsec:numerical-details}

Sandwich boosting for the numerical results is setup as follows.

\smallskip\noindent\textbf{Increasing Model Complexity (Section~\ref{sec:sim-complexity}): } The variable step size adaptation of sandwich boosting is used, with $\Lambda=[0.001,10]$ and $\lambda^{(\theta)}=0.1$.

\smallskip\noindent\textbf{Increasing Covariance Misspecification (Section~\ref{sec:sim-misspec}): } A constant step size $\lambda^{(s)}=10$ is used, with $\lambda^{(\theta)}=0.1$ and $m_{\text{stop}}=200$ selected by cross-validation.

\smallskip\noindent\textbf{Correlation Misspecification (Section~\ref{sec:sim-corr}): } As this simulated example involves optimisation of the sandwich loss over a parameteric weight class in terms of a single $\text{AR}(1)$ parameter, there is no need to select a $(\lambda^{(s)},m_{\text{stop}})$.

\smallskip\noindent\textbf{Conditional Variance Misspecification (Section~\ref{sec:sim-var}): } A constant step size $\lambda^{(s)}=100$ is used, with $\lambda^{(\theta)}=0.1$ and $m_{\text{stop}}=500$ selected by cross-validation.

\smallskip\noindent\textbf{Real world data scenarios (Section~\ref{sec:real-data}): }
We carry out sandwich boosting for both with the variable step size adaptation within the interval $\Lambda=[0.01,1]$, with $s$-shrinkage parameter $\mu^{(s)}=0.1$ and $\theta$-step size $\lambda^{(\theta)}=0.1$.

When analysing the real world data discussed in Section~\ref{sec:real-data}, we alleviate the dependence of the cross-fitting method on the randomness of the sample splits by repeating the procedure of Algorithm~\ref{alg:main} $\mathcal{S}=50$ times for different random sample splits. Suppose for the repeat indexed by $s\in[\mathcal{S}]$ we have $\beta$-estimator $\hat{\beta}_s$ and asymptotic variance estimator $\hat{V}_s$. Then we aggregate the results of these repeats to produce a final estimate of $\beta$ and associated estimate of its variance via
\begin{equation*}
    \hat{\beta} := \underset{s\in[\mathcal{S}]}{\text{median}}\,\hat{\beta}_s,
    \qquad
    \hat{V} := \underset{s\in[\mathcal{S}]}{\text{median}}\,\big(\hat{V}_s + (\hat{\beta}-\hat{\beta}_s)^2\big);
\end{equation*}
see for example \citet{chern, emmenegger}.

\subsection{Choice of parameters for cross-fitting} \label{sec:cross-fitting_K}

As outlined in Section~\ref{sec:numerical_results}, for the simulations in Section~\ref{sec:simulations} we perform cross-fitting with $K=2$ folds and for $\mathcal{S}=1$ random sample splits. The choice of $(K,\mathcal{S})$ does not affect the asymptotic mean squared error of the resulting estimator (see Theorem~\ref{thm:main}), and thus for sufficiently large sample size the choice of cross-fitting parameters does not affect the resulting mean squared error. Figure~\ref{fig:K-and-S} (left panel) shows a simulation in Section~\ref{sec:simulations} (specifically Example~\ref{sec:sim-var} with group size $n_i\equiv 8$) for cross-fitting parameters $K\in\{2,5,10\}$. We see the choice of $K$ does not have a significant effect on the mean squared error of $\hat{\beta}$ at the sample sizes considered in Section~\ref{sec:simulations}, therefore justifying our choice of $(K,\mathcal{S})=(2,1)$ (although any other choices would be valid, at a further computational cost). We do however note that, while such a simple, computationally lean choice is justified for the simulations in Section~\ref{sec:simulations}, this does not necessarily hold for smaller sample sizes; see for example Figure~\ref{fig:K-and-S} (right). In such settings a larger choice of either (or indeed both) $K$ and $\mathcal{S}$ can improve the mean squared error, by reducing the variance in the nuisance function estimators, in addition to the variance induced by the sample split(s) used for cross-fitting. Therefore, when analysing the real data scenarios of Section~\ref{sec:real-data} we choose a larger choice of $(K,\mathcal{S})=(5,50)$ (see Section~\ref{appsec:numerical-details} for details).

\begin{figure}[ht]
    \centering
    \includegraphics[width=0.96\textwidth]{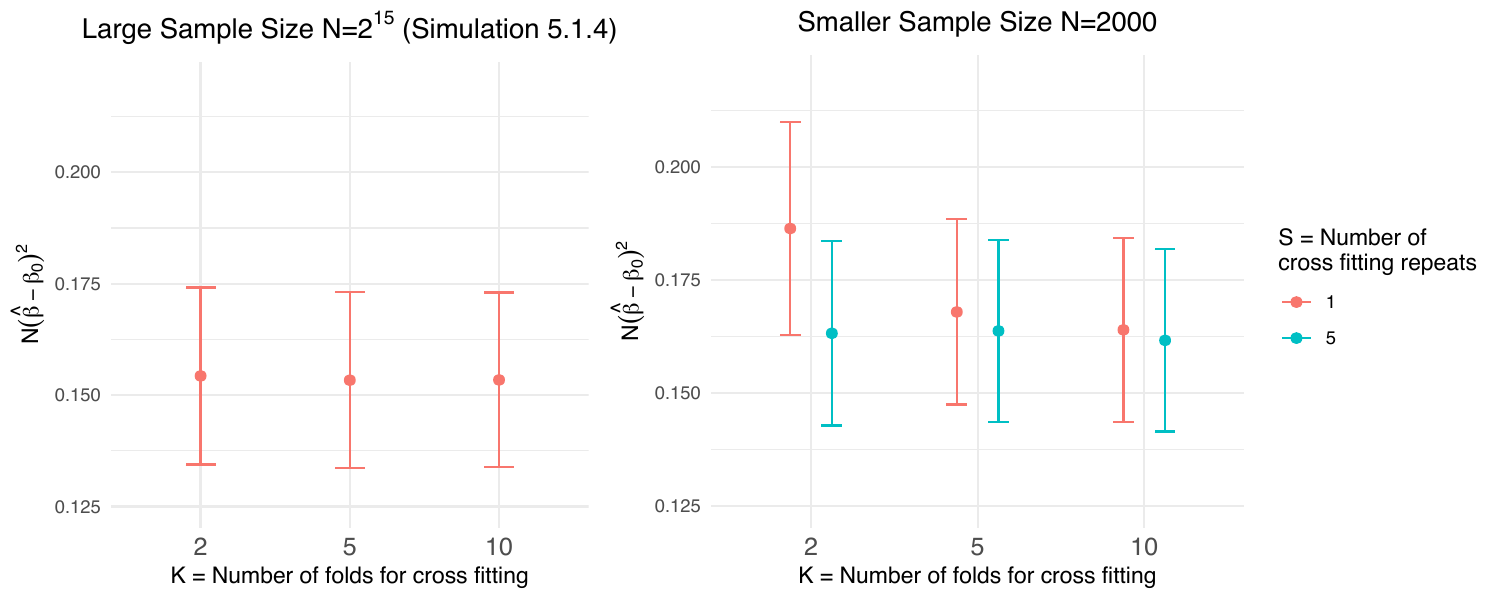}
    \captionsetup{list=false}
    \caption{Mean squared error of the sandwich boosted $\beta$-estimators for different cross-fitting parameter choices (the number of folds used for cross-fitting, $K\in\{2,5,10\}$, and the number of repeats over different sample splits, $\mathcal{S}\in\{1,5\}$). We present these for two simulations: (Left) Example~\ref{sec:sim-var}, with $I=2^{12}$ groups each of size $n_i\equiv 8$; and (Right) The same data generating mechanism as in~Example~\ref{sec:sim-var} but with $I=500$ groups each of size $n_i\equiv4$. For all settings nominal $95\%$ confidence intervals for the mean squared errors (over 500 simulations) are plotted.}
    \label{fig:K-and-S}
\end{figure}

\section{Additional numerical results}\label{sec:add_results}
Tables~\ref{tab:coverage-1}--\ref{tab:coverage-4} present the coverage probabilities of nominal 95\% confidence intervals based on the construction in Algorithm~\ref{alg:main} for the simulation settings in Section~\ref{sec:simulations}.

\begin{table}[htbp]
    \centering
    \begin{tabular}{cccccccc}
        \toprule
        $\lambda$ & \begin{tabular}{c}Sandwich\\Boosting\end{tabular} & Het ML & Hom ML & Het GEE & Hom GEE & Unweighted & Oracle \\
        \midrule
        0.1 & 0.952 & 0.954 & 0.950 & 0.950 & 0.950 & 0.946 & 0.950 \\
        0.3 & 0.956 & 0.950 & 0.956 & 0.950 & 0.956 & 0.946 & 0.950 \\
        0.5 & 0.944 & 0.952 & 0.946 & 0.952 & 0.946 & 0.952 & 0.950 \\
        0.7 & 0.950 & 0.950 & 0.948 & 0.948 & 0.948 & 0.956 & 0.948 \\
        0.9 & 0.950 & 0.958 & 0.948 & 0.950 & 0.948 & 0.954 & 0.948 \\
        1.1 & 0.956 & 0.952 & 0.950 & 0.958 & 0.950 & 0.956 & 0.958 \\
        1.3 & 0.952 & 0.946 & 0.952 & 0.962 & 0.952 & 0.950 & 0.954 \\
        1.5 & 0.958 & 0.942 & 0.948 & 0.958 & 0.948 & 0.946 & 0.962 \\
        1.7 & 0.962 & 0.946 & 0.946 & 0.970 & 0.946 & 0.948 & 0.962 \\
        1.9 & 0.958 & 0.942 & 0.944 & 0.970 & 0.944 & 0.944 & 0.966 \\
        2.1 & 0.962 & 0.938 & 0.942 & 0.954 & 0.942 & 0.948 & 0.954 \\
        2.3 & 0.954 & 0.936 & 0.946 & 0.964 & 0.946 & 0.950 & 0.956 \\
        2.5 & 0.946 & 0.942 & 0.952 & 0.948 & 0.952 & 0.946 & 0.952 \\
        \bottomrule
    \end{tabular}
    \caption{Coverage of nominal $95\%$ confidence intervals $\hat{C}(0.05)$ for $\beta$ in Example~\ref{sec:sim-complexity}.}\label{tab:coverage-1}
\end{table}

\begin{table}[htbp]
    \centering
    \begin{tabular}{ccccccccc}
        \toprule
         $\eta$ & \begin{tabular}{c}Sandwich\\Boosting\end{tabular} & Het ML & Hom ML & Het GEE & Hom GEE & Unweighted \\
        \midrule
        10 & 0.982 & 0.974 & 0.974 & 0.974 & 0.974 & 0.958 \\
        20 & 0.976 & 0.978 & 0.984 & 0.952 & 0.984 & 0.952 \\
        30 & 0.982 & 0.962 & 0.970 & 0.948 & 0.972 & 0.946 \\
        40 & 0.984 & 0.960 & 0.976 & 0.930 & 0.974 & 0.944 \\
        50 & 0.972 & 0.946 & 0.968 & 0.910 & 0.966 & 0.940 \\
        60 & 0.972 & 0.942 & 0.958 & 0.910 & 0.960 & 0.932 \\
        70 & 0.972 & 0.932 & 0.958 & 0.892 & 0.958 & 0.934 \\
        80 & 0.970 & 0.934 & 0.962 & 0.906 & 0.960 & 0.952 \\
        90 & 0.970 & 0.908 & 0.964 & 0.884 & 0.964 & 0.956 \\
        100 & 0.968 & 0.914 & 0.960 & 0.868 & 0.958 & 0.932 \\
        \bottomrule
    \end{tabular}
    \caption{Coverage of nominal $95\%$ confidence intervals $\hat{C}(0.05)$ for $\beta$ in Example~\ref{sec:sim-misspec}.}\label{tab:coverage-2}
\end{table}

\begin{table}[htbp]
    \centering
    \begin{tabular}{cccccc}
        \toprule
        \begin{tabular}{c}Group\\Size\end{tabular} & Hom SL & Hom ML & Hom GEE & Unweighted \\
        \midrule
        \(2^1\) & 0.948 & 0.948 & 0.948 & 0.946 \\
        \(2^2\) & 0.946 & 0.946 & 0.948 & 0.944 \\
        \(2^3\) & 0.950 & 0.946 & 0.936 & 0.942 \\
        \(2^4\) & 0.942 & 0.960 & 0.934 & 0.958 \\
        \(2^5\) & 0.952 & 0.956 & 0.936 & 0.948 \\
        \(2^6\) & 0.950 & 0.946 & 0.936 & 0.950 \\
        \(2^7\) & 0.944 & 0.942 & 0.932 & 0.946 \\
        \(2^8\) & 0.942 & 0.944 & 0.938 & 0.938 \\
        \bottomrule
    \end{tabular}
    \caption{Coverage of nominal $95\%$ confidence intervals $\hat{C}(0.05)$ for $\beta$ in Example~\ref{sec:sim-corr}.}\label{tab:coverage-3}
\end{table}

\begin{table}[htbp]
    \centering
    \begin{tabular}{ccccccccc}
        \toprule
        \begin{tabular}{c}Group\\Size\end{tabular} & \begin{tabular}{c}Sandwich\\Boosting\end{tabular} & Het ML & Hom ML & Het GEE & Hom GEE & Unweighted \\
        \midrule
        \(2^1\) & 0.938 & 0.936 & 0.932 & 0.948 & 0.932 & 0.926 \\
        \(2^2\) & 0.938 & 0.938 & 0.940 & 0.954 & 0.940 & 0.922 \\
        \(2^3\) & 0.938 & 0.936 & 0.938 & 0.956 & 0.938 & 0.926 \\
        \(2^4\) & 0.938 & 0.940 & 0.944 & 0.954 & 0.944 & 0.924 \\
        \(2^5\) & 0.938 & 0.940 & 0.936 & 0.966 & 0.936 & 0.934 \\
        \(2^6\) & 0.942 & 0.942 & 0.948 & 0.968 & 0.948 & 0.928 \\
        \(2^7\) & 0.946 & 0.948 & 0.950 & 0.966 & 0.950 & 0.916 \\
        \(2^8\) & 0.952 & 0.952 & 0.952 & 0.968 & 0.952 & 0.926 \\
        \bottomrule
    \end{tabular}
    \caption{Coverage of nominal $95\%$ confidence intervals $\hat{C}(0.05)$ for $\beta$ in Example~\ref{sec:sim-var}.}\label{tab:coverage-4}
\end{table}

\section{Further details on the generalised sandwich loss of Section~\ref{sec:discussion}}\label{appsec:het_treatment_effect}

Consider the setup of Section~\ref{sec:discussion} where we wish to estimate function $\beta(X)$ in model \eqref{eq:hetPLR}, a task for which by our assumption \eqref{eq:beta_basis}, it suffices to estimate coefficient vector $\mbb \phi$. Below we briefly outline how this may be performed. For intuition, suppose the regression functions $(l_0,m_0)$ ({defined as in the rest of the paper}) are known, and define $R_i^Y:=Y_i-l_0(X_i)$ and $R_i^D:=D_i-m_0(X_i)$. Then
\begin{equation*}
    R_i^Y = M_i \mbb{\phi} + \varepsilon_i,
    \qquad M_i := \big(\varphi_l(X_{ij})R_{ij}^D\big)_{j\in[n], l\in[L]}.
\end{equation*}
An estimate of $\mbb{\phi}$ can then be obtained via a potentially weighted multivariate linear regression of the $R^Y_i$ onto the $M_i$. In reality, estimates $(\hat{l},\hat{m})$ would be used in place of the unknown $(l_0,m_0)$ in a cross-fitting scheme similar to that in Section~\ref{sec:cross}. We may obtain an estimate $\hat{V}_{\mbb\phi}(W)$ as a function of the weight function, and then seek to minimise the generalised sandwich loss objective \eqref{eq:gen_sand} via sandwich boosting.

Algorithm~\ref{alg:het-scores-arbitrary} outlines how to calculate the $s$-scores for sandwich boosting in this setup for an arbitrary inverse working correlation $\mathcal{C}_{ijk}(\theta)$ as in~\eqref{eq:C-cal}. The order of computation of this score calculation can be reduced to $O(L^2N)$ when using one of the working correlation parameterisations in Section~\ref{appsec:workingcorrelations}; see Algorithm~\ref{alg:het-scores-equicorr} for the example of the equicorrelated working correlation.

\newpage

\begin{algorithm}[H]
 \KwIn{Index set $\mathcal{I}$ for boosting training; Working correlation structure $\mathcal{C}_{i}$~\eqref{eq:C-cal}; Estimates of (grouped) errors $(\hat{\xi}_i,\hat{\varepsilon}_i,X_i)_{i\in\mathcal{I}}$; Estimates of $(s,\theta)$ at which scores to be calculated; $\Phi\in\R^{L\times L}$ matrix (see~\eqref{eq:beta_function_MSE}).}

Calculate $A^{(1)}\in\R^{|\mathcal{I}|\times L\times L}$ and $A^{(2)}\in\R^{|\mathcal{I}|\times L\times L}$:
\\
\For{$i\in\mathcal{I}$} {
\For{$(l,l')\in[L]^2$} {
    $A^{(1)}_{ill'}
    := \sum_{k=1}^{n_i}\sum_{k'=1}^{n_i}\mathcal{C}_{ikk'}(\theta)s(X_{ik})s(X_{ik'})\hat{\xi}_{ik}\hat{\xi}_{ik'}\varphi_l(X_{ik})\varphi_{l'}(X_{ik'})$,
    
    $A^{(2)}_{ill'} := \sum_{k=1}^{n_i}\sum_{k'=1}^{n_i}\mathcal{C}_{ikk'}(\theta)s(X_{ik})s(X_{ik'})\hat{\xi}_{ik}\hat{\varepsilon}_{ik'}\varphi_l(X_{ik})\varphi_{l'}(X_{ik'})$.
    
}
}

Calculate $A^{(1)}_{\text{sum}} := \sum_{i\in\mathcal{I}}A^{(1)}_i$ and $A^{(2)}_{\text{sum.sq}} := \sum_{i\in\mathcal{I}}A^{(2)}_iA^{(2)T}_i$.

Calculate $A^{(3)}$ and $A^{(4)}$:

\For{$i\in\mathcal{I}$ and $j\in[n_i]$} {
\For{$(l,l')\in[L]^2$} {

    $A^{(3)}_{ijll'} := \sum_{k=1}^{n_i}\mathcal{C}_{ijk}(\theta)s(X_{ik})\hat{\xi}_{ij}\hat{\xi}_{ik}\Big(\varphi_l(X_{ij})\varphi_{l'}(X_{ik})+\varphi_{l'}(X_{ij})\varphi_l(X_{ik})\Big)$,
    
    $A^{(4)}_{ijll'} := \sum_{k=1}^{n_i}\mathcal{C}_{ijk}(\theta)s(X_{ik})\left(\hat{\xi}_{ij}\hat{\varepsilon}_{ik}\varphi_l(X_{ij})\varphi_{l'}(X_{ik}) + \hat{\xi}_{ik}\hat{\varepsilon}_{ij}\varphi_{l'}(X_{ij})\varphi_l(X_{ik})\right)$.
    
}
}
Calculate $s$-scores:

\For{$i\in\mathcal{I}$ and $j\in[n_i]$} {
    $U_{\text{SL}}^{(s)}(s,\theta)(X_{ij}) := 
    - \tr\bigg(\Phi\cdot\left(A_{\text{sum}}^{(1)}\right)^{-1}\Big[
    A_{ij}^{(3)}\left(A_{\text{sum}}^{(1)}\right)^{-1}A_{\text{sum.sq}}^{(2)} $ 
    
    $\qquad+ A_{\text{sum.sq}}^{(2)}\left(A_{\text{sum}}^{(1)}\right)^{-1}A_{ij}^{(3)}
    - A_{ij}^{(4)}A_{i}^{(2)} - A_{i}^{(2)}A_{ij}^{(4)}
    \Big]\left(A_{\text{sum}}^{(1)}\right)^{-1}\bigg)$.

}
Calculate $A^{(5)}_{\text{sum}}$ and $A^{(6)}$:

\For{$(l,l')\in[L]^2$} {
    $\big(A^{(5)}_{\text{sum}}\big)_{ll'} = \sum_{i\in\mathcal{I}}\sum_{k=1}^{n_i}\sum_{k'=1}^{n_i}\frac{\partial\mathcal{C}_{ikk'}}{\partial\theta}s(X_{ik})s(X_{ik'})\hat{\xi}_{ik}\hat{\xi}_{ik'}\varphi_l(X_{ik})\varphi_{l'}(X_{ik'})$,

\For{$i\in\mathcal{I}$} {

    $A^{(6)}_{ill'} := \sum_{k=1}^{n_i}\sum_{k'=1}^{n_i}\frac{\partial\mathcal{C}_{ikk'}}{\partial\theta}s(X_{ik})s(X_{ik'})\hat{\xi}_{ik}\hat{\varepsilon}_{ik'}\varphi_l(X_{ik})\varphi_{l'}(X_{ik'})$.

}
}
Calculate   $A^{(6)}_{\text{sum.sq.diff}} := \sum_{i\in\mathcal{I}} A_i^{(2)}A_i^{(6)} + A_i^{(6)}A_i^{(2)}$.

Calculate the $\theta$-score:

    $U_{\text{SL}}^{(\theta)}(s,\theta) =
    - \tr\bigg( \Phi\cdot
    \left(A_{\text{sum}}^{(1)}\right)^{-1}\Big[
    A_{\text{sum}}^{(5)}\left(A_{\text{sum}}^{(1)}\right)^{-1}A_{\text{sum.sq}}^{(2)}
    + A_{\text{sum.sq}}^{(2)}\left(A_{\text{sum}}^{(1)}\right)^{-1}A_{\text{sum}}^{(5)}$
    \\
    $\qquad - A_{\text{sum.sq.diff}}^{(6)}
    \Big]
    \left(A_{\text{sum}}^{(1)}\right)^{-1} \bigg)   $.

\KwOut{The set of $N$ $s$-scores $\left(X_{ij},\, U_{\SL}^{(s)}(s,\theta)(X_{ij})\right)_{i\in\mathcal{I},j\in[n_i]}$ and $\theta$-score $U_{\SL}^{(\theta)}(s,\theta)$.}
 \caption{Generalised sandwich boosting: Score calculation for arbitrary working correlation}
 \label{alg:het-scores-arbitrary}
\end{algorithm}

\begin{algorithm}[H]
 \KwIn{Index set $\mathcal{I}$ for boosting training; Estimates of (grouped) errors $(\hat{\xi}_i,\hat{\varepsilon}_i,X_i)_{i\in\mathcal{I}}$; Estimates of $(s,\theta)$ at which scores to be calculated; $\Phi\in\R^{L\times L}$ matrix (see~\eqref{eq:beta_function_MSE}).}

Calculate $A^{(1)}\in\R^{|\mathcal{I}|\times L\times L}$, $A^{(2)}\in\R^{|\mathcal{I}|\times L\times L}$, $B^{(1)}\in\R^{|\mathcal{I}|\times L}$ and $B^{(2)}\in\R^{|\mathcal{I}|\times L}$:

\For{$i\in\mathcal{I}$} {
\For{$l\in[L]$} {
$B_{il}^{(1)} := \sum_{k=1}^{n_i}s(X_{ik})\hat{\xi}_{ik}\varphi_l(X_{ik})$,

$B_{il}^{(2)} := \sum_{k=1}^{n_i}s(X_{ik})\hat{\varepsilon}_{ik}\varphi_l(X_{ik})$,
    
}

\For{$(l,l')\in[L]^2$} {
    $A^{(1)}_{ill'} := \sum_{k=1}^{n_i}s^2(X_{ik})\hat{\xi}_{ik}^2\varphi_l(X_{ik})\varphi_{l'}(X_{ik}) - \frac{\theta}{1+\theta n_i}B_{il}^{(1)} B^{(1)}_{il'}$,

    $A^{(2)}_{ill'} := \sum_{k=1}^{n_i}s^2(X_{ik})\hat{\xi}_{ik}\hat{\varepsilon}_{ik}\varphi_l(X_{ik})\varphi_{l'}(X_{ik}) - \frac{\theta}{1+\theta n_i}B_{il}^{(1)} B_{il'}^{(2)}$.
}
}

Calculate $A^{(1)}_{\text{sum}} := \sum_{i\in\mathcal{I}}A^{(1)}_i$ and $A^{(2)}_{\text{sum.sq}} := \sum_{i\in\mathcal{I}}A^{(2)}_iA^{(2)T}_i$.

Calculate $A^{(3)}$ and $A^{(4)}$:

\For{$i\in\mathcal{I}$ and $j\in[n_i]$} {
\For{$(l,l')\in[L]^2$} {

    $A^{(3)}_{ijll'} := 2s(X_{ij})\hat{\xi}_{ij}^2\varphi_l(X_{ij})\varphi_{l'}(X_{ij}) - \frac{\theta}{1+\theta n_i}\hat{\xi}_{ij}\left(B_{il}^{(1)}\varphi_{l'}(X_{ij}) + B^{(1)}_{il'}\varphi_l(X_{ij})\right)$,
    
    $A^{(4)}_{ijll'} := 2s(X_{ij})\hat{\xi}_{ij}\hat{\varepsilon}_{ij}\varphi_l(X_{ij})\varphi_{l'}(X_{ij}) - \frac{\theta}{1+\theta n_i}\left(B_{il}^{(1)}\hat{\varepsilon}_{ij}\varphi_{l'}(X_{ij}) + B_{il'}^{(2)}\hat{\xi}_{ij}\varphi_l(X_{ij})\right)$.

}
}

Calculate $s$-scores:
\\
\For{$i\in\mathcal{I}$ and $j\in[n_i]$} {

    $U_{\text{SL}}^{(s)}(s,\theta)(X_{ij}) := 
    - \tr\bigg(\Phi\cdot\left(A_{\text{sum}}^{(1)}\right)^{-1}\Big[
    A_{ij}^{(3)}\left(A_{\text{sum}}^{(1)}\right)^{-1}A_{\text{sum.sq}}^{(2)} $
    \\
    $ \qquad  + A_{\text{sum.sq}}^{(2)}\left(A_{\text{sum}}^{(1)}\right)^{-1}A_{ij}^{(3)} - A_{ij}^{(4)}A_{i}^{(2)} - A_{i}^{(2)}A_{ij}^{(4)}
    \Big]\left(A_{\text{sum}}^{(1)}\right)^{-1}\bigg).$

        }
Calculate the $\theta$-score:
\\
    $U_{\text{SL}}^{(\theta)}(s,\theta) = \tr\Bigg( \Phi\cdot \left(A_{\text{sum}}^{(1)}\right)^{-1}
    \Bigg[
    \left(\underset{i\in\mathcal{I}}{\sum}\frac{B_i^{(1)}B_i^{(1)T}}{\left(1+\theta n_i\right)^2}\right)\left(A_{\text{sum}}^{(1)}\right)^{-1}A_{\text{sum.sq}}^{(2)}$
    \\
    $ \qquad + A_{\text{sum.sq}}^{(2)}\left(A_{\text{sum}}^{(1)}\right)^{-1}\left(\underset{i\in\mathcal{I}}{\sum}\frac{B_i^{(1)}B_i^{(1)T}}{\left(1+\theta n_i\right)^2}\right)
    -\underset{i\in\mathcal{I}}{\sum}\frac{A_i^{(2)}B_i^{(1)}B_i^{(2)T} + B_i^{(2)}B_i^{(1)T}A_i^{(2)}}{\left(1+\theta n_i\right)^2}
    \Bigg]
    \left(A_{\text{sum}}^{(1)}\right)^{-1}\Bigg)$.

\KwOut{The set of $N$ $s$-scores $\left(X_{ij},\, U_{\SL}^{(s)}(s,\theta)(X_{ij})\right)_{i\in\mathcal{I},j\in[n_i]}$ and $\theta$-score $U_{\SL}^{(\theta)}(s,\theta)$.}
\caption{Generalised sandwich boosting: Score calculation for equicorrelated working correlation}
\label{alg:het-scores-equicorr}
\end{algorithm}

\end{cbunit}

\end{document}